\providecommand{\U}[1]{\protect\rule{.1in}{.1in}}
\newtheorem{theorem}{Theorem}[section]
\newtheorem{assumption}{Assumption}
\newtheorem{lemma}[theorem]{Lemma}
\newtheorem{proposition}[theorem]{Proposition}
\newtheorem{remark}{Remark}[section]
\renewcommand{\baselinestretch}{1.2}
\numberwithin{equation}{section}
\newlength\mylena
\newlength\mylenb
\newcommand\mystrut[1][2]{%
    \setlength\mylena{#1\ht\@arstrutbox}%
    \setlength\mylenb{#1\dp\@arstrutbox}%
    \rule[\mylenb]{0pt}{\mylena}}
\newcolumntype{C}[1]{>{\centering\let\newline\\\arraybackslash\hspace{0pt}}p{#1}}
\newcolumntype{x}{>{\centering\let\newline\\\arraybackslash\hspace{0pt}}X}
\newcolumntype{R}[1]{>{\raggedleft\let\newline\\\arraybackslash\hspace{0pt}}p{#1}}
\newcolumntype{y}{>{\raggedleft\let\newline\\\arraybackslash\hspace{0pt}}X}
\newcolumntype{L}[1]{>{\raggedright\let\newline\\\arraybackslash\hspace{0pt}}p{#1}}
\newcolumntype{z}{>{\raggedright\let\newline\\\arraybackslash\hspace{0pt}}X}
\newcolumntype{d}[1]{D..{#1}}  
\begin{document}

\def\spacingset#1{\renewcommand{\baselinestretch}%
{#1}\small\normalsize} 
\spacingset{1}

\title{Panel Stochastic Frontier Models with Latent Group Structures\thanks{We thank the editor, Michal Koles\'ar, an Associate Editor, and two
anonymous referees for their helpful comments. We are grateful for
Bin Peng, Valentin Zelenyuk, and the participants of the Econometric Society
Australasian Meeting 2024 and AE$^2$ conference 2025 for their helpful feedback. The first author acknowledges that this research was supported by the Commonwealth through an Australian Government Research Training Program Scholarship \href{https://doi.org/10.82133/C42F-K220}{[DOI: https://doi.org/10.82133/C42F-K220]}, as well as the Japan-IMF Scholarship.  The second author gratefully acknowledges the financial support of the
National Natural Science Foundation of China (No. 72394392 and No. 72573158).}}
\author{{Kazuki Tomioka}\thanks{Graduate School of Humanities and Social Sciences, Hiroshima University, Hiroshima, Japan. Email: \href{mailto://ktomioka1@hiroshima-u.ac.jp}{\texttt{ktomioka1@hiroshima-u.ac.jp}}.}\\
 Hiroshima University \and {Thomas T. Yang}\thanks{Corresponding author. Research School of Economics, The Australian
National University, Canberra, ACT 2601, Australia. Email: \href{mailto:tao.yang@anu.edu.au}{\texttt{tao.yang@anu.edu.au}}.}\\
 Australian National University \and { Xibin Zhang}\thanks{Department of Econometrics and Business Statistics, Monash University,
Caulfield East, Victoria 3145, Australia. Email: \href{mailto:xibin.zhang@monash.edu}{\texttt{xibin.zhang@monash.edu}}.}\\
 Monash University}

\maketitle
\vspace{0.5cm}

\begin{abstract}

\noindent Stochastic frontier models have attracted considerable attention due to the incorporation of an inefficiency term in addition to the conventional error term. In this paper, we propose a general estimation framework for panel stochastic frontier models that accommodates potential heterogeneity through latent group structures. The framework is tailored to the distinctive features of stochastic frontier models and is paired with a practical hybrid estimation procedure that combines individual-level and joint panel estimation. We illustrate the estimation framework using a panel stochastic frontier model that treats the inefficiency term as a random effect, and show that it can be readily extended to a range of fixed effects specifications common in the literature. Simulation studies indicate strong finite-sample performance, and we further demonstrate the practicality of the approach in an empirical application to the cost efficiency of the U.S. commercial banking sector.

\end{abstract}
\begin{description}
\item [{Keywords:}] Classification, Group Structures, Panel Data, Stochastic
Frontier 
\item [{\emph{JEL classification:}}] C23, C33, C38, C51 
\end{description}

\pagebreak{}

\newpage\spacingset{1.8} 
\section{Introduction}

\citet{Aigneretal1977} and \citet{MeeusenvanDenBroeck1977} introduced
the stochastic frontier (SF) model to study the productive (in)efficiency
of firms, and such SF models have since attracted considerable attention.\footnote{We refer readers to \citet{KumbhakarLovell2000} for early developments and \citet{Kumbhakaretal2022a} and \citet{Tsionas2023} for recent advancements and a comprehensive review of the literature.} One distinct feature of a SF model is the decomposition of the error term, typically expressed as $\varepsilon=v-u$, where $v$ is
a random disturbance and $u\geq0$ is the inefficiency term.
Unlike standard regression models, the identification of $v$ and
$u$ is crucial in SF modeling because of their economic interpretations.

In this paper, we develop a general estimation framework for time-varying panel SF models that can accommodate potential heterogeneity across firms through latent group structures. To illustrate our methodology, we consider a model similar to \citet{YaoZhangKum2019}'s that allows for heterogeneous time-varying coefficients: 
\begin{align}
    \label{EQ:intromodel}
    y_{it} &=\alpha_{i}^{0}+\alpha_{i}(\tau_{t})+x_{it}^{\prime}\beta_{i}(\tau_{t})+\varepsilon_{it},\quad\text{with}\quad\varepsilon_{it}=v_{it}-u_{i}, \nonumber\\
    &= \alpha_{i}^{0}-u_{i}+\alpha_{i}(\tau_{t})+x_{it}^{\prime}\beta_{i}(\tau_{t})+v_{it},
\end{align}
for firm $i=1,2,\dots,N$ and time $t=1,2,\dots,T$, where $\tau_{t}=t/T\in(0,1]$. In this specification of the model, $\alpha_{i}^{0}$ and $\alpha_{i}(\tau_{t})$ denote the constant and time-varying intercepts of the frontier, respectively. The term $\beta_{i}(\tau_{t})$ denotes a non-random, time-varying coefficient vector, $v_{it}$ is a zero-mean random error term, and $u_{i}\geq0$ represents a firm-specific random inefficiency term. In addition to allowing for heterogeneity in the efficient frontier, we permit the variance of $v_{it}$ to vary across firms, and allow for the possibility that $u_{i}$ follows a mixture distribution.

Our study is motivated by the role of heterogeneity in the measurement of the efficient frontier and the inefficiency in panel SF models. Heterogeneity can either shift the efficient frontier or distort the location and scale of inefficiency estimates \citep[see, for example,][]{Galan2014}.  According to \citet{Greene2005a}, the true underlying frontier may include unmeasured firm-specific characteristics that reflect the technology in use. \citet{Greene2005a} was instrumental in expanding SF models to incorporate firm heterogeneity by allowing for either true fixed effects or true random effects. Building on this, heterogeneity was further explored in subsequent work by allowing the inefficiency term to be purely transient or to contain both transient and persistent components \citep[see, for example][]{Colombi2014,Kumbhakar2014,Tsionas2014}.
These approaches help disentangle unobserved heterogeneity from inefficiency. However, a homogeneous frontier implicitly assumes that all firms operate under the same technology, so any systematic deviation is attributed to inefficiency.  As emphasized in \citep{Greene2005a, Greene2005},  failing to distinguish the frontiers and inefficiency leads to a structural misinterpretation: persistent heterogeneity is absorbed into the inefficiency term, resulting in biased efficiency estimates.

The framework we propose addresses this issue by introducing a latent group structure for frontier parameters. Firms are partitioned into a small number of groups, with each group sharing a common frontier that reflects a particular technological regime (e.g., distinct business models or scale-specific production technologies). The finite collection of latent frontiers provides a middle ground between a fully homogeneous frontier and unrestricted firm-specific frontiers.

Formally, we assume that firms can be classified into one of $K^{*}\geq1$ groups. Within each group, firms share a common set of parameters $\left\{ \alpha(\tau_{t}),\beta(\tau_{t}),\text{Var}(v_{t})\right\}$. We show that the classification step is consistent, ensuring that firms are benchmarked against the correct regime with probability approaching one.
Importantly, we do not impose the same group structure for the distribution of the inefficiency term, $u$ or on the constant term, $\alpha^{0}$. We find that defining group membership for $\{\alpha^{0},u\}$ in the
same way as for the other parameters is inappropriate, the reasons of which we discuss in detail
in Section~\ref{SEC:inefficiency}. Instead, we account for potential heterogeneity
in $\alpha^{0}-u$ by modeling it as a mixture of distributions.

The idea of uncovering latent group structures in panel data
models has been extensively studied in recent years. Existing methods can be broadly categorized into two main approaches. The first  approach relies on a two-step procedure in which individual-level estimation is followed by applying a clustering algorithm such as K-means \citep{LinNg2012, BonhommeManresa, AndoBai2016}, or Hierarchical Agglomerative Clustering (HAC) \citep{Chen2019}.
The second approach, introduced by
\citet{SuEtal2016} performs simultaneous estimation and classification by penalizing the joint mean squared errors. Further developments along this line include \citet{SuEtal2019}, \citet{HuangEtal2020} and \citet{WangSu2021}. Conceptually, the latter approach pools all observations
across firms for joint estimation and classification, whereas the
former relies on individual (non-pooled) firm-specific estimates as the basis of classification.

These existing approaches cannot be applied directly in our setting for the following reason. Estimating the parameters that govern the underlying distribution of $u$ requires pooling observations across firms, typically via maximum likelihood based on the likelihood function in (\ref{EQ:likelihood_appro}). However, since the log-likelihood function is complex and highly nonlinear, applying the method of \citet{SuEtal2016} that requires simultaneous estimation and classification for SF models is nontrivial. This leads to a dilemma of whether to pool or not to pool observations for inference. To resolve this, we propose a new hybrid approach that combines pooled and non-pooled estimation methods that is flexible enough to be applied to a broad class of SF models.


Our paper makes several contributions to the literature on SF models
and classification methods for panel data models. First, we estimate
a robust panel SF model that incorporates heterogeneity across firms.
We carefully set out the framework, clarifying what is feasible and what is not, and provide detailed explanations of the underlying rationale.\footnote{Previous work on robust estimation of the SF model has primarily focused on semi-parametric or non-parametric specifications for either the efficient frontier \citep{ParkSimar1994, YaoZhangKum2019} or the error term distribution \citep{Greene2005, LaiKumbhakar2023}. Additionally, there is a growing literature on specification tests for the distribution of inefficiency \citep{Chengetal2024}.} Second, to the best of our knowledge, this is one of the few papers in the literature to model and allow the variance of the error term to exhibit latent group structures. A notable
study in this area is \citet{LoyoBoot}, where they focus on modeling the variance of the error term, primarily for efficiency gains and/or the specific role played by the variance. In contrast, we aim to uncover heterogeneity in both the error and inefficiency terms, for the economic interpretations given in SF models. We emphasize the importance of modeling heterogeneity in both the variance of the error term, $v$, and the distribution of inefficiency, $u$, since inefficiency estimates depend on the variance of $v$ in the widely used \citet{Jondrowetal1982}'s point estimator. Given the importance of the joint modeling of both the variance of $v$ and the distribution of $u$ in SF models, we consider this to be a substantial contribution. Moreover, we propose a hybrid estimation procedure that combines firm-level and joint panel estimation to address the potential heterogeneity present in both the frontier and error components. 
As the description of the SF model in (\ref{EQ:intromodel}) alluded, we present the hybrid approach in the main text using a random effects specification of the SF model, and show in Appendix \ref{APP:FE} how it extends to other variants, including the fixed effects SF models studied by \citet{Greene2005a, Greene2005, Chenetal2014}, and \citet{ZhouEtal2020}.  Finally, the separation between the frontier and inefficiency may be complicated by model misspecification. We rule out this possibility by assuming correct model specification. Developing procedures that are more robust to misspecification is an important direction for future research.


The rest of this paper is organized as follows. In Section~\ref{SEC:model}, we introduce the estimation procedure and propose information criteria to determine the number of groups and whether $u$ follows a unique or a mixture of distributions. Section~\ref{SEC:Theoretical} examines the theoretical properties of our procedure, specifying the conditions required for tuning parameters. In Section~\ref{SEC:simulations}, we evaluate the small-sample performance of our method, providing practical recommendations for tuning parameters that meet the conditions and perform well in simulations. In Section \ref{SEC:application}, we apply our procedure to a dataset of large U.S. commercial banks, using the recommended tuning parameters from the simulations. Our findings confirm the presence of heterogeneity in the frontiers and a mixture distribution for the inefficiency term. The Online Appendix includes several important supplementary results. Appendix \ref{APP:FE1} extends our approach to fixed effects SF models under distributional assumptions. Appendix \ref{APP:FE2} further generalizes this to fixed effects SF models without any distributional restrictions, allowing for more flexible conditions on the inefficiency term. Appendix \ref{APP:innocuous} discusses a relaxation of the normalization condition. Appendix \ref{APP:ZISF} addresses the possibility that the inefficiency term equals zero with positive probability. Appendix \ref{APP:more_mix} considers the mixture distribution with more than two components. Other parts of the Appendix support the results in the main body of the paper. Specifically, Appendices \ref{APP:likeli} and \ref{APP:HAC} describe, respectively, the approximate likelihood function of the model and the clustering method used. Proofs of the theorems and propositions from Section \ref{SEC:Theoretical} are provided in Appendix~\ref{APP:mainproofs}, with technical lemmas presented in Appendix \ref{APP:lemmaProofs}. Additional simulation and application results are included in Appendix~ \ref{APP:Tables}, and Appendix~\ref{App:derivatives} further discusses the properties of the information matrix in support of Appendix~\ref{APP:mainproofs}. 

\section{Model and Estimation}
\label{SEC:model} 
This section presents the model and the procedure. To facilitate exposition,
we assume $\alpha_{i}^{0}=\alpha^{0}$ and $\text{Var}(u_{i}) = \sigma_{ui}^{2} = \sigma_{u}^{2}$ for all $i$ in Sections \ref{sec:The-Model-More} to \ref{SEC:Estimation}. These restrictions are relaxed to the general case subsequently. The technical conditions and main theorems are postponed to the next section.

\subsection{The Model}\label{sec:The-Model-More}
We illustrate our approach using the panel SF model with random effects (RE) and relegate other variants of the model to Appendix \ref{APP:FE}. The particular model we consider modifies \citet{YaoZhangKum2019}'s by incorporating heterogeneous, time-varying coefficients: 
\begin{align}
    y_{it} &=\alpha^{0} + \alpha_{i}\left(\tau_{t}\right)+x_{it}^{\prime}\beta_{i}\left(\tau_{t}\right)+\varepsilon_{it}\nonumber \\
    &=\alpha^{0}-u_{i}+\alpha_{i}\left(\tau_{t}\right)+\sum_{l=1}^{p}x_{itl}\beta_{il}\left(\tau_{t}\right)+v_{it},\label{EQ:model}
\end{align}
noting the temporal restriction $\alpha_{i}^{0}=\alpha^{0}$,
for expositional purposes to be generalized subsequently.
The subscript $l$ denotes the $l$-th element of a vector $x_{it}$, which
is $p$ by $1$. In this specification, $\varepsilon_{it}=v_{it}-u_{i}$,
where $v_{it}$ is a mean-zero random error term, and $u_{i}$ is
a non-negative term capturing the inefficiency of firm $i$.\footnote{We describe the model, estimation method, and simulations in terms
of the production frontier model, but use the cost frontier model
for our application. The only difference is that the inefficiency
term, $u_{i}$, enters the model negatively (production frontier)
or positively (cost frontier). This distinction is minor, and one
can let $\varepsilon_{it}\equiv v_{it}+u_{i}$ for cost frontiers.} We let $\alpha_{i}(\cdot)$ and 
$\beta_{i}(\cdot)$ evolve smoothly in $\tau_{t}$ to capture the gradual technological drift, regulatory cycles, and business-model adjustments that are pervasive in long panels (for example, banking costs). Smoothness avoids implausible jumps while allowing flexible, low-frequency changes that pooled static frontiers cannot capture. 

As in \citet{YaoZhangKum2019}, we assume that 
\begin{equation}
\label{why.RE}
    v_{it}\sim N(0,\sigma_{vi}^{2}), \quad u_{i}\sim|N(0,\sigma_{u}^{2})|, \quad\text{and} \quad v_{it}\perp u_{i}\perp x_{it}.
\end{equation}
We restrict $\sigma_{u}^{2}$ to be identical for all $i$ (no group structure) momentarily, again to facilitate exposition. We note that the two parameters where homogeneity is imposed, $\alpha^{0}$ and $\sigma_{u}^{2}$, exhibit distinctive features compared to other parameters. Given the importance of these parameters in SF models, we devote a separate section to discuss these differences in detail in Section \ref{SEC:inefficiency}.

The assumption given by (\ref{why.RE}) implies that inefficiency arises from managerial, organizational or behavioral factors that are unrelated to observed input or output variables in the frontier. The resulting panel SF model is RE in the sense of \citet{Greene2005a}. In panel SF models, RE is attractive because fixed effects (FE) type likelihoods face an incidental-parameters problem in finite $T$. RE avoids this by treating unit effects probabilistically, which \citet{Tsionas2014} emphasize when arguing for a fully likelihood-based/Bayesian route for panel SF model analysis with multiple error components. We adopt this setup primarily to illustrate our proposed approach, although the methodology can be extended to alternative model specifications, such as the four-component panel stochastic frontier model introduced by \citet{Tsionas2014} and \citet{LaiKumbhakar2023}. FE-type models are discussed separately in Appendices \ref{APP:FE1} and \ref{APP:FE2}, as their treatment is relatively straightforward given the procedure developed in the main body of the paper.

We assume that there are $K^{\ast}\geq1$ groups of parameters, and
each firm's parameters belong to one of these groups. Mathematically,
\begin{equation}
    \left\{ \alpha_{i}\left(\tau_{t}\right),\beta_{i}\left(\tau_{t}\right),\sigma_{vi}\right\} =\sum_{k=1}^{K^{\ast}}\left\{ \alpha_{(k)}^{\ast}\left(\tau_{t}\right),\beta_{(k)}^{\ast}\left(\tau_{t}\right),\sigma_{v(k)}^{\ast}\right\} \boldsymbol{1}(i\in G_{k}),\label{EQ:group_para}
\end{equation}
where $\boldsymbol{1}(\cdot)$ is the indicator function, equaling
1 if $(\cdot)$ is true and 0 otherwise. Additionally, parameters from
different groups are distinct, meaning 
$
\left\{ \alpha_{(k)}^{\ast}\left(\tau_{t}\right), \beta_{(k)}^{\ast}\left(\tau_{t}\right),\sigma_{v(k)}^{\ast}\} \neq\{ \alpha_{(j)}^{\ast}\left(\tau_{t}\right),\beta_{(j)}^{\ast}\left(\tau_{t}\right),\sigma_{v(j)}^{\ast}\right\} ,
$ for $j\neq k$. The group membership sets satisfy 
$G_{j}\cap G_{k}=\emptyset\text{ and }\bigcup_{k=1}^{K^{\ast}}G_{k}=\left\{ 1,2,\dots,N\right\}$.

It is worth noting that we impose the following assumption on each $\alpha_{(k)}^{*}(s)$:
\begin{equation}
    \int_{0}^{1}\alpha_{(1)}^{\ast}(s)\,\textrm{d}s = \int_{0}^{1}\alpha_{(2)}^{\ast}(s)\,\textrm{d}s =\ldots=\int_{0}^{1}\alpha_{(K^{\ast})}^{\ast}(s)\, \textrm{d}s,\label{eq:restriction_level}
\end{equation}
although we generalize it to allow them to differ in Appendix \ref{APP:innocuous}.\footnote{We explain in detail why we impose this condition, how we adjust the procedure without it, and when we recommend relaxing the condition in Appendix \ref{APP:innocuous}.} The normalization adopted here is $\int_{0}^{1}\alpha_{(k)}^{\ast}(s)\,\textrm{d}s=0$.
Clearly, when $K^{\ast}=1$ (the homogeneous case), this normalization
is innocuous; however, it is not in the general case due to the restriction in (\ref{eq:restriction_level}). When the intercept term does not vary over time, $\alpha_{i}(s)=0$. This normalization ensures that $\alpha_{i}(s)$ captures the time-varying component of the intercept term.

\subsection{Approximation of \texorpdfstring{$\alpha(\cdot)$}{a} and \texorpdfstring{$\beta(\cdot)$}{b} }\label{SEC:serial_approximation}

The approximations we adopt are standard in the literature. Let $L^{2}\left[0,1\right]=\{f\left(s\right):\int_{0}^{1}f^{2}\left(s\right)\text{d}s<\infty\}$
represent the space of square-integrable functions. The inner product equipped on this space is defined as $\left\langle f_{1},f_{2}\right\rangle \equiv\int_{0}^{1}f_{1}\left(s\right)f_{2}\left(s\right)\text{d}s$,
and the induced norm is $\left\Vert f\right\Vert =\left\langle f,f\right\rangle ^{1/2}$. Following \citet{DongLinton2018} and \citet{Ataketal}, we use cosine functions as basis functions. In particular, $B_{0}\left(s\right)=1$ and $B_{j}\left(s\right)=\sqrt{2}\cos(j\pi s)$
for $j\geq1$. The set $\{B_{j}\left(s\right)\}_{j=0}^{\infty}$ then forms
an orthonormal basis for the Hilbert space $L^{2}\left[0,1\right]$,
such that $\left\langle B_{i},B_{j}\right\rangle =\delta_{ij}$, where
$\delta_{ij}$ is the Kronecker delta.

Suppose $f\in L^{2}\left[0,1\right]$ is $\kappa$-th order continuously differentiable. Then, we have 
\begin{align*}
    f\left(s\right) &= \sum_{j=0}^{\infty}B_{j}\left(s\right)v_{j}^{0}=\sum_{j=0}^{m-1}B_{j}\left(s\right)v_{j}^{0}+\sum_{j=m}^{\infty}B_{j}\left(s\right)v_{j}^{0}\\
    &= \sum_{j=0}^{m-1}B_{j}\left(s\right)v_{j}^{0}+O\left(m^{-\kappa}\right)\equiv\mathbb{B}^{m}\left(s\right)^{\prime}v^{0}+O\left(m^{-\kappa}\right),
\end{align*}
where $\mathbb{B}^{m}\left(s\right)\equiv\left(B_{0}\left(s\right),B_{1}\left(s\right),\dots,B_{m-1}\left(s\right)\right)^{\prime}$,
$v_{j}^{0}=\left\langle f,B_{j}\right\rangle $, and $v^{0}=\left(v_{0}^{0},v_{1}^{0},\dots,v_{m-1}^{0}\right)^{\prime}$.
Here, $\sum_{j=m}^{\infty}B_{j}\left(s\right)v_{j}^{0}$ is the bias
term from using only the first $m-1$ terms contained in $\mathbb{B}^{m}\left(s\right)$ to approximate $f\left(s\right)$. If $f\left(s\right)$ is $\kappa$-th order differentiable,
the bias term is $\sum_{j=m}^{\infty}B_{j}\left(s\right)v_{j}^{0} = O\left(m^{-\kappa}\right)$. When $\int_{0}^{1}f\left(s\right)\text{d}s=0$ is imposed, we approximate $f$ using $\mathbb{B}_{-0}^{m}\left(s\right)\equiv\left(B_{1}\left(s\right),\dots,B_{m-1}\left(s\right)\right)^{\prime}$,
since $B_{0}(s)=1$  and $\int_{0}^{1}B_{j}\left(s\right)\text{d}s=0$ for $j\geq1$. Thanks to this property,  it is more convenient than using alternative bases such as B-splines. Similarly,
\[
f\left(s\right)=\mathbb{B}_{-0}^{m}\left(s\right)v_{-0}^{0}+O\left(m^{-\kappa}\right),
\]
for some $v_{-0}^{0}=\left(v_{1}^{0},v_{2}^{0},\dots,v_{m-1}^{0}\right)^{\prime}$. 

We apply this approximation to our case. For each firm $i$, we have
\begin{align}
y_{it} & =\alpha^{0}-u_{i}+\alpha_{i}\left(\tau_{t}\right)+\sum_{l=1}^{p}x_{itl}\beta_{il}\left(\tau_{t}\right)+v_{it}\nonumber \\
 & \approx\alpha^{0}-u_{i}+\mathbb{B}_{-0}^{m}\left(\tau_{t}\right)^{\prime}\pi_{i0}^{0}+\sum_{l=1}^{p}x_{itl}\mathbb{B}^{m}\left(\tau_{t}\right)^{\prime}\pi_{il}^{0}+v_{it}\nonumber \\
 & \equiv\alpha^{0}-u_{i}+\left[\mathbb{B}_{-0}^{m}\left(\tau_{t}\right)^{\prime},\left(x_{it}\otimes\mathbb{B}^{m}\left(\tau_{t}\right)\right)^{\prime}\right]\pi_{i}^{0}+v_{it}\nonumber \\
 & \equiv\alpha^{0}-u_{i}+z_{it}^{\prime}\pi_{i}^{0}+v_{it}\equiv\tilde{z}_{it}^{\prime}\tilde{\pi}_{i}^{0}+v_{it},\label{EQ:approx}
\end{align}
where the terms $\mathbb{B}_{-0}^{m}\left(\tau_{t}\right)^{\prime}\pi_{i0}^{0}$
and $\mathbb{B}^{m}\left(\tau_{t}\right)^{\prime}\pi_{il}^{0}$ represent
approximations of $\alpha_{i}\left(\tau_{t}\right)$ and $\beta_{il}\left(\tau_{t}\right)$,
respectively, and 
\begin{align*}
x_{it}\otimes\mathbb{B}^{m}\left(\tau_{t}\right) & \equiv\left(x_{it1}B_{0}\left(\tau_{t}\right),\dots,x_{it1}B_{m-1}\left(\tau_{t}\right),\dots,x_{itp}B_{0}\left(\tau_{t}\right),\dots,x_{itp}B_{m-1}\left(\tau_{t}\right)\right)^{\prime},\\
\pi_{i}^{0} & \equiv\left(\pi_{i0}^{0\prime},\pi_{i1}^{0\prime},\dots,\pi_{ip}^{0\prime}\right)^{\prime},\quad\tilde{\pi}_{i}^{0}\equiv\left(\alpha^{0}-u_{i},\pi_{i0}^{0\prime},\pi_{i1}^{0\prime},\dots,\pi_{ip}^{0\prime}\right)^{\prime},\\
z_{it} & \equiv\left[\mathbb{B}_{-0}^{m}\left(\tau_{t}\right)^{\prime},\left(x_{it}\otimes\mathbb{B}^{m}\left(\tau_{t}\right)\right)^{\prime}\right]',\quad\text{and}\quad\tilde{z}_{it}\equiv\left(1,z_{it}^{\prime}\right)'.
\end{align*}
The last two lines of equation (\ref{EQ:approx}) represent three
equivalent ways of expressing the approximation.

\subsection{The Estimation}\label{SEC:Estimation}

The variance of inefficiency, $\sigma_{u}^{2}$ is identified
through the skewness in the distribution of $\alpha^{0}-u_{i}$ across
$i$. As a result, $\sigma_{u}^{2}$ cannot be identified or estimated
without pooling observations across different $i$. However, pooling observations for estimation introduces challenges for numerical optimization, since the log-likelihood functions in (\ref{EQ:likelihood_appro}) and (\ref{EQ:log_approx_p}) are complex and highly nonlinear. This issue exacerbates as the number of unknown parameters increases and becomes particularly pronounced in pooled
estimation with classification methods. This creates a dilemma regarding
whether to pool observations for estimation.

To address these challenges, we propose a hybrid procedure that combines
estimations with and without pooling. We present the detailed steps of the procedure below. 

\subsubsection*{Step 1: Individual Estimation}

Using the approximation from (\ref{EQ:approx}), we regress $y_{it}$
against $\mathbb{B}^{m}\left(\tau_{t}\right)$ and $x_{it}\otimes\mathbb{B}^{m}\left(\tau_{t}\right)$
for $t=1,2,\ldots,T$ to obtain $\widehat{\tilde{\pi}}_{i}$. From
this we obtain $\hat{\sigma}_{vi}^{2}$ as the sample variance of
the regression residuals.

Specifically, consider the following expression: $Z_{im}\equiv (\tilde{z}_{i1},...,\tilde{z}_{iT})'$, a $T\times m(p+1)$ vector. Then the OLS estimator for firm $i$ is given by 
\begin{equation}
\widehat{\tilde{\pi}}_{i}=\left(Z_{im}^{\prime}Z_{im}\right)^{-1}Z_{im}^{\prime}y_{i},\label{eq:pihat}
\end{equation}
with $y_{i}=\left(y_{i1},\ldots,y_{iT}\right)^{\prime}$. We then
obtain an estimate of $\sigma_{vi}^{2}$ as $\hat{\sigma}_{vi}^{2}=\frac{1}{T-1}\sum_{t=1}^{T}\left(y_{it}-\tilde{z}_{it}'\widehat{\tilde{\pi}}_{i}\right)^{2}.$ Excluding the first element in $\widehat{\tilde{\pi}}_{i}$, we let
$\hat{\pi}_{i}$ denote the estimated coefficients associated with
$z_{it}$. The estimates $\hat{\pi}_{i}$ and $\hat{\sigma}_{vi}$
are collected to form an estimate of $\vartheta_{i}$: 
$\hat{\vartheta}_{i}=\left(\hat{\pi}_{i}^{\prime},\hat{\sigma}_{vi}\right)^{\prime},$ based on which we form groups.

\subsubsection*{Step 2: Classification}

Having obtained $\hat{\vartheta}_{1},\hat{\vartheta}_{2},\ldots,\hat{\vartheta}_{N}$
from Step 1, we use the $L_{2}$ norm to measure the distance between
$\hat{\vartheta}_{i}$ and $\hat{\vartheta}_{j}$. Based on this distance
measure, we then apply the classical HAC algorithm to the estimates of each firm's functional coefficient to determine group memberships. The HAC is a widely used algorithm for clustering, and several variants of it are employed in heterogeneous panel data models (see, for e.g., \citet{Chen2019}). Details of the HAC method are provided in Appendix \ref{APP:HAC} and refer the readers to \citet{Everittetal} for a comprehensive treatment. Given a value
for $K$, we apply the HAC to obtain an estimate of the group membership, denoted as $\left(\hat{G}_{1|K},\hat{G}_{2|K},\ldots,\hat{G}_{K|K}\right),$ which forms a partition of the set $\left\{ 1,2,\ldots,N\right\} $.

\subsubsection*{Step 3: Post-Classification Estimation and Determination of $K^{\ast}$}

Within each group, we now have significantly more observations available
for pooling. Recognizing this, we set the number of sieve terms to
$\underline{m}$, which is substantially larger than $m$. Within each estimated group, $\hat{G}_{k|K}$ for $1\leq k\leq K$, we conduct post-classification estimation using standard within-panel data estimation methods. Let $\underline{z}_{it}=\left[\mathbb{B}_{-0}^{\underline{m}}\left(\tau_{t}\right)^{\prime},\left(x_{it}\otimes\mathbb{B}^{\underline{m}}\left(\tau_{t}\right)\right)^{\prime}\right]^{\prime}$ denote the new regressors. At this stage, we do not consider the inefficiency term $\alpha^{0}-u_{i}$. The group specific coefficient is given by
\[
\hat{\pi}_{(k|K)}=\arg\min_{\pi}\sum_{i\in\hat{G}_{k|K}}\sum_{t=1}^{T}\left(\ddot{y}_{it}-\underline{\ddot{z}}_{it}^{\prime}\pi\right)^{2},
\]
where $\ddot{y}_{it}=y_{it}-\frac{1}{T}\sum_{t=1}^{T}y_{it}$, and $\underline{\ddot{z}}_{it}=\underline{z}_{it}-\frac{1}{T}\sum_{t=1}^{T}\underline{z}_{it}.$ The estimate of the variance of $v_{it}$ for group $\hat{G}_{k|K}$ is  
\[
\hat{\sigma}_{v(k|K)}^{2}=\frac{1}{N_{k}(T-1)}\sum_{i\in\hat{G}_{k|K}}\sum_{t=1}^{T}\left(\ddot{y}_{it}-\underline{\ddot{z}}_{it}^{\prime}\hat{\pi}_{(k|K)}\right)^{2},
\]
where $N_{k}=\sharp\{\hat{G}_{k|K}\}$ is the number of elements in
$\hat{G}_{k|K}$. For simplicity, we do not explicitly distinguish
between $\hat{N}_{k}=\sharp\{\hat{G}_{k|K^{*}}\}$ and $N_{k}=\sharp\{G_{k|K^{*}}\}$.
Similarly,   $\hat{\vartheta}_{(k|K)}=\left(\hat{\pi}_{(k|K)}^{\prime},\hat{\sigma}_{v(k|K)}\right)^{\prime}.$

Inspired by the pseudo log-likelihood, we construct an information
criterion to determine the optimal number of groups as follows: 
\begin{align}
\text{IC}(K,\lambda_{NT}) & =\sum_{k=1}^{K}\left\{ N_{k}T\log(\hat{\sigma}_{v(k|K)})+\sum_{i\in\hat{G}_{k|K}}\sum_{t=1}^{T}\frac{\left(\ddot{y}_{it}-\underline{\ddot{z}}_{it}^{\prime}\hat{\pi}_{(k|K)}\right)^{2}}{\hat{\sigma}_{v(k|K)}^{2}}\right\} +\lambda_{NT}K\nonumber \\
 & =\sum_{k=1}^{K}\left\{ N_{k}T\log(\hat{\sigma}_{v(k|K)})+N_{k}(T-1)\right\} +\lambda_{NT}K,\label{eq:IC_group}
\end{align}
where $\lambda_{NT}$ is a suitable penalty term. The optimal number of groups is the minimizer of (\ref{eq:IC_group}) 
\[
\hat{K}(\lambda_{NT})=\arg\min_{K=1,2,\ldots,\bar{K}}\text{IC}(K,\lambda_{NT}),
\]
given a suitable $\bar{K}$. For brevity, we henceforth refer to this as
$\hat{K}$. The final group estimates are then given by 
\[
\hat{\vartheta}_{(k|\hat{K})}=\left(\hat{\pi}_{(k|\hat{K})},\hat{\sigma}_{v(k|\hat{K})}\right),\quad k=1,2,\ldots,\hat{K}.
\]

\subsubsection*{Step 4: Estimation of $\alpha^{0}$ and $\sigma_{u}^{2}$}

We estimate $\alpha^{0}$ and $\sigma_{u}^{2}$ pooling all observations
via maximum likelihood estimation (MLE). Specifically, the estimate is given by 
\[
\left(\hat{\alpha}^{0},\hat{\sigma}_{u}^{2}\right)=\arg\max_{(s,\delta_{u}^{2})}\sum_{k=1}^{\hat{K}}\sum_{i\in\hat{G}_{k|\hat{K}}}\log f\left(y_{i}\mid x_{i};s,\delta_{u}^{2},\hat{\vartheta}_{(k|\hat{K})}\right),
\]
where $y_{i}=\left(y_{i1},\ldots,y_{iT}\right)^{\prime}$, $x_{i}=\left(x_{i1},\ldots,x_{iT}\right)^{\prime}$,
and $f\left(y_{i}\mid x_{i};s,\delta_{u}^{2},\hat{\vartheta}_{(k|\hat{K})}\right)$
is defined in (\ref{EQ:likelihood_appro}), noting that we use the post-classification
estimates of $\vartheta$. Since only two parameters, $\alpha^{0}$
and $\sigma_{u}^{2}$, are being estimated at this stage, the numerical
optimization is straightforward.

\subsection{Inefficiency Term}\label{SEC:inefficiency}

We now consider the general case where $\alpha^{0}$ can be heterogeneous
across $i$, and we will use $\alpha_{i}^{0}$ from this point onward.
Modeling the underlying structure of $\alpha_{i}^{0}-u_{i}$ differs
from that of $\left\{\alpha_{i}\left(\tau_{t}\right),\beta_{i}\left(\tau_{t}\right),\sigma_{vi}^{2}\right\}$
because $u_{i}$ is assumed to be random effects, and $\alpha_{i}^{0}-u_{i}$
naturally varies across $i$, even when $\alpha_{i}^{0}$ is identical.
For this reason, we focus on identifying the distribution of $\alpha_{i}^{0}-u_{i}$
rather than the actual values. While we can uncover the underlying
distribution, consistently estimating group membership remains challenging.

Consider the following example to illustrate this point. Suppose we
have two random variables $\varepsilon_{1}$ and $\varepsilon_{2}$,
with $\varepsilon_{1}\sim0-\left|N(0,2)\right|$ and $\varepsilon_{2}\sim1-\left|N(0,1)\right|$.
If we mix i.i.d. realizations of $\varepsilon_{1}$ and $\varepsilon_{2}$,
such as $\left\{ \varepsilon_{11},\varepsilon_{12},\ldots,\varepsilon_{1n},\varepsilon_{21},\varepsilon_{22},\ldots,\varepsilon_{2n}\right\} $,
it is likely that many $\varepsilon_{1i}$ and $\varepsilon_{2j}$ values lie very close to one another. For example, a small-scale Monte Carlo experiment
with $n=100$ suggests that about 28\% of $\varepsilon_{1i}$ 
have at least one $\varepsilon_{2j}$ within a radius of 0.01. In
such cases, swapping their memberships would likely have a minimal impact
on the likelihood function, complicating their distinct identification
from the data.

Misclassification of group memberships can have a serious impact on
the inefficiency term, unlike parameters at the frontiers, where only
similar frontiers can be misclassified together due to low power or
minor estimation errors. Continuing the previous example, suppose
$\varepsilon_{1i}=0$ (highly efficient with $u_{1i}=0$) is misclassified
as $\varepsilon_{2}$, then the inefficiency term for $\varepsilon_{1i}$
would be calculated as 1 (indicating inefficiency). Conversely, if
$\varepsilon_{2j}=0$ (originally not efficient with $u_{2j}=1$)
is misclassified as $\varepsilon_{1}$, then the inefficiency term
for $\varepsilon_{2j}$ would be calculated as 0 (indicating high
efficiency).

Given these challenges and the serious implications of misclassification,
we adopt a mixture distribution approach as follows. Suppose there exist an integer $\mathcal{K}^{*}\geq1$, such
that with probability $\tau_{j}^{0}$, it is distributed as $\alpha_{(j)}^{0}-\left|N(0,\sigma_{u(j)}^{2})\right|$
for $j=1,2,...,\mathcal{K}^{*}-1$, and with probability $\tau_{\mathcal{K}^{*}}^{0}=1-\tau_{1}^{0}-...-\tau_{\mathcal{K}^{*}-1}^{0}$,
as $\alpha_{(\mathcal{K}^{*})}^{0}-\left|N(0,\sigma_{u(\mathcal{K}^{*})}^{2})\right|$,
where $\left(\alpha_{(j)}^{0},\sigma_{u(j)}^{2}\right)$, $j=1,2,...,\mathcal{K}^{*}$,
are distinct vectors, $0<\tau_{j}^{0}<1,$ $j=1,2,...,\mathcal{K}^{*}-1,$
and $1-\tau_{1}^{0}-...-\tau_{\mathcal{K}^{*}-1}^{0}>0$. When $\mathcal{K}^{*}=1$,
the error distribution is reduced to that of a
unique distribution. The mixture distribution on the inefficiency
term is similar in spirit to the latent class model in \citet{Greene2005}.
However, the latent class model is only a small part of \citet{Greene2005} and so the treatment is very brief. We examine this issue in depth by proposing an information criterion to determine the
number of components, rigorously establish its theoretical properties,
and assess its small-sample performance through simulation studies.

An alternative approach is to model the composite term $\alpha_i^{0}-u_i$ by assuming that they follow a mixture distribution as a whole. However, without additional identifying restrictions, $\alpha_i^{0}$ and $u_i$ cannot be separately identified. As emphasized in the Introduction, separating these components is central to stochastic frontier analysis. For this reason, we keep our current specification.

The potential presence of a mixture distribution significantly alters
the interpretation of the results. With uniquely distributed inefficiency term, we
can remove the subscript $i$ from $\alpha_{i}^{0}$ because $\alpha_{i}^{0}-u_{i}\overset{d}{\sim}\alpha^{0}-\left|N\left(0,\sigma_{u}^{2}\right)\right|$.
A point estimate of $\alpha_{}^{0}-u_{i}$ is
\[
\widehat{\alpha^{0}-u_{i}}=\frac{1}{T}\sum_{t=1}^{T}\left(y_{it}-z_{it}'\hat{\pi}_{i}\right),
\]
where $\hat{\pi}_{i}$ is a sub-vector of $\hat{\tilde{\pi}}_{i}$
defined in (\ref{eq:pihat}) in Step 1. Thus, $\alpha^{0}-u_{i}$ can be estimated consistently as $T\to\infty$, allowing us to rank firms according to inefficiency
because $\alpha^{0}$ is identical across $i$. However, in the case
of a mixture distribution, although we can still consistently estimate
$\widehat{\alpha_{i}^{0}-u_{i}}$ (similar to the above), it is not
possible to rank firms as in the former scenario. This limitation
arises because memberships, or equivalently, the values of $\alpha_{i}^{0}$, cannot
be identified. This observation aligns with the findings for the cross-sectional
case discussed in \citet{Greene2005}.

The presence of a mixture distribution in the distribution of $\alpha_{i}^{0}-u_{i}$
does not impact the estimation of $\vartheta_{i}$ given independence
among $u_{i}$, $v_{it}$, and $x_{it}$. Consequently, Steps 1, 2,
and 3 remain unchanged. Details of the revised Step 4, now referred
to as Step 4', are provided below.

\subsubsection*{Step 4': Estimation of $\alpha_{(j)}^{0},\sigma_{u(j)}^{2},\text{ and }\tau_{j}^{0}$}

We adopt the mixture distribution for $\alpha_{i}^{0}-u_{i}$ as previously
described. Assuming that the inefficiency terms come from $\mathcal{K\geq}1$
distributions, we obtain an estimate of $\left(\alpha_{(1)}^{0},\sigma_{u(1)}^{2},...,\alpha_{(\mathcal{K})}^{0},\sigma_{u(\mathcal{K})}^{2},\tau_{1}^{0},...,\tau_{\mathcal{K}-1}^{0}\right)$
using MLE as follows: 
\begin{align*}
 & \left(\hat{\alpha}_{(1)}^{0},\hat{\sigma}_{u(1)}^{2},...,\hat{\alpha}_{(\mathcal{K})}^{0},\hat{\sigma}_{u(\mathcal{K})}^{2},\hat{\tau}_{1},...,\hat{\tau}_{\mathcal{K}-1}\right)\\
= & \arg\max_{(s,\delta_{u}^{2},\tau)}\sum_{k=1}^{\hat{K}}\sum_{i\in\hat{G}_{k|\hat{K}}}\log\tilde{f}\left(y_{i}\left\vert x_{i};s_{(1)},\delta_{u\left(1\right)}^{2},...,s_{(\mathcal{K})},\delta_{u\left(\mathcal{K}\right)}^{2},\tau_{1},...,\tau_{\mathcal{K}-1},\hat{\vartheta}_{(k|\hat{K})}\right.\right)
\end{align*}
where $\tilde{f}$ is the likelihood function defined in (\ref{EQ:log_approx_p}),
and we incorporate estimates from Step 3, as detailed in Section \ref{SEC:Estimation}.


\subsubsection*{Step 5: Determination of the Distributional Structures of the Inefficiency
Term}

To determine the optimal number of mixtures, we introduce a new information criterion for this task: 
\begin{equation}
\widetilde{\mathrm{IC}}(\mathcal{K},\tilde{\lambda}_{NT})=-\sum_{k=1}^{\hat{K}}\sum_{i\in\hat{G}_{k|\hat{K}}}\log\tilde{f}\left(y_{i}\left\vert x_{i};\hat{\alpha}_{(1)}^{0},\hat{\sigma}_{u(1)}^{2},...,\hat{\alpha}_{(\mathcal{K})}^{0},\hat{\sigma}_{u(\mathcal{K})}^{2},\hat{\tau}_{1},...,\hat{\tau}_{\mathcal{K}-1},\hat{\vartheta}_{(k|\hat{K})}\right.\right)+\mathcal{K}\tilde{\lambda}_{NT},\label{eq:IC_tide2}
\end{equation}
where $\tilde{\lambda}_{NT}$ is a suitable penalty term, and the
estimates are as obtained from Steps 3 and 4'.

The optimal number of mixtures is the minimizer of (\ref{eq:IC_tide2})  
\[
\hat{\mathcal{K}}(\tilde{\lambda}_{NT})=\arg\min_{\mathcal{K}=1,2,\ldots,\bar{\mathcal{K}}}\widetilde{\mathrm{IC}}(\mathcal{K},\tilde{\lambda}_{NT}),
\]
and we write $\hat{\mathcal{K}}$ for short. Finally, the estimated parameters are 
\[
\left(\hat{\alpha}_{(1)}^{0},\hat{\sigma}_{u(1)}^{2},...,\hat{\alpha}_{(\mathcal{\hat{\mathcal{K}}})}^{0},\hat{\sigma}_{u(\mathcal{\hat{\mathcal{K}}})}^{2},\hat{\tau}_{1},...,\hat{\tau}_{\mathcal{\hat{\mathcal{K}}}-1}\right).
\]

\subsection{A Summary of the Estimation Procedure}

The outline of the estimation procedure is as follows: 
\begin{enumerate}
\item Conduct estimations of the frontiers for each firm as described in
Step 1. 
\item Apply the HAC algorithm using the individual estimations
from Step 1 for $K=1,2,\ldots,\bar{K}$. 
\item Use the information criterion in (\ref{eq:IC_group}) from Step 3
to determine the optimal number of groups and group memberships.
Obtain an estimate of the frontier with the determined groups. 
\item Based on the group assignments from Step 3, perform joint estimation
for the inefficiency term as in Step 4'. 
\item Use the information criterion in (\ref{eq:IC_tide2}) to determine
the distribution of $\alpha_{i}^{0}-u_{i}$, as in Step~5. 
\item Collect results. The estimates of the frontiers and the distribution
of the inefficiency term are derived from Steps~3 and 5, respectively. 
\end{enumerate}
\section{Asymptotic Properties} \label{SEC:Theoretical} 

We examine classification consistency in Section \ref{SEC:Classification}.
Subsequently, we discuss the large sample properties of the post-classification
estimators in Section \ref{SEC:post}.

\subsection{Classification}

\label{SEC:Classification}


\begin{assumption}\label{A:mixing} The process $\{(x_{it}',v_{it}),t=1,\ldots,T\}$
is strong mixing with a mixing coefficient $\alpha(j)$ that satisfies
$\alpha(j)\leq C_{\alpha}\rho^{j}$ for some positive $C_{\alpha}$
and $0<\rho<1$, and this holds for $i=1,\ldots,N$. \end{assumption}

\begin{assumption}\label{A:moment} $\max_{1\leq i\leq N,1\leq t\leq T}\text{\emph{E}}\|x_{it}\|^{q}\leq\bar{C}_{x}<\infty,$
and $\max_{1\leq i\leq N,1\leq t\leq T}\text{\emph{E}}|v_{it}|^{q}\leq\bar{C}_{v}<\infty,$
for some $q>4.$ \end{assumption}

\begin{assumption}\label{A:rank} Denote $\tilde{x}_{it}\equiv(1,x_{it}^{\prime})^{\prime}$. Let
$\mu_{\min}$ and $\mu_{\max}$ denote the minimum and maximum eigenvalues
of a matrix, respectively. There exist $\underline{C}_{xx}$ and $\bar{C}_{xx}$ with $0<\underline{C}_{xx} \leq \bar{C}_{xx} <\infty$ such that
\[
0<\underline{C}_{xx}\leq\min_{1\leq i\leq N,1\leq t\leq T}\mu_{\min}[\text{\emph{E}}(\tilde{x}_{it}\tilde{x}_{it}')]\leq\max_{1\leq i\leq N,1\leq t\leq T}\mu_{\max}[\text{\emph{E}}(\tilde{x}_{it}\tilde{x}_{it}^{\prime})]\leq\bar{C}_{xx}<\infty.
\]
\end{assumption}

\begin{assumption}\label{A:coef} For $k=1,2,\ldots,K^{*}$, $\alpha_{(k)}^{*}(s)$,
$\beta_{(k)1}^{*}(s)$, ..., and $\beta_{(k)p}^{*}(s)$ belong to
$L^{2}\left[0,1\right]$ and are $\kappa$ times continuously differentiable.
\end{assumption}

\begin{assumption}\label{A:group_diff} There exists a positive $\underline{C}^{*}$,
such that 
\[
\min_{1\leq j\neq k\leq K^{*}}\left\{ \|\alpha_{(j)}^{*}-\alpha_{(k)}^{*}\|+\sum_{l=1}^{p}\|\beta_{(j)l}^{*}-\beta_{(k)l}^{*}\|+|\sigma_{v(j)}^{*}-\sigma_{v(k)}^{*}|\right\} \geq\underline{C}^{*}>0,
\]
and $\min_{1\leq k\leq K^{*}}\sigma_{v(k)}^{*2}>0.$ \end{assumption}

\begin{assumption}\label{A:tuningPara} (i) $N$ can either be a finite,
or divergent. If $N$ is divergent, $N=O(T^{C})$ for some positive $C$ as $T\rightarrow\infty$. (ii) $m\rightarrow\infty$ as $T\rightarrow\infty.$
$Nm^{q/2+2}(\log N)^{2q}/T^{q/2-1}\rightarrow0,$ with $q$ in
Assumption \ref{A:moment}. \end{assumption}


Assumption \ref{A:mixing} imposes a condition of weak dependence
across $t$, noting that independence across $i$ is not required
for classifications. Assumption \ref{A:moment} requires that $x$ and
$v$ have finite $q$-th moment. Assumption \ref{A:rank} is the classic
full rank condition. Assumption \ref{A:coef} stipulates that the
coefficients are $\kappa$-th order differentiable, a standard condition
for nonparametric or semiparametric estimation. Assumption \ref{A:group_diff}
requires that at least one of the coefficients, including the variance
of $v$, must differ across groups.

Assumption \ref{A:tuningPara} specifies that the moment conditions
must be sufficiently large or that $T$ grows fast enough. $N$ can
be fixed. If $N$ diverges, it cannot be too fast, e.g., at the rate
of $\exp(T)$. In the empirical application,  $(N,T)=(466,80)$
and $466\approx80^{1.4}$, thus any $C\geq1.4$ works in (i). The
most stringent requirements arise from the estimation of the ``design''
matrix $\frac{1}{T}Z_{im}^{\prime}Z_{im}$ with diverging dimensions,
used in $\widehat{\tilde{\pi}}_{i}$ (see (\ref{eq:pihat})); a similar
condition was imposed in \citet{Chen2019}. We take a logarithm of
all covariates before estimation, and $q$ can be reasonably considered
large, e.g., $q\geq8$. If we set $m=T^{1/5}$, Assumption \ref{A:tuningPara}
is satisfied. We do not have the usual bias and variance tradeoff
here, as explained below. The results of Theorem \ref{TH:classify}
are underpinned by the uniform convergence of $\widehat{\tilde{\pi}}_{i}$
without any rate requirement. As a result, it is not necessary to
consider the trade-off between the bias and variance of the estimates
for this aspect, when deciding $m$. Of course, we do need $m\rightarrow\infty$ to ensure the uniform convergence. However, to achieve the optimal convergence rate for the post-classification estimates, this consideration becomes crucial, as reflected in Assumption \ref{A:tuning2} (ii) in the subsequent section. With the aforementioned technical conditions, we demonstrate the consistency
of the classification.

\begin{theorem}\label{TH:classify} Suppose Assumptions \ref{A:mixing}
through \ref{A:tuningPara} hold. Then:

\noindent (i) For any small positive $\epsilon$, 
\[
\Pr\left(\max_{i=1,2,\ldots,N}\left\Vert \hat{\vartheta}_{i}-\vartheta_{i}\right\Vert >\epsilon\right)=o(1);
\]
(ii) Assuming $K=K^{*}$, denote the event 
\[
\mathcal{M}\equiv\left\{ \left(\hat{G}_{1|K^{*}},\hat{G}_{2|K^{*}},\ldots,\hat{G}_{K^{*}|K^{*}}\right)=\left(G_{1|K^{*}},G_{2|K^{*}},\ldots,G_{K^{*}|K^{*}}\right)\right\} ,
\]
then $\Pr(\mathcal{M})\rightarrow1$. \end{theorem}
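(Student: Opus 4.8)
The plan is to prove part (i) first --- a uniform (in $i$) consistency result for the individual OLS estimators $\hat{\vartheta}_i$ --- and then deduce part (ii) from it by a standard argument comparing within-group and between-group distances under the HAC algorithm. For part (i), write $\widehat{\tilde{\pi}}_i - \tilde{\pi}_i^0 = (Z_{im}'Z_{im})^{-1}Z_{im}'(v_i + r_i)$, where $r_i$ collects the sieve approximation error $O(\underline{m}^{-\kappa})$ pointwise, which by Assumption~\ref{A:coef} is $O(m^{-\kappa})$ uniformly. The key quantities to control uniformly over $i=1,\dots,N$ are: (a) the minimum eigenvalue of $\frac1T Z_{im}'Z_{im}$, bounded away from zero using Assumption~\ref{A:rank} together with a concentration bound showing $\|\frac1T Z_{im}'Z_{im} - \text{E}(\frac1T Z_{im}'Z_{im})\| \to 0$ uniformly; (b) the cross term $\frac1T Z_{im}'v_i$, which has growing dimension $m(p+1)$ and must be shown $o(1)$ uniformly in $i$; and (c) the bias contribution $\frac1T Z_{im}'r_i$, which is $O(m^{-\kappa}) \cdot O_p(\cdot) \to 0$. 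For (a) and (b) I would use a Bernstein-type inequality for strong-mixing arrays (Assumptions~\ref{A:mixing}, \ref{A:moment}) applied entrywise to the $m(p+1)\times m(p+1)$ and $m(p+1)\times 1$ objects, then a union bound over the $i$ and over the $O(m^2)$ entries; the $(\log N)^{2q}$ and $m^{q/2+2}/T^{q/2-1}$ factors in Assumption~\ref{A:tuningPara}(ii) are exactly what makes the union bound close with the available $q$-th moments. Because only $\hat{\vartheta}_i \to \vartheta_i$ (not a rate) is needed, no bias--variance balance for $m$ is required; I just need $m^{-\kappa}\to 0$ and the moment/union-bound condition. The residual variance $\hat\sigma_{vi}^2$ is then handled by writing it as a quadratic form in $v_i$ plus cross and bias terms and invoking the same uniform bounds plus a uniform LLN for $\frac1T\sum_t v_{it}^2$.

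For part (ii), condition on the event in part (i) with $\epsilon$ chosen smaller than $\underline{C}^*/3$ (using Assumption~\ref{A:group_diff}, and noting the map from $\pi^*_{(k)}$ to the $L^2$ norms $\|\alpha^*_{(j)}-\alpha^*_{(k)}\|$, $\|\beta^*_{(j)l}-\beta^*_{(k)l}\|$ is, up to the sieve error, an isometry in the coefficient $\ell^2$-norm via orthonormality of the cosine basis, so $\epsilon$-closeness of $\hat\vartheta_i$ to $\vartheta_i$ translates into $\epsilon$-closeness of the induced distances). On that event every pair $i,j$ in the same true group has $\|\hat\vartheta_i - \hat\vartheta_j\| < 2\epsilon$, while every pair in different true groups has $\|\hat\vartheta_i - \hat\vartheta_j\| > \underline{C}^* - 2\epsilon > 2\epsilon$. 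A short induction on the merge steps of HAC (with any of the standard linkages --- single, complete, average --- since they all lie between the min and max pairwise distances) shows the algorithm first merges all within-group points before ever merging across groups, so cutting the dendrogram at $K = K^*$ recovers exactly $(G_{1|K^*},\dots,G_{K^*|K^*})$ up to relabeling. Since the event has probability $\to 1$, $\Pr(\mathcal{M})\to 1$.

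The main obstacle is step (b)--(a) of part (i): obtaining the uniform-in-$i$ control of the growing-dimension design matrix and cross term with only a finite number $q>4$ of moments and with $N$ possibly diverging polynomially in $T$. This is where the mixing inequality, the entrywise union bound over $\sim Nm^2$ terms, and the precise form of Assumption~\ref{A:tuningPara}(ii) all have to be orchestrated; it is essentially the technical heart flagged in the paper's discussion (the ``most stringent requirements arise from the estimation of $\frac1T Z_{im}'Z_{im}$''), and I expect it to be deferred to the technical lemmas in Appendix~\ref{APP:lemmaProofs}. Everything downstream --- the bias term, $\hat\sigma_{vi}^2$, and the entire part~(ii) --- is comparatively routine given that lemma.
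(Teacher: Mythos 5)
Your proposal is correct and follows essentially the same route as the paper: part (i) is proved via the decomposition of $\widehat{\tilde{\pi}}_i-\tilde{\pi}_i^0$ into design-matrix-inverse times (noise plus sieve bias), controlled uniformly in $i$ by a Bernstein inequality for strong-mixing arrays with entrywise union bounds over the $O(Nm^2)$ terms (the paper's Lemmas \ref{LE:bound}--\ref{LE:uni_converge}), and part (ii) uses exactly the same separation argument, exploiting the orthonormality of the cosine basis to transfer Assumption \ref{A:group_diff} to the coefficient $\ell^2$-norm up to an $O(m^{-\kappa})$ bias and then comparing maximal within-group to minimal between-group distances. The only cosmetic difference is your choice of threshold ($\epsilon<\underline{C}^*/3$ versus the paper's $\underline{C}^*/6$, which also absorbs the factor-of-two loss from the bias in the between-group lower bound), a constant adjustment that does not affect the argument.
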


Theorem \ref{TH:classify} (i) establishes the uniform convergence
of $\hat{\vartheta}_{i}$ for $i=1,2,\ldots,N$ provided $m\to\infty$. Building on this,
Theorem \ref{TH:classify} (ii) demonstrates that the probability
of correct classification approaches 1, provided that $K=K^{*}$.
In the next section, we will argue that the $K$ we choose converges
to $K^{*}$ with probability approaching 1, and we discuss the asymptotic
properties of the post-classification estimates.

We note that significantly fewer assumptions are required for consistency in classification compared to those needed for post-classification and determining the number of groups.  For instance, we do not need independence or weak
dependence across $i$, nor do we require specific distributional assumptions on $v_{it}$ and $u_{i}$.

\subsection{The Number of Groups and Post-Classification Estimator}
\label{SEC:post}

In this section, we address the question regarding the choice of $K$ and the post-classification estimation. We begin by presenting additional assumptions necessary for this analysis.

\begin{assumption}\label{A:additional-1} $(x_{i},\varepsilon_{i})$,
for $i=1,2,\ldots,N$ are independent across $i$. 
\end{assumption}

\begin{assumption}\label{A:group} $N_{k}\propto N$ for each $k=1,2,\ldots,K^{*}$.
\end{assumption}

\begin{assumption}\label{A:stochasticF} $v_{it}\overset{iid}{\sim}N\left(0,\sigma_{vi}^{2}\right)$ across $t.$ There exist an integer $\mathcal{K}^{*}\geq1$, such that, 
\[
\alpha_{i}^{0}-u_{i}\overset{d}{\sim}\alpha_{(j)}^{0}-\left\vert N\left(0,\sigma_{u(j)}^{2}\right)\right\vert \textrm{ with probability }\tau_{j}^{0},j=1,2,...,\mathcal{K}^{*},
\]
where $0<\tau_{j}^{0}<1$ and $\sigma_{u(j)}^{2}>C>0$ for $j=1,2,...,\mathcal{K}^{*},$
$\tau_{1}^{0}+\tau_{2}^{0}+...+\tau_{\mathcal{K}^{*}}^{0}=1,$ $\left(\alpha_{(j)}^{0},\sigma_{u(j)}^{2}\right)$
differ across $j=1,2,...,\mathcal{K}^{*}$. Lastly, the sequences
$\{v_{it}\}_{t=1}^{T}$, $u_{i}$, and $\{x_{it}\}_{t=1}^{T}$ are
mutually independent. \end{assumption}

\begin{assumption}\label{A:tuning2} (i) $T\propto N^{C_{*}}$ for
some positive $C_{*}$ as $N\to\infty$; (ii) $\underline{m}\to\infty$
as $T,N\to\infty$. Additionally, $\underline{m}/T\to0$, $\underline{m}^{q/2+2}(\log N)^{2q}/(NT)^{q/2-1}\to0$,
and $NT/\underline{m}^{1+2\kappa}\to0$; (iii) $C_{*}>1/(2\kappa)$
and $(q-2)\kappa>3$. \end{assumption}

Assumption \ref{A:additional-1} further imposes independence across
$i.$ Assumption \ref{A:group} says the number of members in each
group is proportional to $N$. This condition is not necessary, but
it facilitates expositions. Assumption \ref{A:stochasticF} specifies the
distributional conditions on the error terms. As explained in Section \ref{SEC:inefficiency}, these conditions are essential for the identification of $\sigma_{u}^{2}$, and common in the literature, (see, for e.g., \citet{YaoZhangKum2019}). 
Assumption \ref{A:tuning2} places
restrictions on the rate of growth of $T$ relative to $N$, and the rate of growth of the tuning parameter $\underline{m}$. The condition in (iii) is set to ensure that the set of $\underline{m}$ that satisfies (ii) is
not empty. We need $\underline{m}/T\rightarrow0$ so that the ``design''
matrix $\text{E}\left(\tilde{z}_{it}\tilde{z}_{it}'\right)$ is still
well-behaved, as required in Lemma \ref{LE:splines}. However, condition \textit{$\underline{m}/T\rightarrow0$}
can be restrictive when $N$ is much larger than $T$. $\left.\underline{m}^{q/2+2}\left(\log N\right)^{2q}\right/\left(NT\right)^{q/2-1}\rightarrow0$
is assumed to ensure the sample version of the design matrix, namely $Q_{(k),zz}$
defined in (\ref{eq:Qkzz}), is of full rank with very high probability.
Note that the dimension of $Q_{(k),zz}$ is diverging, so the consistency
of this matrix requires uniform convergence of all elements and hence
this restriction. $NT\left/\underline{m}^{1+2\kappa}\right.\rightarrow0$
ensures the bias term is asymptotically negligible. In the special
case where $\kappa\geq2,$ we need $C_{*}>1/4,$ and $\left(q-2\right)\kappa>3$
is satisfied due to $q>4$ in Assumption \ref{A:moment}. One can
then set, for example, $\underline{m}=\left(NT\right)^{1/4.8}$, which
satisfies condition (ii).

We show the asymptotic properties of our estimators for the case where
$\mathcal{K}^{*}\geq2$. The case in which $\alpha_{i}^{0}-u_{i}$ comes
from a unique distribution is straightforward given this result.

Recall that $\mathbb{B}_{-0}^{m}\left(\tau_{t}\right)'\pi_{i0}^{0}$
and $\mathbb{B}^{m}\left(\tau_{t}\right)^{\prime}\pi_{il}^{0}$ represent
the approximations of $\alpha_{i}\left(\tau_{t}\right)$ and $\beta_{il}\left(\tau_{t}\right).$
For the coefficients on the frontiers, let $\theta\left(s\right)\equiv\left(\alpha\left(s\right),\beta\left(s\right)'\right)',$
and correspondingly $\hat{\theta}\left(s\right)=\left(\mathbb{B}_{-0}^{\underline{m}}\left(\tau_{t}\right)'\hat{\pi}_{0},\mathbb{B}^{\underline{m}}\left(\tau_{t}\right)^{\prime}\hat{\pi}_{1},...,\mathbb{B}^{\underline{m}}\left(\tau_{t}\right)^{\prime}\hat{\pi}_{p}\right)'$.
For the parameters in the distribution of $\alpha_{i}^{0}-u_{i}$,
we denote 
\[
\varrho^{0}\equiv\left(\alpha_{(1)}^{0},\sigma_{u(1)}^{2},...,\alpha_{(\mathcal{K^{*}})}^{0},\sigma_{u(\mathcal{K}^{*})}^{2},\tau_{1}^{0},...,\tau_{\mathcal{\mathcal{K}^{*}}-1}^{0}\right),
\]
and correspondingly 
\[
\hat{\varrho}=\left(\hat{\alpha}_{(1)}^{0},\hat{\sigma}_{u(1)}^{2},...,\hat{\alpha}_{(\mathcal{K}^{*})}^{0},\hat{\sigma}_{u(\mathcal{\mathcal{K}^{*}})}^{2},\hat{\tau}_{1},...,\hat{\tau}_{\mathcal{\mathcal{K}^{*}}-1}\right).
\]
The following notations are used to characterize the asymptotic distribution. Denote 
\[
\mathbb{M_{B}}\left(s\right)\equiv\left(\begin{array}{cccc}
\mathbb{B}_{-0}^{\underline{m}}\left(s\right)' & 0 & \cdots & 0\\
0 & \mathbb{B}^{\underline{m}}\left(s\right)' & \cdots & 0\\
\vdots & \vdots & \ddots & \vdots\\
0 & 0 & \cdots & \mathbb{B}^{\underline{m}}\left(s\right)'
\end{array}\right)_{\left(p+1\right)\times\left(\underline{m}-1+\underline{m}p\right)},
\]
\begin{equation}
Q_{(k),zz}=\frac{1}{N_{k}T}\sum_{i\in G_{k|K^{*}}}\sum_{t=1}^{T}\underline{\ddot{z}}_{it}\underline{\ddot{z}}_{it}',\label{eq:Qkzz}
\end{equation}
and 
\[
\mathbb{S}_{(k)}\left(s\right)=\frac{\sigma_{v\left(k\right)}^{*2}}{\underline{m}}\mathbb{M_{B}}\left(s\right)Q_{(k),zz}^{-1}\mathbb{M_{B}}\left(s\right)',
\]
noting that $\mathbb{S}_{(k)}\left(s\right)$ is positive definite with very high probability; which we show it in equation (\ref{eq:var(Ak2)}) of Appendix \ref{APP:lemmaProofs}. For notation convenience, write 
\[
\tilde{f}_{i\left(k\right)}\left(\varrho\right)\equiv\tilde{f}\left(y_{i}\left\vert x_{i};\varrho,\vartheta_{\left(k|K^{*}\right)}\right.\right),
\]
and 
\[
\mathbb{I}\equiv\left.-\text{E}\left[\frac{1}{N}\sum_{k=1}^{K^{*}}\sum_{i\in G_{k}}\frac{\partial^{2}}{\partial\varrho\partial\varrho'}\log\tilde{f}_{i\left(k\right)}\left(\varrho\right)\right]\right|_{\varrho=\varrho^{0}},
\]
with $\mathbb{I}^{1/2}$ denoting the matrix such that $\mathbb{I}^{1/2}\mathbb{I}^{1/2\prime}=\mathbb{I}$.
$\mathbb{I}$ is a positive definite matrix with finite eigenvalues,
as shown in Appendix \ref{App:derivatives}. As before, we
show the asymptotic property of $\hat{\theta},\hat{\sigma}_{v}^{2}$ and
$\hat{\varrho}$ pretending that we know $K^{*}$ and $\mathcal{K}^{*}.$
We then show that $\hat{K}$ and $\hat{\mathcal{K}}$ converge to
$K^{*}$ and $\mathcal{K}^{*}$, respectively, with probability approaching
1.

\begin{theorem}\label{TH:post-estimation} Suppose Assumptions \ref{A:mixing}
through \ref{A:tuning2} hold, $\hat{K}(\lambda_{NT})=K^{*}$ and
$\hat{\mathcal{K}}\left(\tilde{\lambda}_{NT}\right)=\mathcal{K}^{*}$.
Let $I_{l}$ denote the $l\times l$ identity matrix. Then, for each
$k=1,2,\ldots,K^{*}$,

\noindent (i) 
\[
\sqrt{\frac{N_{k}T}{\underline{m}}}\mathbb{S}_{(k)}^{-1/2}(s)\left(\hat{\theta}_{(k|K^{*})}(s)-\theta_{(k)}^{*}(s)\right)\overset{d}{\rightarrow}N(0,I_{p+1});
\]
(ii) 
\[
\sqrt{N_{k}T}\left(\hat{\sigma}_{v(k|K^{*})}^{2}-\sigma_{v(k)}^{*2}\right)\overset{d}{\rightarrow}N\left(0,\text{\emph{Var}}(v_{it}^{2}|i\in G_{k|K^{*}})\right);
\]
(iii) 
\[
\sqrt{N}\mathbb{I}^{-1/2}(\hat{\varrho}-\varrho^{0})\overset{d}{\rightarrow}N(0,I_{3\mathcal{K}^{*}-1}).
\]

\end{theorem}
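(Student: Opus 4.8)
The plan is to prove the three claims in sequence, building on the classification-consistency result of Theorem \ref{TH:classify}, because on the event $\mathcal{M}$ (together with $\hat K=K^{*}$) the estimated groups coincide with the true groups, so all quantities in Theorem \ref{TH:post-estimation} are—up to an event of vanishing probability—computed on the correct partition. I would first state this reduction explicitly: since $\Pr(\mathcal M)\to1$ and $\Pr(\hat K(\lambda_{NT})=K^{*})\to1$ under the maintained assumptions, it suffices to establish the asymptotics conditional on the true groups being recovered, and then apply the standard "convergence in probability of the indicator" argument to remove the conditioning. This is what makes the problem decompose into (a) a sieve least-squares problem within each group, for parts (i)–(ii), and (b) a parametric MLE problem for the five-dimensional parameter $\varrho$ in the mixture, for part (iii).

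For part (i), I would expand $\hat\pi_{(k|K^{*})}-\pi^{0}_{(k)}=Q_{(k),zz}^{-1}\big(\tfrac{1}{N_kT}\sum_{i\in G_k}\sum_t \underline{\ddot z}_{it}\ddot v_{it}\big)+Q_{(k),zz}^{-1}\big(\text{bias from truncation}\big)$, using the within-transformation and the approximation in (\ref{EQ:approx}). The truncation bias is $O(\underline m^{-\kappa})$ coordinatewise, and the condition $NT/\underline m^{1+2\kappa}\to0$ in Assumption \ref{A:tuning2}(ii) is exactly what forces $\sqrt{N_kT/\underline m}$ times the bias to vanish; the full-rank behavior of $Q_{(k),zz}$ with its diverging dimension is handled via Lemma \ref{LE:splines} and the uniform-convergence condition $\underline m^{q/2+2}(\log N)^{2q}/(NT)^{q/2-1}\to0$. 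I would then pre-multiply by $\mathbb{M_B}(s)$ to pass from $\hat\pi$ to $\hat\theta(s)$, compute the variance of the linear-in-$\ddot v$ term to identify $\mathbb S_{(k)}(s)$ as the asymptotic variance, and apply a CLT (Lyapunov/Cramér–Wold, using independence across $i$ from Assumption \ref{A:additional-1}, the mixing in $t$ from Assumption \ref{A:mixing}, and the moment bound in Assumption \ref{A:moment}) to a fixed linear combination $c'\mathbb{M_B}(s)(\hat\pi-\pi^0)$; dividing by the estimated standard deviation gives the stated $N(0,I_{p+1})$ limit. Part (ii) is then comparatively routine: $\hat\sigma^2_{v(k|\hat K)}$ is a sample average of $\ddot v_{it}^{2}$ plus terms that are negligible at rate $\sqrt{N_kT}$ once part (i)'s rate is in hand, so a CLT for the i.i.d.-across-$i$, mixing-in-$t$ array $v_{it}^{2}$ delivers the normal limit with variance $\mathrm{Var}(v_{it}^{2}\mid i\in G_{k|K^{*}})$; here I would use that the within-demeaning contributes only an $O_p(1/T)$ correction.

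For part (iii), the estimator $\hat\varrho$ maximizes $\sum_k\sum_{i\in G_k}\log\tilde f_{i(k)}(\varrho)$ with the nuisance $\vartheta_{(k)}$ replaced by the post-classification estimate $\hat\vartheta_{(k|\hat K)}$. I would argue consistency of $\hat\varrho$ first (identification of the mixture parameters is granted by Assumption \ref{A:stochasticF}, with the $\sigma_u^2>C>0$ lower bounds keeping the parameter in a compact well-behaved region and ruling out the usual mixture-degeneracy), then do the standard Taylor expansion of the score around $\varrho^{0}$: $0=\tfrac1N\sum_{i}s_{i(k)}(\varrho^0;\hat\vartheta)+\big(\tfrac1N\sum_i \nabla s_{i(k)}(\bar\varrho;\hat\vartheta)\big)(\hat\varrho-\varrho^0)$. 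The Hessian term converges to $-\mathbb I$ by a uniform LLN and the fact that $\hat\vartheta_{(k|\hat K)}\to\vartheta^*_{(k)}$, and positive-definiteness of $\mathbb I$ with finite eigenvalues is the content of Appendix \ref{App:derivatives}. The main obstacle—and the step I would spend the most care on—is controlling the effect of the estimated nuisance $\hat\vartheta_{(k|\hat K)}$ inside the score: I need $\sqrt N\,\tfrac1N\sum_i\big(s_{i(k)}(\varrho^0;\hat\vartheta)-s_{i(k)}(\varrho^0;\vartheta^*)\big)=o_p(1)$. This requires either that $\mathrm{E}[\partial s_{i}/\partial\vartheta]=0$ at the truth (a Neyman-orthogonality-type cancellation coming from the block structure of the likelihood and the independence of $u_i,v_{it},x_{it}$), or, failing exact orthogonality, that the convergence rate of $\hat\vartheta_{(k|\hat K)}$ from part (i) is fast enough that $\sqrt N$ times the first-order nuisance effect is negligible—which is where the relative-rate condition $T\propto N^{C_{*}}$ with $C_{*}>1/(2\kappa)$ in Assumption \ref{A:tuning2} does its work, since the frontier estimates converge at rate $\sqrt{N_kT/\underline m}$ which dominates $\sqrt N$. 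Once this nuisance term is shown to be asymptotically negligible, the remaining leading term $\sqrt N\,\tfrac1N\sum_i s_{i(k)}(\varrho^0;\vartheta^*)$ is a sum of i.i.d. (across $i$) mean-zero score contributions with variance $\mathbb I$, so a CLT gives $\sqrt N(\hat\varrho-\varrho^0)\xrightarrow{d}N(0,\mathbb I^{-1})$, equivalently the stated $\sqrt N\,\mathbb I^{-1/2}(\hat\varrho-\varrho^0)\xrightarrow{d}N(0,I_5)$.
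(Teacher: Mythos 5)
Your proposal is correct and follows essentially the same route as the paper's proof: the same reduction to the correct-classification event $\mathcal{M}$ followed by de-conditioning, the same bias-plus-stochastic decomposition for (i) with the bias killed by $NT/\underline{m}^{1+2\kappa}\to0$ and a Cram\'er--Wold/Lindeberg CLT for the leading term, the same leading-term-plus-negligible-remainders expansion of $\hat\sigma^2_{v(k)}$ for (ii), and the same standard MLE expansion for (iii) in which the plugged-in nuisance $\hat\vartheta_{(k|\hat K)}$ is harmless because it converges strictly faster than $\sqrt{N}$. Your treatment of the nuisance effect in the score for part (iii) is, if anything, slightly more explicit than the paper's one-line rate comparison.
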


This theorem establishes the asymptotic properties of the post-classification estimators. As expected, $\hat{\theta}$ converges at a nonparametric
rate, while $\hat{\sigma}_{v}^{2}$ converges at a parametric rate.
The convergence rate of $\hat{\varrho}$ is $\sqrt{N}$ and does not
depend on $T$. This result may appear odd, but there is a simple explanation. Note that
$\varrho$ collects only the parameters that govern the distribution of $u_{i}$. The best scenario of estimating $\varrho$ is that we observe $u_{1},u_{2},...,u_{N}$ directly, in which case the rate of convergence of $\hat{\varrho}$ is $\sqrt{N}$. In theory, the value of $T$ does not impact the convergence rate of $\hat{\varrho}$. However, in finite samples, large $T$ can potentially ensure a more precise estimation of $u_{i}$, and thus can possibly improve the finite-sample performance of $\hat{\varrho}$.

In addition, the validity of the proposed information criteria relies
on these properties, as they depend on the accuracy and consistency
of the post-classification estimators, as demonstrated above. For example,
$\lambda_{NT}$ depends on both $N$ and $T$ (due to the rates of
$\hat{\theta}$ and $\hat{\sigma}_{v}^{2}$), while $\tilde{\lambda}_{NT}$
depends only on $N$ (due to the rate of $\hat{\varrho}$).

\begin{proposition}\label{Prop:classify} Suppose Assumptions \ref{A:mixing} through \ref{A:tuning2} hold.

\noindent (i) Select a value of $\lambda_{NT}$ such that $(NT)^{-1/2}\lambda_{NT}\rightarrow\infty$
and $(NT)^{-1}\lambda_{NT}\rightarrow0$. Then, 
\[
\Pr\left(\hat{K}\left(\lambda_{NT}\right)=K^{*}\right)\rightarrow1.
\]

\noindent (ii) Select a value of $\tilde{\lambda}_{NT}$ such that $\tilde{\lambda}_{NT}\rightarrow\infty$
and $N^{-1}\tilde{\lambda}_{NT}\rightarrow0$. Then, 
\[
\Pr\left(\hat{\mathcal{K}}\left(\tilde{\lambda}_{NT}\right)=\mathcal{K}^{*}\right)\rightarrow1.
\]

\end{proposition}

Proposition \ref{Prop:classify} presents conditions under which the
information criteria are valid, focusing on the tuning parameters
$\lambda_{NT}$ and $\tilde{\lambda}_{NT}$. As highlighted earlier,
selecting the correct range for $\lambda_{NT}$ and $\tilde{\lambda}_{NT}$
is crucial. In the subsequent section, we will evaluate specific values
for $\lambda_{NT}$ and $\tilde{\lambda}_{NT}$, identify those that
perform well in simulations, and recommend practical choices.

\section{Monte Carlo Simulations}
\label{SEC:simulations} 

\subsection{Simulation Designs}

Heterogeneity in panel SF models arises from various
sources. Thus, we design three different Monte Carlo experiments that
allow us to examine the finite-sample performance of the proposed
method and its ability to identify sources of heterogeneity. In
the first design, we study the classification for the case with heterogeneity from frontiers yet with constant variances of $v_{it}$. In the second design, we study the case with heterogeneous variances of $v_{it}$, yet homogeneous frontiers. In the third design, we check the performance of our methods in a more general and much more complicated scenario. In all three designs, we consider two sub-cases where
the term $\alpha^{0}-u$ comes from either unique or from mixture
distribution, which previous methods did not consider. Due to the similarity and the space constraint, we defer the description and simulation results of Designs 1 and 2 to Appendix \ref{APP:Tables} and only present Design 3 here.

\textbf{Design 3:} In our third design (consisting of DGP3U and DGP3M),
we study the performance of our method in a setting similar to those
in \citet{YaoZhangKum2019}, where there are three groups for both
the frontiers and variances, with two regressors. The DGP is 
\[
y_{it}=\alpha_{i}^{0}-u_{i}+\alpha_{i}(\tau_{t})+x_{it1}\beta_{i1}(\tau_{t})+x_{it2}\beta_{i2}(\tau_{t})+v_{it},
\]
where $x_{itl}\sim N(1,0.5^{2})$ for both regressors $l=1,2$.
Group 1 frontiers and error term are specified as $\alpha_{(1)}(s)=-\frac{1}{1+3s}-\varpi_{1}$,
$\beta_{(1)1}(s)=2s^{3}$, $\beta_{(1)2}(s)=\ln(5s)$, $v_{it}\overset{iid}{\sim}N(0,\sigma_{v(1)}^{2})$
with $\sigma_{v(1)}=0.75$ and $\varpi_{1}$ is a mean of $-\frac{1}{1+3s}$.
Group 2 frontiers and error term are specified as $\alpha_{(2)}(s)=-\cos(4s)-\varpi_{2}$,
$\beta_{(2)1}(s)=\sin(4s)$, $\beta_{(2)2}(s)=\ln(\frac{s}{1-s})$,
$v_{it}\overset{iid}{\sim}N(0,\sigma_{v(2)}^{2})$ with $\sigma_{v(2)}=1.25$
and $\varpi_{2}$ is a mean of $-\cos(4s)$. Group 3 frontiers and
error term are specified as $\alpha_{(3)}(s)=5s^{2}-s+1-\varpi_{3}$,
$\beta_{(3)1}(s)=\exp{(-s)}+\sin(5s)$, $\beta_{(3)2}(s)=-5\sin(s)\cos(5s)+1$,
$v_{it}\overset{iid}{\sim}N(0,\sigma_{v(3)}^{2})$, with $\sigma_{v(3)}=1.25$
and $\varpi_{3}$ is a mean of $5s^{2}-s+1$. We consider two
sub-cases of $\alpha^{0}-u$,  which we denote
them as DGP3U and DGP3M.  In DGP3U,  $\alpha^{0}-u$
comes from a unique distribution, with $\alpha^{0}=0.5$ and $u_{i}\overset{iid}{\sim}|N(0,\sigma_{u}^{2})|$,
where $\sigma_{u}=1$. In DGP3M, we let $\alpha^{0}-u$ to come from $\alpha_{(1)}^{0}-|N(0,\sigma_{u(1)}^{2})|$
with probability $\tau^{0}$ and $\alpha_{(2)}^{0}-|N(0,\sigma_{u(2)}^{2})|$
with probability $1-\tau^{0}$, where $\alpha_{(1)}^{0}=1$, $\alpha_{(2)}^{0}=-1$,
$\sigma_{u(1)}=0.75$, $\sigma_{u(2)}=1.25$ and $\tau^{0}=0.5$. It is important to note that the mixture structure of $\alpha^{0}-u$ is
independent of the grouping structure.

We evaluate the performance of each model and the case for any combination
of $N=100,250,$ or $500$ and $T=50,75,$ or $100$. Thus,
there are $3\times2\times9=54$ different cases. We assess the finite
sample properties of our method with 500 MC replications.

Note $(N,T)=(466,80)$ in the empirical application of the paper,
so our simulations, including the recommended tuning parameters in
the next section, offer meaningful and practical guidance.

\subsection{Choices of Tuning Parameters} \label{SEC:tuning}

We set $m=\left\lfloor T^{1/5}\right\rfloor $, where $\left\lfloor \cdot\right\rfloor$ denotes the integer part and $\underline{m}=\left\lfloor \left(N_{k}T\right)^{1/4.8}\right\rfloor $
for each group $k$. The value of the two tuning parameters align with standard choices in the literature. 

Theoretically, the valid ranges for $\lambda_{NT}$ and $\tilde{\lambda}_{NT}$
are quite broad. Based on simulation evidence, we recommend setting
$\lambda_{NT}=\left(c_{\lambda}\sqrt{NT}\log(NT)\right)/2$ and $\tilde{\lambda}_{NT}=\left(\tilde{c}_{\lambda}\sqrt{N}\log N\right)/8$,
where $c_{\lambda}$ and $\tilde{c}_{\lambda}$ are constants. These
values of $\lambda_{NT}$ and $\tilde{\lambda}_{NT}$ meet the conditions
specified in Proposition \ref{Prop:classify}. The constants $c_{\lambda}$
and $\tilde{c}_{\lambda}$ serve as sensitivity parameters, over which
we conduct sensitivity analyses for the choice of $\lambda_{NT}$
and $\tilde{\lambda}_{NT}$. Specifically, we test values of $c_{\lambda}$
and $\tilde{c}_{\lambda}$ in $\{3/2,1,3/4\}$, with $c_{\lambda}=\tilde{c}_{\lambda}=1$
as the benchmark setting.

In finding the number of components in the mixture distribution, we restrict our attention to one or two components, i.e., $\mathcal{K}=1,2$. We find that the small-sample properties of mixtures with more than two components perform poorly in simulations. We conjecture that this is due to the complicated log-likelihood functions
(see, e.g., equations (\ref{EQ:likelihood_appro}) and (\ref{EQ:log_approx_p})), which cannot effectively
handle mixtures with more than two components. In Appendix \ref{APP:more_mix}, we propose an alternative method to
identify the mixture structure with potentially more than two components.
We also conduct simulations to evaluate its finite-sample performance.
The simulation results for this alternative method for allowing mixtures with more than two components suggest that the method performs well in determining the correct number of components, but the parameter estimates can be off. Although not perfect, these findings provide a foundation for future research in this direction.

\subsection{Simulation Results}

We report results for DGP3M, the most complex model, featuring three
groups and a mixture distribution structure in $\alpha_{i}^{0}-u_{i}$. Results for the remaining DGPs are reported in Appendix \ref{APP:Tables}.

We first report the performance of the IC in Step 3 for coefficient
groups and Step 5 for the mixture distribution structure in Table \ref{tab:IC_DGP3M} for the benchmark specification with $c_{\lambda}=\tilde{c}_{\lambda}=1$. Additionally, Table \ref{tab:IC_DGP3M} includes the classification errors, denoted as $\bar{\text{Pr}}(\bar{F})$ . It is defined as the average percentage of observations misclassified to other groups across 500 replications. 
The performance of the IC in Step 3 for selecting the correct number
of groups, $K^{*}=3$, is reasonable. For $N=500$, the classification
error in Step 3 is less than 1 percent for each $T=50,75,100$. The
performance of the Step 5 IC is also strong for DGP3M, choosing the
correct specification (mixture distribution) with a probability close
to 1. For DPG3U, where $\alpha_{i}^{0}-u_{i}$ comes from a unique
distribution, Table \ref{tab:IC_DGP3U} in Appendix \ref{APP:Tables}
shows that the probability of Step 5 IC selecting the correct distribution (unique distribution) quickly approaches 1 as $N$ increases. Sensitivity analyses for both ICs in Steps 3 and 5, shown in Tables \ref{tab:Sensitivity_Step3_IC_DGP1U}
- \ref{tab:Sensitivity_Step5_IC_DGP3M}, demonstrate that the results
are robust to the selected range of tuning parameters.

We assess the accuracy of the estimates of $\{\sigma_{v}\text{s},\alpha_{1}^{0},\sigma_{u(1)},\alpha_{2}^{0},\sigma_{u(2)},\tau^{0}\}$
using two measures: (i) bias (BIAS) and (ii) root mean squared errors
(RMSE). The reported values in Table \ref{tab:DGP3M_parameters},
obtained by averaging over 500 MC iterations, are reasonable.

Illustrated in Figures \ref{fig:Grouped_frontiers_DGP6_N500_T50},
\ref{fig:Grouped_frontiers_DGP6_N500_T75} and \ref{fig:Grouped_frontiers_DGP6_N500_T100}
in Appendix \ref{APP:Tables} are the estimates of time-varying frontiers
for $(N,T)=(500,50),(500,75),\text{ and }(500,100)$. Black solid lines
depict the true time-varying frontier, dotted lines show the mean
of the estimated grouped frontiers averaged over 500 MC iterations,
and the gray shaded region depicts the 90th percentile of the estimates.
It is clear from Figure \ref{fig:Grouped_frontiers_DGP6_N500_T50}
that, while the mean over MC iterations is reasonably close to the
true frontiers, the 90th percentile bands are wide, suggesting possible
classification errors between neighboring groups. Figures \ref{fig:Grouped_frontiers_DGP6_N500_T75}
and \ref{fig:Grouped_frontiers_DGP6_N500_T100} show that the accuracy
of frontier grouping improves as $T$ increases. This is consistent
with the theory developed, since the Step 2 classification using HAC
is based on $\hat{\vartheta}_{i}=\left(\hat{\pi}_{i}^{\prime},\hat{\sigma}_{vi}\right)^{\prime}$
obtained using $T$ observations.


\begin{table}[!htb]
\centering
\caption{Performance of ICs for DGP3M}
\vspace{0.2cm}
\begin{tabularx}{\textwidth}{X XXXX X XX}
\toprule
$(N,T)$ & $K=1$ & $K=2$ & $K=3$ & $K=4$ & $\bar{\text{Pr}}(\bar{F})$ & $\alpha^0-u$ uni & $\alpha^0-u$ mix  \\
\midrule

(100,50) & 0.000 & 0.352 & 0.648 & 0.000 & 0.106 & 0.008 & 0.992 \\ 
(100,75) & 0.000 & 0.078 & 0.922 & 0.000 & 0.024 & 0.000 & 1.000 \\ 
(100,100) & 0.000 & 0.000 & 1.000 & 0.000 & 0.024 & 0.000 & 1.000 \\ 
(250,50) & 0.000 & 0.086 & 0.914 & 0.000 & 0.026 & 0.002 & 0.998 \\ 
(250,75) & 0.000 & 0.000 & 1.000 & 0.000 & 0.026 & 0.000 & 1.000 \\ 
(250,100) & 0.000 & 0.000 & 1.000 & 0.000 & 0.026 & 0.000 & 1.000 \\ 
(500,50) & 0.000 & 0.006 & 0.994 & 0.000 & 0.002 & 0.004 & 0.996 \\ 
(500,75) & 0.000 & 0.000 & 1.000 & 0.000 & 0.002 & 0.000 & 1.000 \\ 
(500,100) & 0.000 & 0.000 & 1.000 & 0.000 & 0.002 & 0.000 & 1.000 \\ 
\bottomrule

\end{tabularx}
\begin{tablenotes}
      \footnotesize
      \item 
      \emph{Note}: Results for the baseline case $c_{\lambda}=\tilde{c}_{\lambda}=1$. Reported numbers are probabilities across replications.
\end{tablenotes}
\label{tab:IC_DGP3M}

\end{table}

    
\begin{table}[!htb]
\centering
\caption{BIAS and RMSE over 500 MC iterations for DGP3M}
\vspace{0.2cm}

\scalebox{0.71}{
\begin{tabularx}{1.4\textwidth}{X c XX c XX c XX c XX c XX c XX c XX c XX}
\toprule
&& \multicolumn{2}{c}{$\hat{\sigma}_{v(1)}$} && \multicolumn{2}{c}{$\hat{\sigma}_{v(2)}$} && \multicolumn{2}{c}{$\hat{\sigma}_{v(3)}$} && \multicolumn{2}{c}{$\hat{\tau}$} && \multicolumn{2}{c}{$\hat\alpha^0_{(1)}$} && \multicolumn{2}{c}{$\hat{\sigma}_{u(1)}$} && \multicolumn{2}{c}{$\hat\alpha^0_{(2)}$} && \multicolumn{2}{c}{$\hat{\sigma}_{u(2)}$} \\
\cmidrule{3-4} \cmidrule{6-7} \cmidrule{9-10} \cmidrule{12-13} \cmidrule{15-16} \cmidrule{18-19} \cmidrule{21-22} \cmidrule{24-25}
$(N,T)$ && BIAS & RMSE && BIAS & RMSE && BIAS & RMSE && BIAS & RMSE && BIAS & RMSE && BIAS & RMSE && BIAS & RMSE && BIAS & RMSE\\
\midrule
(100,50) && 0.268 & 0.463 && 0.127 & 0.215 && 0.367 & 0.594 && 0.021 & 0.029 && 0.057 & 0.074 && 0.116 & 0.152 && 0.120 & 0.157 && 0.135 & 0.171 \\ 
(100,75) && 0.082 & 0.214 && 0.074 & 0.174 && 0.154 & 0.372 && 0.015 & 0.027 && 0.042 & 0.052 && 0.103 & 0.145 && 0.100 & 0.131 && 0.126 & 0.162 \\ 
(100,100) && 0.024 & 0.099 && 0.027 & 0.093 && 0.053 & 0.196 && 0.013 & 0.018 && 0.040 & 0.052 && 0.090 & 0.116 && 0.091 & 0.118 && 0.108 & 0.137 \\ 
(250,50) && 0.204 & 0.373 && 0.138 & 0.243 && 0.327 & 0.562 && 0.013 & 0.018 && 0.036 & 0.049 && 0.077 & 0.107 && 0.080 & 0.102 && 0.089 & 0.112 \\ 
(250,75) && 0.034 & 0.129 && 0.035 & 0.116 && 0.069 & 0.241 && 0.009 & 0.012 && 0.026 & 0.033 && 0.062 & 0.079 && 0.062 & 0.076 && 0.070 & 0.089 \\ 
(250,100) && 0.005 & 0.033 && 0.008 & 0.032 && 0.014 & 0.064 && 0.008 & 0.011 && 0.024 & 0.030 && 0.057 & 0.075 && 0.059 & 0.074 && 0.076 & 0.095 \\ 
(500,50) && 0.145 & 0.307 && 0.110 & 0.218 && 0.242 & 0.484 && 0.011 & 0.013 && 0.026 & 0.035 && 0.056 & 0.074 && 0.053 & 0.067 && 0.069 & 0.087 \\ 
(500,75) && 0.009 & 0.064 && 0.009 & 0.044 && 0.018 & 0.100 && 0.007 & 0.009 && 0.018 & 0.022 && 0.042 & 0.055 && 0.042 & 0.053 && 0.053 & 0.068 \\ 
(500,100) && 0.002 & 0.003 && 0.004 & 0.005 && 0.007 & 0.009 && 0.006 & 0.008 && 0.017 & 0.021 && 0.041 & 0.054 && 0.040 & 0.053 && 0.056 & 0.070 \\
\bottomrule
\end{tabularx}}

\label{tab:DGP3M_parameters}
\begin{tablenotes}
      \tiny
      \item 
\end{tablenotes}
\end{table}

\section{Application to the U.S. Commercial Banking Sector}
\label{SEC:application} 

In this section, we apply the developed method for stochastic cost
frontier model to analyze the cost efficiency of the U.S. large commercial banks in presence of a series of gradual deregulation that allowed banks to increase their capacity of operation. We use the same dataset used by \citet{Fengetal2017}, and focus our analysis on a sample
of banks that operate continuously over the period 1986 to 2005 (thereby
mitigating the impact of entry and exit) with assets of at least \$1
billion in 1986 dollars. Data supporting the findings of this
study are available upon request. As briefly explained in \citet{Fengetal2017}, the banking sector over this period saw a number of gradual deregulation that allowed banks to increase the capacity of operation. In particular, the exact timing of the deregulation varied at the state level, and it was not until June 1997 that banks were allowed to operate across states as a result of the Riegle-Neal Interstate Banking and Branching Efficiency Act of 1994.\footnote{See \citet{Fengetal2017} and \citet{JayaratneStrahan1997} for more
detailed discussion of the history of deregulation in the banking
sector.} Given this context, our method that allows us to group banks based on the time-varying frontiers is well suited to capture the effect
of gradual deregulation, as well as to analyze the inefficiency
of banks in presence of such deregulation.

To set the stage, let $i=1,2,\ldots,N$ denote the banks, $t=1,2,\ldots,T$
denote the time periods. The data is recorded in quarterly frequency,
over 1986 to 2005, with $T=80$ and consists of $N=466$ banks. We
assume that banks use three inputs to generate three outputs. Specifically, the inputs used are: (i) price of labor, $W_{it1}$, (ii) price of purchased funds, $W_{it2}$, and (iii) price of core deposits, $W_{it3}$. Generated outputs are: (i) consumer loans, $Y_{it1}$, (ii) non-consumer loans, $Y_{it2}$, consisting of industrial, commercial, and real estate loans, and (iii) securities, $Y_{it3}$, which includes all non-loan financial assets. Summary statistics of these variables are
reported in Table \ref{tab:Summary_Statistics} in Appendix \ref{APP:Tables}.

We estimate a cost frontier $C(Y_{it}, W_{it})$. Accordingly, $Y_{it}$ are output quantities (loan categories, securities) and $W_{itj}$ are input prices. In particular, $W_{it2}$ is the \emph{price of purchased funds}, not an output; loans are treated as outputs under the intermediation view of banking services. This mapping is consistent with cost duality and with our specification, in which the frontier uses \emph{grouped, smoothly time-varying coefficients} to capture heterogeneous technological regimes under staggered state-level deregulation.

The particular variant of the panel SF model we study is a panel stochastic cost frontier model adapted from \citet{Greene2005}: 
\begin{align}
\log c_{it}^{*}=\alpha_{i}^{0} & +u_{i} +\alpha_{i}(\tau_{t})+\beta_{i1}(\tau_{t})\log w_{it1}+\beta_{i2}(\tau_{t})\log w_{it2}\label{eqn:application_model}\nonumber \\
 & +\beta_{i3}(\tau_{t})\log y_{it1}+\beta_{i4}(\tau_{t})\log y_{it2}+\beta_{i5}(\tau_{t})\log y_{it3}+v_{it},
\end{align}
where linear homogeneity is imposed in input prices by the normalizations: $c_{it}^{*}=C_{it}/W_{it3}$, $w_{itl}=W_{itl}/W_{it3}$ for $l=1,2$ and $y_{itl}=Y_{itl}/W_{it3}$ for $l=1,2,3$. The inefficiency term $u_{i}\geq0$ enters the model positively as cost frontier models are derived from the dual cost minimization problem of the firm where the cost function is assumed to be Cobb-Douglas. 

In a cost frontier, interest-rate conditions primarily operate through $W_{it}$, while local demand affects the output  $Y_{it}$.
Our specification already conditions on $(Y_{it},W_{it})$ which are allowed to vary smoothly over time and across latent regimes.
Adding rate or local controls would double-count channels captured by $(Y_{it},W_{it})$.


We estimate the model in \eqref{eqn:application_model} using the
method described in the previous section, setting the tuning parameters
as in Section \ref{SEC:tuning}. We also check the sensitivity of
parameter $c_{\lambda}$ in Step 3 and $\tilde{c}_{\lambda}$ in Step
5 of the proposed method as in Section \ref{SEC:tuning}. Different
values of $c_{\lambda}$ and $\tilde{c}_{\lambda}$ deliver the same
classification results.
\begin{figure}[!htb]
\caption{Scatter plot of elements in $\hat{\vartheta}_{i} = (\hat{\pi}_{i}',\hat{\sigma}_{vi})'$}
\centering
\includegraphics[width=0.8\textwidth]{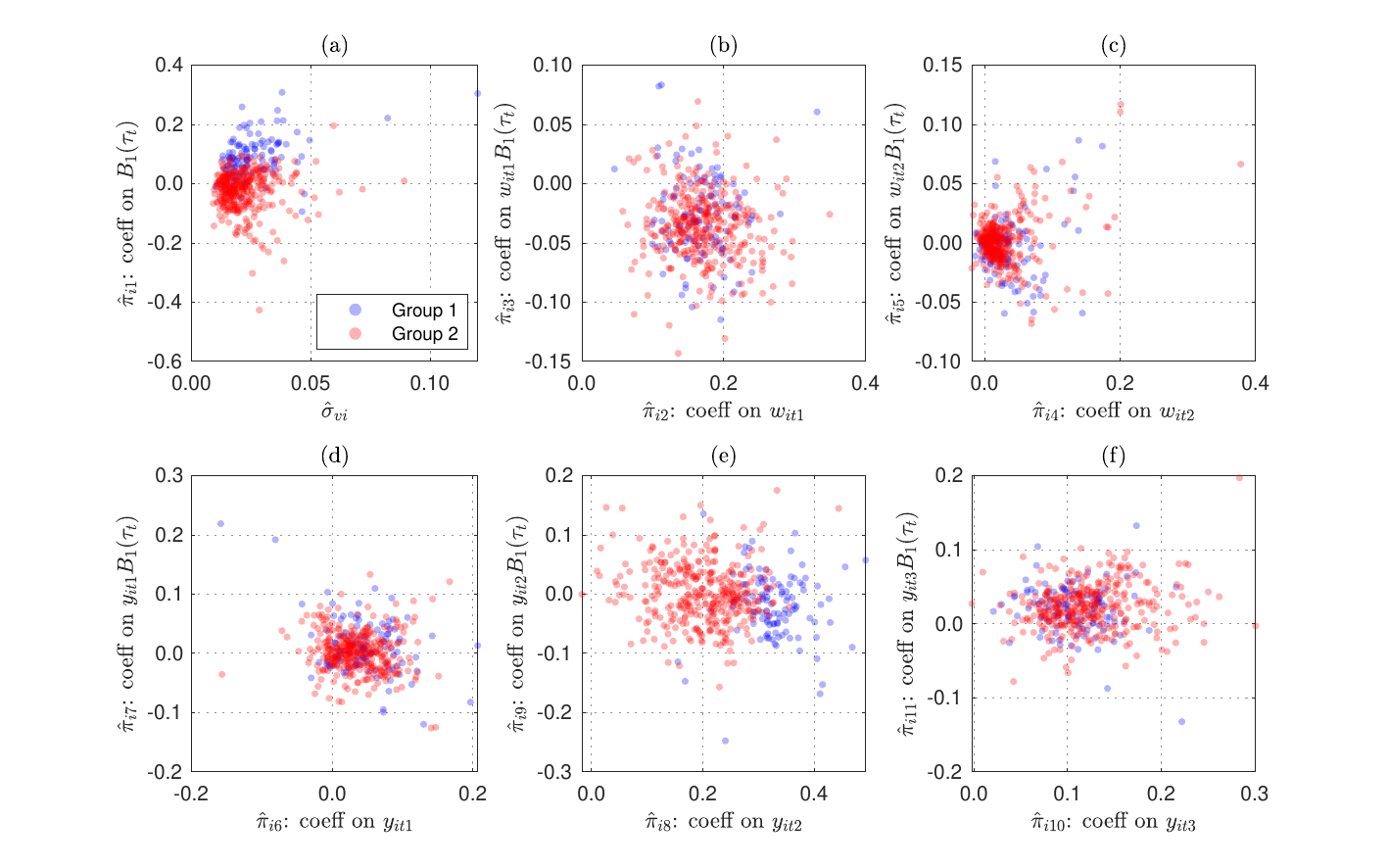}
\label{fig:Grouped_Frontiers_pihat}
\begin{tablenotes}
      \vspace{-24pt}
      \footnotesize
      \item 
      \emph{Note}: Estimates classified as group 1 are plotted as blue dots, while that for group 2 are plotted as red dots. 
\end{tablenotes}
\end{figure}

As in the simulations, we set $\bar{K}=4$. The information criteria
in step 3 selects the optimal number of group for the banks to be
two, splitting $N=466$ banks into $(N_{1},N_{2})=(113,353)$. Figure
\ref{fig:Grouped_Frontiers_pihat} depict the scatter plots of the
elements in $\hat{\vartheta}_{i}=(\hat{\pi}_{i}',\hat{\sigma}_{vi})'$,
that collects the parameters obtained from individual level estimation
in step 1 for classification in step 2. Note $m=\left\lfloor T^{1/5}\right\rfloor =2$,
so each $\hat{\vartheta}_{i}$ is a 12 by 1 vector. Individual estimates classified as groups 1 and 2 are depicted as blue and red dots, respectively. Panel (a) depicts the scatter plot of the estimates $\hat{\pi}_{i1}$ against $\hat{\vartheta}_{i12}$, while panels (b)--(f) depict the respective coefficients on the inputs/outputs $(w_{itl},y_{itl})$ against $(w_{itl}B_{1}(\tau_{t}),y_{itl}B_{1}(\tau_{t}))$. We discuss what drives the classification in Appendix \ref{app:read-groups-lean}.

\begin{figure}[!htb]
\caption{Grouped Frontiers of the U.S. Large Commercial Banks}
\centering
\includegraphics[width=0.8\textwidth]{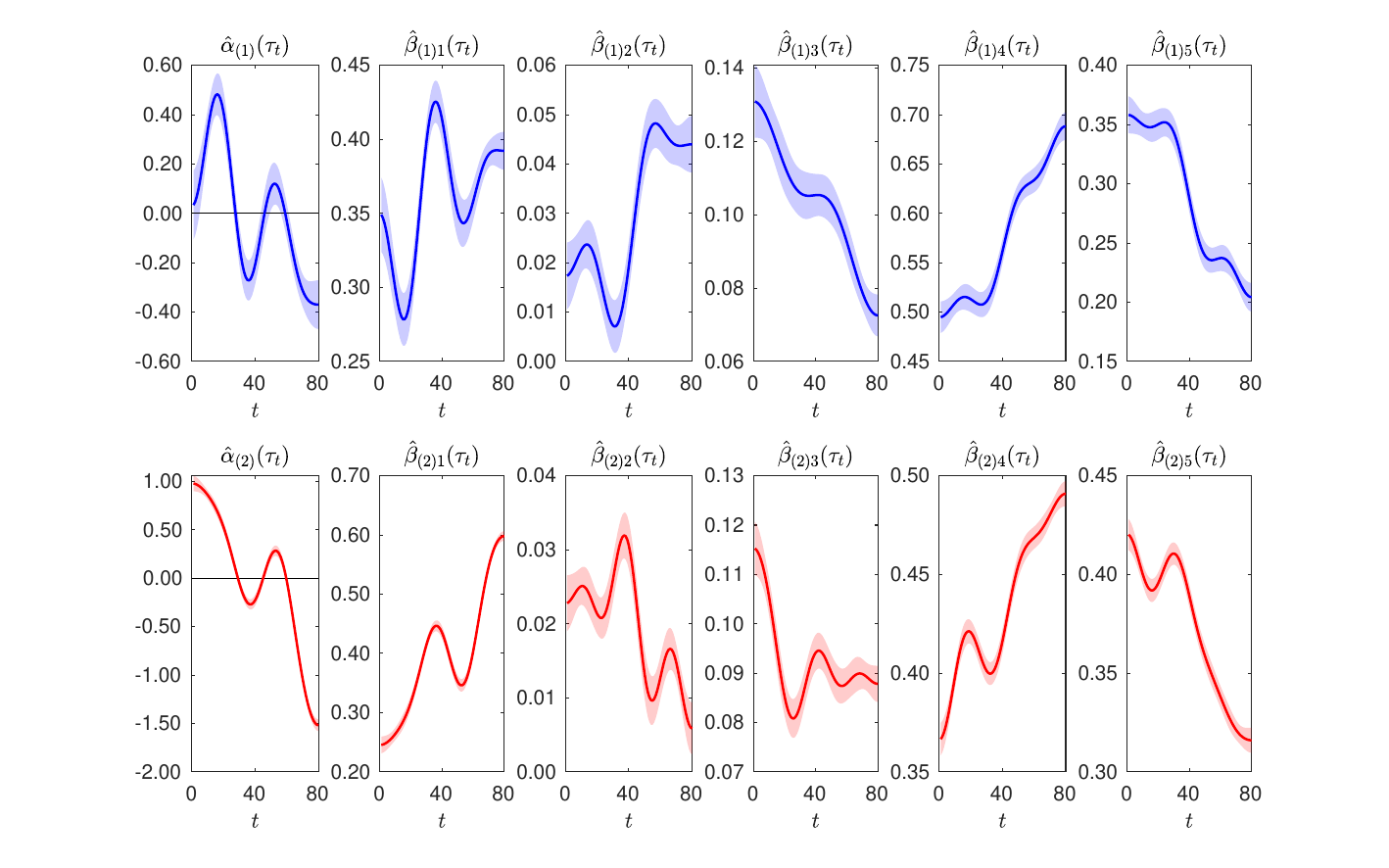}
\label{fig:Grouped_Frontiers_Application_SE}
\begin{tablenotes}
      \vspace{-24pt}
      \footnotesize
      \item 
      \emph{Note}: Top row depict the group 1 time-varying cost frontiers while the bottom row depict that of group 2. Solid lines are the point estimates and the shaded regions are 95\% confidence interval.
\end{tablenotes}
\end{figure}

Figure \ref{fig:Grouped_Frontiers_Application_SE} depicts the frontiers.
The top row depicts the time-varying frontiers of group 1, while the bottom row depicts that of group 2. Solid lines in blue and red are the point estimates for group 1 and group 2 respectively, and the shaded regions depict the 95\% confidence interval. It is evident that there
are substantial time-variations in the estimates, which may be a result
of increasing the capacity of operation as a result of deregulation
in the banking sector. Figure \ref{fig:ES_Application} shows the
estimated economies of scale experienced by two groups of banks, $k=1,2$
defined by the inverse of the sum of elasticities of output, $1/(\hat{\beta}_{(k)3}(\tau_{t})+\hat{\beta}_{(k)4}(\tau_{t})+\hat{\beta}_{(k)5}(\tau_{t}))-1$.
The estimates on economies of scale are comparable to the ones found
in \citet{Greene2005} and suggest some considerable time-variations
for both groups, with group 2 banks enjoying larger economies of scale.

\begin{figure}[ht]
\caption{Estimates of Economy of Scale }
\centering
\includegraphics[width=0.6\textwidth]{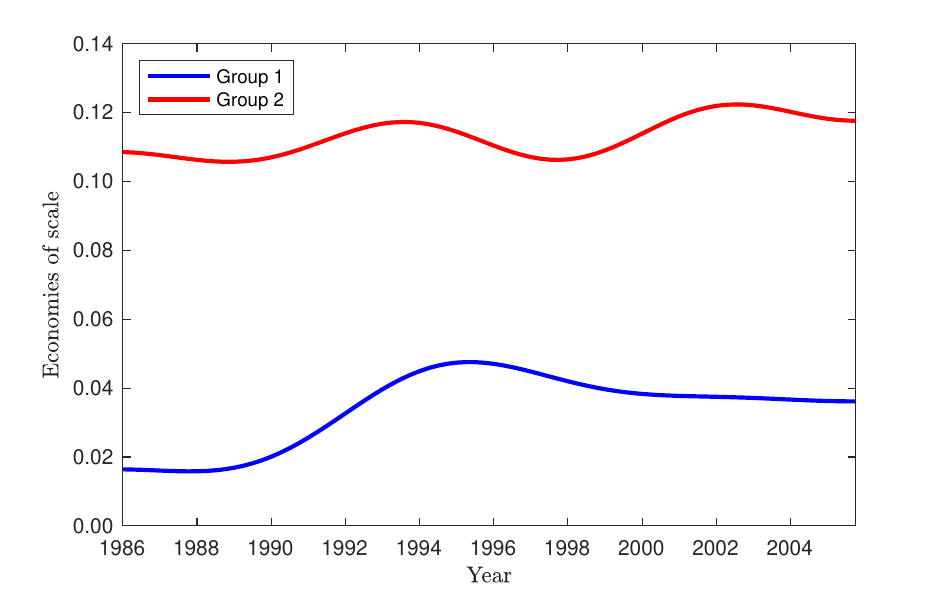}
\label{fig:ES_Application}
\begin{tablenotes}
      \footnotesize
      \item 
      \emph{Note}: Estimates of the economies of scale for each groups are calculated using the point estimates $\hat{\beta}_{(k)l}(\tau_{t})$ for $l=3,4,5$ and $k=1,2$.
\end{tablenotes}
\end{figure}

The results from Step 5 of the proposed method suggest that intercept and idiosyncratic random
effects inefficiency terms, $\alpha^{0}+u$, possess a mixture distribution structure. This result indicates that not only do frontiers form two distinct groups, but so do the level terms that represent the inefficiency of individual banks. The estimated values of the parameters along with the standard errors are presented in Table \ref{tab:Application}. The results
suggest that there are no substantial differences in the standard deviation of random noise, $\hat{\sigma}_{v}$s, although there are significant differences in the standard deviation of the inefficiency terms.

As we briefly mentioned in Section \ref{SEC:inefficiency}, since $\alpha_{i}^{0}$ differs across $i$, we cannot make a valid ranking of the inefficiencies. Luckily, $\hat{\alpha}_{\left(1\right)}^{0}$
and $\hat{\alpha}_{\left(2\right)}^{0}$ are not statistically different  (by the likelihood-ratio test) and as such we can view them the same and construct a ranking of the inefficiencies. Recall that we were estimating cost frontiers. From the estimation results, we can view 
\begin{equation}
\alpha_{i}^{0}+u_{i}\overset{d}{\sim}\begin{cases}
\begin{array}{c}
\hat{\alpha}^{*}+\left|N\left(0,\hat{\sigma}_{u\left(1\right)}^{2}\right)\right|\\
\hat{\alpha}^{*}+\left|N\left(0,\hat{\sigma}_{u\left(2\right)}^{2}\right)\right|
\end{array} & \begin{array}{c}
\textrm{with probability }\hat{\tau}\\
\textrm{with probability }1-\hat{\tau}
\end{array},\end{cases}\label{eq:mixture_distribution}
\end{equation}
where $\hat{\alpha}^{*}=\hat{\tau}\hat{\alpha}_{(1)}^{0}+\left(1-\hat{\tau}\right)\hat{\alpha}_{(2)}^{0}.$ We  compute $\widehat{\textrm{E}\left(\alpha_{i}^{0}+u_{i}|\varepsilon_{i1},...,\varepsilon_{iT}\right)}$ using (\ref{eq:inefficiency_post}).
We compare the ranking in the homogeneous case where the frontiers and
the variances of $v_{it}$ are assumed the same across firms and the
inefficiency term comes from one distribution.  The result of top 60 is reported in Figure \ref{fig:Inefficiency_Ranking_Ranking} in Appendix \ref{APP:application_fig}. We can see that the two rankings differ greatly after the top 3. This highlights the importance of classification to ensure valid inference of inefficiency term. 

\begin{table}[H]
\centering
\caption{Estimates of $\hat{\sigma}_{v}$s and $\hat{\varrho}$}
\vspace{0.2cm}

\begin{tabularx}{\textwidth}{X X X X  X  XX }
\toprule

$\hat{\sigma}_{v(1)}$ & $\hat{\sigma}_{v(2)}$ & $\hat{\tau}$ & $\hat\alpha^0_{(1)}$ & $\hat{\sigma}_{u(1)}$ & $\hat\alpha^0_{(2)}$ & $\hat{\sigma}_{u(2)}$ \\
\midrule
0.0862 & 0.0855 & 0.8748 & 0.0157 & 0.4426 & 0.6161 &   0.7756\\ 
(0.0041) & (0.0008) & (0.1017) & (0.3960) & (0.0362) & (0.1708) & (0.0235) \\
\bottomrule
\end{tabularx}
\label{tab:Application}
\begin{tablenotes}
      \footnotesize
      \item 
      \emph{Note}: Reported in parentheses are the standard errors. 
\end{tablenotes}
\end{table}

\section{Conclusion}

In this paper, we develop a general framework for panel SF models with latent
group structures.  A natural concern is whether allowing for multiple frontiers weakens the interpretation of inefficiency. Our results suggest the opposite. By accounting for latent technological regimes, we prevent unobserved heterogeneity from being mistakenly absorbed into the inefficiency term. Inefficiency in our framework is always measured relative to the appropriate group frontier. This distinction is crucial in empirical applications, such as our U.S. banking study, where ignoring heterogeneity would miscalculate inefficiency.

Two extensions are worth mentioning. First, our framework cannot be
directly generalized to endogenous cases where covariates, $x$ are correlated with the error term, $v$. Extending the framework to accommodate endogeneity is an important direction for future work. Second, it would be valuable to explore a one-step HAC algorithm that avoids the use of information criteria, as proposed by \citet{Mugnier2025}.

\newpage{}

\appendix
\renewcommand\thefigure{\thesection.\arabic{figure}}  
\renewcommand\thetable{\thesection.\arabic{table}}  
\setcounter{table}{0} \setcounter{figure}{0} \setcounter{section}{0}
\numberwithin{equation}{section}

\begin{center}
{\large\textbf{Online Appendix to}}$\vspace{0.08in}$ 
\par\end{center}

\begin{center}
{\large\textbf{``Panel Stochastic Frontier Models with Latent Group
Structures''}}{\Large\par}
\par\end{center}

\begin{center}
 
\par\end{center}

\begin{center}
(NOT for Publication) 
\par\end{center}

\begin{center}
{\small\ \ \ \ \ \ \ \ \ } 
\par\end{center}

\noindent\textit{Additional Notation.} For the deterministic series
$\{a_{n},b_{n}\}_{n=1}^{\infty}$, we denote $a_{n}\lesssim b_{n}$
if $\limsup_{n\rightarrow\infty}\left\vert a_{n}/b_{n}\right\vert \leq C$
for some constant $C$ that does not depend on $n$, $a_{n}\gtrsim b_{n}$
if $b_{n}\lesssim a_{n}$, $a_{n}\ll b_{n}$ if $a_{n}=o(b_{n})$,
and $a_{n}\gg b_{n}$ if $b_{n}\ll a_{n}$. $\propto_{P}$ denotes
proportional in probability, e.g., $x_{n}\propto_{P}y_{n}$ indicates
that both $x_{n}=O_{P}(y_{n})$ and $y_{n}=O_{P}(x_{n})$ hold. $A^{c}$
denotes the complement of $A$. $C$ and $M$ denote some positive
constants that may vary from line to line.

\section{Other Stochastic Frontier Models }

\label{APP:FE}

\subsection{Fixed Effects Model with Distributional Conditions }\label{APP:FE1}

We consider the fixed effects (FE) SF model when error terms are assumed
to follow certain parametric distributions. The classification of
the FE-SF model follows naturally from the methodology developed in
the main body of the paper. Incorporating time-varying heterogeneous
coefficients, the model proposed by \citet{Greene2005a,Greene2005}
and \citet{Chenetal2014} takes the form: 
\begin{align*}
y_{it} & =\alpha_{i}^{0}+\alpha_{i}(\tau_{t})+x_{it}'\beta_{i}(\tau_{t})+\varepsilon_{it}\\
 & =\alpha_{i}^{0}+\alpha_{i}(\tau_{t})+x_{it}'\beta_{i}(\tau_{t})+v_{it}-u_{it},
\end{align*}
where $\varepsilon_{it}=v_{it}-u_{it}$, and the normalization condition
$\int_{0}^{1}\alpha_{i}(\tau_{t})\,d\tau_{t}=0$ is imposed. We note
that this normalization is innocuous as in \citet{Ataketal}, because
the the constant (even varying acorss $i$) can be absorbed in $\alpha_{i}^{0}$.
The standard assumptions in these papers include: 
\[
v\perp u\perp(\alpha^{0},x),\quad v\sim N(0,\sigma_{v}^{2}),\quad u\sim|N(0,\sigma_{u}^{2})|,
\]
with $\{v_{it},u_{it}\}$ independent across time $t$. To address
the incidental parameters problem, \citet{Chenetal2014} proposed
the within MLE. Allowing for heterogeneous distributions across firms,
we have: 
\[
v_{it}\sim N(0,\sigma_{vi}^{2}),\quad u_{it}\sim|N(0,\sigma_{ui}^{2})|.
\]

We next define the group structure. Specifically, we assume that there
are $K^{*}$ distinct groups of parameter sets, and each firm's parameters
belong to one of these groups. Formally: 
\begin{equation}
\left\{ \alpha_{i}(\tau_{t}),\beta_{i}(\tau_{t}),\sigma_{vi},\sigma_{ui}\right\} =\sum_{k=1}^{K^{*}}\left\{ \alpha_{(k)}^{*}(\tau_{t}),\beta_{(k)}^{*}(\tau_{t}),\sigma_{v(k)}^{*},\sigma_{u(k)}^{*}\right\} \cdot\boldsymbol{1}(i\in G_{k}).\label{EQ:group_para-1}
\end{equation}
Parameters across different groups are distinct, and each firm is
uniquely assigned to one of the $K^{*}$ groups.

The key distinction here is that we also classify firms based on $\sigma_{ui}$.
The intuition is as follows: unlike the framework in the main body
of the paper, we can consistently estimate $\sigma_{ui}$ for each
firm $i$ individually, because we allow $u_{it}$ to vary over time.
This variation enables us to estimate $\sigma_{ui}$ without pooling
observations across firms. Thanks to this insight, we only need to
apply Steps 1--3 in Section~\ref{SEC:Estimation} for the FE model.
Before detailing the procedure, we approximate the model using sieve
expansions as in equation~(\ref{EQ:approx}), leading to: 
\begin{align}
y_{it} & =\alpha_{i}^{0}+\alpha_{i}(\tau_{t})+\sum_{l=1}^{p}x_{itl}\beta_{il}(\tau_{t})+\varepsilon_{it}\nonumber \\
 & \approx\alpha_{i}^{0}+\mathbb{B}_{-0}^{m}(\tau_{t})'\pi_{i0}^{0}+\sum_{l=1}^{p}x_{itl}\mathbb{B}^{m}(\tau_{t})'\pi_{il}^{0}+\varepsilon_{it}\nonumber \\
 & =\alpha_{i}^{0}+z_{it}'\pi_{i}^{0}+\varepsilon_{it}=\alpha_{i}^{0}+z_{it}'\pi_{i}^{0}+v_{it}-u_{it},\label{EQ:approx-1}
\end{align}
where $z_{it}$ collects all the basis function terms and their interactions
with covariates. We now outline the estimation procedure.

\subsubsection*{Step $\mathbf{1}^{*}$: Individual Estimation}

We begin by applying the within transformation, and define the following
notation: 
\[
\ddot{y}_{it}=y_{it}-\frac{1}{T}\sum_{t=1}^{T}y_{it},\quad\ddot{z}_{it}=z_{it}-\frac{1}{T}\sum_{t=1}^{T}z_{it},
\]
with $\ddot{v}_{it}$, $\ddot{u}_{it}$, and $\ddot{\varepsilon}_{it}$
defined analogously. The transformed model becomes: 
\[
\ddot{y}_{it}\approx\ddot{z}_{it}'\pi_{i}^{0}+\ddot{\varepsilon}_{it}=\ddot{z}_{it}'\pi_{i}^{0}+\ddot{v}_{it}-\ddot{u}_{it}.
\]

Let $\ddot{{\varepsilon}}_{i}=(\ddot{\varepsilon}_{i1},\ddot{\varepsilon}_{i2},\ldots,\ddot{\varepsilon}_{i,T-1})'$
denote the vector of the first $T-1$ transformed residuals. Based
on the results in \citet{Chenetal2014}, $\ddot{{\varepsilon}}_{i}$
follows a closed skew-normal distribution: 
\begin{align}
\mathrm{CSN}_{T-1,T}\Bigg( & {0}_{T-1},(\sigma_{ui}^{2}+\sigma_{vi}^{2})\left(I_{T-1}-\frac{1}{T-1}\iota_{T-1}\iota_{T-1}'\right),\nonumber \\
 & -\frac{\sigma_{ui}/\sigma_{vi}}{\sqrt{\sigma_{ui}^{2}+\sigma_{vi}^{2}}}\begin{pmatrix}I_{T-1}\\
-\iota_{T-1}'
\end{pmatrix},{0}_{T},I_{T}+\frac{\sigma_{ui}^{2}}{T\sigma_{vi}^{2}}\iota_{T}\iota_{T}'\Bigg),\label{eq:likelihoodFE}
\end{align}
where ${0}_{T-1}$ is a $(T-1)\times1$ vector of zeros, and $\iota_{T}$
is a $T\times1$ vector of ones. The notation $\mathrm{CSN}_{p,q}$
denotes the closed skew-normal distribution with the following density
for a $p$-dimensional random variable ${S}$: 
\[
f_{\text{CNS}}({s})=C\,\phi_{p}({s};{\mu},\varSigma)\,\Phi_{q}(D({s}-{\mu});{\upsilon},\varDelta),
\]
where: 
\begin{itemize}
\item $\phi_{p}(\cdot;{\mu},\varSigma)$ is the density of a $p$-dimensional
normal distribution, 
\item $\Phi_{q}(\cdot;{\upsilon},\varDelta)$ is the CDF of a $q$-dimensional
normal distribution, 
\item ${\mu}\in\mathbb{R}^{p}$, $\varSigma\in\mathbb{R}^{p\times p}$, 
\item $D\in\mathbb{R}^{q\times p}$, ${\upsilon}\in\mathbb{R}^{q}$, and
$\varDelta\in\mathbb{R}^{q\times q}$. 
\end{itemize}
Since $\ddot{\varepsilon}_{it}\approx\ddot{y}_{it}-\ddot{z}_{it}'\pi_{i}^{0}$,
we can estimate the parameters $(\pi_{i}^{0},\sigma_{ui},\sigma_{vi})$
by substituting the expression for $\ddot{\varepsilon}_{it}$ into
the density function in \eqref{eq:likelihoodFE} and applying MLE.
This yields consistent estimators $\left(\hat{\pi}_{i}^{0},\hat{\sigma}_{ui},\hat{\sigma}_{vi}\right)$.

\subsubsection*{Step $\mathbf{2}^{*}$: Classification}

From Step $1^{*}$, we obtain individual estimates for each firm:
\[
\left(\hat{\pi}_{1}^{0},\hat{\sigma}_{u1},\hat{\sigma}_{v1}\right),\left(\hat{\pi}_{2}^{0},\hat{\sigma}_{u2},\hat{\sigma}_{v2}\right),\ldots,\left(\hat{\pi}_{N}^{0},\hat{\sigma}_{uN},\hat{\sigma}_{vN}\right).
\]
As in the original Step 2, given a prespecified number of groups $K$,
we estimate the group membership structure. The estimated partition
of the $N$ firms is denoted by: 
\[
\left(\hat{G}_{1|K},\hat{G}_{2|K},\ldots,\hat{G}_{K|K}\right),
\]
which forms a disjoint partition of the index set $\left\{ 1,2,\ldots,N\right\} $.

\subsubsection*{Step $\mathbf{3}^{*}$: Post-Classification Estimation and Determination
of $K^{\ast}$}

As in Step 3, we set the number of sieve terms to $\underline{m}$,
which is substantially larger than $m$. We define the new set of
regressors as $\underline{z}_{it}$: 
\[
\underline{z}_{it}=\left[\mathbb{B}_{-0}^{\underline{m}}(\tau_{t})^{\prime},\left(x_{it}\otimes\mathbb{B}^{\underline{m}}(\tau_{t})\right)^{\prime}\right]^{\prime},
\]
and approximate $\alpha(\cdot)$ and $\beta(\cdot)$ accordingly.

Within each group, say $\hat{G}_{k|K}$ for $1\leq k\leq K$, we perform
post-classification estimation using MLE by maximizing the CSN density
across all observations within the group. Specifically, 
\[
\left(\hat{\pi}_{(k|K)},\hat{\sigma}_{u(k|K)}^{2},\hat{\sigma}_{v(k|K)}^{2}\right)=\arg\max_{(\pi,\delta_{u}^{2},\delta_{v}^{2})}\sum_{i\in\hat{G}_{k|K}}\log f_{\text{CNS}}\left(\ddot{y}_{i}-\underline{\ddot{z}}_{i}\pi;\delta_{u}^{2},\delta_{v}^{2}\right),
\]
where $\ddot{y}_{i}$ and $\underline{\ddot{z}}_{i}$ collect the
first $T-1$ elements of $\ddot{y}_{it}$ and $\underline{\ddot{z}}_{it}$,
respectively, and $f_{\text{CNS}}(\cdot;\delta_{u}^{2},\delta_{v}^{2})$
denotes the CSN density evaluated at the specified variance parameters.

To determine the number of groups, we use the following information
criterion: 
\begin{align*}
\text{IC}_{\text{FE}}(K,\lambda_{NT}^{\text{FE}})=-\sum_{k=1}^{K}\left\{ \sum_{i\in\hat{G}_{k|K}}\log f_{\text{CNS}}\left(\ddot{y}_{i}-\underline{\ddot{z}}_{i}\hat{\pi}_{(k|K)};\hat{\sigma}_{u(k|K)}^{2},\hat{\sigma}_{v(k|K)}^{2}\right)\right\} +\lambda_{NT}^{\text{FE}}K,
\end{align*}
where $\lambda_{NT}^{\text{FE}}$ is an appropriate penalty term.

The optimal number of groups is chosen as 
\[
\hat{K}(\lambda_{NT})=\arg\min_{K=0,1,\ldots,\bar{K}}\text{IC}_{\text{FE}}(K,\lambda_{NT}),
\]
for a suitably chosen upper bound $\bar{K}$. For simplicity, we denote
this as $\hat{K}$. The final parameter estimates are then given by
\[
\hat{\vartheta}_{(k|\hat{K})}=\left(\hat{\pi}_{(k|\hat{K})},\hat{\sigma}_{u(k|\hat{K})},\hat{\sigma}_{v(k|\hat{K})}\right),\quad k=1,2,\ldots,\hat{K}.
\]

An investigation of the small sample properties of this procedure
is beyond the scope of this paper and is left for future research.


\subsection{Nonparametric Fixed Effects Model }\label{APP:FE2}

One strand of the panel FE SF literature attempts to avoid imposing
distributional assumptions on the error terms; see, e.g., \citet{ZhouEtal2020}.
We take the framework in that paper as an example to demonstrate that
our approach can be readily extended to such settings. The model remains
the same as in the previous section: 
\[
\begin{aligned}y_{it} & =\alpha_{i}^{0}+\alpha_{i}(\tau_{t})+x_{it}'\beta_{i}(\tau_{t})+\varepsilon_{it}\\
 & =\alpha_{i}^{0}+\alpha_{i}(\tau_{t})+x_{it}'\beta_{i}(\tau_{t})+v_{it}-u_{it},
\end{aligned}
\]
with the constraint $\int_{0}^{1}\alpha_{i}(\tau_{t})\,d\tau_{t}=0$.
As in the last section, this normalization is innocuous due to the
fixed effects $\alpha_{i}^{0}$.

As noted in \citet{ZhouEtal2020}, the cost of this relaxation is
that the covariates influencing the inefficiency term differ from
those appearing in the frontier function. To accommodate this, we
denote the covariate that solely affects the inefficiency term by
$h_{it}$ and, without loss of generality, assume that $h_{it}$ is
univariate to simplify notation and analysis. Formally, we impose
the restriction: 
\[
\text{E}(u_{it}\mid h_{it},x_{i})=\text{E}(u_{it}\mid h_{it})\equiv\varpi_{i}(h_{it}).
\]
For the noise term $v_{it}$, we assume the usual conditional moment
condition: 
\[
\text{E}(v_{it}\mid h_{it},x_{i})=0.
\]
We normalize $h_{it}$ so that its support is $[0,1]$, which enables
the use of orthogonal basis functions, as introduced in Section~\ref{SEC:serial_approximation},
to approximate $\varpi_{i}(h_{it})$. The key idea in \citet{ZhouEtal2020}
is to isolate $\varpi_{i}(h_{it})$ from the inefficiency term, yielding
a new error term with zero conditional mean. Specifically, 
\[
\begin{aligned}y_{it} & =\alpha_{i}^{0}+\alpha_{i}(\tau_{t})+x_{it}'\beta_{i}(\tau_{t})-\varpi_{i}(h_{it})+v_{it}-\left[u_{it}-\varpi_{i}(h_{it})\right]\\
 & \equiv\alpha_{i}^{0}+\alpha_{i}(\tau_{t})+x_{it}'\beta_{i}(\tau_{t})-\varpi_{i}(h_{it})+\varepsilon_{it}^{*}.
\end{aligned}
\]
To identify $\varpi_{i}$, we impose the additional normalization:
\[
\int_{0}^{1}\varpi_{i}(h)\,\text{d}h=0.
\]
This normalization implies that inefficiency levels cannot be directly
compared across firms, which is a limitation of the approach in \citet{ZhouEtal2020}.
However, one can still analyze how inefficiency varies with changes
in $h$. We refer readers to the original paper for further details.
Under the above assumptions, the new error term $\varepsilon_{it}^{*}$
satisfies: 
\[
\text{E}(\varepsilon_{it}^{*}\mid h_{it},x_{i})=0,
\]
which behaves like a standard error term in fixed effects panel data
models.

In the present setting, our goal is to estimate and classify both
the frontier function and the conditional mean of the inefficiency
term. The variances of the error components ($v_{it}$ and $\varepsilon_{it}^{*}$)
are of secondary importance. This contrasts with earlier settings,
where the variance of $v_{it}$ was essential for the likelihood function
in MLE for estimating inefficiency distribution. In our case, however,
we directly estimate the conditional mean of the inefficiency $\text{E}(u_{it}\mid h_{it})$
as $\varpi_{i}(h_{it})$. With this in mind, we define the group structure
based solely on the frontier and the conditional mean of the inefficiency
term. Specifically, we assume that there are $K^{*}$ distinct groups
of parameter sets, and each firm belongs to exactly one of these groups.
Formally, we write: 
\[
\left\{ \alpha_{i}(\tau_{t}),\beta_{i}(\tau_{t}),\varpi_{i}(\cdot)\right\} =\sum_{k=1}^{K^{*}}\left\{ \alpha_{(k)}^{*}(\tau_{t}),\beta_{(k)}^{*}(\tau_{t}),\varpi_{(k)}^{*}(\cdot)\right\} \cdot\boldsymbol{1}(i\in G_{k}).
\]
Each parameter set across the $K^{*}$ groups is distinct, and each
firm is uniquely assigned to one of these groups.

The approximation using orthogonal basis functions can be conducted
in a similar manner. For each firm $i$, we have: 
\[
\begin{aligned}y_{it} & =\alpha_{i}^{0}+\alpha_{i}(\tau_{t})+x_{it}'\beta_{i}(\tau_{t})-\varpi_{i}(h_{it})+\varepsilon_{it}^{*}\\
 & \approx\alpha_{i}^{0}+\mathbb{B}_{-0}^{m}(\tau_{t})'\pi_{i0}^{0}+\sum_{l=1}^{p}x_{itl}\mathbb{B}^{m}(\tau_{t})'\pi_{il}^{0}+\mathbb{B}_{-0}^{m}(h_{it})'\pi_{ip+1}^{0}+\varepsilon_{it}^{*}\\
 & \equiv\alpha_{i}^{0}+\left[\mathbb{B}_{-0}^{m}(\tau_{t})',(x_{it}\otimes\mathbb{B}^{m}(\tau_{t}))',\mathbb{B}_{-0}^{m}(h_{it})'\right]\pi_{i}^{0}+\varepsilon_{it}^{*}\\
 & \equiv\alpha_{i}^{0}+z_{it}'\pi_{i}^{0}+\varepsilon_{it}^{*},
\end{aligned}
\]
where we slightly abuse notation by reusing $z_{it}$ and $\pi_{i}^{0}$,
although they represent different quantities in this section. Specifically,
\[
z_{it}\equiv\left[\mathbb{B}_{-0}^{m}(\tau_{t})',(x_{it}\otimes\mathbb{B}^{m}(\tau_{t}))',\mathbb{B}_{-0}^{m}(h_{it})'\right]'\quad\text{and}\quad\pi_{i}^{0}\equiv\left(\pi_{i0}^{0\prime},\pi_{i1}^{0\prime},\dots,\pi_{ip}^{0\prime},\pi_{ip+1}^{0\prime}\right)'.
\]

With this setup, the classification and post-classification estimation
become straightforward and follow the essentially same procedure as
Steps 1--3 in Section~\ref{SEC:Estimation}.

\subsubsection*{Step $\mathbf{1}^{\bigstar}$: Individual Estimation}

Like Step $1^{*},$ we begin by applying the within transformation
for each firm $i.$ We reuse following notation: 
\[
\ddot{y}_{it}=y_{it}-\frac{1}{T}\sum_{t=1}^{T}y_{it},\quad\ddot{z}_{it}=z_{it}-\frac{1}{T}\sum_{t=1}^{T}z_{it},
\]
with $\ddot{\varepsilon}_{it}^{*}$ defined analogously. The transformed
model becomes: 
\[
\ddot{y}_{it}\approx\ddot{z}_{it}'\pi_{i}^{0}+\ddot{\varepsilon}_{it}^{*},t=1,2,...,T.
\]
We regress $\ddot{y}_{it}$ on $\ddot{z}_{it}$ for each fixed $i$
using observations of $t=1,2,...,T$.

Specifically, 
\[
\widehat{\pi}_{i}=\left(\sum_{t=1}^{T}\ddot{z}_{it}\ddot{z}_{it}'\right)^{-1}\left(\sum_{t=1}^{T}\ddot{z}_{it}\ddot{y}_{it}\right),
\]
based on which we form groups.

\subsubsection*{Step $\mathbf{2}^{\bigstar}$: Classification}

From Step $1^{\bigstar}$, we obtain individual estimates for each
firm: 
\[
\widehat{\pi}_{1},\widehat{\pi}_{2},...,\widehat{\pi}_{N}.
\]
Given a pre-specified number of groups $K$, we estimate the group
membership structure using the HAC algorithm. The estimated partition
of the $N$ firms is denoted by: 
\[
\left(\hat{G}_{1|K},\hat{G}_{2|K},\ldots,\hat{G}_{K|K}\right),
\]
which forms a disjoint partition of the index set $\left\{ 1,2,\ldots,N\right\} $.

\subsubsection*{Step $\mathbf{3}^{\bigstar}$: Post-Classification Estimation and
Determination of $K^{\ast}$}

As in Step 3, we set the number of sieve terms to $\underline{m}$,
which is substantially larger than $m$. We define the new set of
regressors as $\underline{z}_{it}$: 
\[
\underline{z}_{it}=\left[\mathbb{B}_{-0}^{\underline{m}}(\tau_{t})^{\prime},\left(x_{it}\otimes\mathbb{B}^{\underline{m}}(\tau_{t})\right)^{\prime},\mathbb{B}_{-0}^{\underline{m}}\left(h_{it}\right)^{\prime}\right]^{\prime},
\]
and approximate $\alpha(\cdot)$, $\beta(\cdot)$ and $\varpi\left(\cdot\right)$
accordingly.

Within each group, say $\hat{G}_{k|K}$ for $1\leq k\leq K$, we perform
the post-classification within estimation. Specifically, 
\[
\hat{\pi}_{(k|K)}=\left(\sum_{i\in\hat{G}_{k|K}}\sum_{t=1}^{T}\ddot{\underline{z}}_{it}\ddot{\underline{z}}_{it}'\right)^{-1}\left(\sum_{i\in\hat{G}_{k|K}}\sum_{t=1}^{T}\ddot{\underline{z}}_{it}\ddot{y}_{it}\right).
\]

To determine the number of groups, we use the following information
criterion: 
\begin{align*}
 & \text{IC}_{\text{FENP}}(K,\lambda_{NT}^{\text{FENP}})\\
= & -\sum_{k=1}^{K}\left\{ \frac{N_{k}T}{2}\log\left[\frac{1}{\left(T-1\right)N_{k}}\sum_{i\in\hat{G}_{k|K}}\sum_{t=1}^{T}\left(\ddot{y}_{it}-\underline{\ddot{z}}_{it}^{\prime}\hat{\pi}_{(k|K)}\right)^{2}\right]\right\} +\lambda_{NT}^{\text{FENP}}K,
\end{align*}
where $\lambda_{NT}^{\text{FENP}}$ is an appropriate penalty term.

The optimal number of groups is chosen as 
\[
\hat{K}(\lambda_{NT}^{\text{FENP}})=\arg\min_{K=0,1,\ldots,\bar{K}}\text{IC}_{\text{FENP}}(K,\lambda_{NT}^{\text{FENP}}),
\]
for a suitably chosen upper bound $\bar{K}$. For simplicity, we denote
this as $\hat{K}$. The final parameter estimates are then given by
\[
\hat{\pi}_{(k|\hat{K})},\quad k=1,2,\ldots,\hat{K}.
\]

The theoretical properties of the above procedure are relatively straightforward.
However, complications may arise when $h_{it}$ is multidimensional.
In such cases, one may need to employ tensor product bases to approximate
$\varpi(\cdot)$, which could result in a slower convergence rate.
As an alternative, assuming an additive structure for $\varpi(\cdot)$
allows the convergence rate to remain unchanged. A rigorous investigation
of the theoretical properties under these settings, as well as an
analysis of the small-sample performance, is beyond the scope of this
paper and is left for future research.

\subsection{A Group Innocuous Normalization}\label{APP:innocuous}

As discussed in Section \ref{sec:The-Model-More}, the normalization
$\int_{0}^{1}\alpha_{i}\left(s\right)\textrm{d}s=0$ is not innocuous
when $\alpha_{i}\left(s\right)$ varies across firms and possesses
a group structure, because we need to assume that the levels of $\alpha_{\left(k\right)}^{*}\left(\tau_{t}\right)$
(that is, $\int_{0}^{1}\alpha_{\left(k\right)}^{*}\left(s\right)\textrm{d}s$)
are the same across groups. In this section, we generalize the result
in the main paper to the case in which we allow the levels to differ
across groups but they remain the same within each group.

As before, we assume that there are $K^{\ast}$ groups of specific
parameters, and each firm's parameters belong to one of these groups:
\[
\left\{ \alpha_{i}\left(\tau_{t}\right),\beta_{i}\left(\tau_{t}\right),\sigma_{vi}\right\} =\sum_{k=1}^{K^{\ast}}\left\{ \alpha_{(k)}^{\ast}\left(\tau_{t}\right),\beta_{(k)}^{\ast}\left(\tau_{t}\right),\sigma_{v(k)}^{\ast}\right\} \boldsymbol{1}(i\in G_{k}).
\]
Note that our procedure is silent on the classification of the levels
of $\alpha_{i}\left(\tau_{t}\right)$, so the classification is based
solely on the time-varying part of $\alpha_{i}\left(\tau_{t}\right).$
As such, we titled the section ``Group'' innocuous, since it is
not entirely innocuous. In other words, for all $i\in G_{k},$ we
need to assume that $\int_{0}^{1}\alpha_{i}\left(s\right)\textrm{d}s=c_{k}$
(before normalization), which is identical within the group.

We recommend this procedure when, on average, each group contains
at least a few hundred observations, due to the difficulty of uncovering
the mixture structure with very small samples; see, e.g., \citet{OlsonEtal},
\citet{SimarWilson}, and \citet{ChristianEtal2018}.

A consequence of weakening the restriction on the levels is that the
distribution of $\alpha_{i}^{0}-u_{i}$ naturally differs across groups,
because for group $k$, $\int_{0}^{1}\alpha_{i}(s)\,\textrm{d}s=c_{k}$
is absorbed into $\alpha_{i}^{0}$ after normalization. As such, we
assume that for observation $i$ in group $k$, there exists a $\mathcal{K}_{(k)}^{*}\geq1$
such that 
\[
\alpha_{i}^{0}-u_{i}\overset{d}{\sim}\alpha_{(k)(j)}^{0}-\left|N\left(0,\sigma_{u(k)(j)}^{2}\right)\right|\quad\text{with probability }\tau_{(k)j}^{0},\quad j=1,2,\ldots,\mathcal{K}_{(k)}^{*},
\]
where $0<\tau_{(k)j}^{0}<1$ and $\sigma_{u(k)(j)}^{2}>C>0$ for $j=1,2,\ldots,\mathcal{K}_{(k)}^{*}$,
with $\tau_{(k)1}^{0}+\tau_{(k)2}^{0}+\ldots+\tau_{(k)\mathcal{K}_{(k)}^{*}}^{0}=1$,
and $\left(\alpha_{(k)(j)}^{0},\sigma_{u(k)(j)}^{2}\right)$ differ
across $j=1,2,\ldots,\mathcal{K}_{(k)}^{*}$. While the notation is
rather tedious, the estimation remains straightforward. We simply
estimate the distribution of $\alpha_{i}^{0}-u_{i}$ for each group
separately using the previous strategy.

We present the details of the procedure below. The generalization
that allows the levels to differ across groups has no impact on the
first three steps, as the intercept term is not involved.

\subsubsection*{Step $\mathbf{1}^{\blacklozenge}$ : Individual Estimation}

Same as Step 1.

\subsubsection*{Step $\mathbf{2}^{\blacklozenge}$: Classification}

Same as Step 2.

\subsubsection*{Step $\mathbf{3}^{\blacklozenge}$: Post-Classification Estimation
and Determination of $K^{\ast}$}

Same as Step 3.

\subsubsection*{Step $\mathbf{4}^{\blacklozenge}$: Estimation of $\alpha_{\left(k\right)(j)}^{0},\sigma_{u\left(k\right)(j)}^{2},\text{ and }\tau_{\left(k\right)j}^{0}$}

For group $k,$ assuming that the error term comes from $\mathcal{K}{}_{\left(k\right)}$
distributions, we obtain an estimate of $\left(\alpha_{\left(k\right)(1)}^{0},\sigma_{u\left(k\right)(1)}^{2},...,\alpha_{\left(k\right)(\mathcal{K}_{\left(k\right)})}^{0},\sigma_{u\left(k\right)(\mathcal{K}_{\left(k\right)})}^{2},\tau_{\left(k\right)1}^{0},...,\tau_{\left(k\right)\mathcal{K}_{\left(k\right)}-1}^{0}\right)$
using MLE as follows: 
\begin{align*}
 & \left(\hat{\alpha}_{\left(k\right)(1)}^{0},\hat{\sigma}_{u\left(k\right)(1)}^{2},...,\hat{\alpha}_{\left(k\right)(\mathcal{K}_{\left(k\right)})}^{0},\hat{\sigma}_{u\left(k\right)(\mathcal{K}_{\left(k\right)})}^{2},\hat{\tau}_{\left(k\right)1},...,\hat{\tau}_{\left(k\right)\mathcal{K}_{\left(k\right)}-1}\right)\\
= & \arg\max_{(s,\delta_{u}^{2},\tau)}\sum_{i\in\hat{G}_{k|\hat{K}}}\log\tilde{f}\left(y_{i}\left\vert x_{i};s_{1},\delta_{u\left(1\right)}^{2},...,s_{\mathcal{K}_{\left(k\right)}},\delta_{u\left(\mathcal{K}_{\left(k\right)}\right)}^{2},\tau_{1},...,\tau_{\mathcal{K}_{\left(k\right)}-1},\hat{\vartheta}_{(k|\hat{K})}\right.\right)
\end{align*}
where $\tilde{f}$ is the likelihood function, defined in (\ref{EQ:log_approx_p}),
and we incorporate estimates from Step 3$^{\blacklozenge}$.

Collecting results from Steps 4$^{\blacklozenge}$, we proceed to
Step 5$^{\blacklozenge}$ to determine the specification of $\alpha_{i}^{0}-u_{i}$
for each group.

\subsubsection*{Step $\mathbf{5}^{\blacklozenge}$: Determination of the Distributional
Structures of the Inefficiency Term}

The information criterion for group $k$ is constructed as 
\[
\widetilde{\mathrm{IC}}_{k}(\mathcal{K}_{\left(k\right)},\tilde{\lambda}_{NT})=-\sum_{i\in\hat{G}_{k|\hat{K}}}\log\tilde{f}\left(y_{i}\left\vert x_{i};s_{1},\delta_{u\left(1\right)}^{2},...,s_{\mathcal{K}_{\left(k\right)}},\delta_{u\left(\mathcal{K}_{\left(k\right)}\right)}^{2},\tau_{1},...,\tau_{\mathcal{K}_{\left(k\right)}-1},\hat{\vartheta}_{(k|\hat{K})}\right.\right)+\mathcal{K}_{\left(k\right)}\tilde{\lambda}_{NT},
\]
where $\tilde{\lambda}_{NT}$ is a suitable penalty term and $k=1,2,...,\hat{K}.$
We take 
\[
\hat{\mathcal{K}}_{\left(k\right)}(\tilde{\lambda}_{NT})=\arg\min_{\mathcal{K}_{\left(k\right)}=0,1,\ldots,\bar{\mathcal{K}}}\widetilde{\mathrm{IC}}_{k}(\mathcal{K}_{\left(k\right)},\tilde{\lambda}_{NT}),
\]
and we write $\hat{\mathcal{K}}_{\left(k\right)}$ for short.

Collecting results across $\hat{K}$ groups, the estimated parameters
are 
\[
\left\{ \left(\hat{\alpha}_{\left(k\right)(1)}^{0},\hat{\sigma}_{u\left(k\right)(1)}^{2},...,\hat{\alpha}_{\left(k\right)(\hat{\mathcal{K}}_{\left(k\right)})}^{0},\hat{\sigma}_{u\left(k\right)(\hat{\mathcal{K}}_{\left(k\right)})}^{2},\hat{\tau}_{\left(k\right)1},...,\hat{\tau}_{\left(k\right)\hat{\mathcal{K}}_{\left(k\right)}-1}\right)\right\} _{k=1}^{\hat{K}}.
\]

\bigskip{}

The theoretical properties of the above procedure remain the same
as those in the main body of the paper, provided that $N_{k}\propto N$
for $k=1,2,\ldots,K^{*}$. Before we conclude this section, we emphasize
that we recommend this procedure only when practitioners have ample
observations available, e.g., at least a few hundred for each group.

\subsection{Latent Structure with Zero Inefficiency}\label{APP:ZISF}

\citet{KumbhakarEtal2013} and \citet{RhoSchmidt} proposed the zero
inefficiency stochastic frontier (ZISF) model, in which the inefficiency
term is exactly zero with a positive probability. Testing the validity
of the ZISF model amounts to examining whether the variance of the
inefficiency term is zero in one component of the mixture distribution.
This leads to a nonstandard testing problem, as the null hypothesis
places the parameter on the boundary of the parameter space. Consequently,
standard inference methods such as the $t$-test may not be appropriate,
and alternative testing procedures, as suggested by \citet{KumbhakarEtal2013},
may be required. Compounding the challenge, \citet{RhoSchmidt} highlighted
several identification issues inherent in this framework. Given these
complications, it is worthwhile to explore this question from a different
perspective.

We propose to address this problem using information criteria, following
a similar approach to that used in the main body of the paper. To
this end, we incorporate the ZISF assumption into our model. Naturally,
this introduces an additional layer of complexity and difficulty.
For clarity of exposition, we focus on a mixture distribution with
at most two components. While extending to more components is conceptually
straightforward, it may distract from the central focus of this section.
We adopt the model from the main body of the paper, maintaining the
same group structure on the frontier. The only difference lies in
the assumption regarding the distribution of $\alpha_{i}^{0}-u_{i}$. 

Specifically, we consider the following three specifications and propose
an IC to choose one. 
\begin{align*}
\textrm{Spec1: } & \alpha_{i}^{0}-u_{i}\overset{d}{\sim}\alpha^{0}-N\left(0,\sigma_{u}^{2}\right),\\
\textrm{Spec2: } & \alpha_{i}^{0}-u_{i}\overset{d}{\sim}\begin{cases}
\begin{array}{c}
\alpha_{\left(1\right)}^{0}\\
\alpha_{\left(2\right)}^{0}-N\left(0,\sigma_{u(2)}^{2}\right)
\end{array} & \begin{array}{c}
\textrm{ with probability }\tau^{0}\\
\textrm{ with probability }1-\tau^{0}
\end{array}\end{cases},\textrm{ and}\\
\textrm{Spec3: } & \alpha_{i}^{0}-u_{i}\overset{d}{\sim}\begin{cases}
\begin{array}{c}
\alpha_{\left(1\right)}^{0}-N\left(0,\sigma_{u(1)}^{2}\right)\\
\alpha_{\left(2\right)}^{0}-N\left(0,\sigma_{u(2)}^{2}\right)
\end{array} & \begin{array}{c}
\textrm{ with probability }\tau^{0}\\
\textrm{ with probability }1-\tau^{0}
\end{array},\end{cases}
\end{align*}
 where $0<\tau^{0}<1,$ $\sigma_{u}^{2},\sigma_{u(1)}^{2},\sigma_{u(2)}^{2}>0,$
and $\left(\alpha_{\left(1\right)}^{0},\sigma_{u(1)}^{2}\right)\neq\left(\alpha_{\left(2\right)}^{0},\sigma_{u(2)}^{2}\right).$
Note that S2 is the assumption in the ZISF model. We do not consider
the case when $\alpha_{i}^{0}-u_{i}=\alpha^{0}$ because the fully
efficient case is rare in practice. 

We slightly abuse notation by using the same symbols for parameters
in both Spec2 and Spec3. However, the context should make it clear
which set of parameters is being referenced.

Recall that the only change in this setting is the assumption regarding
the distribution of $\alpha_{i}^{0}-u_{i}$. Consequently, the estimation
of the frontier parameters proceeds exactly as before --- specifically,
following Steps 1 to 3 in Section \ref{SEC:Estimation}. For clarity,
we relabel these steps as Step $1^{\blacktriangledown},$ Step $2^{\blacktriangledown},$
and Step $3^{\blacktriangledown}$.

\subsubsection*{Steps $\mathbf{1}^{\blacktriangledown},$ $\mathbf{2}^{\blacktriangledown},$
and $\mathbf{3}^{\blacktriangledown}$}

Same as Steps 1 to 3.

\subsubsection*{Step $\mathbf{4}^{\blacktriangledown}$: Estimation of $\alpha^{0},\sigma_{u}^{2},\text{ and }\tau^{0}$}

For Spec1, $\left(\alpha^{0},\sigma_{u}^{2}\right)$ can be estimated
the same as the way in Step 4. That is, 
\[
\left(\hat{\alpha}^{0},\hat{\sigma}_{u}^{2}\right)=\arg\max_{(s,\delta_{u}^{2})}\sum_{k=1}^{\hat{K}}\sum_{i\in\hat{G}_{k|\hat{K}}}\log f\left(y_{i}\mid x_{i};s,\delta_{u}^{2},\hat{\vartheta}_{(k|\hat{K})}\right),
\]
where $f\left(y_{i}\mid x_{i};s,\delta_{u}^{2},\hat{\vartheta}_{(k|\hat{K})}\right)$
is defined in (\ref{EQ:likelihood_appro}), and we use the post-classification
estimates $\hat{\vartheta}_{(k|\hat{K})}$ from Step $3^{\blacktriangledown}$. 

For Spec2, $\left(\alpha_{\left(1\right)}^{0},\alpha_{\left(2\right)}^{0},\sigma_{u(2)}^{2},\tau^{0}\right)$
can be obtained as the restricted estimation in Step 4' (by forcing
$\sigma_{u(1)}^{2}=0$) when $\mathcal{K}=2$. That is
\begin{align*}
\left(\hat{\alpha}_{(1)}^{0},\hat{\alpha}_{(2)}^{0},\hat{\sigma}_{u(2)}^{2},\hat{\tau}\right) & =\arg\max_{(s,\delta_{u}^{2},\tau)}\sum_{k=1}^{\hat{K}}\sum_{i\in\hat{G}_{k|\hat{K}}}\log\tilde{f}\left(y_{i}\left\vert x_{i};s_{1},0,s_{2},\delta_{u\left(2\right)}^{2},\tau,\hat{\vartheta}_{(k|\hat{K})}\right.\right),
\end{align*}
where $\tilde{f}$ is the likelihood function, defined in (\ref{EQ:log_approx_p}). 

For Spec3, $\left(\alpha_{\left(1\right)}^{0},\sigma_{u(1)}^{2},\alpha_{\left(2\right)}^{0},\sigma_{u(2)}^{2},\tau^{0}\right)$
is the estimation in Step 4' when $\mathcal{K}=2$. Thus,

\[
\left(\hat{\alpha}_{(1)}^{0},\hat{\sigma}_{u(1)}^{2},\hat{\alpha}_{(2)}^{0},\hat{\sigma}_{u(2)}^{2},\hat{\tau}\right)=\arg\max_{(s,\delta_{u}^{2},\tau)}\sum_{k=1}^{\hat{K}}\sum_{i\in\hat{G}_{k|\hat{K}}}\log\tilde{f}\left(y_{i}\left\vert x_{i};s_{1},\delta_{u\left(1\right)}^{2},s_{2},\delta_{u\left(2\right)}^{2},\tau,\hat{\vartheta}_{(k|\hat{K})}\right.\right).
\]

Collecting results from Step $4^{\blacktriangledown}$, we proceed
to Step $5^{\blacktriangledown}$ to determine the specification of
$\alpha_{i}^{0}-u_{i}$.

\subsubsection*{Step \textmd{$\mathbf{5}^{\blacktriangledown}$}: Determination of
the Distributional Structures of the Inefficiency Term}

We calculate three IC values to determine the distribution of $\alpha_{i}^{0}-u_{i}$. 

We consider Spec1 and Spec3 first, because they are special cases
of Step 4' when $\mathcal{K}=1$ and $\mathcal{K}=2$, respectively.
Using this insight, we define the information criteria as follows:
\\
For Spec1:
\[
\widetilde{\textrm{IC}}_{1}\left(\tilde{\lambda}_{NT}\right)=-\sum_{k=1}^{\hat{K}}\sum_{i\in\hat{G}_{k|\hat{K}}}\log f\left(y_{i}\left\vert x_{i};\hat{\alpha}^{0},\hat{\sigma}_{u}^{2}\hat{\vartheta}_{(k|\hat{K})}\right.\right)+\tilde{\lambda}_{NT}.
\]
For Spec3:
\[
\widetilde{\textrm{IC}}_{3}\left(\tilde{\lambda}_{NT}\right)=-\sum_{k=1}^{\hat{K}}\sum_{i\in\hat{G}_{k|\hat{K}}}\log\tilde{f}\left(y_{i}\left\vert x_{i};\hat{\alpha}_{(1)}^{0},\hat{\sigma}_{u(1)}^{2},\hat{\alpha}_{(2)}^{0},\hat{\sigma}_{u(2)}^{2},\hat{\tau},\hat{\vartheta}_{(k|\hat{K})}\right.\right)+2\tilde{\lambda}_{NT}.
\]
Both $\textrm{IC}_{1}\left(\tilde{\lambda}_{NT}\right)$ and $\textrm{IC}_{3}\left(\tilde{\lambda}_{NT}\right)$
are special cases of $\widetilde{\mathrm{IC}}(\mathcal{K},\tilde{\lambda}_{NT})$
(defined in (\ref{eq:IC_tide2})) when $\mathcal{K}=1$ and 2, respectively.

Note that Spec1, Spec2 and Spec3 contain 2, 4, and 5 parameters, respectively.
Thus, the number of parameters increases by 3 when moving from Spec1
to Spec3, and by 2 when moving from Spec1 to Spec2. Based on this
observation, we define:
\[
\widetilde{\textrm{IC}}_{2}\left(\tilde{\lambda}_{NT}\right)=-\sum_{k=1}^{\hat{K}}\sum_{i\in\hat{G}_{k|\hat{K}}}\log\tilde{f}\left(y_{i}\left\vert x_{i};\hat{\alpha}_{(1)}^{0},0,\hat{\alpha}_{(2)}^{0},\hat{\sigma}_{u(2)}^{2},\hat{\tau},\hat{\vartheta}_{(k|\hat{K})}\right.\right)+\frac{5}{3}\tilde{\lambda}_{NT}.
\]
\\
We then select the model that minimizes the information criterion:
\[
\hat{l}=\arg\min_{l=1,2,3}\widetilde{\textrm{IC}}_{l}\left(\tilde{\lambda}_{NT}\right),
\]
and adopt Spec$\hat{l}$ accordingly.

A further investigation of this procedure is left for future.

\section{Allowing More Than Two Components in the Mixture}\label{APP:more_mix}

As discussed in the main body of the paper, estimating mixtures with
more than two components, as in Step~4$'$, is numerically challenging.
The difficulty arises because the original log-likelihood function
is highly nonlinear and cannot be effectively optimized when the
number of mixture components exceeds two.

The motivation for the alternative method stems from the observation
that the original log-likelihood function is overly complicated, and
we seek a numerically less demanding approach.

The procedure is as follows. For each $i=1,2,\ldots,N$, we first
compute
\begin{align}
\widehat{\alpha_{i}^{0}-u_{i}}
&=\frac{1}{T}\sum_{t=1}^{T}\left(y_{it}-\underline{z}_{it}'\hat{\pi}_{(k|K)}\right),
\quad\text{for } i\in\hat{G}_{k|K}\nonumber \\[6pt]
&=\alpha_{i}^{0}-u_{i}+\frac{1}{T}\sum_{t=1}^{T}v_{it}
   +O_{P}\!\left(\sqrt{\tfrac{\underline{m}}{N_{k}T}}\right)\nonumber \\[6pt]
&=\alpha_{i}^{0}-u_{i}
   +O_{P}\!\left(\tfrac{1}{\sqrt{T}}+\sqrt{\tfrac{\underline{m}}{N_{k}T}}\right)\nonumber \\[6pt]
&\approx \alpha_{i}^{0}-u_{i}.
\label{eq:hat=a-u}
\end{align}

Recall that for $\alpha_{i}^{0}-u_{i}$ there exists an integer $\mathcal{K}^{*}\geq 1$
such that, with probability $\tau_{j}^{0}$, it follows the distribution
\[
\alpha_{(j)}^{0}-\lvert N(0,\sigma_{u(j)}^{2})\rvert,\quad j=1,2,\ldots,\mathcal{K}^{*},
\]
where $\bigl(\alpha_{(j)}^{0},\sigma_{u(j)}^{2}\bigr)$, $j=1,2,\ldots,\mathcal{K}^{*}$,
are distinct parameter pairs, with $0<\tau_{j}^{0}<1$ for
$j=1,2,\ldots,\mathcal{K}^{*}-1$, and
$1-\tau_{1}^{0}-\cdots-\tau_{\mathcal{K}^{*}-1}^{0}>0$.

The likelihood function of $\alpha_{i}^{0}-u_{i}$ then takes the
form
\begin{align*}
&f_{\textrm{mix}}\!\left(s \,\middle|\,
\alpha_{(1)}^{0},\sigma_{u(1)}^{2},\ldots,\alpha_{(\mathcal{K}^{*})}^{0},
\sigma_{u(\mathcal{K}^{*})}^{2},\tau_{1}^{0},\ldots,\tau_{\mathcal{K}^{*}-1}^{0}\right) \\
&\quad= \sum_{j=1}^{\mathcal{K}^{*}}\tau_{j}^{0}\,
\frac{2}{\sqrt{2\pi}}
\exp\!\left\{ -\frac{\bigl(s-\alpha_{(j)}^{0}\bigr)^{2}}{2\sigma_{u(j)}^{2}} \right\}
\mathbf{1}\!\left(s\leq\alpha_{(j)}^{0}\right),
\end{align*}
with $\tau_{\mathcal{K}^{*}}^{0}
=1-\tau_{1}^{0}-\cdots-\tau_{\mathcal{K}^{*}-1}^{0}$.
This likelihood is considerably simpler and more tractable than the
one introduced in Step~4$'$.

We propose to estimate the mixture structure of $\alpha_{i}^{0}-u_{i}$
by using $\widehat{\alpha_{i}^{0}-u_{i}}$ as a surrogate and maximizing
the likelihood function defined above. The information criterion can
then be constructed analogously to before:
\[
\mathrm{IC}_{\mathrm{mix}}(\mathcal{K},\tilde{\lambda}_{NT})
=-\sum_{i=1}^{N}\log f_{\mathrm{mix}}\!\left(
\widehat{\alpha_{i}^{0}-u_{i}} \,\middle|\,
\hat{\alpha}_{(1)}^{0},\hat{\sigma}_{u(1)}^{2},\ldots,
\hat{\alpha}_{(\mathcal{K})}^{0},\hat{\sigma}_{u(\mathcal{K})}^{2},
\hat{\tau}_{1}^{0},\ldots,\hat{\tau}_{\mathcal{K}-1}^{0}\right)
+\mathcal{K}\tilde{\lambda}_{NT},
\]
for a given number of mixture components $\mathcal{K}$.

Finally, we select the number of components by
\[
\hat{\mathcal{K}}(\tilde{\lambda}_{NT})
=\arg\min_{\mathcal{K}=0,1,\ldots,\bar{\mathcal{K}}}
\mathrm{IC}_{\mathrm{mix}}(\mathcal{K},\tilde{\lambda}_{NT}),
\]
and write $\hat{\mathcal{K}}$ for short. The corresponding parameter
estimates are
\[
\bigl(\hat{\alpha}_{(1)}^{0},\hat{\sigma}_{u(1)}^{2},\ldots,
\hat{\alpha}_{(\hat{\mathcal{K}})}^{0},\hat{\sigma}_{u(\hat{\mathcal{K}})}^{2},
\hat{\tau}_{1},\ldots,\hat{\tau}_{\hat{\mathcal{K}}-1}\bigr).
\]
Since $T\propto N$, it follows directly that $\tilde{\lambda}_{NT}$
in Proposition~\ref{Prop:classify} remains valid in this setting.
Therefore, in practice we adopt the recommended $\tilde{\lambda}_{NT}$
from Section~\ref{SEC:tuning}.

We conduct simulations to evaluate the small sample properties of
the above procedure. We study DGP 1M, 2M, and 3M, where the number of components in the mixture distribution is two. In addition, we consider DGP1T, DGP2T and DGP3T, where we allow the inefficiency terms to arise from the mixture of three distributions. DGPs 1T, 2T, and 3T share the same frontier and error distribution of $v$ as DGPs 1M, 2M, and 3M, respectively.  Specifically, for DGPs 1T, 2T and 3T, we let $\alpha^{0}-u$ to come from $\alpha_{(1)}^{0} - |N(0,\sigma_{u(1)}^{2})|$ with probability $\tau_{1}$, $\alpha_{(2)}^{0} - |N(0,\sigma_{u(2)}^{2})|$ with probability $\tau_{2}$, and $\alpha_{(3)}^{0} - |N(0,\sigma_{u(3)}^{2})|$ with probability $\tau_{3} $, where $\{\alpha_{(1)}^{0}, \alpha_{(2)}^{0}, \alpha_{(3)}^{0}\} = \{0.5, -1, 2\}$, $\{\sigma_{u(1)}, \sigma_{u(2)}, \sigma_{u(3)}\} = \{1,1,1\}$,  and $\{\tau_{1},\tau_{2},\tau_{3}\} = \{0.3,0.4,0.3\}$. 

Simulation results are reported in Appendix \ref{APP:tables_more}. We find the following: 
\begin{enumerate}
\item Using the previously recommended tuning parameters, the method performs
reasonably well in identifying the correct number of components in
the mixture for the original three DGPs (1M, 2M, and 3M) with two
components, as well as three additional DGPs (1T, 2T, and 3T) with
three components.
\item The parameter estimates converge at the rate of $\sqrt{T}$ when $N\geq T$.
This is because we use $\widehat{\alpha_{i}^{0}-u_{i}}$ as observations,
and $\widehat{\alpha_{i}^{0}-u_{i}}$ converges to $\alpha_{i}^{0}-u_{i}$
at the rate $\sqrt{T}$. In the simulations, we further set $T$ smaller
than $N$. 
\item For fixed $T$, the estimates deteriorate as $N$ increases for the
following reasons:
\begin{enumerate}
\item The theoretical convergence rate is $\sqrt{T}$, so increasing $N$
does not yield improvements in theory.
\item To illustrate the intuition, suppose the mixture has only one component
with $\alpha_{i}^{0}-u_{i}\overset{d}{\sim}\alpha^{0}-|N(0,\sigma_{u}^{2})|$.
Continuing from (\ref{eq:hat=a-u}), we obtain
\begin{align*}
\widehat{\alpha_{i}^{0}-u_{i}}
&=\alpha^{0}-u_{i}+\frac{1}{T}\sum_{t=1}^{T}v_{it}
   +\left(\frac{1}{T}\sum_{t=1}^{T}\underline{z}_{it}'\right)\!
     \left(\pi_{(k|K)}^{0}-\hat{\pi}_{(k|K)}\right)
   +\frac{1}{T}\sum_{t=1}^{T}b_{i0}(\tau_{t}) \\[6pt]
&\equiv \alpha^{0}-u_{i}+\hat{\epsilon}_{i},
\end{align*}
where $b_{i0}(\cdot)$ is the approximation bias term, and
\[
\hat{\epsilon}_{i}\equiv \frac{1}{T}\sum_{t=1}^{T}v_{it}
+\left(\frac{1}{T}\sum_{t=1}^{T}\underline{z}_{it}'\right)\!
   \left(\pi_{(k|K)}^{0}-\hat{\pi}_{(k|K)}\right)
+\frac{1}{T}\sum_{t=1}^{T}b_{i0}(\tau_{t}).
\]
If we treat $\widehat{\alpha_{i}^{0}-u_{i}}$ as observations, then,
analogous to the uniform distribution case, we have a closed-form
solution for $\hat{\alpha}^{0}$:
\begin{align*}
\hat{\alpha}^{0}
&=\max_{i}\left\{\widehat{\alpha^{0}-u_{i}}\right\} \\
&=\alpha^{0}-\min_{i}\left\{u_{i}-\hat{\epsilon}_{i}\right\}.
\end{align*}
Since $\hat{\epsilon}_{i}$ can take negative values, the probability
of observing extreme negative realizations of $\hat{\epsilon}_{i}$
increases with $N$, thereby reducing the accuracy of the estimate.
\end{enumerate}
\end{enumerate}

For these reasons, we recommend that practitioners use this approach
only when the primary goal is to identify the number of mixture components
or to model three or more components, and only when $T$ is sufficiently
large to ensure reliable estimation.

\section{The Approximation of the Likelihood Function }

\label{APP:likeli}

\subsection{The Inefficiency Term with a Unique Distribution}
We derive the likelihood function incorporating approximations of
$\alpha$ and $\beta$. Using the last line of (\ref{EQ:approx}),
we have 
\begin{align}
y_{it} & \approx\alpha_{i}^{0}-u_{i}+z_{it}'\pi_{i}^{0}+v_{it}\nonumber \\
 & =\alpha_{i}^{0}+z_{it}'\pi_{i}^{0}+\varepsilon_{it}.\label{eq:new_appro}
\end{align}
The primary distinction between the above approximation and (\ref{EQ:approx})
is that we do not separate $u_{i}$ from $\varepsilon_{it}$.

We first derive the likelihood function when the distribution of $\alpha_{i}^{0}-u_{i}$
is unique. We adopt the notation used in \citet{YaoZhangKum2019}
for the following presentation. Let 
\[
\sigma_{i}^{2}=\sigma_{vi}^{2}+T\sigma_{u}^{2},
\]
\[
\rho_{i}=\sigma_{u}/\sigma_{vi},
\]
\[
\mu_{i\ast}=-\frac{\sigma_{u}^{2}}{\sigma_{i}^{2}}\sum_{t=1}^{T}\varepsilon_{it},\quad\text{and}
\]
\[
\sigma_{i\ast}^{2}=\frac{\sigma_{u}^{2}\sigma_{vi}^{2}}{\sigma_{i}^{2}},
\]
then 
\[
\sigma_{vi}^{2}=\frac{\sigma_{i}^{2}}{1+T\rho_{i}^{2}}.
\]
We denote the density and cumulative distribution functions of a standard
normal distribution as $\phi(\cdot)$ and $\Phi(\cdot)$, respectively.
Following calculations similar to those in \citet{YaoZhangKum2019},
the density of $\varepsilon_{i}=\left(\varepsilon_{i1},\ldots,\varepsilon_{iT}\right)^{\prime}$
is given by 
\begin{align*}
f\left(\varepsilon_{i};\sigma_{u}^{2},\sigma_{vi}^{2}\right) & =\frac{2}{\sigma_{vi}^{T-1}\sigma_{i}}\left[1-\Phi\left(-\frac{\mu_{i\ast}}{\sigma_{i\ast}}\right)\right]\left[\prod_{t=1}^{T}\phi\left(\frac{\varepsilon_{it}}{\sigma_{vi}}\right)\right]\exp\left(\frac{1}{2}\left(-\frac{\mu_{i\ast}}{\sigma_{i\ast}}\right)^{2}\right),
\end{align*}
which implies 
\begin{align*}
\log f\left(\varepsilon_{i};\sigma_{u}^{2},\sigma_{vi}^{2}\right) & =C-\frac{(T-1)}{2}\log\sigma_{vi}^{2}-\frac{1}{2}\log\left(\sigma_{vi}^{2}+T\sigma_{u}^{2}\right)\\
 & \quad+\log\left[1-\Phi\left(-\frac{\mu_{i\ast}}{\sigma_{i\ast}}\right)\right]+\frac{1}{2}\left(\frac{\mu_{i\ast}}{\sigma_{i\ast}}\right)^{2}-\frac{\sum_{t=1}^{T}\varepsilon_{it}^{2}}{2\sigma_{vi}^{2}},
\end{align*}
for some constant $C$ that does not depend on the parameters to be
estimated. Recall that $\vartheta_{i}=\left(\pi_{i}^{0\prime},\sigma_{vi}^{2}\right)'$.
Using the approximation in (\ref{eq:new_appro}), the log-likelihood
density function for $y_{i}=\left(y_{i1},\ldots,y_{iT}\right)^{\prime}$
is given by 
\begin{align}
\log f\left(y_{i}\mid x_{i};\alpha_{i}^{0},\sigma_{u}^{2},\vartheta_{i}\right) & \approx C-\frac{(T-1)}{2}\log\sigma_{vi}^{2}-\frac{1}{2}\log\left(\sigma_{vi}^{2}+T\sigma_{u}^{2}\right)\label{EQ:likelihood_appro}\\
 & \quad+\log\left[1-\Phi\left(-\frac{\tilde{\mu}_{\ast i}}{\sigma_{\ast i}}\right)\right]+\frac{1}{2}\left(\frac{\tilde{\mu}_{\ast i}}{\sigma_{\ast i}}\right)^{2}-\frac{\sum_{t=1}^{T}\tilde{\varepsilon}_{it}^{2}}{2\sigma_{vi}^{2}},\nonumber 
\end{align}
with 
\begin{align*}
\tilde{\varepsilon}_{it} & =y_{it}-\alpha_{i}^{0}-z_{it}'\pi_{i}^{0},\text{ and}\\
\tilde{\mu}_{i\ast} & =-\frac{\sigma_{u}^{2}}{\sigma_{i}^{2}}\sum_{t=1}^{T}\tilde{\varepsilon}_{it}.
\end{align*}

\subsection{The Inefficiency Term with a Mixture Distribution}

We now consider the case when $\alpha_{i}^{0}-u_{i}$ is distributed
as $\alpha_{(j)}^{0}-\left\vert N\left(0,\sigma_{u(j)}^{2}\right)\right\vert $
with probability $\tau_{j}^{0}$, $j=1,2,...,\mathcal{K}^{*}$. This
log-likelihood function is denoted as $\log\tilde{f}$ and can be
derived as: 
\begin{align}
 & \log\tilde{f}\left(y_{i}\left\vert x_{i};\alpha_{(1)}^{0},\sigma_{u(1)}^{2},...,\alpha_{(\mathcal{K}^{*})}^{0},\sigma_{u(\mathcal{K}^{*})}^{2},\tau_{1}^{0},...,\tau_{\mathcal{K}^{*}-1}^{0},\vartheta_{i}\right.\right)\label{EQ:log_approx_p}\\
= & \log\left[\tau_{1}^{0}f\left(y_{i}\left\vert x_{i};\alpha_{(1)}^{0},\sigma_{u(1)}^{2},\vartheta_{i}\right.\right)+\tau_{2}^{0}f\left(y_{i}\left\vert x_{i};\alpha_{(2)}^{0},\sigma_{u(2)}^{2},\vartheta_{i}\right.\right)+...\right.\nonumber \\
 & \left.+\left(1-\tau_{1}^{0}-...-\tau_{\mathcal{K}^{*}-1}^{0}\right)f\left(y_{i}\left\vert x_{i};\alpha_{(\mathcal{K}^{*})}^{0},\sigma_{u(\mathcal{K}^{*})}^{2},\vartheta_{i}\right.\right)\right],\nonumber 
\end{align}
where $f\left(y_{i}\mid x_{i};\alpha^{0},\sigma_{u}^{2},\vartheta_{i}\right)$
is defined in (\ref{EQ:likelihood_appro}). We note that $\tau_{\mathcal{K}^{*}}^{0}=1-\tau_{1}^{0}-...-\tau_{\mathcal{K}^{*}-1}^{0}$,
so the last term in the above is equivalent to $\tau_{\mathcal{K}^{*}}^{0}f\left(\cdot|\cdot\right)$.

\subsection{Computation of the Inefficiency Term Post Estimation}

In the case when $\alpha_i$ does not vary across $i$, we are able to compute the expectation of the inefficiency term. We take the case in the empirical application as an example. Other cases with more than two components in the mixture can be studied similarly.

Using (\ref{eq:mixture_distribution}) and $T^{-1}\sum_{t=1}^{T}v_{it}\overset{d}{\sim}N\left(0,\hat{\sigma}_{v\left(k\right)}^{2}\left/T\right.\right)$
for $i\in\hat{G}_{k|K},$  \citet{Greene2005} implies that 
\begin{align}
 & \widehat{\textrm{E}\left(\alpha_{i}^{0}+u_{i}|\varepsilon_{i1},...,\varepsilon_{iT}\right)} \notag\\
= & \hat{\alpha}^{*}+\hat{\tau}\left[\mu_{i\left(1\right)}^{*}+\sigma_{\left(1\right)}^{*}\frac{\phi\left(\mu_{i\left(1\right)}^{*}/\sigma_{\left(1\right)}^{*}\right)}{\Phi\left(\mu_{i\left(1\right)}^{*}/\sigma_{\left(1\right)}^{*}\right)}\right]+\left(1-\hat{\tau}\right)\left[\mu_{i\left(2\right)}^{*}+\sigma_{\left(2\right)}^{*}\frac{\phi\left(\mu_{i\left(2\right)}^{*}/\sigma_{\left(2\right)}^{*}\right)}{\Phi\left(\mu_{i\left(2\right)}^{*}/\sigma_{\left(2\right)}^{*}\right)}\right],\label{eq:inefficiency_post}
\end{align}
where 
\begin{align*}
\mu_{i\left(j\right)}^{*} & =\rho_{(j)}^{2}\left[\sum_{t=1}^{T}\left(y_{it}-\underline{z}_{it}'\hat{\pi}_{(k|K)}\right)-\hat{\alpha}^{*}\right],\sigma_{\left(j\right)}^{*2}=\rho_{(j)}^{2}\hat{\sigma}_{v\left(k\right)}^{2},\\
\rho_{(j)}^{2} & =\left.\lambda_{\left(j\right)}^{2}\right/\left(1+T\lambda_{\left(j\right)}^{2}\right),\textrm{ and }\lambda_{\left(j\right)}=\left.\hat{\sigma}_{u\left(j\right)}\right/\hat{\sigma}_{v\left(k\right)},
\end{align*}
for $j=1,2$, and $i\in\hat{G}_{k|K}.$

\section{HAC Method}

\label{APP:HAC}

In this section of the appendix, we describe the Hierarchical Agglomerative
Clustering (HAC) method used as part of the proposed method. We largely
adopt the description from Chapter~4 of \citet{Everittetal}.

HAC is a bottom-up clustering approach that starts by treating each
data point as an individual cluster. At each step, the two ``closest''
clusters are merged to form a new cluster, and this process is repeated
iteratively until a stopping rule is satisfied. The stopping rule may
be a pre-specified number of clusters $K$, a cut-off distance threshold,
or the completion of the full hierarchy where all observations are
eventually merged into a single cluster. The result can be represented
by a \emph{dendrogram}, which is a binary tree structure showing how
clusters are combined at each stage.

A crucial component of HAC is the definition of the \emph{linkage criterion},
which determines how distances between clusters are computed during
the iterative merging process. Common linkage criteria include:
\begin{itemize}
\item \textbf{Single linkage:} distance between two clusters is defined
as the minimum distance between any pair of points across clusters.
\item \textbf{Complete linkage:} distance is the maximum distance between
any pair of points across clusters.
\item \textbf{Average linkage:} distance is the average pairwise distance
between all points across clusters.
\item \textbf{Ward's method:} distance is defined as the increase in within-cluster
variance resulting from a merge, which tends to produce compact and
spherical clusters.
\end{itemize}

In this paper, we focus on Ward's method \citep{Ward1963}, which
is widely regarded as producing more balanced clusters compared to
single or complete linkage. At each iteration, Ward's method merges
the two clusters whose union leads to the smallest possible increase
in the total within-cluster variance (often called the ``error sum
of squares''). This variance-based criterion makes the procedure especially
well-suited for applications where compactness and homogeneity within
clusters are desired.

Formally, the distance metric in Ward's method for clusters $A$ and
$B$ is given by
\[
d_{AB}=\frac{|A||B|}{|A|+|B|}\|\bar{x}_{A}-\bar{x}_{B}\|^{2},
\]
where $|A|$ and $|B|$ denote the cluster sizes, and $\bar{x}_{A},\bar{x}_{B}$
are the corresponding centroids. This expression is derived from the
increment in within-cluster sum of squares that would occur if clusters
$A$ and $B$ were merged. The algorithm therefore prioritizes merging
clusters with centroids that are close to each other, scaled by their
sizes.

In practice, HAC with Ward's method proceeds as follows. First, each
data point is initialized as its own cluster. Second, the pairwise
distances between all clusters are computed using Ward's criterion.
Third, the two clusters with the smallest distance are merged. Fourth,
the cluster distances are updated to reflect the new merged cluster.
These steps are repeated until the stopping rule is satisfied. 

The output of HAC provides a nested sequence of partitions, which
can be cut at different levels depending on the desired granularity.
This flexibility is useful in empirical applications where the ``true''
number of groups is not known ex ante, and different levels of aggregation
can be explored.


\section{Main Proofs}

\label{APP:mainproofs}

The main proofs in this section are built on technical lemmas in Appendix
\ref{APP:lemmaProofs}. We first present some well known results that
are useful for the proofs in this appendix.

We let $b_{il}$, $l=0,1,...,p$, denote the bias term from approximations.
Specifically, 
\begin{equation}
b_{i0}\left(s\right)=\alpha_{i}\left(s\right)-\mathbb{B}_{-0}^{m}\left(s\right)'\pi_{i0}^{0}\text{ and \ensuremath{b_{il}\left(s\right)}\ensuremath{=\ensuremath{\beta_{il}\left(s\right)}-\ensuremath{\mathbb{B}^{m}\left(s\right)^{\prime}\pi_{il}^{0}}}},\label{eq:bil}
\end{equation}
for $s\in\left[0,1\right]$, and $\xi_{it}$ collects the bias term
in $y_{it}$: 
\begin{equation}
\xi_{it}\equiv{\displaystyle b_{i0}\left(\tau_{t}\right)+\sum_{l=1}^{p}}x_{itl}b_{il}\left(\tau_{t}\right).\label{eq:xi_define}
\end{equation}
With this notation and (\ref{EQ:approx}), 
\begin{equation}
y_{it}=\tilde{z}_{it}'\tilde{\pi}_{i}^{0}+\xi_{it}+v_{it}.\label{eq:yit_withbias}
\end{equation}

We know from \citet{Chen2007} that 
\[
\sup_{s\in\left[0,1\right]}\left|b_{il}\left(s\right)\right|=O\left(m^{-\kappa}\right)
\]
holds by Assumption \ref{A:coef}. Since $p$ is finite, clearly,
\begin{equation}
\max_{l=0,...,p}\sup_{s\in\left[0,1\right]}\left|b_{il}\left(s\right)\right|=O\left(m^{-\kappa}\right),\label{eq:bias_rate}
\end{equation}
for $i=1,...,N.$ Using similar logic on $\alpha_{k}^{*}\left(\tau_{t}\right)$
and the fact $K^{*}$is fixed, we can obtain 
\begin{equation}
\max_{k=1,...,K^{*}}\max_{l=0,...,p}\sup_{s\in\left[0,1\right]}\left|b_{kl}^{*}\left(s\right)\right|=O\left(m^{-\kappa}\right),\label{eq:bias_rate_group}
\end{equation}
where 
\[
b_{k0}^{*}\left(s\right)=\alpha_{k}^{*}\left(s\right)-\mathbb{B}_{-0}^{m}\left(s\right)'\pi_{i0}^{*0}\text{ and \ensuremath{b_{kl}^{*}\left(s\right)=\beta_{il}\left(s\right)-\mathbb{B}^{m}\left(s\right)^{\prime}\pi_{il}^{*0}}}\text{ for }l=1,...,p.
\]

\begin{proof}[Proof of Theorem \ref{TH:classify}]
\textbf{(i)} is a direct result of Lemma \ref{LE:uni_converge}.
To see that, 
\begin{align*}
\Pr\left(\max_{i=1,2,...,N}\left\Vert \hat{\vartheta}_{i}-\vartheta_{i}\right\Vert >\epsilon\right) & \leq\Pr\left(\max_{i=1,2,...,N}\left\Vert \hat{\pi}_{i}-\pi_{i}^{0}\right\Vert >\frac{\epsilon}{2}\right)+\Pr\left(\max_{i=1,2,...,N}\left\Vert \hat{\sigma}_{vi}^{2}-\sigma_{vi}^{2}\right\Vert >\frac{\epsilon}{2}\right)\\
 & \leq\Pr\left(\max_{i=1,2,...,N}\left\Vert \widehat{\tilde{\pi}}_{i}-\tilde{\pi}_{i}^{0}\right\Vert >\frac{\epsilon}{2}\right)+\Pr\left(\max_{i=1,2,...,N}\left\Vert \hat{\sigma}_{vi}^{2}-\sigma_{vi}^{2}\right\Vert >\frac{\epsilon}{2}\right)\\
 & =o\left(1\right),
\end{align*}
where the second line holds by the fact that $\hat{\pi}_{i}$ is a
sub-vector of $\widehat{\tilde{\pi}}_{i}$, and the last line applies
the results in Lemma \ref{LE:uni_converge}.

\textbf{(ii) }Denote 
\begin{align*}
L_{ii'} & \equiv\sum_{l=0}^{p}\left\Vert \pi_{il}^{0}-\pi_{i'l}^{0}\right\Vert +\left|\sigma_{vi}-\sigma_{vi'}\right|,\text{ and}\\
\hat{L}_{ii'} & \equiv\sum_{l=0}^{p}\left\Vert \hat{\pi}_{il}-\hat{\pi}_{i'l}\right\Vert +\left|\hat{\sigma}_{vi}-\hat{\sigma}_{vi'}\right|,
\end{align*}
and 
\[
L_{jk}^{*}\equiv\sum_{l=0}^{p}\left\Vert \pi_{jl}^{*0}-\pi_{kl}^{*0}\right\Vert +\left|\sigma_{v(j)}^{\ast}-\sigma_{v(k)}^{\ast}\right|.
\]
for $i,i'=1,...,n$ and $j,k=1,...,K^{*}$. We first claim that 
\begin{equation}
\min_{1\leq j\neq k\leq K^{*}}L_{jk}^{*}\geq\frac{1}{2}\underline{C}^{*}\label{eq:claim}
\end{equation}
holds after some large $T$ (and hence large $m$). We will show this
claim at the end.

To show the result in (ii), it is equivalent to show that 
\[
\Pr\left(\max_{1\leq k\leq K^{*}}\max_{i,i'\in G_{k|K^{*}}}\hat{L}_{ii'}<\min_{1\leq j\neq k\leq K^{*}}\min_{i\in G_{j|K^{*}},i'\in G_{k|K^{*}}}\hat{L}_{ii'}\right)=1-o\left(1\right).
\]
When $i,i'\in G_{k|K^{*}},$ $L_{ii'}=0$. Thus, the uniform convergence
in \textbf{(i)} implies that, for $\epsilon=\underline{C}^{*}/6,$
\begin{equation}
\Pr\left(\max_{1\leq k\leq K^{*}}\max_{i,i'\in G_{k|K^{*}}}\hat{L}_{ii'}<\underline{C}^{*}/6\right)=1-o\left(1\right).\label{eq:Lhatii'}
\end{equation}
Denote this event as 
\[
A=\left\{ \max_{1\leq k\leq K^{*}}\max_{i,i'\in G_{k|K^{*}}}\hat{L}_{ii'}<\underline{C}^{*}/6\right\} .
\]
Conditional on this event $A$, the claim in (\ref{eq:claim}) after
some large $T$ (and hence large $m$), the result in (\ref{eq:Lhatii'}),
and the triangular inequality imply that 
\begin{align*}
 & \min_{1\leq j\neq k\leq K^{*}}\min_{i\in G_{j|K^{*}},i'\in G_{k|K^{*}}}\hat{L}_{ii'}\\
\geq & \min_{1\leq j\neq k\leq K^{*}}L_{jk}^{*}-2\max_{i=1,...,n}\left(\sum_{l=0}^{p}\left\Vert \hat{\pi}_{il}-\pi_{il}^{0}\right\Vert +\left|\hat{\sigma}_{vi}-\sigma_{vi}\right|\right)\\
\geq & \frac{\underline{C}^{*}}{2}-2\cdot\frac{\underline{C}^{*}}{6}=\underline{C}^{*}/6\\
> & \max_{1\leq k\leq K^{*}}\max_{i,i'\in G_{k|K^{*}}}\hat{L}_{ii'}.
\end{align*}

Therefore, after some large $T$ (and hence large $m$)$,$ 
\begin{align*}
 & \Pr\left(\max_{1\leq k\leq K^{*}}\max_{i,i'\in G_{k|K^{*}}}\hat{L}_{ii'}<\min_{1\leq j\neq k\leq K^{*}}\min_{i\in G_{j|K^{*}},i'\in G_{k|K^{*}}}\hat{L}_{ii'}\right)\\
\geq & \Pr\left(\left.\max_{1\leq k\leq K^{*}}\max_{i,i'\in G_{k|K^{*}}}\hat{L}_{ii'}<\min_{1\leq j\neq k\leq K^{*}}\min_{i\in G_{j|K^{*}},i'\in G_{k|K^{*}}}\hat{L}_{ii'}\right|A\right)\Pr\left(A\right)\\
= & \Pr\left(A\right)=1-o\left(1\right),
\end{align*}
as desired.

We now show the claim in (\ref{eq:claim}). Notice that for any $\pi_{1},\pi_{2}\in\mathbb{R}^{m},$
\begin{align}
\left\Vert \mathbb{B}^{m}\left(s\right)^{\prime}\pi_{1}-\mathbb{B}^{m}\left(s\right)^{\prime}\pi_{2}\right\Vert  & =\left[\int_{0}^{1}\left(\sum_{j=0}^{m-1}B_{j}\left(s\right)\pi_{1j}-\sum_{j=0}^{m-1}B_{j}\left(s\right)\pi_{2j}\right)^{2}ds\right]^{1/2}\nonumber \\
 & =\left[\int_{0}^{1}\sum_{j=0}^{m-1}B_{j}\left(s\right)^{2}\left(\pi_{1j}-\pi_{2j}\right)^{2}ds\right]^{1/2}=\left[\sum_{j=0}^{m-1}\left(\pi_{1j}-\pi_{2j}\right)^{2}\right]^{1/2}\nonumber \\
 & =\left\Vert \pi_{1}-\pi_{2}\right\Vert ,\label{eq:pi_diff}
\end{align}
where the second line holds by $\int_{0}^{1}B_{j}\left(s\right)B_{j'}\left(s\right)ds=0$
for $j\neq j',$and the third line holds by $\int_{0}^{1}B_{j}\left(s\right)^{2}ds=1.$

Using (\ref{eq:pi_diff}), 
\begin{align*}
L_{jk}^{*}= & \left\Vert \pi_{j0}^{*0}-\pi_{k0}^{*0}\right\Vert +\sum_{l=1}^{p}\left\Vert \pi_{jl}^{*0}-\pi_{kl}^{*0}\right\Vert +\left|\sigma_{v(j)}^{\ast}-\sigma_{v(k)}^{\ast}\right|\\
= & \left\Vert \mathbb{B}_{-0}^{m}\left(s\right)^{\prime}\pi_{j0}^{*0}-\mathbb{B}_{-0}^{m}\left(s\right)^{\prime}\pi_{k0}^{*0}\right\Vert +\sum_{l=1}^{p}\left\Vert \mathbb{B}^{m}\left(s\right)^{\prime}\pi_{jl}^{*0}-\mathbb{B}^{m}\left(s\right)^{\prime}\pi_{kl}^{*0}\right\Vert +\left|\sigma_{v(j)}^{\ast}-\sigma_{v(k)}^{\ast}\right|\\
= & \left\Vert \alpha_{j}^{*}-b_{j0}^{*}\left(s\right)-\alpha_{k}^{*}+b_{k0}^{*}\left(s\right)\right\Vert +\sum_{l=1}^{p}\left\Vert \beta_{jl}^{\ast}\left(s\right)-b_{jl}^{*}\left(s\right)-\beta_{kl}^{\ast}\left(s\right)+b_{jk}^{*}\left(s\right)\right\Vert +\left|\sigma_{v(j)}^{\ast}-\sigma_{v(k)}^{\ast}\right|\\
\geq & \left\Vert \alpha_{j}^{*}-\alpha_{k}^{*}\right\Vert -\left\Vert b_{j0}^{*}\left(s\right)-b_{k0}^{*}\left(s\right)\right\Vert +\sum_{l=1}^{p}\left[\left\Vert \beta_{jl}^{\ast}-\beta_{kl}^{\ast}\right\Vert -\left\Vert b_{jl}^{*}\left(s\right)-b_{jk}^{*}\left(s\right)\right\Vert \right]+\left|\sigma_{v(j)}^{\ast}-\sigma_{v(k)}^{\ast}\right|\\
\geq & \underline{C}^{*}-O\left(m^{-\kappa}\right),
\end{align*}
where the fourth line holds by triangular inequality, and the last
line holds by Assumption \ref{A:group_diff}, the result in (\ref{eq:bias_rate_group}),
and the fact that $p$ is fixed. Using the result in (\ref{eq:bias_rate_group})
and the fact that $K^{*}$ is fixed, then after some large $T$ (and
hence large $m),$ 
\[
\min_{1\leq j\neq k\leq K^{*}}L_{jk}^{*}=\min_{1\leq j\neq k\leq K^{*}}\sum_{l=0}^{p}\left\Vert \pi_{jl}^{*0}-\pi_{kl}^{*0}\right\Vert +\left|\sigma_{v(j)}^{\ast}-\sigma_{v(k)}^{\ast}\right|\geq\frac{1}{2}\underline{C}^{*},
\]
as desired. 
\end{proof}
\begin{proof}[Proof of Theorem \ref{TH:post-estimation}]
\textbf{(i)} Denote the event of correct classification as 
\[
\mathcal{M}=\left\{ \left(\hat{G}_{1|K^{*}},\hat{G}_{2|K^{*}},\ldots,\hat{G}_{K^{*}|K^{*}}\right)=\left(G_{1|K^{*}},G_{2|K^{*}},\ldots,G_{K^{*}|K^{*}}\right)\right\} .
\]

We first show the results conditional on the event $\mathscr{\mathcal{M}}.$

For each $i\in G_{k|K^{*}}$

\[
y_{it}=\alpha^{0}-u_{i}+\underline{z}_{it}'\pi_{\left(k\right)}^{*0}+\xi_{it}+v_{it},
\]
where similar to how $\pi_{i}^{0}$ is defined, $\pi_{\left(k\right)}^{*0}$
collects coefficients for the approximation of $\alpha_{\left(k\right)}^{*}\left(s\right)$
and $\beta_{\left(k\right)}^{*}\left(s\right)$. Thus 
\begin{equation}
\ddot{y}_{it}=\ddot{\underline{z}}_{it}'\pi_{\left(k\right)}^{*0}+\ddot{\xi}_{it}+\ddot{v}_{it}.\label{eq:y_dots}
\end{equation}
Using (\ref{eq:y_dots}), 
\begin{align}
\hat{\pi}_{\left(k|K^{*}\right)}^{*0}-\pi_{\left(k\right)}^{*0} & =\left(\sum_{i\in G_{k|K^{*}}}\sum_{t=1}^{T}\underline{\ddot{z}}_{it}\underline{\ddot{z}}_{it}'\right)^{-1}\left(\sum_{i\in G_{k|K^{*}}}\sum_{t=1}^{T}\underline{\ddot{z}}_{it}\ddot{\xi}_{it}\right)\nonumber \\
 & +\left(\sum_{i\in G_{k|K^{*}}}\sum_{t=1}^{T}\underline{\ddot{z}}_{it}\underline{\ddot{z}}_{it}'\right)^{-1}\left(\sum_{i\in G_{k|K^{*}}}\sum_{t=1}^{T}\underline{\ddot{z}}_{it}\ddot{v}_{it}\right)\nonumber \\
 & \equiv A_{k1}+A_{k2},\label{eq:pi_diff_decom}
\end{align}
where $\ddot{\xi}_{it}$ is the de-meaned $\xi_{it}$ (defined in
(\ref{eq:xi_define})) over $t$ for observation $i$, and $\ddot{v}_{it}$
is similarly defined. With the decomposition in (\ref{eq:pi_diff_decom}),

\begin{align}
 & \sqrt{N_{k}T/\underline{m}}\mathbb{S}_{(k)}^{-1/2}\left[\hat{\theta}_{\left(k|K^{*}\right)}\left(s\right)-\theta_{\left(k\right)}^{*}\left(s\right)\right]\nonumber \\
 & =\sqrt{N_{k}T/\underline{m}}\mathbb{S}_{(k)}^{-1/2}\mathbb{M_{B}}\left(s\right)\left[\hat{\pi}_{\left(k|K^{*}\right)}^{*0}-\pi_{\left(k\right)}^{*0}\right]\nonumber \\
 & =\sqrt{N_{k}T/\underline{m}}\mathbb{S}_{(k)}^{-1/2}\mathbb{M_{B}}\left(s\right)A_{k1}+\sqrt{N_{k}T/\underline{m}}\mathbb{S}_{(k)}^{-1/2}\mathbb{M_{B}}\left(s\right)A_{k2}\nonumber \\
 & =o_{P}\left(1\right)+\sqrt{N_{k}T/\underline{m}}\mathbb{S}_{(k)}^{-1/2}\mathbb{M_{B}}\left(s\right)A_{k2}\nonumber \\
 & \stackrel{d}{\rightarrow}N\left(0,I_{p+1}\right),\label{eq:theta_converge_d}
\end{align}
where the fourth line uses the result in Lemma \ref{LE:Ak1} and $N_{k}T/\underline{m}^{1+2\kappa}\rightarrow0$
imposed in Assumption \ref{A:tuning2}, and the last line holds by
evoking the Crámer-Wold device on the result in Lemma \ref{LE:Ak2}.

To facilitate exposition, let $\omega_{NT}\equiv\sqrt{N_{k}T/\underline{m}}\left(\mathbb{S}_{(k)}^{-1/2}\right)'\left[\hat{\theta}_{\left(k|K^{*}\right)}\left(s\right)-\theta_{\left(k\right)}^{*}\left(s\right)\right].$
According to the definition of convergence in distribution, (\ref{eq:theta_converge_d})
implies that for any $\epsilon>0$ and any $x\in R$, there exists
a large $N_{1}$ such that for all $N>N_{1}$ 
\[
\left|P\left(\left.\omega_{NT}\leq x\right|\mathcal{M}\right)-\Phi\left(x\right)\right|<\frac{\epsilon}{3},
\]
where $\Phi\left(\cdot\right)$ denotes the cumulative distribution
function of the standard normal.

We now show the general results without conditional on $\mathscr{\mathcal{M}}$.
Theorem \ref{TH:classify} implies that there exists a large $N_{2}$
such that for all $N>N_{2}$ 
\[
P\left(\mathcal{M}\right)>1-\frac{\epsilon}{3}.
\]
Thus, for all $N>\max\left(N_{1},N_{2}\right)$ and any $x\in R$,
\begin{align*}
\left|P\left(\omega_{NT}\leq x\right)-\Phi\left(x\right)\right| & =\left|P\left(\left.\omega_{NT}\leq x\right|\mathcal{M}\right)P\left(\mathcal{M}\right)+P\left(\left.\omega_{NT}\leq x\right|\mathcal{M}^{c}\right)P\left(\mathcal{M}^{c}\right)-\Phi\left(x\right)\right|\\
 & \leq\left|\left[P\left(\left.\omega_{NT}\leq x\right|\mathcal{M}\right)-\Phi\left(x\right)\right]P\left(\mathcal{M}\right)\right|+\left[1-P\left(\mathcal{M}\right)\right]\Phi\left(x\right)\\
 & +P\left(\left.\omega_{NT}\leq x\right|\mathcal{M}^{c}\right)P\left(\mathcal{M}^{c}\right)\\
 & \leq\frac{\epsilon}{3}+\frac{\epsilon}{3}+\frac{\epsilon}{3}=\epsilon,
\end{align*}
which is the desired result by the definition of convergence in distribution.

\textbf{(ii)} We first show the results conditional on the event $\mathscr{\mathcal{M}}.$
Using the representation in (\ref{eq:y_dots}),

\begin{align*}
 & \hat{\sigma}_{v\left(k|K^{*}\right)}^{2}-\sigma_{v\left(k\right)}^{*2}\\
 & =\frac{1}{N_{k}\left(T-1\right)}\sum_{i\in G_{k|K^{*}}}\sum_{t=1}^{T}\left(\ddot{y}_{it}-\underline{\ddot{z}}_{it}'\hat{\pi}_{\left(k\right)}\right)^{2}-\sigma_{v\left(k\right)}^{*2}\\
 & =\frac{1}{N_{k}\left(T-1\right)}\sum_{i\in G_{k|K^{*}}}\sum_{t=1}^{T}\left[\ddot{\underline{z}}_{it}'\left(\pi_{\left(k\right)}^{*0}-\hat{\pi}_{\left(k\right)}\right)+\ddot{\xi}_{it}+\ddot{v}_{it}\right]^{2}-\sigma_{v\left(k\right)}^{*2}\\
 & =\frac{1}{N_{k}\left(T-1\right)}\sum_{i\in G_{k|K^{*}}}\sum_{t=1}^{T}\ddot{v}_{it}^{2}-\sigma_{v\left(k\right)}^{*2}+\frac{1}{N_{k}\left(T-1\right)}\sum_{i\in G_{k|K^{*}}}\sum_{t=1}^{T}\left[\ddot{\underline{z}}_{it}'\left(\pi_{\left(k\right)}^{*0}-\hat{\pi}_{\left(k\right)}\right)\right]^{2}\\
 & +\frac{1}{N_{k}\left(T-1\right)}\sum_{i\in G_{k|K^{*}}}\sum_{t=1}^{T}\ddot{\xi}_{it}^{2}+\frac{2}{N_{k}\left(T-1\right)}\sum_{i\in G_{k|K^{*}}}\sum_{t=1}^{T}\ddot{\underline{z}}_{it}'\left(\pi_{\left(k\right)}^{*0}-\hat{\pi}_{\left(k\right)}\right)\left(\ddot{\xi}_{it}+\ddot{v}_{it}\right)\\
 & +\frac{2}{N_{k}\left(T-1\right)}\sum_{i\in G_{k|K^{*}}}\sum_{t=1}^{T}\ddot{\xi}_{it}\ddot{v}_{it}\\
 & \equiv A_{k1}+A_{k2}+A_{k3}+A_{k4}+A_{k5}.
\end{align*}

We show that $A_{k2},A_{k3},A_{k4},$ and $A_{k5}$ are asymptotically
negligible by demonstrating the rates for $A_{k2},A_{k3},A_{k4},$
and $A_{k5}$. We then move to the asymptotic distribution of $A_{k1}$.

For $A_{k2}$, 
\begin{align}
A_{k2} & =\frac{1}{N_{k}\left(T-1\right)}\sum_{i\in G_{k|K^{*}}}\sum_{t=1}^{T}\left[\ddot{\underline{z}}_{it}'\left(\pi_{\left(k\right)}^{*0}-\hat{\pi}_{\left(k\right)}\right)\right]^{2}\nonumber \\
 & =\left(\pi_{\left(k\right)}^{*0}-\hat{\pi}_{\left(k\right)}\right)'\left(\frac{1}{N_{k}\left(T-1\right)}\sum_{i\in G_{k|K^{*}}}\sum_{t=1}^{T}\ddot{\underline{z}}_{it}\ddot{\underline{z}}_{it}'\right)\left(\pi_{\left(k\right)}^{*0}-\hat{\pi}_{\left(k\right)}\right)\nonumber \\
 & =O_{P}\left(\left\Vert \pi_{\left(k\right)}^{*0}-\hat{\pi}_{\left(k\right)}\right\Vert ^{2}\right)=O_{P}\left(\frac{\underline{m}}{N_{k}T}\right),\label{eq:ak2_}
\end{align}
where the last line holds by full rank condition implied by Lemmas
\ref{LE:z_k'z_k} and \ref{LE:splines-sample-2}, and the rate we
show in \textbf{(i)}.

For $A_{k3},$ by the definition of $\xi_{it}$in (\ref{eq:xi_define}),
the rate in (\ref{eq:bias_rate_group}), and Assumption \ref{A:tuning2}
(ii), we can obtain 
\begin{equation}
A_{k3}=\frac{1}{N_{k}\left(T-1\right)}\sum_{i\in G_{k|K^{*}}}\sum_{t=1}^{T}\ddot{\xi}_{it}^{2}=O_{P}\left(\underline{m}^{-2\kappa}\right)=o_{P}\left(\frac{\underline{m}}{N_{k}T}\right).\label{eq:ak3_}
\end{equation}

For $A_{k4},$ 
\begin{align}
A_{k4} & =\frac{2}{N_{k}\left(T-1\right)}\sum_{i\in G_{k|K^{*}}}\sum_{t=1}^{T}\ddot{\underline{z}}_{it}'\left(\pi_{\left(k\right)}^{*0}-\hat{\pi}_{\left(k\right)}\right)\left(\ddot{\xi}_{it}+\ddot{v}_{it}\right)\nonumber \\
 & =\left(\pi_{\left(k\right)}^{*0}-\hat{\pi}_{\left(k\right)}\right)'\left[\frac{2}{N_{k}\left(T-1\right)}\sum_{i\in G_{k|K^{*}}}\sum_{t=1}^{T}\ddot{\underline{z}}_{it}\ddot{\xi}_{it}\right]+\left(\pi_{\left(k\right)}^{*0}-\hat{\pi}_{\left(k\right)}\right)'\frac{2}{N_{k}\left(T-1\right)}\sum_{i\in G_{k|K^{*}}}\sum_{t=1}^{T}\ddot{\underline{z}}_{it}\ddot{v}_{it}\nonumber \\
 & =O_{P}\left(\sqrt{\frac{\underline{m}}{N_{k}T}}\underline{m}^{-\kappa}\right)+O_{P}\left(\sqrt{\frac{\underline{m}}{N_{k}T}}\cdot\sqrt{\frac{1}{N_{k}T}}\right)=o_{P}\left(\frac{\underline{m}}{N_{k}T}\right),\label{eq:ak4_}
\end{align}
where for the third line we apply the result in Lemma \ref{LE:Ak1},
the mixing condition across $t$ in Assumption \ref{A:mixing}, and
the independence across $i$ in Assumption \ref{A:additional-1}

For $A_{k5},$ again by the mixing condition across $t$, independence
across $i$, and the rate in (\ref{eq:bias_rate_group}), we have
\[
\text{Var}\left(\frac{2}{N_{k}\left(T-1\right)}\sum_{i\in G_{k|K^{*}}}\sum_{t=1}^{T}\ddot{\xi}_{it}\ddot{v}_{it}\right)\propto\frac{\underline{m}^{-2\kappa}}{NT}.
\]
Using the Markov inequality, the above implies that 
\begin{equation}
A_{k5}=O_{P}\left(\frac{\underline{m}^{-\kappa}}{\sqrt{NT}}\right)=o_{P}\left(\frac{\underline{m}}{N_{k}T}\right).\label{eq:ak5_}
\end{equation}

We turn to the leading term $A_{k1}:$ 
\begin{align*}
A_{k1} & =\frac{1}{N_{k}\left(T-1\right)}\sum_{i\in G_{k|K^{*}}}\sum_{t=1}^{T}\ddot{v}_{it}^{2}-\sigma_{v\left(k\right)}^{2}\\
 & =\frac{1}{N_{k}\left(T-1\right)}\sum_{i\in G_{k|K^{*}}}\sum_{t=1}^{T}\left(\ddot{v}_{it}^{2}-\sigma_{v\left(k\right)}^{2}\right)+\frac{1}{T-1}\sigma_{v\left(k\right)}^{2}\\
 & =\frac{T}{T-1}\left[\frac{1}{N_{k}T}\sum_{i\in G_{k|K^{*}}}\sum_{t=1}^{T}\left(\ddot{v}_{it}^{2}-\sigma_{v\left(k\right)}^{2}\right)\right]+\frac{1}{T-1}\sigma_{v\left(k\right)}^{2}\\
 & =\frac{T}{T-1}\left[\frac{1}{N_{k}T}\sum_{i\in G_{k|K^{*}}}\sum_{t=1}^{T}\left(v_{it}^{2}-\sigma_{v\left(k\right)}^{2}\right)\right]-\frac{T}{T-1}\frac{1}{N_{k}}\sum_{i\in G_{k|K^{*}}}\bar{v}_{i}^{2}+\frac{1}{T-1}\sigma_{v\left(k\right)}^{2}\\
 & =\frac{T}{T-1}\left[\frac{1}{N_{k}T}\sum_{i\in G_{k|K^{*}}}\sum_{t=1}^{T}\left(v_{it}^{2}-\sigma_{v\left(k\right)}^{2}\right)\right]-\frac{1}{T-1}\frac{1}{N_{k}}\sum_{i\in G_{k|K^{*}}}\left(T\bar{v}_{i}^{2}-\sigma_{v\left(k\right)}^{2}\right)\\
 & \equiv A_{k11}+A_{k12}.
\end{align*}
By the i.i.d. assumption on $v_{it},$ 
\[
\sqrt{N_{k}T}\cdot A_{k11}\stackrel{d}{\rightarrow}N\left(0,\text{Var}\left(\left.v_{it}^{2}\right|i\in G_{k|K^{*}}\right)\right).
\]
Note $\text{E}\left(T\bar{v}_{i}^{2}\right)=\sigma_{v\left(k\right)}^{2}$
and $\text{E}\left(T^{2}\bar{v}_{i}^{4}\right)=O\left(1\right).$
By the independence across $i$ and Markov's inequality, 
\[
A_{k12}=O_{P}\left(\frac{1}{T\sqrt{N_{k}}}\right),
\]
which implies that 
\[
\sqrt{N_{k}T}\cdot A_{k12}=o_{P}\left(1\right).
\]
Put the asymptotic distribution of $A_{k11}$ and the rate of $A_{k12}$
together, we obtain 
\begin{equation}
\sqrt{N_{k}T}\cdot A_{k1}\stackrel{d}{\rightarrow}N\left(0,\text{Var}\left(\left.v_{it}^{2}\right|i\in G_{k|K^{*}}\right)\right).\label{eq:ak1_}
\end{equation}

Equations (\ref{eq:ak2_}), (\ref{eq:ak3_}), (\ref{eq:ak4_}), (\ref{eq:ak5_}),
and (\ref{eq:ak1_}) imply that conditional on $\mathcal{M},$ 
\[
\sqrt{N_{k}T}\left(\hat{\sigma}_{v\left(k|K^{*}\right)}^{2}-\sigma_{v\left(k\right)}^{*2}\right)\stackrel{d}{\rightarrow}N\left(0,\text{Var}\left(\left.v_{it}^{2}\right|i\in G_{k|K^{*}}\right)\right).
\]
The above result holds unconditionally, using a similar argument as
in the proof of \textbf{(i)}.

\textbf{(iii) }We only need to show the result conditional on the
event $\mathcal{M}$. After that, we can show the unconditional result
using the same logic as in the proofs of \textbf{(i)} and \textbf{(ii)}.
In the following, we assume that the event $\mathcal{M}$ happens.

In Appendix \ref{app:IOmega}, we show that the information matrix
$\mathbb{I}$ and 
\[
\text{E}\left[\left.\sum_{k=1}^{K^{*}}\frac{N_{k}}{N}\cdot\left(\frac{\partial}{\partial\varrho}\log\tilde{f}_{i\left(k\right)}\left(\varrho\right)\right)\left(\frac{\partial}{\partial\varrho}\log\tilde{f}_{i\left(k\right)}\left(\varrho\right)\right)'\right|_{\varrho=\varrho^{0}}\right]
\]
are well behaved; that is, all elements in these two matrices are
finite constants, not degenerated to 0, and they are positive definite.

Further, in \textbf{(i)} and \textbf{(ii)}, we have shown that $\hat{\theta}_{\left(k|K^{*}\right)}\left(s\right)$
and $\hat{\sigma}_{v\left(k|K^{*}\right)}^{2}$ converge to the trues
at the rates of $\sqrt{\frac{NT}{\underline{m}}}$ and $\sqrt{NT}$,
respectively. Thus, the estimation errors of $\hat{\theta}_{\left(k|K^{*}\right)}\left(s\right)$
and $\hat{\sigma}_{v\left(k|\hat{K}\right)}^{2}$ have no impact on
the convergence rate of $\hat{\varrho}$ and its asymptotic distribution,
because $\hat{\varrho}$ supposedly converges to $\varrho^{0}$ at
the rate of $\sqrt{N}$, which is slower than those of $\hat{\theta}_{\left(k|K^{*}\right)}\left(s\right)$
and $\hat{\sigma}_{v\left(k|K^{*}\right)}^{2}$.

By the independence across $i$, some standard analysis for the asymptotics
of the MLE, see, e.g., Section 5.4.3 in \citet{BickelDoksum}, and
the Slutsky's Theorem, the asymptotic variance of $\hat{\varrho}$
is $N^{-1}$$\mathbb{I}$, and, by the Lindeberg Central Limit Theorem,
\[
\sqrt{N}\mathbb{I}^{-1/2}\left(\hat{\varrho}-\varrho\right)\overset{d}{\rightarrow}N\left(0,I_{3\mathcal{K}^{*}-1}\right).
\]
\end{proof}
\begin{proof}[Proof of Proposition \ref{Prop:classify}]
(i) This is a result direct from Lemmas \ref{LE:IC1} and \ref{LE:IC2}.

(ii) This is a result from Lemma \ref{LE:IC3}. 
\end{proof}
\clearpage
\section{DGP 1 and 2, Figures, Tables and Additional Discussions}

\label{APP:Tables}
\subsection{DGPs 1 and 2}

\textbf{Design 1:} In this design (consisting of DGP1U and DGP1M),
we study the classification arising as a result of heterogeneity from
frontiers. To this end, we specify the DGP as 
\[
y_{it}=\alpha_{i}^{0}-u_{i}+\alpha_{i}(\tau_{t})+x_{it}\beta_{i}(\tau_{t})+v_{it},
\]
and suppose there are two groups for frontiers with $x_{it}\sim N(1,1^{2})$.
The error term $v_{it}$ is generated from the same distribution $v_{it}\overset{iid}{\sim}N(0,\sigma_{v}^{2})$,
$\sigma_{v}=1$ for both groups. Group 1 frontiers are specified as
$\alpha_{(1)}(s)=3F(s;0.5,0.1)-\varpi_{1}$, and $\beta_{(1)}(s)=3[2s-4s^{2}+2s^{3}+F(s;0.6,0.1)$,
where $F(s;\cdot,\cdot)$ denotes the logistic CDF and $\varpi_{1}$
is the mean of $3F(s;0.5,0.1),$ that is, $\int_{0}^{1}3F(s;0.5,0.1)\mathrm{d}s.$
Group 2 parameters are specified as $\alpha_{(2)}(s)=3[2s-6s^{2}+4s^{3}+F(s;0.7,0.05)]-\varpi_{2}$,
and $\beta_{(2)}(s)=3[s-3s^{2}+2s^{3}+F(s;0.7,0.04)]$, $\varpi_{2}$
plays the same role as $\varpi_{1}$, and $\varpi_{2}=\int_{0}^{1}3[2s-6s^{2}+4s^{3}+F(s;0.7,0.05)]\mathrm{d}s$.
In what we call this DGP1U, we consider a case where $\alpha^{0}-u$
comes from a unique distribution, with $\alpha^{0}=0.5$ and $u_{i}\overset{iid}{\sim}|N(0,\sigma_{u}^{2})|$,
where $\sigma_{u}=1$. We consider another DGP, which we call DGP1M,
distinguished by letting $\alpha^{0}-u$ to come from a mixture distribution.
Specifically, we let $\alpha^{0}-u$ to come from $\alpha_{(1)}^{0}-|N(0,\sigma_{u(1)}^{2})|$
with probability $\tau$ and $\alpha_{(2)}^{0}-|N(0,\sigma_{u(2)}^{2})|$
with probability $1-\tau$, where $\alpha_{(1)}^{0}=1$, $\alpha_{(2)}^{0}=-1$,
$\sigma_{u(1)}=0.75$, $\sigma_{u(2)}=1.25$ and $\tau=0.5$. It is important to note that the mixture structure of $\alpha^{0}-u$ is
independent of the grouping structure.

\textbf{Design 2:} In the second design (consisting of DGP2U and DGP2M),
we study the case where there are two groups, but the classification
is due to heterogeneity in variances. To this end, we adopt a similar
setup as Design 1. Specifically, the DGP is 
\[
y_{it}=\alpha_{i}^{0}-u_{i}+\alpha(\tau_{t})+x_{it}\beta(\tau_{t})+v_{it},
\]
where $x_{it}\sim N(2,0.75^{2})$. The frontiers for both groups are
specified as $\alpha(s)=\log{s}\sin(6s)-\varpi$, and $\beta(s)=7\sin(5s)\exp(-5s)$,
where $\varpi$ denotes the mean of $\log{s}\sin(6s)$. The error
terms are generated from two distinct groups. Group 1 errors are generated
from $v_{it}\overset{iid}{\sim}N(0,\sigma_{v(1)}^{2})$, while group
2 errors are generated from $v_{it}\overset{iid}{\sim}N(0,\sigma_{v(2)}^{2})$.
Standard deviations are specified as $\sigma_{v(1)}=0.5$ and $\sigma_{v(2)}=1.5$.
Similarly to Design 1, we consider two sub-cases where $\alpha^{0}-u$
either comes from a unique (DGP2U) or mixture distribution (DGP2M),
the same as in Design 1.

\subsection{Additional Tables and Figures for Section \ref{SEC:simulations} and Section \ref{SEC:application}}

\begin{figure}[!htb]
\caption{Estimates of the grouped frontiers for DGP3M with $N=500$ and $T=50$}
\centering
\includegraphics[scale=0.75] {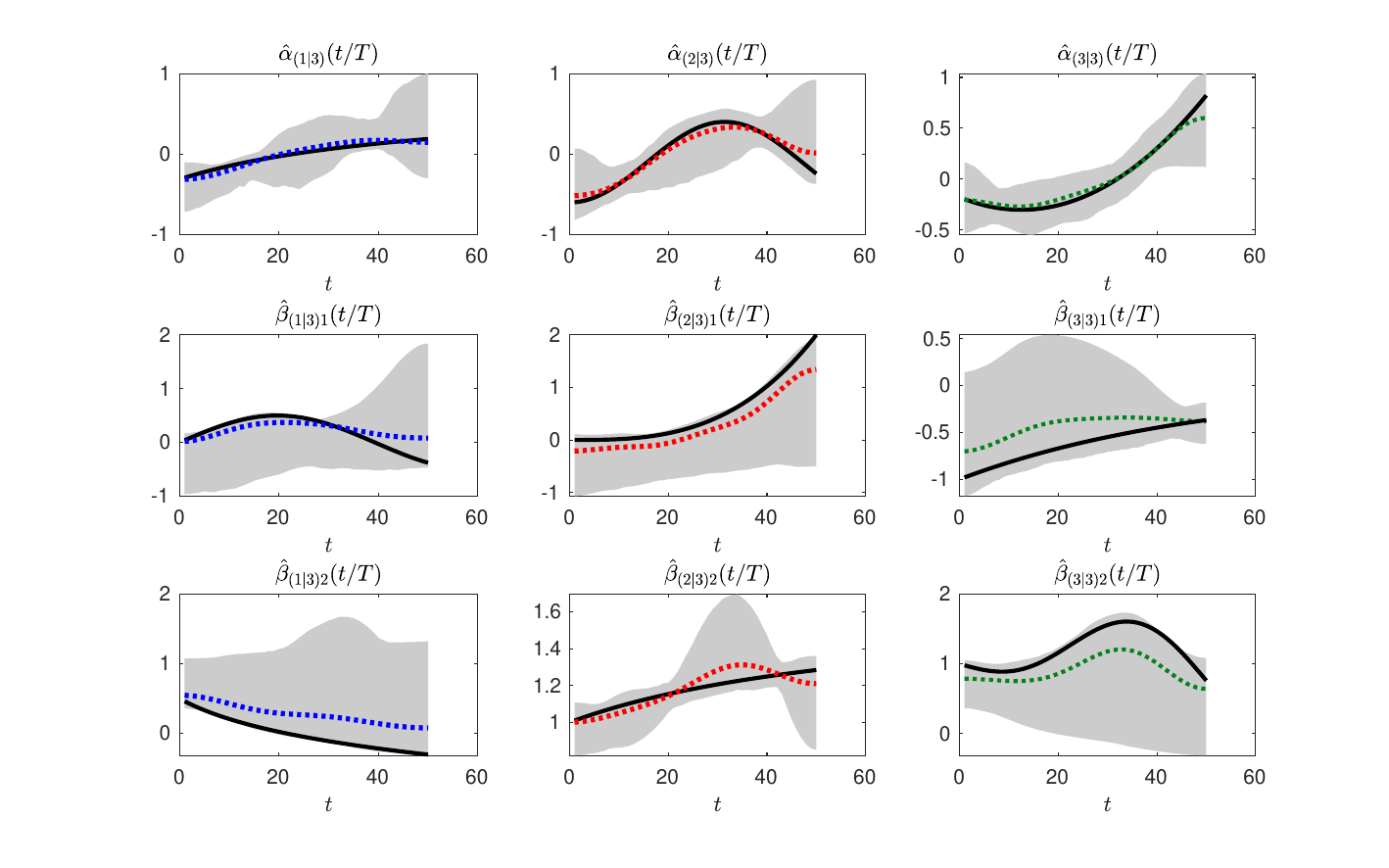}
\label{fig:Grouped_frontiers_DGP6_N500_T50}
\begin{tablenotes}
      \footnotesize
      \item 
      \emph{Note}: Black solid line depict the true time-varying frontier, dotted lines depict the mean of the estimated grouped frontiers averaged over 500 MC iterations, and the grey shaded region depict the 90 percentile of the estimates. The gray area is wide, due to mis-classification errors.
\end{tablenotes}
\end{figure}

\begin{figure}[!htb]
\caption{Estimates of the grouped frontiers for DGP3M with $N=500$ and $T=75$}
\centering
\includegraphics[scale=0.75] {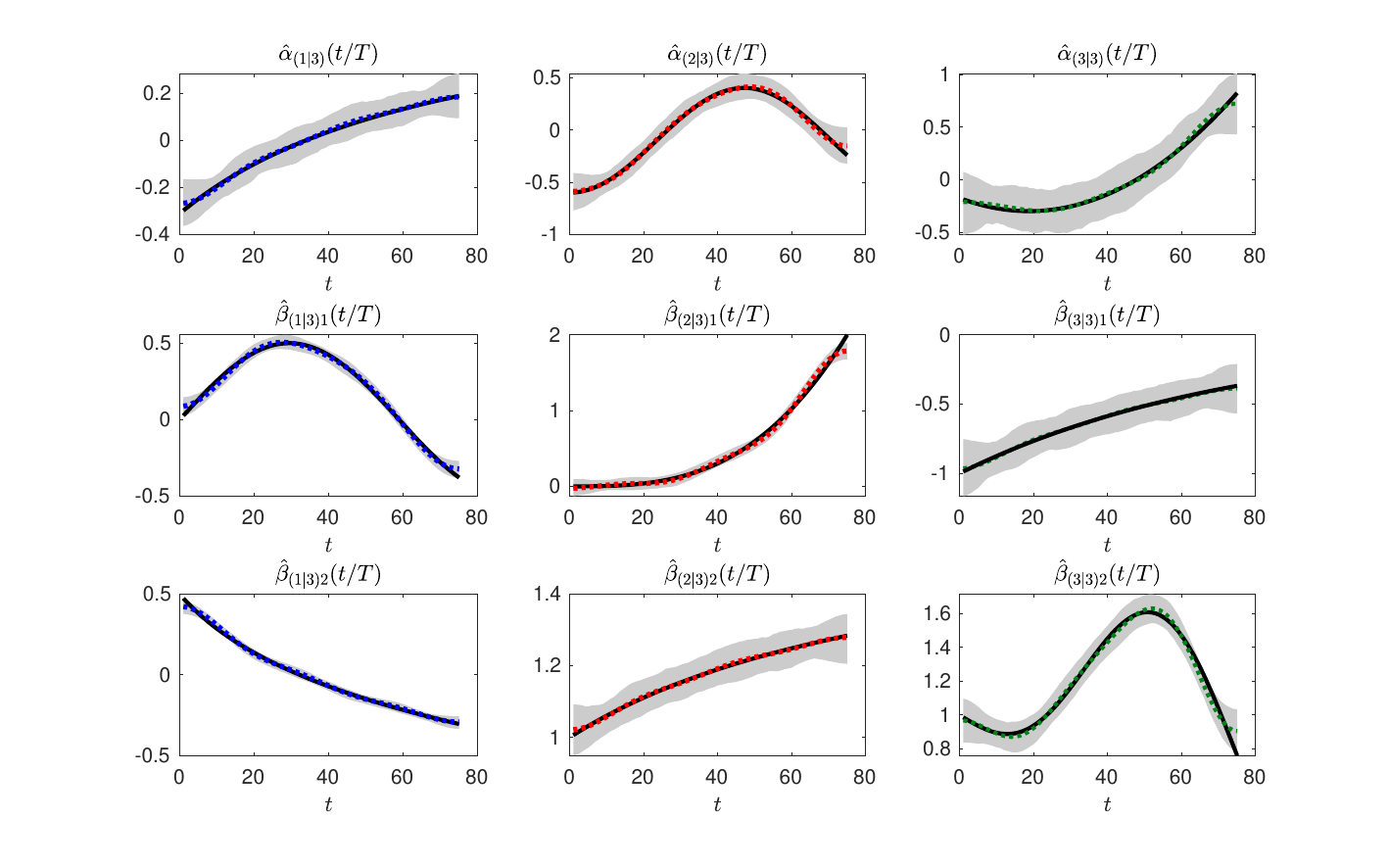}
\label{fig:Grouped_frontiers_DGP6_N500_T75}
\begin{tablenotes}
      \footnotesize
      \item 
      \emph{Note}: Black solid line depict the true time-varying frontier, dotted lines depict the mean of the estimated grouped frontiers averaged over 500 MC iterations, and the gray shaded region depict the 90 percentile of the estimates. 
\end{tablenotes}
\end{figure}

\begin{figure}[!htb]
\caption{Estimates of the grouped frontiers for DGP3M with $N=500$ and $T=100$}
\centering
\includegraphics[scale=0.75] {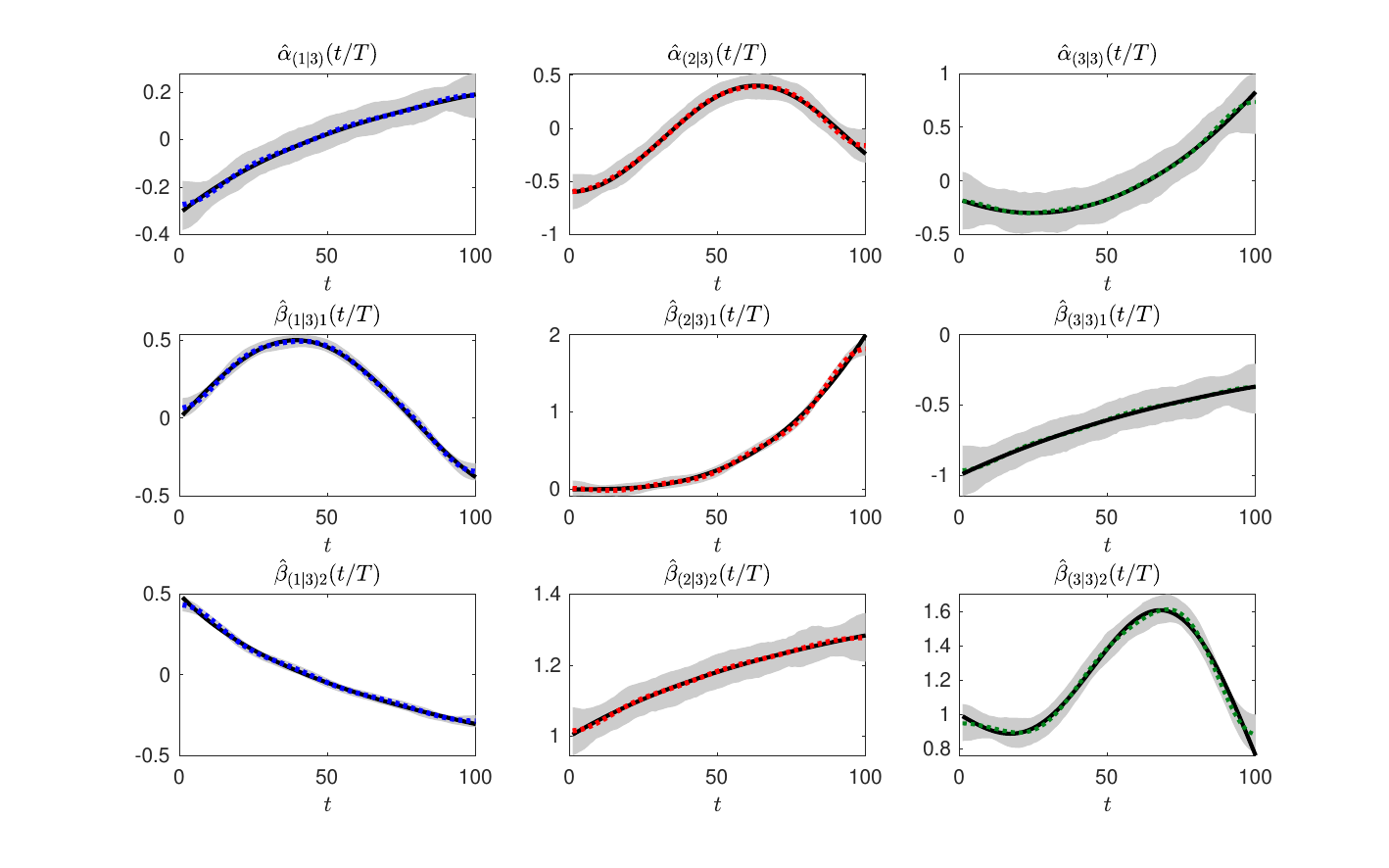}
\label{fig:Grouped_frontiers_DGP6_N500_T100}
\begin{tablenotes}
      \footnotesize
      \item 
      \emph{Note}: Black solid line depict the true time-varying frontier, dotted lines depict the mean of the estimated grouped frontiers averaged over 500 MC iterations, and the gray shaded region depict the 90 percentile of the estimates. 
\end{tablenotes}
\end{figure}

\begin{table}[!htb]
\centering
\caption{BIAS and RMSE over 500 MC iterations for DGP1U}

\begin{tabularx}{\textwidth}{X c XX c XX c XX c XX}
\toprule
&& \multicolumn{2}{c}{$\hat{\sigma}_{v(1)}$} && \multicolumn{2}{c}{$\hat{\sigma}_{v(2)}$} && \multicolumn{2}{c}{$\hat{\alpha}^0$} && \multicolumn{2}{c}{$\hat{\sigma}_{u}$} \\
\cmidrule{3-4} \cmidrule{6-7} \cmidrule{9-10} \cmidrule{12-13}  
$(N,T)$ && BIAS & RMSE && BIAS & RMSE && BIAS & RMSE && BIAS& RMSE \\
\midrule
(100,50) && 0.015 & 0.020 && 0.034 & 0.038 && 0.046 & 0.058 && 0.066 & 0.082 \\ 
(100,75) && 0.014 & 0.019 && 0.033 & 0.036 && 0.039 & 0.049 && 0.062 & 0.077 \\ 
(100,100) && 0.012 & 0.017 && 0.034 & 0.037 && 0.036 & 0.046 && 0.063 & 0.081 \\ 
(250,50) && 0.008 & 0.011 && 0.017 & 0.019 && 0.035 & 0.043 && 0.042 & 0.053 \\ 
(250,75) && 0.007 & 0.009 && 0.016 & 0.018 && 0.027 & 0.034 && 0.042 & 0.052 \\ 
(250,100) && 0.005 & 0.007 && 0.011 & 0.013 && 0.023 & 0.030 && 0.042 & 0.052 \\ 
(500,50) && 0.005 & 0.007 && 0.012 & 0.013 && 0.030 & 0.035 && 0.029 & 0.036 \\ 
(500,75) && 0.005 & 0.006 && 0.011 & 0.012 && 0.023 & 0.028 && 0.028 & 0.035 \\ 
(500,100) && 0.004 & 0.005 && 0.007 & 0.008 && 0.019 & 0.024 && 0.029 & 0.035 \\ 
\bottomrule
\end{tabularx}

\label{tab:DGP1U_parameters}
\end{table}


\begin{table}[!htb]

\centering
\caption{BIAS and RMSE over 500 MC iterations for DGP1M}
\scalebox{0.71}{
\begin{tabularx}{1.4\textwidth}{X c XX c XX c XX c XX c XX c XX c XX}

\toprule
&& \multicolumn{2}{c}{$\hat{\sigma}_{v(1)}$} && \multicolumn{2}{c}{$\hat{\sigma}_{v(2)}$} && \multicolumn{2}{c}{$\hat{\tau}$} && \multicolumn{2}{c}{$\hat\alpha^0_{(1)}$} && \multicolumn{2}{c}{$\hat{\sigma}_{u(1)}$} && \multicolumn{2}{c}{$\hat\alpha^0_{(2)}$} && \multicolumn{2}{c}{$\hat{\sigma}_{u(2)}$}  \\
\cmidrule{3-4} \cmidrule{6-7} \cmidrule{9-10} \cmidrule{12-13} \cmidrule{15-16} \cmidrule{18-19} \cmidrule{21-22}
$(N,T)$ && BIAS & RMSE && BIAS & RMSE && BIAS & RMSE && BIAS & RMSE && BIAS & RMSE && BIAS & RMSE && BIAS & RMSE \\

\midrule
(100,50) && 0.015 & 0.020 && 0.034 & 0.038 && 0.012 & 0.018 && 0.059 & 0.074 && 0.103 & 0.131 && 0.089 & 0.115 && 0.122 & 0.153 \\ 
(100,75) && 0.014 & 0.019 && 0.033 & 0.036 && 0.010 & 0.014 && 0.050 & 0.064 && 0.091 & 0.113 && 0.070 & 0.090 && 0.116 & 0.144 \\ 
(100,100) && 0.012 & 0.017 && 0.034 & 0.037 && 0.011 & 0.016 && 0.046 & 0.058 && 0.097 & 0.122 && 0.072 & 0.095 && 0.114 & 0.144 \\ 
(250,50) && 0.008 & 0.011 && 0.017 & 0.019 && 0.008 & 0.010 && 0.041 & 0.050 && 0.062 & 0.080 && 0.055 & 0.069 && 0.076 & 0.098 \\ 
(250,75) && 0.007 & 0.009 && 0.016 & 0.018 && 0.007 & 0.009 && 0.033 & 0.042 && 0.062 & 0.078 && 0.046 & 0.058 && 0.071 & 0.087 \\ 
(250,100) && 0.005 & 0.007 && 0.011 & 0.013 && 0.007 & 0.009 && 0.027 & 0.035 && 0.058 & 0.073 && 0.044 & 0.056 && 0.072 & 0.089 \\ 
(500,50) && 0.005 & 0.007 && 0.012 & 0.013 && 0.005 & 0.007 && 0.034 & 0.041 && 0.048 & 0.060 && 0.042 & 0.052 && 0.050 & 0.062 \\ 
(500,75) && 0.005 & 0.006 && 0.011 & 0.012 && 0.005 & 0.006 && 0.027 & 0.033 && 0.040 & 0.051 && 0.035 & 0.044 && 0.048 & 0.061 \\ 
(500,100) && 0.004 & 0.005 && 0.007 & 0.008 && 0.004 & 0.005 && 0.022 & 0.027 && 0.041 & 0.051 && 0.032 & 0.040 && 0.051 & 0.063 \\

\bottomrule
\end{tabularx}}

\label{tab:DGP1M_parameters}

\end{table}


    
\begin{table}[!htb]
\centering
\caption{BIAS and RMSE over 500 MC iterations for DGP2U}

\begin{tabularx}{\textwidth}{X c XX c XX c XX c XX}
\toprule
&& \multicolumn{2}{c}{$\hat{\sigma}_{v(1)}$} && \multicolumn{2}{c}{$\hat{\sigma}_{v(2)}$} && \multicolumn{2}{c}{$\hat{\alpha}^0$} && \multicolumn{2}{c}{$\hat{\sigma}_{u}$} \\
\cmidrule{3-4} \cmidrule{6-7} \cmidrule{9-10} \cmidrule{12-13}  
$(N,T)$ && BIAS & RMSE && BIAS & RMSE && BIAS & RMSE && BIAS& RMSE \\
\midrule
(100,50) && 0.006 & 0.008 && 0.017 & 0.022 && 0.039 & 0.050 && 0.068 & 0.084 \\ 
(100,75) && 0.005 & 0.006 && 0.014 & 0.018 && 0.035 & 0.045 && 0.061 & 0.076 \\ 
(100,100) && 0.004 & 0.005 && 0.012 & 0.015 && 0.032 & 0.041 && 0.062 & 0.079 \\ 
(250,50) && 0.004 & 0.005 && 0.011 & 0.014 && 0.024 & 0.031 && 0.041 & 0.052 \\ 
(250,75) && 0.003 & 0.004 && 0.009 & 0.011 && 0.021 & 0.027 && 0.041 & 0.051 \\ 
(250,100) && 0.002 & 0.003 && 0.008 & 0.009 && 0.018 & 0.023 && 0.041 & 0.051 \\ 
(500,50) && 0.002 & 0.003 && 0.008 & 0.009 && 0.017 & 0.021 && 0.030 & 0.037 \\ 
(500,75) && 0.002 & 0.003 && 0.006 & 0.008 && 0.015 & 0.019 && 0.028 & 0.035 \\ 
(500,100) && 0.002 & 0.002 && 0.005 & 0.007 && 0.014 & 0.017 && 0.029 & 0.035 \\

\bottomrule
\end{tabularx}

\label{tab:DGP2U_parameters}
\end{table}

    
\begin{table}[!htb]
\centering
\caption{BIAS and RMSE over 500 MC iterations for DGP2M}
\scalebox{0.71}{
\begin{tabularx}{1.4\textwidth}{X c XX c XX c XX c XX c XX c XX c XX}
\toprule
&& \multicolumn{2}{c}{$\hat{\sigma}_{v(1)}$} && \multicolumn{2}{c}{$\hat{\sigma}_{v(2)}$} && \multicolumn{2}{c}{$\hat{\tau}$} && \multicolumn{2}{c}{$\hat\alpha^0_{(1)}$} && \multicolumn{2}{c}{$\hat{\sigma}_{u(1)}$} && \multicolumn{2}{c}{$\hat\alpha^0_{(2)}$} && \multicolumn{2}{c}{$\hat{\sigma}_{u(2)}$} \\
\cmidrule{3-4} \cmidrule{6-7} \cmidrule{9-10} \cmidrule{12-13} \cmidrule{15-16} \cmidrule{18-19} \cmidrule{21-22}
$(N,T)$ && BIAS & RMSE && BIAS & RMSE && BIAS & RMSE && BIAS & RMSE && BIAS & RMSE && BIAS & RMSE && BIAS & RMSE \\

\midrule
(100,50) && 0.006 & 0.008 && 0.017 & 0.022 && 0.019 & 0.027 && 0.040 & 0.050 && 0.105 & 0.139 && 0.125 & 0.160 && 0.133 & 0.166 \\ 
(100,75) && 0.005 & 0.006 && 0.014 & 0.018 && 0.014 & 0.019 && 0.034 & 0.044 && 0.089 & 0.115 && 0.105 & 0.132 && 0.121 & 0.150 \\ 
(100,100) && 0.004 & 0.005 && 0.012 & 0.015 && 0.015 & 0.021 && 0.032 & 0.041 && 0.098 & 0.126 && 0.100 & 0.129 && 0.121 & 0.153 \\ 
(250,50) && 0.004 & 0.005 && 0.011 & 0.014 && 0.014 & 0.018 && 0.025 & 0.031 && 0.071 & 0.092 && 0.084 & 0.106 && 0.082 & 0.106 \\ 
(250,75) && 0.003 & 0.004 && 0.009 & 0.011 && 0.012 & 0.015 && 0.021 & 0.026 && 0.067 & 0.087 && 0.072 & 0.090 && 0.079 & 0.098 \\ 
(250,100) && 0.002 & 0.003 && 0.008 & 0.009 && 0.010 & 0.013 && 0.017 & 0.022 && 0.063 & 0.080 && 0.070 & 0.086 && 0.079 & 0.097 \\ 
(500,50) && 0.002 & 0.003 && 0.008 & 0.009 && 0.014 & 0.016 && 0.018 & 0.022 && 0.061 & 0.077 && 0.063 & 0.079 && 0.058 & 0.073 \\ 
(500,75) && 0.002 & 0.003 && 0.006 & 0.008 && 0.009 & 0.011 && 0.015 & 0.019 && 0.048 & 0.061 && 0.052 & 0.065 && 0.054 & 0.068 \\ 
(500,100) && 0.002 & 0.002 && 0.005 & 0.007 && 0.008 & 0.010 && 0.013 & 0.017 && 0.048 & 0.062 && 0.051 & 0.063 && 0.055 & 0.070 \\ 

\bottomrule
\end{tabularx}}

\label{tab:DGP2M_parameters}
\end{table}

    
\begin{table}[!htb]
\centering
\caption{BIAS and RMSE over 500 MC iterations for DGP3U}

\begin{tabularx}{\textwidth}{X X X  X  c X  X  c X  X  c X   X c X  X }
\toprule
&& \multicolumn{2}{c}{$\hat{\sigma}_{v(1)}$} && \multicolumn{2}{c}{$\hat{\sigma}_{v(2)}$} && \multicolumn{2}{c}{$\hat{\sigma}_{v(3)}$} && \multicolumn{2}{c}{$\hat{\alpha}^0$} && \multicolumn{2}{c}{$\hat{\sigma}_{u}$} \\
\cmidrule{3-4} \cmidrule{6-7} \cmidrule{9-10} \cmidrule{12-13} \cmidrule{15-16} 
$(N,T)$ && BIAS & RMSE && BIAS & RMSE && BIAS & RMSE && BIAS & RMSE && BIAS & RMSE \\
\midrule
(100,50) && 0.268 & 0.463 && 0.127 & 0.215 && 0.367 & 0.594 && 0.059 & 0.079 && 0.074 & 0.095 \\ 
(100,75) && 0.082 & 0.214 && 0.074 & 0.174 && 0.154 & 0.372 && 0.043 & 0.053 && 0.066 & 0.081 \\ 
(100,100) && 0.024 & 0.099 && 0.027 & 0.093 && 0.053 & 0.196 && 0.040 & 0.052 && 0.064 & 0.081 \\ 
(250,50) && 0.204 & 0.373 && 0.138 & 0.243 && 0.327 & 0.562 && 0.037 & 0.048 && 0.053 & 0.067 \\ 
(250,75) && 0.034 & 0.129 && 0.035 & 0.116 && 0.069 & 0.241 && 0.024 & 0.031 && 0.040 & 0.049 \\ 
(250,100) && 0.005 & 0.033 && 0.008 & 0.032 && 0.014 & 0.064 && 0.023 & 0.029 && 0.041 & 0.050 \\ 
(500,50) && 0.145 & 0.307 && 0.110 & 0.218 && 0.242 & 0.484 && 0.027 & 0.036 && 0.042 & 0.053 \\ 
(500,75) && 0.009 & 0.064 && 0.009 & 0.044 && 0.018 & 0.100 && 0.018 & 0.023 && 0.030 & 0.037 \\ 
(500,100) && 0.002 & 0.003 && 0.004 & 0.005 && 0.007 & 0.009 && 0.016 & 0.020 && 0.029 & 0.036 \\ 
\bottomrule
\end{tabularx}

\label{tab:DGP3U_parameters}
\end{table}

    
\begin{table}[!htb]
\centering
\caption{Performance of ICs for DGP1U}

\begin{tabularx}{\textwidth}{X X X X  X  X X }
\toprule

$(N,T)$ & $K=1$ & $K=2$ & $K=3$ & $K=4$ & $\alpha^0-u$ unique & $\alpha^0-u$ mix  \\
\midrule
(100,50) & 0.000 & 1.000 & 0.000 & 0.000 & 0.926 & 0.074 \\ 
(100,75) & 0.000 & 1.000 & 0.000 & 0.000 & 0.940 & 0.060 \\ 
(100,100) & 0.000 & 1.000 & 0.000 & 0.000 & 0.938 & 0.062 \\ 
(250,50) & 0.000 & 1.000 & 0.000 & 0.000 & 0.996 & 0.004 \\ 
(250,75) & 0.000 & 1.000 & 0.000 & 0.000 & 0.992 & 0.008 \\ 
(250,100) & 0.000 & 1.000 & 0.000 & 0.000 & 0.988 & 0.012 \\ 
(500,50) & 0.000 & 1.000 & 0.000 & 0.000 & 0.996 & 0.004 \\ 
(500,75) & 0.000 & 1.000 & 0.000 & 0.000 & 1.000 & 0.000 \\ 
(500,100) & 0.000 & 1.000 & 0.000 & 0.000 & 1.000 & 0.000 \\

\bottomrule
\multicolumn{7}{l}{\emph{Note}: Results for the baseline case $c_{\lambda}=\tilde{c}_{\lambda}=1$.  Reported numbers are probabilities across replications.}
\end{tabularx}

\label{tab:IC_DGP1U}

\end{table}

    
\begin{table}[!htb]
\centering
\caption{Performance of ICs for DGP1M}

\begin{tabularx}{\textwidth}{X X X X  X  X X }
\toprule

$(N,T)$ & $K=1$ & $K=2$ & $K=3$ & $K=4$ & $\alpha^0-u$ unique & $\alpha^0-u$ mix  \\
\midrule
(100,50) & 0.000 & 1.000 & 0.000 & 0.000 & 0.000 & 1.000 \\ 
(100,75) & 0.000 & 1.000 & 0.000 & 0.000 & 0.000 & 1.000 \\ 
(100,100) & 0.000 & 1.000 & 0.000 & 0.000 & 0.000 & 1.000 \\ 
(250,50) & 0.000 & 1.000 & 0.000 & 0.000 & 0.000 & 1.000 \\ 
(250,75) & 0.000 & 1.000 & 0.000 & 0.000 & 0.000 & 1.000 \\ 
(250,100) & 0.000 & 1.000 & 0.000 & 0.000 & 0.000 & 1.000 \\ 
(500,50) & 0.000 & 1.000 & 0.000 & 0.000 & 0.000 & 1.000 \\ 
(500,75) & 0.000 & 1.000 & 0.000 & 0.000 & 0.000 & 1.000 \\ 
(500,100) & 0.000 & 1.000 & 0.000 & 0.000 & 0.000 & 1.000 \\

\bottomrule
\multicolumn{7}{l}{\emph{Note}:Results for the baseline case $c_{\lambda}=\tilde{c}_{\lambda}=1$.  Reported numbers are probabilities across replications.}
\end{tabularx}

\label{tab:IC_DGP1M}

\end{table}

    
\begin{table}[!htb]
\centering
\caption{Performance of ICs for DGP2U}

\begin{tabularx}{\textwidth}{X X X X  X  X X }
\toprule

$(N,T)$ & $K=1$ & $K=2$ & $K=3$ & $K=4$ & $\alpha^0-u$ unique & $\alpha^0-u$ mix  \\
\midrule
(100,50) & 0.000 & 1.000 & 0.000 & 0.000 & 0.868 & 0.132 \\ 
(100,75) & 0.000 & 1.000 & 0.000 & 0.000 & 0.838 & 0.162 \\ 
(100,100) & 0.000 & 1.000 & 0.000 & 0.000 & 0.878 & 0.122 \\ 
(250,50) & 0.000 & 1.000 & 0.000 & 0.000 & 0.978 & 0.022 \\ 
(250,75) & 0.000 & 1.000 & 0.000 & 0.000 & 0.984 & 0.016 \\ 
(250,100) & 0.000 & 1.000 & 0.000 & 0.000 & 0.970 & 0.030 \\ 
(500,50) & 0.000 & 1.000 & 0.000 & 0.000 & 0.996 & 0.004 \\ 
(500,75) & 0.000 & 1.000 & 0.000 & 0.000 & 0.998 & 0.002 \\ 
(500,100) & 0.000 & 1.000 & 0.000 & 0.000 & 1.000 & 0.000 \\ 

\bottomrule
\multicolumn{7}{l}{\emph{Note}: Results for the baseline case $c_{\lambda}=\tilde{c}_{\lambda}=1$.  Reported numbers are probabilities across replications. }
\end{tabularx}

\label{tab:IC_DGP2U}

\end{table}

    
\begin{table}[!htb]
\centering
\caption{Performance of ICs for DGP2M}

\begin{tabularx}{\textwidth}{X X X X  X  X X }
\toprule

$(N,T)$ & $K=1$ & $K=2$ & $K=3$ & $K=4$ & $\alpha^0-u$ unique & $\alpha^0-u$ mix  \\
\midrule
(100,50) & 0.000 & 1.000 & 0.000 & 0.000 & 0.000 & 1.000 \\ 
(100,75) & 0.000 & 1.000 & 0.000 & 0.000 & 0.000 & 1.000 \\ 
(100,100) & 0.000 & 1.000 & 0.000 & 0.000 & 0.000 & 1.000 \\ 
(250,50) & 0.000 & 1.000 & 0.000 & 0.000 & 0.000 & 1.000 \\ 
(250,75) & 0.000 & 1.000 & 0.000 & 0.000 & 0.000 & 1.000 \\ 
(250,100) & 0.000 & 1.000 & 0.000 & 0.000 & 0.000 & 1.000 \\ 
(500,50) & 0.000 & 1.000 & 0.000 & 0.000 & 0.000 & 1.000 \\ 
(500,75) & 0.000 & 1.000 & 0.000 & 0.000 & 0.000 & 1.000 \\ 
(500,100) & 0.000 & 1.000 & 0.000 & 0.000 & 0.000 & 1.000 \\

\bottomrule
\multicolumn{7}{l}{\emph{Note}: Results for the baseline case $c_{\lambda}=\tilde{c}_{\lambda}=1$.  Reported numbers are probabilities across replications. }
\end{tabularx}

\label{tab:IC_DGP2M}

\end{table}

    
\begin{table}[!htb]
\centering
\caption{Performance of ICs for DGP3U}

\begin{tabularx}{\textwidth}{X X X X  X  X X }
\toprule

$(N,T)$ & $K=1$ & $K=2$ & $K=3$ & $K=4$ & $\alpha^0-u$ unique & $\alpha^0-u$ mix  \\
\midrule
(100,50) & 0.000 & 0.352 & 0.648 & 0.000 & 0.784 & 0.216 \\ 
(100,75) & 0.000 & 0.078 & 0.922 & 0.000 & 0.840 & 0.160 \\ 
(100,100) & 0.000 & 0.000 & 1.000 & 0.000 & 0.874 & 0.126 \\ 
(250,50) & 0.000 & 0.086 & 0.914 & 0.000 & 0.942 & 0.058 \\ 
(250,75) & 0.000 & 0.000 & 1.000 & 0.000 & 0.988 & 0.012 \\ 
(250,100) & 0.000 & 0.000 & 1.000 & 0.000 & 0.992 & 0.008 \\ 
(500,50) & 0.000 & 0.006 & 0.994 & 0.000 & 0.996 & 0.004 \\ 
(500,75) & 0.000 & 0.000 & 1.000 & 0.000 & 1.000 & 0.000 \\ 
(500,100) & 0.000 & 0.000 & 1.000 & 0.000 & 1.000 & 0.000 \\  

\bottomrule
\multicolumn{7}{l}{\emph{Note}: Results for the baseline case $c_{\lambda}=\tilde{c}_{\lambda}=1$.  Reported numbers are probabilities across replications. }
\end{tabularx}

\label{tab:IC_DGP3U}

\end{table}

    
\begin{table}[!htb]
\centering
\caption{Sensitivity analysis for classification error in DGP1U and DGP1M}

\begin{tabularx}{\textwidth}{X c X  X  c X  X  c X  X  c X   X c X  X }
\toprule
&& \multicolumn{2}{c}{$c_\lambda=3/2$} && \multicolumn{2}{c}{$c_\lambda=1$ (bench.)} && \multicolumn{2}{c}{$c_\lambda=3/4$} \\
\cmidrule{3-4} \cmidrule{6-7} \cmidrule{9-10} 
$(N,T)$ &&& $\bar{\text{Pr}}(F)$ &&& $\bar{\text{Pr}}(F)$ &&& $\bar{\text{Pr}}(F)$ \\ 
\midrule

(100,50) &&& 0.144 &&& 0.144 &&& 0.144 \\ 
(100,75) &&& 0.152 &&& 0.152 &&& 0.152 \\ 
(100,100) &&& 0.140 &&& 0.140 &&& 0.140 \\ 
(250,50) &&& 0.140 &&& 0.140 &&& 0.140 \\ 
(250,75) &&& 0.102 &&& 0.102 &&& 0.102 \\ 
(250,100) &&& 0.068 &&& 0.068 &&& 0.068 \\ 
(500,50) &&& 0.070 &&& 0.070 &&& 0.070 \\ 
(500,75) &&& 0.068 &&& 0.068 &&& 0.068 \\ 
(500,100) &&& 0.034 &&& 0.034 &&& 0.034 \\

\bottomrule
\end{tabularx}

\label{tab:Sensitivity_Step3_IC_DGP1U}

\end{table}

\begin{table}[!htb]
\centering
\caption{Sensitivity analysis for classification error in DGP2U and DGP2M}

\begin{tabularx}{\textwidth}{X c X  X  c X  X  c X  X  c X   X c X  X }
\toprule
&& \multicolumn{2}{c}{$c_\lambda=3/2$} && \multicolumn{2}{c}{$c_\lambda=1$ (bench.)} && \multicolumn{2}{c}{$c_\lambda=3/4$} \\
\cmidrule{3-4} \cmidrule{6-7} \cmidrule{9-10} 
$(N,T)$ &&& $\bar{\text{Pr}}(F)$ &&& $\bar{\text{Pr}}(F)$ &&& $\bar{\text{Pr}}(F)$ \\ 
\midrule
(100,50) &&& 0.000 &&& 0.000 &&& 0.000 \\ 
(100,75) &&& 0.000 &&& 0.000 &&& 0.000 \\ 
(100,100) &&& 0.000 &&& 0.000 &&& 0.000 \\ 
(250,50) &&& 0.000 &&& 0.000 &&& 0.000 \\ 
(250,75) &&& 0.000 &&& 0.000 &&& 0.000 \\ 
(250,100) &&& 0.000 &&& 0.000 &&& 0.000 \\ 
(500,50) &&& 0.000 &&& 0.000 &&& 0.000 \\ 
(500,75) &&& 0.000 &&& 0.000 &&& 0.000 \\ 
(500,100) &&& 0.000 &&& 0.000 &&& 0.000 \\ 

\bottomrule
\end{tabularx}

\label{tab:Sensitivity_Step3_IC_DGP2U}

\end{table}

\begin{table}[!htb]
\centering
\caption{Sensitivity analysis for classification error in DGP3U and DGP3M}

\begin{tabularx}{\textwidth}{X c X  X  c X  X  c X  X  c X   X c X  X }
\toprule
&& \multicolumn{2}{c}{$c_\lambda=3/2$} && \multicolumn{2}{c}{$c_\lambda=1$ (bench.)} && \multicolumn{2}{c}{$c_\lambda=3/4$} \\
\cmidrule{3-4} \cmidrule{6-7} \cmidrule{9-10} 
$(N,T)$ &&& $\bar{\text{Pr}}(F)$ && & $\bar{\text{Pr}}(F)$ &&& $\bar{\text{Pr}}(F)$ \\ 
\midrule
(100,50) &&& 0.122 &&& 0.106 &&& 0.198 \\ 
(100,75) &&& 0.046 &&& 0.024 &&& 0.137 \\ 
(100,100) &&& 0.012 &&& 0.024 &&& 0.039 \\ 
(250,50) &&& 0.108 &&& 0.026 &&& 0.294 \\ 
(250,75) &&& 0.037 &&& 0.026 &&& 0.060 \\ 
(250,100) &&& 0.005 &&& 0.026 &&& 0.005 \\ 
(500,50) &&& 0.178 &&& 0.002 &&& 0.228 \\ 
(500,75) &&& 0.012 &&& 0.002 &&& 0.012 \\ 
(500,100) &&& 0.001 &&& 0.002 &&& 0.001 \\

\bottomrule
\end{tabularx}

\label{tab:Sensitivity_Step3_IC_DGP3U}

\end{table}

\begin{table}[!htb]
\centering
\caption{Sensitivity analysis for $\alpha^0-u$ mixture structure for DGP1U}

\begin{tabularx}{\textwidth}{X c X  X  c X  X  c X  X   }
\toprule
&& \multicolumn{2}{c}{$\tilde{c}_\lambda=3/2$} && \multicolumn{2}{c}{$\tilde{c}_\lambda=1$ (bench.)} && \multicolumn{2}{c}{$\tilde{c}_\lambda=3/4$} \\
\cmidrule{3-4} \cmidrule{6-7} \cmidrule{9-10} 
$(N,T)$ && unique & mix && unique & mix && unique & mix  \\
\midrule
(100,50) && 1.000 & 0.000 && 0.994 & 0.006 && 0.912 & 0.088 \\ 
(100,75) && 1.000 & 0.000 && 0.992 & 0.008 && 0.908 & 0.092 \\ 
(100,100) && 1.000 & 0.000 && 0.996 & 0.004 && 0.934 & 0.066 \\ 
(250,50) && 1.000 & 0.000 && 1.000 & 0.000 && 0.990 & 0.010 \\ 
(250,75) && 1.000 & 0.000 && 1.000 & 0.000 && 0.986 & 0.014 \\ 
(250,100) && 1.000 & 0.000 && 1.000 & 0.000 && 0.996 & 0.004 \\ 
(500,50) && 1.000 & 0.000 && 1.000 & 0.000 && 1.000 & 0.000 \\ 
(500,75) && 1.000 & 0.000 && 1.000 & 0.000 && 0.998 & 0.002 \\ 
(500,100) && 1.000 & 0.000 && 1.000 & 0.000 && 1.000 & 0.000 \\

\bottomrule
\end{tabularx}

\label{tab:Sensitivity_Step5_IC_DGP1U}
\end{table}

    
\begin{table}[!htb]
\centering
\caption{Sensitivity analysis for $\alpha^0-u$ mixture structure for DGP1M}

\begin{tabularx}{\textwidth}{X c X  X  c X  X  c X  X   }
\toprule
&& \multicolumn{2}{c}{$\tilde{c}_\lambda=3/2$} && \multicolumn{2}{c}{$\tilde{c}_\lambda=1$ (bench.)} && \multicolumn{2}{c}{$\tilde{c}_\lambda=3/4$} \\
\cmidrule{3-4} \cmidrule{6-7} \cmidrule{9-10} 
$(N,T)$ && unique & mix && unique & mix && unique & mix  \\
\midrule
(100,50) && 0.244 & 0.756 && 0.006 & 0.994 && 0.000 & 1.000 \\ 
(100,75) && 0.210 & 0.790 && 0.000 & 1.000 && 0.000 & 1.000 \\ 
(100,100) && 0.142 & 0.858 && 0.002 & 0.998 && 0.000 & 1.000 \\ 
(250,50) && 0.018 & 0.982 && 0.000 & 1.000 && 0.000 & 1.000 \\ 
(250,75) && 0.010 & 0.990 && 0.000 & 1.000 && 0.000 & 1.000 \\ 
(250,100) && 0.002 & 0.998 && 0.000 & 1.000 && 0.000 & 1.000 \\ 
(500,50) && 0.000 & 1.000 && 0.000 & 1.000 && 0.000 & 1.000 \\ 
(500,75) && 0.000 & 1.000 && 0.000 & 1.000 && 0.000 & 1.000 \\ 
(500,100) && 0.000 & 1.000 && 0.000 & 1.000 && 0.000 & 1.000 \\

\bottomrule
\end{tabularx}

\label{tab:Sensitivity_Step5_IC_DGP1M}
\end{table}

    
\begin{table}[!htb]
\centering
\caption{Sensitivity analysis for $ \alpha^0-u$ mixture structure for DGP2U}

\begin{tabularx}{\textwidth}{X c X  X  c X  X  c X  X   }
\toprule
&& \multicolumn{2}{c}{$\tilde{c}_\lambda=3/2$} && \multicolumn{2}{c}{$\tilde{c}_\lambda=1$ (bench.)} && \multicolumn{2}{c}{$\tilde{c}_\lambda=3/4$} \\
\cmidrule{3-4} \cmidrule{6-7} \cmidrule{9-10} 
$(N,T)$ && unique & mix && unique & mix && unique & mix  \\
\midrule
(100,50) && 1.000 & 0.000 && 0.986 & 0.014 && 0.844 & 0.156 \\ 
(100,75) && 1.000 & 0.000 && 0.986 & 0.014 && 0.844 & 0.156 \\ 
(100,100) && 1.000 & 0.000 && 0.994 & 0.006 && 0.854 & 0.146 \\ 
(250,50) && 1.000 & 0.000 && 1.000 & 0.000 && 0.978 & 0.022 \\ 
(250,75) && 1.000 & 0.000 && 1.000 & 0.000 && 0.978 & 0.022 \\ 
(250,100) && 1.000 & 0.000 && 1.000 & 0.000 && 0.992 & 0.008 \\ 
(500,50) && 1.000 & 0.000 && 1.000 & 0.000 && 1.000 & 0.000 \\ 
(500,75) && 1.000 & 0.000 && 1.000 & 0.000 && 1.000 & 0.000 \\ 
(500,100) && 1.000 & 0.000 && 1.000 & 0.000 && 1.000 & 0.000 \\

\bottomrule
\end{tabularx}

\label{tab:Sensitivity_Step5_IC_DGP2U}
\end{table}

    
\begin{table}[!htb]
\centering
\caption{Sensitivity analysis for $\alpha^0-u$ mixture structure for DGP2M}

\begin{tabularx}{\textwidth}{X c X  X  c X  X  c X  X   }
\toprule
&& \multicolumn{2}{c}{$\tilde{c}_\lambda=3/2$} && \multicolumn{2}{c}{$\tilde{c}_\lambda=1$ (bench.)} && \multicolumn{2}{c}{$\tilde{c}_\lambda=3/4$} \\
\cmidrule{3-4} \cmidrule{6-7} \cmidrule{9-10} 
$(N,T)$ && unique & mix && unique & mix && unique & mix  \\
\midrule
(100,50) && 0.360 & 0.640 && 0.028 & 0.972 && 0.000 & 1.000 \\ 
(100,75) && 0.304 & 0.696 && 0.010 & 0.990 && 0.000 & 1.000 \\ 
(100,100) && 0.210 & 0.790 && 0.008 & 0.992 && 0.000 & 1.000 \\ 
(250,50) && 0.094 & 0.906 && 0.000 & 1.000 && 0.000 & 1.000 \\ 
(250,75) && 0.052 & 0.948 && 0.000 & 1.000 && 0.000 & 1.000 \\ 
(250,100) && 0.022 & 0.978 && 0.000 & 1.000 && 0.000 & 1.000 \\ 
(500,50) && 0.006 & 0.994 && 0.000 & 1.000 && 0.000 & 1.000 \\ 
(500,75) && 0.000 & 1.000 && 0.000 & 1.000 && 0.000 & 1.000 \\ 
(500,100) && 0.000 & 1.000 && 0.000 & 1.000 && 0.000 & 1.000 \\ 

\bottomrule
\end{tabularx}

\label{tab:Sensitivity_Step5_IC_DGP2M}
\end{table}

    
\begin{table}[!htb]
\centering
\caption{Sensitivity analysis for $\alpha^0-u$ mixture structure for DGP3U}

\begin{tabularx}{\textwidth}{X c X  X  c X  X  c X  X   }
\toprule
&& \multicolumn{2}{c}{$\tilde{c}_\lambda=3/2$} && \multicolumn{2}{c}{$\tilde{c}_\lambda=1$ (bench.)} && \multicolumn{2}{c}{$\tilde{c}_\lambda=3/4$} \\
\cmidrule{3-4} \cmidrule{6-7} \cmidrule{9-10} 
$(N,T)$ && unique & mix && unique & mix && unique & mix  \\
\midrule
(100,50) && 1.000 & 0.000 && 0.968 & 0.032 && 0.778 & 0.222 \\ 
(100,75) && 1.000 & 0.000 && 0.970 & 0.030 && 0.810 & 0.190 \\ 
(100,100) && 1.000 & 0.000 && 0.992 & 0.008 && 0.852 & 0.148 \\ 
(250,50) && 1.000 & 0.000 && 0.992 & 0.008 && 0.948 & 0.052 \\ 
(250,75) && 1.000 & 0.000 && 1.000 & 0.000 && 0.976 & 0.024 \\ 
(250,100) && 1.000 & 0.000 && 1.000 & 0.000 && 0.976 & 0.024 \\ 
(500,50) && 1.000 & 0.000 && 1.000 & 0.000 && 0.994 & 0.006 \\ 
(500,75) && 1.000 & 0.000 && 1.000 & 0.000 && 1.000 & 0.000 \\ 
(500,100) && 1.000 & 0.000 && 1.000 & 0.000 && 1.000 & 0.000 \\ 

\bottomrule
\end{tabularx}

\label{tab:Sensitivity_Step5_IC_DGP3U}
\end{table}

    
\begin{table}[!htb]
\centering
\caption{Sensitivity analysis for $\alpha^0-u$ mixture structure for DGP3M}

\begin{tabularx}{\textwidth}{X c X  X  c X  X  c X  X   }
\toprule
&& \multicolumn{2}{c}{$\tilde{c}_\lambda=3/2$} && \multicolumn{2}{c}{$\tilde{c}_\lambda=1$ (bench.)} && \multicolumn{2}{c}{$\tilde{c}_\lambda=3/4$} \\
\cmidrule{3-4} \cmidrule{6-7} \cmidrule{9-10} 
$(N,T)$ && unique & mix && unique & mix && unique & mix  \\
\midrule
(100,50) && 0.496 & 0.504 && 0.064 & 0.936 && 0.006 & 0.994 \\ 
(100,75) && 0.328 & 0.672 && 0.018 & 0.982 && 0.002 & 0.998 \\ 
(100,100) && 0.258 & 0.742 && 0.016 & 0.984 && 0.000 & 1.000 \\ 
(250,50) && 0.148 & 0.852 && 0.010 & 0.990 && 0.006 & 0.994 \\ 
(250,75) && 0.038 & 0.962 && 0.000 & 1.000 && 0.000 & 1.000 \\ 
(250,100) && 0.028 & 0.972 && 0.000 & 1.000 && 0.000 & 1.000 \\ 
(500,50) && 0.006 & 0.994 && 0.002 & 0.998 && 0.002 & 0.998 \\ 
(500,75) && 0.000 & 1.000 && 0.000 & 1.000 && 0.000 & 1.000 \\ 
(500,100) && 0.000 & 1.000 && 0.000 & 1.000 && 0.000 & 1.000 \\  

\bottomrule
\end{tabularx}

\label{tab:Sensitivity_Step5_IC_DGP3M}
\end{table}


\begin{table}[!htb]
\centering
\caption{Summary statistics of data used in the application}

\begin{tabularx}{\textwidth}{X X X X X X }
\toprule
& MEAN & STD & MIN & MAX & $N\times T$ \\
\midrule
&\multicolumn{5}{c}{Panel A: raw variables }  \\\cline{2-6}
$C_{it}$ &  $9.7974\times10^{4}$ & $7.0182\times10^{5}$ & $91.3791$ & $2.2152\times10^{7}$ & $37280$  \\
$W_{it1}$ & $17.4185$ & $9.7993$ & $0.0913$ & $252.9964$ & $37280$ \\
$W_{it2}$ & $0.0073$ & $0.0094$ & $7.9946\times10^{-7}$ & $0.4393$ & $37280$ \\
$W_{it3}$ & $0.02718$ & $0.0200$ & $6.1287\times10^{-4}$ & $0.4423$ & $37280$ \\
$Y_{it1}$ & $2.1086\times10^{5}$ & $1.4001\times10^{6}$ & $4.2900$ & $4.9625\times10^{7}$ & $37280$ \\
$Y_{it2}$ & $1.6894\times10^{6}$ & $1.0978\times10^{7}$ & $3020$ & $3.1730\times10^{8}$ & $37280$ \\
$Y_{it3}$ & $1.3421\times10^{6}$ & $1.1142\times10^{7}$ & $2.4091\times10^{3}$ & $3.9273\times10^{8}$ & $37280$ \\

&\multicolumn{5}{c}{Panel B: variables used in regressions}  \\\cline{2-6}
$\log c_{it}^*$  & 12.8442 & 1.6370 & 9.1203  & 20.3854 & 37280\\
$\log w_{it1}$ & 6.5320  & 0.6060 & 2.8625  & 10.1837 & 37280\\
$\log w_{it2}$ & -1.7023 & 1.0291 & -8.4596 & 2.0101  & 37280\\
$\log y_{it1}$ & 9.8573  & 1.8283 & 1.4563  & 17.7200 & 37280\\
$\log y_{it2}$ & 11.8815 & 1.7312 & 8.0130  & 19.5754 & 37280\\
$\log y_{it3}$ & 11.6393 & 1.6402 & 7.7870  & 19.7886 & 37280 \\
\bottomrule
\end{tabularx}
\begin{tablenotes}
      \footnotesize
      \item 
      \emph{Note}: Raw variables in Panel A are denominated in Millions of 1986 USD. Variables in Panel B have been divided by $W_{it3}$ before taken logs.
\end{tablenotes}
\label{tab:Summary_Statistics}
\end{table}

\clearpage
\subsection{Tables for Simulations in Appendix \ref{APP:more_mix}} \label{APP:tables_more}

\vfill
\begin{table}[!htb]

\centering
\caption{BIAS and RMSE over 500 MC Iterations for DGP1M}
\scalebox{0.71}{
\begin{tabularx}{1.4\textwidth}{X c XX c XX c XX c XX c XX}

\toprule
&& \multicolumn{2}{c}{$\hat\alpha^{0}_{(1)}$} && \multicolumn{2}{c}{$\hat{\sigma}_{u(1)}$} && \multicolumn{2}{c}{$\hat\alpha^{0}_{(2)}$} && \multicolumn{2}{c}{$\hat{\sigma}_{u(2)}$} && \multicolumn{2}{c}{$\hat{\tau}_{1}$} \\
\cmidrule{3-4} \cmidrule{6-7} \cmidrule{9-10} \cmidrule{12-13} \cmidrule{15-16}
$(N,T)$ && BIAS & RMSE && BIAS & RMSE && BIAS & RMSE && BIAS & RMSE && BIAS & RMSE \\

\midrule

(100,50) && 0.116 & 0.140 && 0.242 & 0.300 && 0.106 & 0.136 && 0.119 & 0.149 && 0.031 & 0.040 \\ 
(100,75) && 0.094 & 0.115 && 0.184 & 0.230 && 0.086 & 0.112 && 0.116 & 0.145 && 0.029 & 0.037 \\ 
(100,100) && 0.079 & 0.096 && 0.184 & 0.238 && 0.095 & 0.127 && 0.111 & 0.141 && 0.031 & 0.041 \\ 
(250,50) && 0.173 & 0.189 && 0.328 & 0.365 && 0.087 & 0.112 && 0.087 & 0.110 && 0.031 & 0.041 \\ 
(250,75) && 0.135 & 0.149 && 0.255 & 0.289 && 0.074 & 0.102 && 0.081 & 0.102 && 0.026 & 0.034 \\ 
(250,100) && 0.111 & 0.123 && 0.211 & 0.241 && 0.070 & 0.095 && 0.078 & 0.097 && 0.026 & 0.033 \\ 
(500,50) && 0.206 & 0.215 && 0.361 & 0.386 && 0.071 & 0.092 && 0.069 & 0.088 && 0.029 & 0.037 \\ 
(500,75) && 0.165 & 0.174 && 0.273 & 0.293 && 0.055 & 0.070 && 0.061 & 0.075 && 0.022 & 0.027 \\ 
(500,100) && 0.143 & 0.152 && 0.232 & 0.253 && 0.051 & 0.068 && 0.066 & 0.085 && 0.022 & 0.028 \\ 

\bottomrule
\end{tabularx}}

\label{tab:DGP1M_parameters_AP}

\end{table}
\vfill


\begin{table}[!htb]

\centering
\caption{BIAS and RMSE over 500 MC iterations for DGP1T}

\scalebox{0.71}{
\begin{tabularx}{1.4\textwidth}{X c XX c XX c XX c XX c XX c XX c XXc XX }
\toprule

&& \multicolumn{2}{c}{$\hat\alpha_{(1)}^{0}$} && \multicolumn{2}{c}{$\hat{\sigma}_{u(1)}$} && \multicolumn{2}{c}{$\hat\alpha_{(2)}^{0}$} && \multicolumn{2}{c}{$\hat{\sigma}_{u(2)}$} && \multicolumn{2}{c}{$\hat{\alpha}_{(3)}^{0}$} && \multicolumn{2}{c}{$\hat{\sigma}_{u(3)}$} && \multicolumn{2}{c}{$\hat{\tau}_{1}$} && \multicolumn{2}{c}{$\hat{\tau}_{2}$} \\
\cmidrule{3-4} \cmidrule{6-7} \cmidrule{9-10} \cmidrule{12-13} \cmidrule{15-16} \cmidrule{18-19} \cmidrule{21-22} \cmidrule{24-25} 
$(N,T)$ && BIAS & RMSE && BIAS & RMSE && BIAS & RMSE && BIAS & RMSE && BIAS & RMSE && BIAS & RMSE && BIAS & RMSE && BIAS & RMSE \\
\midrule

(100,50) && 0.095 & 0.123 && 0.240 & 0.312 && 0.080 & 0.103 && 0.104 & 0.133 && 0.093 & 0.117 && 0.117 & 0.145 && 0.038 & 0.047 && 0.038 & 0.048 \\ 
(100,75) && 0.084 & 0.109 && 0.222 & 0.282 && 0.071 & 0.095 && 0.106 & 0.137 && 0.078 & 0.099 && 0.108 & 0.133 && 0.037 & 0.048 && 0.037 & 0.047 \\ 
(100,100) && 0.083 & 0.109 && 0.223 & 0.282 && 0.069 & 0.091 && 0.107 & 0.135 && 0.071 & 0.092 && 0.102 & 0.128 && 0.037 & 0.048 && 0.035 & 0.046 \\ 
(250,50) && 0.073 & 0.094 && 0.235 & 0.293 && 0.066 & 0.088 && 0.074 & 0.096 && 0.125 & 0.148 && 0.147 & 0.173 && 0.033 & 0.042 && 0.038 & 0.048 \\ 
(250,75) && 0.066 & 0.085 && 0.229 & 0.287 && 0.056 & 0.075 && 0.075 & 0.096 && 0.098 & 0.119 && 0.118 & 0.142 && 0.033 & 0.042 && 0.035 & 0.045 \\ 
(250,100) && 0.057 & 0.074 && 0.199 & 0.257 && 0.056 & 0.073 && 0.072 & 0.093 && 0.082 & 0.100 && 0.103 & 0.127 && 0.030 & 0.037 && 0.031 & 0.040 \\ 
(500,50) && 0.060 & 0.075 && 0.263 & 0.317 && 0.057 & 0.073 && 0.059 & 0.077 && 0.163 & 0.179 && 0.183 & 0.200 && 0.030 & 0.039 && 0.043 & 0.052 \\ 
(500,75) && 0.050 & 0.066 && 0.237 & 0.290 && 0.048 & 0.060 && 0.058 & 0.073 && 0.128 & 0.143 && 0.148 & 0.168 && 0.028 & 0.036 && 0.037 & 0.045 \\ 
(500,100) && 0.050 & 0.065 && 0.217 & 0.270 && 0.042 & 0.053 && 0.052 & 0.067 && 0.110 & 0.125 && 0.128 & 0.148 && 0.027 & 0.034 && 0.033 & 0.042 \\

\bottomrule
\end{tabularx}}

\label{tab:DGP1T_parameters_AP}
\end{table}

\vfill
\begin{table}[!htb]
\centering
\caption{BIAS and RMSE over 500 MC Iterations for DGP2M}
\scalebox{0.71}{
\begin{tabularx}{1.4\textwidth}{X c XX c XX c XX c XX c XX }
\toprule
&& \multicolumn{2}{c}{$\hat\alpha^{0}_{(1)}$} && \multicolumn{2}{c}{$\hat{\sigma}_{u(1)}$} && \multicolumn{2}{c}{$\hat\alpha^{0}_{(2)}$} && \multicolumn{2}{c}{$\hat{\sigma}_{u(2)}$} && \multicolumn{2}{c}{$\hat{\tau}_{1}$} \\
\cmidrule{3-4} \cmidrule{6-7} \cmidrule{9-10} \cmidrule{12-13} \cmidrule{15-16} 
$(N,T)$ && BIAS & RMSE && BIAS & RMSE && BIAS & RMSE && BIAS & RMSE && BIAS & RMSE \\

\midrule

(100,50) && 0.059 & 0.075 && 0.168 & 0.221 && 0.125 & 0.159 && 0.123 & 0.155 && 0.036 & 0.046 \\ 
(100,75) && 0.047 & 0.059 && 0.137 & 0.181 && 0.106 & 0.134 && 0.126 & 0.154 && 0.030 & 0.039 \\ 
(100,100) && 0.040 & 0.050 && 0.132 & 0.180 && 0.096 & 0.125 && 0.113 & 0.144 && 0.032 & 0.042 \\ 
(250,50) && 0.082 & 0.094 && 0.196 & 0.231 && 0.102 & 0.128 && 0.104 & 0.127 && 0.033 & 0.040 \\ 
(250,75) && 0.061 & 0.071 && 0.158 & 0.191 && 0.088 & 0.112 && 0.096 & 0.120 && 0.029 & 0.036 \\ 
(250,100) && 0.048 & 0.056 && 0.144 & 0.176 && 0.075 & 0.095 && 0.090 & 0.113 && 0.029 & 0.036 \\ 
(500,50) && 0.104 & 0.111 && 0.221 & 0.245 && 0.088 & 0.110 && 0.087 & 0.108 && 0.027 & 0.034 \\ 
(500,75) && 0.078 & 0.086 && 0.175 & 0.197 && 0.069 & 0.088 && 0.087 & 0.108 && 0.026 & 0.032 \\ 
(500,100) && 0.066 & 0.072 && 0.154 & 0.173 && 0.061 & 0.076 && 0.086 & 0.106 && 0.025 & 0.030 \\ 

\bottomrule
\end{tabularx}}

\label{tab:DGP2M_parameters_AP}
\end{table}
\vfill

\begin{table}[!htb]
\centering
\caption{BIAS and RMSE over 500 MC iterations for DGP2T}
\scalebox{0.71}{
\begin{tabularx}{1.4\textwidth}{X c XX c XX c XX c XX c XX c XX c XXc XX }
\toprule
&& \multicolumn{2}{c}{$\hat\alpha_{(1)}^{0}$} && \multicolumn{2}{c}{$\hat{\sigma}_{u(1)}$} && \multicolumn{2}{c}{$\hat\alpha_{(2)}^{0}$} && \multicolumn{2}{c}{$\hat{\sigma}_{u(2)}$} && \multicolumn{2}{c}{$\hat{\alpha}_{(3)}^{0}$} && \multicolumn{2}{c}{$\hat{\sigma}_{u(3)}$} && \multicolumn{2}{c}{$\hat{\tau}_{1}$} && \multicolumn{2}{c}{$\hat{\tau}_{2}$} \\
\cmidrule{3-4} \cmidrule{6-7} \cmidrule{9-10} \cmidrule{12-13} \cmidrule{15-16} \cmidrule{18-19} \cmidrule{21-22} \cmidrule{24-25}  
$(N,T)$ && BIAS & RMSE && BIAS & RMSE && BIAS & RMSE && BIAS & RMSE && BIAS & RMSE && BIAS & RMSE && BIAS & RMSE && BIAS & RMSE \\
\midrule

(100,50) && 0.085 & 0.114 && 0.221 & 0.287 && 0.078 & 0.104 && 0.108 & 0.140 && 0.172 & 0.209 && 0.147 & 0.177 && 0.037 & 0.048 && 0.039 & 0.051 \\ 
(100,75) && 0.074 & 0.099 && 0.197 & 0.259 && 0.078 & 0.104 && 0.112 & 0.143 && 0.139 & 0.168 && 0.125 & 0.155 && 0.037 & 0.048 && 0.036 & 0.045 \\ 
(100,100) && 0.073 & 0.099 && 0.210 & 0.273 && 0.076 & 0.102 && 0.105 & 0.137 && 0.114 & 0.142 && 0.115 & 0.145 && 0.037 & 0.048 && 0.037 & 0.048 \\ 
(250,50) && 0.061 & 0.085 && 0.201 & 0.266 && 0.064 & 0.085 && 0.071 & 0.091 && 0.238 & 0.266 && 0.205 & 0.230 && 0.030 & 0.039 && 0.041 & 0.052 \\ 
(250,75) && 0.054 & 0.073 && 0.195 & 0.248 && 0.060 & 0.082 && 0.073 & 0.094 && 0.191 & 0.218 && 0.165 & 0.191 && 0.030 & 0.039 && 0.038 & 0.049 \\ 
(250,100) && 0.046 & 0.063 && 0.190 & 0.243 && 0.053 & 0.071 && 0.074 & 0.097 && 0.165 & 0.188 && 0.143 & 0.173 && 0.031 & 0.040 && 0.036 & 0.046 \\ 
(500,50) && 0.045 & 0.059 && 0.206 & 0.259 && 0.053 & 0.069 && 0.063 & 0.080 && 0.301 & 0.318 && 0.257 & 0.273 && 0.026 & 0.034 && 0.042 & 0.052 \\ 
(500,75) && 0.040 & 0.053 && 0.194 & 0.244 && 0.043 & 0.055 && 0.057 & 0.072 && 0.236 & 0.251 && 0.213 & 0.231 && 0.026 & 0.034 && 0.039 & 0.047 \\ 
(500,100) && 0.039 & 0.051 && 0.188 & 0.238 && 0.041 & 0.053 && 0.056 & 0.071 && 0.211 & 0.227 && 0.183 & 0.201 && 0.026 & 0.033 && 0.037 & 0.046 \\

\bottomrule
\end{tabularx}}

\label{tab:DGP2T_parameters_AP}
\end{table}

\begin{table}[!htb]
\centering

\caption{BIAS and RMSE over 500 MC iterations for DGP3M}
\scalebox{0.71}{
\begin{tabularx}{1.4\textwidth}{X c XX c XX c XX c XX c XX  }
\toprule
&& \multicolumn{2}{c}{$\hat\alpha^{0}_{(1)}$} && \multicolumn{2}{c}{$\hat{\sigma}_{u(1)}$} && \multicolumn{2}{c}{$\hat\alpha^{0}_{(2)}$} && \multicolumn{2}{c}{$\hat{\sigma}_{u(2)}$} && \multicolumn{2}{c}{$\hat{\tau}_{1}$} \\
\cmidrule{3-4} \cmidrule{6-7} \cmidrule{9-10} \cmidrule{12-13} \cmidrule{15-16} 
$(N,T)$ && BIAS & RMSE && BIAS & RMSE && BIAS & RMSE && BIAS & RMSE && BIAS & RMSE \\
\midrule

(100,50) && 0.127 & 0.173 && 0.247 & 0.335 && 0.135 & 0.172 && 0.139 & 0.176 && 0.037 & 0.050 \\ 
(100,75) && 0.083 & 0.109 && 0.179 & 0.231 && 0.106 & 0.136 && 0.123 & 0.154 && 0.030 & 0.040 \\ 
(100,100) && 0.069 & 0.088 && 0.148 & 0.194 && 0.102 & 0.131 && 0.109 & 0.139 && 0.030 & 0.039 \\ 
(250,50) && 0.173 & 0.203 && 0.316 & 0.371 && 0.106 & 0.136 && 0.108 & 0.135 && 0.034 & 0.045 \\ 
(250,75) && 0.113 & 0.133 && 0.222 & 0.253 && 0.081 & 0.104 && 0.083 & 0.104 && 0.029 & 0.036 \\ 
(250,100) && 0.089 & 0.107 && 0.175 & 0.203 && 0.075 & 0.095 && 0.089 & 0.112 && 0.026 & 0.032 \\ 
(500,50) && 0.206 & 0.225 && 0.335 & 0.370 && 0.083 & 0.108 && 0.101 & 0.125 && 0.031 & 0.041 \\ 
(500,75) && 0.139 & 0.151 && 0.237 & 0.260 && 0.063 & 0.079 && 0.073 & 0.094 && 0.024 & 0.030 \\ 
(500,100) && 0.116 & 0.127 && 0.207 & 0.226 && 0.059 & 0.076 && 0.070 & 0.089 && 0.023 & 0.028 \\

\bottomrule
\end{tabularx}}

\label{tab:DGP3M_parameters_AP}
\end{table}


\begin{table}[!htb]
\centering
\caption{BIAS and RMSE over 500 MC iterations for DGP3T}
\scalebox{0.71}{
\begin{tabularx}{1.4\textwidth}{X c XX c XX c XX c XX c XX c XX c XX c XX}
\toprule
&& \multicolumn{2}{c}{$\hat\alpha_{(1)}^{0}$} && \multicolumn{2}{c}{$\hat{\sigma}_{u(1)}$} && \multicolumn{2}{c}{$\hat\alpha_{(2)}^{0}$} && \multicolumn{2}{c}{$\hat{\sigma}_{u(2)}$} && \multicolumn{2}{c}{$\hat{\alpha}_{(3)}^{0}$} && \multicolumn{2}{c}{$\hat{\sigma}_{u(3)}$} && \multicolumn{2}{c}{$\hat{\tau}_{1}$} && \multicolumn{2}{c}{$\hat{\tau}_{2}$} \\
\cmidrule{3-4} \cmidrule{6-7} \cmidrule{9-10} \cmidrule{12-13} \cmidrule{15-16} \cmidrule{18-19} \cmidrule{21-22} \cmidrule{24-25} 
$(N,T)$ && BIAS & RMSE && BIAS & RMSE && BIAS & RMSE && BIAS & RMSE && BIAS & RMSE && BIAS & RMSE && BIAS & RMSE && BIAS & RMSE \\
\midrule

(100,50) && 0.087 & 0.116 && 0.211 & 0.275 && 0.102 & 0.132 && 0.114 & 0.150 && 0.203 & 0.249 && 0.366 & 0.442 && 0.039 & 0.050 && 0.041 & 0.054 \\ 
(100,75) && 0.076 & 0.103 && 0.226 & 0.281 && 0.086 & 0.113 && 0.103 & 0.131 && 0.149 & 0.184 && 0.281 & 0.357 && 0.036 & 0.046 && 0.036 & 0.047 \\ 
(100,100) && 0.074 & 0.099 && 0.217 & 0.277 && 0.083 & 0.107 && 0.109 & 0.138 && 0.129 & 0.161 && 0.251 & 0.323 && 0.036 & 0.049 && 0.037 & 0.048 \\ 
(250,50) && 0.063 & 0.085 && 0.189 & 0.250 && 0.075 & 0.096 && 0.079 & 0.099 && 0.232 & 0.265 && 0.520 & 0.582 && 0.032 & 0.041 && 0.044 & 0.056 \\ 
(250,75) && 0.054 & 0.072 && 0.177 & 0.227 && 0.062 & 0.079 && 0.072 & 0.091 && 0.186 & 0.214 && 0.359 & 0.418 && 0.027 & 0.035 && 0.034 & 0.043 \\ 
(250,100) && 0.051 & 0.070 && 0.178 & 0.232 && 0.055 & 0.073 && 0.070 & 0.090 && 0.150 & 0.176 && 0.299 & 0.359 && 0.029 & 0.036 && 0.032 & 0.041 \\ 
(500,50) && 0.051 & 0.065 && 0.187 & 0.243 && 0.064 & 0.081 && 0.064 & 0.079 && 0.296 & 0.317 && 0.674 & 0.721 && 0.028 & 0.037 && 0.048 & 0.058 \\ 
(500,75) && 0.042 & 0.054 && 0.182 & 0.230 && 0.054 & 0.067 && 0.057 & 0.074 && 0.236 & 0.253 && 0.457 & 0.506 && 0.027 & 0.034 && 0.037 & 0.046 \\ 
(500,100) && 0.038 & 0.051 && 0.180 & 0.232 && 0.047 & 0.060 && 0.055 & 0.070 && 0.200 & 0.217 && 0.391 & 0.437 && 0.025 & 0.033 && 0.035 & 0.044 \\

\bottomrule
\end{tabularx}}

\label{tab:DGP3T_parameters_AP}
\end{table}

    
\begin{table}[!htb]
\centering
\caption{Sensitivity analysis for $\alpha^0-u$ mixture structure for DGP1M}

\hspace*{-1.3cm}
\begin{tabularx}{1.15\textwidth}{X c XXX  cXXX  cXXX   }
\toprule
&& \multicolumn{3}{c}{$\tilde{c}_\lambda=3/2$} && \multicolumn{3}{c}{$\tilde{c}_\lambda=1$ (bench.)} && \multicolumn{3}{c}{$\tilde{c}_\lambda=3/4$} \\
\cmidrule{3-5} \cmidrule{7-9} \cmidrule{11-13} 
$(N,T)$ && $\mathcal{K}=1$ & $\mathcal{K}=2$ & $\mathcal{K}=3$ && $\mathcal{K}=1$ & $\mathcal{K}=2$ & $\mathcal{K}=3$ && $\mathcal{K}=1$ & $\mathcal{K}=2$ & $\mathcal{K}=3$   \\
\midrule

(100,50) && 0.114 & 0.876 & 0.010 && 0.024 & 0.886 & 0.090 && 0.008 & 0.808 & 0.184\\ 
(100,75) && 0.050 & 0.938 & 0.012 && 0.006 & 0.946 & 0.048 && 0.000 & 0.882 & 0.118\\ 
(100,100) && 0.048 & 0.946 & 0.006 && 0.004 & 0.948 & 0.048 && 0.000 & 0.900 & 0.100\\ 
(250,50) && 0.012 & 0.888 & 0.100 && 0.002 & 0.738 & 0.260 && 0.002 & 0.614 & 0.384\\ 
(250,75) && 0.012 & 0.938 & 0.050 && 0.000 & 0.800 & 0.200 && 0.000 & 0.698 & 0.302\\ 
(250,100) && 0.006 & 0.960 & 0.034 && 0.000 & 0.870 & 0.130 && 0.000 & 0.774 & 0.226\\ 
(500,50) && 0.000 & 0.766 & 0.234 && 0.000 & 0.492 & 0.508 && 0.000 & 0.348 & 0.652\\ 
(500,75) && 0.000 & 0.842 & 0.158 && 0.000 & 0.646 & 0.354 && 0.000 & 0.498 & 0.502\\ 
(500,100) && 0.000 & 0.906 & 0.094 && 0.000 & 0.714 & 0.286 && 0.000 & 0.576 & 0.424\\ 

\bottomrule
\end{tabularx}

\label{tab:Sensitivity_Step5_IC_DGP1M_AP}
\end{table}

    
\begin{table}[!htb]
\centering
\caption{Sensitivity analysis for $\alpha^{0}-u$ mixture structure for DGP1T}

\hspace*{-1.3cm}
\begin{tabularx}{1.15\textwidth}{X c XXXX  cXXXX  cXXXX   }
\toprule
&& \multicolumn{4}{c}{$\tilde{c}_\lambda=3/2$} && \multicolumn{4}{c}{$\tilde{c}_\lambda=1$ (bench.)} && \multicolumn{4}{c}{$\tilde{c}_\lambda=3/4$} \\
\cmidrule{3-6} \cmidrule{8-11} \cmidrule{13-16} 
$(N,T)$ && $\mathcal{K}=1$ & $\mathcal{K}=2$ & $\mathcal{K}=3$ & $\mathcal{K}=4$ && $\mathcal{K}=1$ & $\mathcal{K}=2$ & $\mathcal{K}=3$ & $\mathcal{K}=4$ && $\mathcal{K}=1$ & $\mathcal{K}=2$ & $\mathcal{K}=3$ & $\mathcal{K}=4$  \\
\midrule

(100,50) && 0.000 & 0.000 & 0.990 & 0.010 && 0.000 & 0.000 & 0.980 & 0.020 && 0.000 & 0.000 & 0.940 & 0.060\\ 
(100,75) && 0.000 & 0.000 & 0.998 & 0.002 && 0.000 & 0.000 & 0.986 & 0.014 && 0.000 & 0.000 & 0.958 & 0.042\\ 
(100,100) && 0.000 & 0.000 & 1.000 & 0.000 && 0.000 & 0.000 & 0.986 & 0.014 && 0.000 & 0.000 & 0.960 & 0.040\\ 
(250,50) && 0.000 & 0.000 & 0.992 & 0.008 && 0.000 & 0.000 & 0.976 & 0.024 && 0.000 & 0.000 & 0.920 & 0.080\\ 
(250,75) && 0.000 & 0.000 & 0.994 & 0.006 && 0.000 & 0.000 & 0.982 & 0.018 && 0.000 & 0.000 & 0.956 & 0.044\\ 
(250,100) && 0.000 & 0.000 & 1.000 & 0.000 && 0.000 & 0.000 & 1.000 & 0.000 && 0.000 & 0.000 & 0.986 & 0.014\\ 
(500,50) && 0.000 & 0.000 & 0.978 & 0.022 && 0.000 & 0.000 & 0.926 & 0.074 && 0.000 & 0.000 & 0.832 & 0.168\\ 
(500,75) && 0.000 & 0.000 & 0.992 & 0.008 && 0.000 & 0.000 & 0.974 & 0.026 && 0.000 & 0.000 & 0.930 & 0.070\\ 
(500,100) && 0.000 & 0.000 & 0.998 & 0.002 && 0.000 & 0.000 & 0.990 & 0.010 && 0.000 & 0.000 & 0.948 & 0.052\\

\bottomrule
\end{tabularx}

\label{tab:Sensitivity_Step5_IC_DGP1T_AP}
\end{table}

    
\begin{table}[!htb]
\centering
\caption{Sensitivity analysis for $\alpha^{0}-u$ mixture structure for DGP2M}

\hspace*{-1.3cm}
\begin{tabularx}{1.15\textwidth}{X c XXX  cXXX  cXXX   }
\toprule
&& \multicolumn{3}{c}{$\tilde{c}_\lambda=3/2$} && \multicolumn{3}{c}{$\tilde{c}_\lambda=1$ (bench.)} && \multicolumn{3}{c}{$\tilde{c}_\lambda=3/4$} \\
\cmidrule{3-5} \cmidrule{7-9} \cmidrule{11-13} 
$(N,T)$ && $\mathcal{K}=1$ & $\mathcal{K}=2$ & $\mathcal{K}=3$ && $\mathcal{K}=1$ & $\mathcal{K}=2$ & $\mathcal{K}=3$ && $\mathcal{K}=1$ & $\mathcal{K}=2$ & $\mathcal{K}=3$   \\
\midrule

(100,50) && 0.124 & 0.874 & 0.002 && 0.014 & 0.962 & 0.024 && 0.002 & 0.924 & 0.074\\ 
(100,75) && 0.068 & 0.930 & 0.002 && 0.004 & 0.978 & 0.018 && 0.000 & 0.946 & 0.054\\ 
(100,100) && 0.046 & 0.950 & 0.004 && 0.004 & 0.986 & 0.010 && 0.002 & 0.966 & 0.032\\ 
(250,50) && 0.020 & 0.978 & 0.002 && 0.002 & 0.954 & 0.044 && 0.000 & 0.878 & 0.122\\ 
(250,75) && 0.012 & 0.982 & 0.006 && 0.000 & 0.966 & 0.034 && 0.000 & 0.932 & 0.068\\ 
(250,100) && 0.002 & 0.998 & 0.000 && 0.000 & 0.992 & 0.008 && 0.000 & 0.958 & 0.042\\ 
(500,50) && 0.002 & 0.978 & 0.020 && 0.000 & 0.900 & 0.100 && 0.000 & 0.778 & 0.222\\ 
(500,75) && 0.000 & 0.994 & 0.006 && 0.000 & 0.934 & 0.066 && 0.000 & 0.862 & 0.138\\ 
(500,100) && 0.000 & 0.994 & 0.006 && 0.000 & 0.966 & 0.034 && 0.000 & 0.890 & 0.110\\

\bottomrule
\end{tabularx}

\label{tab:Sensitivity_Step5_IC_DGP2M_AP}
\end{table}

    
\begin{table}[!htb]
\centering
\caption{Sensitivity analysis for $\alpha^0-u$ mixture structure for DGP2T}

\hspace*{-1.3cm}
\begin{tabularx}{1.15\textwidth}{X c XXXX  cXXXX  cXXXX   }
\toprule
&& \multicolumn{4}{c}{$\tilde{c}_\lambda=3/2$} && \multicolumn{4}{c}{$\tilde{c}_\lambda=1$ (bench.)} && \multicolumn{4}{c}{$\tilde{c}_\lambda=3/4$} \\
\cmidrule{3-6} \cmidrule{8-11} \cmidrule{13-16} 
$(N,T)$ && $\mathcal{K}=1$ & $\mathcal{K}=2$ & $\mathcal{K}=3$ & $\mathcal{K}=4$ && $\mathcal{K}=1$ & $\mathcal{K}=2$ & $\mathcal{K}=3$ & $\mathcal{K}=4$ && $\mathcal{K}=1$ & $\mathcal{K}=2$ & $\mathcal{K}=3$ & $\mathcal{K}=4$  \\
\midrule

(100,50) && 0.000 & 0.000 & 0.998 & 0.002 && 0.000 & 0.000 & 0.992 & 0.008 && 0.000 & 0.000 & 0.974 & 0.026\\ 
(100,75) && 0.000 & 0.000 & 0.996 & 0.004 && 0.000 & 0.000 & 0.990 & 0.010 && 0.000 & 0.000 & 0.982 & 0.018\\ 
(100,100) && 0.000 & 0.000 & 0.994 & 0.006 && 0.000 & 0.000 & 0.986 & 0.014 && 0.000 & 0.000 & 0.976 & 0.024\\ 
(250,50) && 0.000 & 0.000 & 0.986 & 0.014 && 0.000 & 0.000 & 0.980 & 0.020 && 0.000 & 0.000 & 0.940 & 0.060\\ 
(250,75) && 0.000 & 0.000 & 0.984 & 0.016 && 0.000 & 0.000 & 0.972 & 0.028 && 0.000 & 0.000 & 0.952 & 0.048\\ 
(250,100) && 0.000 & 0.000 & 0.998 & 0.002 && 0.000 & 0.000 & 0.992 & 0.008 && 0.000 & 0.000 & 0.974 & 0.026\\ 
(500,50) && 0.000 & 0.000 & 0.984 & 0.016 && 0.000 & 0.000 & 0.960 & 0.040 && 0.000 & 0.000 & 0.890 & 0.110\\ 
(500,75) && 0.000 & 0.000 & 0.992 & 0.008 && 0.000 & 0.000 & 0.972 & 0.028 && 0.000 & 0.000 & 0.940 & 0.060\\ 
(500,100) && 0.000 & 0.000 & 0.994 & 0.006 && 0.000 & 0.000 & 0.962 & 0.038 && 0.000 & 0.000 & 0.920 & 0.080\\

\bottomrule
\end{tabularx}

\label{tab:Sensitivity_Step5_IC_DGP2T_AP}
\end{table}

    
\begin{table}[!htb]
\centering
\caption{Sensitivity analysis for $\alpha^{0}-u$ mixture structure for DGP3M}

\hspace*{-1.3cm}
\begin{tabularx}{1.15\textwidth}{X c XXX  cXXX  cXXX   }
\toprule
&& \multicolumn{3}{c}{$\tilde{c}_\lambda=3/2$} && \multicolumn{3}{c}{$\tilde{c}_\lambda=1$ (bench.)} && \multicolumn{3}{c}{$\tilde{c}_\lambda=3/4$} \\
\cmidrule{3-5} \cmidrule{7-9} \cmidrule{11-13} 
$(N,T)$ && $\mathcal{K}=1$ & $\mathcal{K}=2$ & $\mathcal{K}=3$ && $\mathcal{K}=1$ & $\mathcal{K}=2$ & $\mathcal{K}=3$ && $\mathcal{K}=1$ & $\mathcal{K}=2$ & $\mathcal{K}=3$   \\
\midrule

(100,50) && 0.346 & 0.640 & 0.014 && 0.166 & 0.764 & 0.070 && 0.100 & 0.782 & 0.118\\ 
(100,75) && 0.148 & 0.842 & 0.010 && 0.054 & 0.892 & 0.054 && 0.024 & 0.888 & 0.088\\ 
(100,100) && 0.070 & 0.922 & 0.008 && 0.010 & 0.964 & 0.026 && 0.002 & 0.924 & 0.074\\ 
(250,50) && 0.152 & 0.772 & 0.076 && 0.052 & 0.708 & 0.240 && 0.034 & 0.584 & 0.382\\ 
(250,75) && 0.008 & 0.954 & 0.038 && 0.000 & 0.880 & 0.120 && 0.000 & 0.782 & 0.218\\ 
(250,100) && 0.010 & 0.970 & 0.020 && 0.000 & 0.910 & 0.090 && 0.000 & 0.834 & 0.166\\ 
(500,50) && 0.050 & 0.706 & 0.244 && 0.030 & 0.536 & 0.434 && 0.024 & 0.358 & 0.618\\ 
(500,75) && 0.002 & 0.898 & 0.100 && 0.000 & 0.710 & 0.290 && 0.000 & 0.576 & 0.424\\ 
(500,100) && 0.000 & 0.942 & 0.058 && 0.000 & 0.812 & 0.188 && 0.000 & 0.660 & 0.340\\ 

\bottomrule
\end{tabularx}

\label{tab:Sensitivity_Step5_IC_DGP3M_AP}
\end{table}

    
\begin{table}[!htb]
\centering
\caption{Sensitivity analysis for $\alpha^{0}-u$ mixture structure for DGP3T}

\hspace*{-1.3cm}
\begin{tabularx}{1.15\textwidth}{X c XXXX  cXXXX  cXXXX   }
\toprule
&& \multicolumn{4}{c}{$\tilde{c}_\lambda=3/2$} && \multicolumn{4}{c}{$\tilde{c}_\lambda=1$ (bench.)} && \multicolumn{4}{c}{$\tilde{c}_\lambda=3/4$} \\
\cmidrule{3-6} \cmidrule{8-11} \cmidrule{13-16} 
$(N,T)$ && $\mathcal{K}=1$ & $\mathcal{K}=2$ & $\mathcal{K}=3$ & $\mathcal{K}=4$ && $\mathcal{K}=1$ & $\mathcal{K}=2$ & $\mathcal{K}=3$ & $\mathcal{K}=4$ && $\mathcal{K}=1$ & $\mathcal{K}=2$ & $\mathcal{K}=3$ & $\mathcal{K}=4$  \\
\midrule

(100,50) && 0.000 & 0.000 & 0.998 & 0.002 && 0.000 & 0.000 & 0.992 & 0.008 && 0.000 & 0.000 & 0.974 & 0.026\\ 
(100,75) && 0.000 & 0.000 & 0.996 & 0.004 && 0.000 & 0.000 & 0.990 & 0.010 && 0.000 & 0.000 & 0.982 & 0.018\\ 
(100,100) && 0.000 & 0.000 & 0.994 & 0.006 && 0.000 & 0.000 & 0.986 & 0.014 && 0.000 & 0.000 & 0.976 & 0.024\\ 
(250,50) && 0.000 & 0.000 & 0.986 & 0.014 && 0.000 & 0.000 & 0.980 & 0.020 && 0.000 & 0.000 & 0.940 & 0.060\\ 
(250,75) && 0.000 & 0.000 & 0.984 & 0.016 && 0.000 & 0.000 & 0.972 & 0.028 && 0.000 & 0.000 & 0.952 & 0.048\\ 
(250,100) && 0.000 & 0.000 & 0.998 & 0.002 && 0.000 & 0.000 & 0.992 & 0.008 && 0.000 & 0.000 & 0.974 & 0.026\\ 
(500,50) && 0.000 & 0.000 & 0.984 & 0.016 && 0.000 & 0.000 & 0.960 & 0.040 && 0.000 & 0.000 & 0.890 & 0.110\\ 
(500,75) && 0.000 & 0.000 & 0.992 & 0.008 && 0.000 & 0.000 & 0.972 & 0.028 && 0.000 & 0.000 & 0.940 & 0.060\\ 
(500,100) && 0.000 & 0.000 & 0.994 & 0.006 && 0.000 & 0.000 & 0.962 & 0.038 && 0.000 & 0.000 & 0.920 & 0.080\\

\bottomrule
\end{tabularx}

\label{tab:Sensitivity_Step5_IC_DGP3T_AP}
\end{table}

\clearpage

\subsection{Interpreting the Latent Groups}
\label{app:read-groups-lean}

The classification is fully data–driven, but it is not completely a  black box. 

\paragraph{What drives the classification.} As seen from Figure \ref{fig:Grouped_Frontiers_pihat}, the blue dots and red dots are well separated by a horizontal line in Panel (a)  and by a vertical line in Panel (e), but are evenly mixed otherwise. Consequently, the classification is mostly driven by coefficient $\hat{\pi}_{i1}$ before $B_{1}(\tau_{t})$ (time trend term in the intercept) and the coefficient $\hat{\pi}_{i8}$ before $y_{it2}$ (level term before non-consumer loans).

\paragraph{What differs technologically.}
Figures~\ref{fig:Grouped_Frontiers_Application_SE} display the group–specific, time–varying
coefficients $\{\widehat\beta_{k,\ell}(\tau)\}$ and intercepts $\widehat\alpha_k(\tau)$. Two facts are visible:
(i) the levels/curvature of input elasticities differ across groups, and (ii) their time profiles
(drifts) are not parallel. This is exactly the dimension our HAC step is designed to capture:
firms in the same group share the same coefficient paths; firms in different groups load inputs
and time differently (technology regimes).

We summarize certain statistics for each group below. These numbers indicate correlations only and do not suggest any causal relationship.  We find that banks in group 1 are, on average, larger, more efficient (closer to their frontier), and exhibit less variation in inefficiency than banks in group 2.

\paragraph{What differs in performance.}
To separate technology from performance,
we summarize the mean and standard deviation of $\widehat{\textrm{E}}\left(\alpha_i^{0}+u_{i}|\varepsilon_{i1},...,\varepsilon_{iT}\right)$ using (\ref{eq:inefficiency_post}),  within each group.
Group 1 has the mean (standard deviation) at $0.0923$ $(2.17\times10^{-4})$ with 113 observations, while Group 2 has its mean (standard deviation) at $0.1130$ $(0.0558)$ with 353 observations. The group means are statistically different at 5\% level of significance using 
\[\frac{\hat{\mu}_{1}-\hat{\mu}_{2}}{\sqrt{s_{1}^{2}/n_{1}+s_{2}^{2}/n_{2}}} = -6.6179.\]
From the mean values, banks in Group 2 appear less efficient than those in Group 1. The standard deviations also indicate that inefficiency varies more within Group 2 than within Group 1.

\paragraph{Who is more often in each group.}
We compare bank sizes across the two groups and find the following. For group 1, the mean (standard deviation) of $\log(\text{size})$ is $13.4866$ $(1.8689)$ based on 113 observations. For group 2, the corresponding mean (standard deviation) is $12.4813$ $(1.5012)$ based on 353 observations. The difference in group means is statistically significant at the 5\% level, with a t-statistic of $6.9721$. Thus, on average, banks in group 1 are larger.

\subsection{Additional Figure for the Empirical Application}\label{APP:application_fig}

\begin{figure}[H]
\caption{Inefficiency Estimates of 60 Banks in Descending Order}
\centering
\includegraphics[scale=0.65] {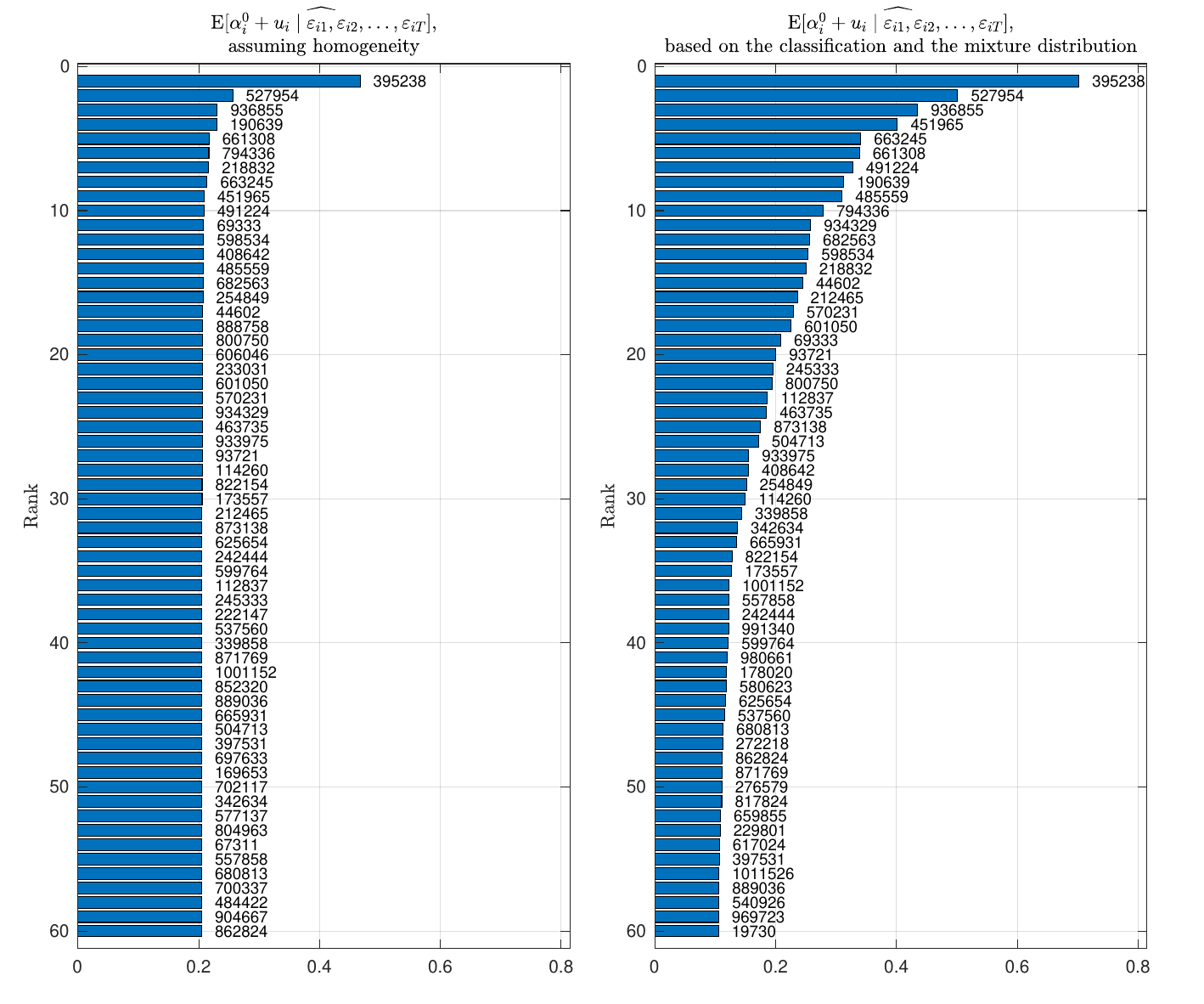}
\label{fig:Inefficiency_Ranking_Ranking}
\begin{tablenotes}
      \footnotesize
      \item 
      \emph{Note}: Left panel shows the inefficiency estimates in the homogeneous case where all banks are treated as one group with uniquely distributed error term, while the right panel shows the inefficiency estimates in the heterogeneous case where banks are classified into groups and inefficiency estimates to possess a mixture distribution structure. The numbers to the right of each bar corresponds to their respective bank ID, ``IDRSSD'' from \href{https://cdr.ffiec.gov/public/PWS/DownloadBulkData.aspx}{https://cdr.ffiec.gov/public/PWS/DownloadBulkData.aspx}.
\end{tablenotes}
\end{figure}
\newpage

\section{On \texorpdfstring{$\mathbb{I}$}{I} in Theorem \ref{TH:post-estimation} }

\label{App:derivatives}


Without loss of generality, we show the properties of $\mathbb{I}$
in the case when $\mathcal{K}^{*}=2$ to simplify notation. The cases
when $\mathcal{K}^{*}>2$ can be similarly shown.

\subsection{First and Second Order Derivatives}

\label{App:derivativesforms}

We use the notation in Appendix \ref{APP:likeli}. We assume $\alpha_{i}$,
$\beta_{i}$, and $\sigma_{vi}^{2}$ are known because they are not
the parameters of interest in this appendix.

We first discuss the case when the distribution does not have a latent
structure. The unknown parameters are $\left(\alpha^{0},\sigma_{u}^{2}\right)$
in this case. Note 
\[
\varepsilon_{it}=y_{it}-\alpha^{0}-\alpha\left(\tau_{t}\right)-x_{it}'\beta\left(\tau_{t}\right)=v_{it}-u_{i},
\]
\begin{align*}
f\left(\left.y_{i}\right|x_{i};\alpha^{0},\sigma_{u}^{2}\right) & =\frac{2}{\sigma_{vi}^{T-1}\sqrt{\sigma_{vi}^{2}+T\sigma_{u}^{2}}}\left[1-\Phi\left(\frac{\sum_{t=1}^{T}\varepsilon_{it}}{\sigma_{vi}\sqrt{\sigma_{vi}^{2}/\sigma_{u}^{2}+T}}\right)\right]\left[\frac{1}{\left(2\pi\right)^{T/2}}\exp\left(-\frac{\sum_{t=1}^{T}\varepsilon_{it}^{2}}{2\sigma_{vi}^{2}}\right)\right]\\
 & \times\exp\left(\frac{1}{2}\frac{\left(\sum_{t=1}^{T}\varepsilon_{it}\right)^{2}}{\sigma_{vi}^{2}\left(\sigma_{vi}^{2}/\sigma_{u}^{2}+T\right)}\right),
\end{align*}
and

\begin{align*}
\log f\left(\left.y_{i}\right|x_{i};\alpha^{0},\sigma_{u}^{2}\right)= & C-\frac{(T-1)}{2}\log\sigma_{vi}^{2}-\frac{1}{2}\text{\ensuremath{\log}}\left(\sigma_{vi}^{2}+T\sigma_{u}^{2}\right)\\
 & +\log\left[1-\Phi\left(-\frac{\mu_{i\ast}}{\sigma_{i\ast}}\right)\right]+\frac{1}{2}\left(\frac{\mu_{i\ast}}{\sigma_{i\ast}}\right)^{2}-\frac{\sum_{t=1}^{T}\varepsilon_{it}^{2}}{2\sigma_{vi}^{2}},
\end{align*}
with $\sigma_{i}^{2}=\sigma_{vi}^{2}+T\sigma_{u}^{2},$ $\rho_{i}=\sigma_{u}/\sigma_{vi},$
$\mu_{i\ast}=-\sigma_{u}^{2}\sum_{t=1}^{T}\varepsilon_{it}/\sigma_{i}^{2},$
$\sigma_{i\ast}^{2}=\sigma_{u}^{2}\sigma_{vi}^{2}/\sigma_{i}^{2}$,
and a $C$ that does not depend on parameters.

By some straightforward calculations,

\begin{align}
\frac{\partial}{\partial\sigma_{u}^{2}}\log f\left(\left.y_{i}\right|x_{i};\alpha^{0},\sigma_{u}^{2}\right) & =-\frac{1}{2}\frac{T}{\sigma_{vi}^{2}+T\sigma_{u}^{2}}+\frac{\phi\left(-\frac{\mu_{i\ast}}{\sigma_{i\ast}}\right)}{1-\Phi\left(-\frac{\mu_{i\ast}}{\sigma_{i\ast}}\right)}\frac{-\sigma_{vi}}{2\sigma_{u}^{4}\left(\sigma_{vi}^{2}/\sigma_{u}^{2}+T\right)^{3/2}}\sum_{t=1}^{T}\varepsilon_{it}\nonumber \\
 & +\frac{1}{2\sigma_{u}^{4}\left(\sigma_{vi}^{2}/\sigma_{u}^{2}+T\right)^{2}}\left(\sum_{t=1}^{T}\varepsilon_{it}\right)^{2},\label{eq:dlogdsigma}
\end{align}
\begin{align}
\frac{\partial}{\partial\alpha^{0}}\log f\left(\left.y_{i}\right|x_{i};\alpha^{0},\sigma_{u}^{2}\right) & =\frac{\phi\left(-\frac{\mu_{i\ast}}{\sigma_{i\ast}}\right)}{1-\Phi\left(-\frac{\mu_{i\ast}}{\sigma_{i\ast}}\right)}\frac{T}{\sigma_{vi}\sqrt{\sigma_{vi}^{2}/\sigma_{u}^{2}+T}}+\frac{1}{T^{-1}\sigma_{vi}^{2}+\sigma_{u}^{2}}\frac{1}{T}\left(\sum_{t=1}^{T}\varepsilon_{it}\right),\label{eq:dlogdc}
\end{align}
\begin{align*}
\frac{\partial^{2}}{\partial\sigma_{u}^{2}\partial\alpha^{0}}\log f\left(\left.y_{i}\right|x_{i};\alpha^{0},\sigma_{u}^{2}\right) & =g\left(\frac{\mu_{i\ast}}{\sigma_{i\ast}}\right)\frac{-T}{2\sigma_{u}^{4}\left(\sigma_{vi}^{2}/\sigma_{u}^{2}+T\right)^{2}}\sum_{t=1}^{T}\varepsilon_{it}+\frac{\phi\left(-\frac{\mu_{i\ast}}{\sigma_{i\ast}}\right)}{1-\Phi\left(-\frac{\mu_{i\ast}}{\sigma_{i\ast}}\right)}\frac{T\sigma_{vi}}{2\sigma_{u}^{4}\left(\sigma_{vi}^{2}/\sigma_{u}^{2}+T\right)^{3/2}}\\
 & -\frac{T}{\sigma_{u}^{4}\left(\sigma_{vi}^{2}/\sigma_{u}^{2}+T\right)^{2}}\left(\sum_{t=1}^{T}\varepsilon_{it}\right),
\end{align*}
\begin{align*}
\frac{\partial^{2}}{\partial(\alpha^{0})^{2}}\log f\left(\left.y_{i}\right|x_{i};\alpha^{0},\sigma_{u}^{2}\right) & =g\left(\frac{\mu_{i\ast}}{\sigma_{i\ast}}\right)\frac{T^{2}}{\sigma_{vi}^{2}\left(\sigma_{vi}^{2}/\sigma_{u}^{2}+T\right)}-\frac{1}{T^{-1}\sigma_{vi}^{2}+\sigma_{u}^{2}},
\end{align*}
and 
\begin{align*}
\frac{\partial^{2}}{\partial\left(\sigma_{u}^{2}\right)^{2}}\log f\left(\left.y_{i}\right|x_{i};\alpha^{0},\sigma_{u}^{2}\right) & =\frac{1}{2}\frac{T^{2}}{\left(\sigma_{vi}^{2}+T\sigma_{u}^{2}\right)^{2}}+g\left(\frac{\mu_{i\ast}}{\sigma_{i\ast}}\right)\frac{\sigma_{vi}^{2}}{4\sigma_{u}^{8}\left(\sigma_{vi}^{2}/\sigma_{u}^{2}+T\right)^{3}}\left(\sum_{t=1}^{T}\varepsilon_{it}\right)^{2}\\
 & +\frac{\phi\left(-\frac{\mu_{i\ast}}{\sigma_{i\ast}}\right)}{1-\Phi\left(-\frac{\mu_{i\ast}}{\sigma_{i\ast}}\right)}\frac{\left(\sigma_{vi}^{3}\sigma_{u}^{-4/3}+4T\sigma_{vi}\sigma_{u}^{2/3}\right)}{4\left(\sigma_{vi}^{2}\sigma_{u}^{2/3}+T\sigma_{u}^{8/3}\right)^{5/2}}\sum_{t=1}^{T}\varepsilon_{it}-\frac{T}{\left(\sigma_{vi}^{2}+T\sigma_{u}^{2}\right)^{3}}\left(\sum_{t=1}^{T}\varepsilon_{it}\right)^{2},
\end{align*}
where 
\[
g\left(\frac{\mu_{i\ast}}{\sigma_{i\ast}}\right)=\left.\frac{d}{ds}\frac{\phi\left(-s\right)}{1-\Phi\left(-s\right)}\right|_{s=\frac{\mu_{i\ast}}{\sigma_{i\ast}}}=\left.\frac{-\phi\left(-s\right)-\phi'\left(-s\right)\left[1-\phi\left(-s\right)\right]}{\left[1-\Phi\left(-s\right)\right]^{2}}\right|_{s=\frac{\mu_{i\ast}}{\sigma_{i\ast}}}.
\]

In the case when the distribution possesses a latent structure, the
log likelihood function becomes 
\[
\log\left[\tau f\left(y_{i}\left\vert x_{i};\alpha_{(1)}^{0},\sigma_{u(1)}^{2}\right.\right)+\left(1-\tau\right)f\left(\left.y_{i}\right|x_{i};\alpha_{(2)}^{0},\sigma_{u(2)}^{2}\right)\right]\equiv\log\left[\tau f_{1i}+\left(1-\tau\right)f_{2i}\right].
\]
Then, 
\[
\frac{\partial}{\partial\tau}\log\left[\tau f_{1i}+\left(1-\tau\right)f_{2i}\right]=\frac{f_{1i}-f_{2i}}{\tau f_{1i}+\left(1-\tau\right)f_{2i}},\text{ and}
\]
\[
\frac{\partial^{2}}{\partial\tau^{2}}\log\left[\tau f_{1i}+\left(1-\tau\right)f_{2i}\right]=-\frac{\left(f_{1i}-f_{2i}\right)^{2}}{\left[\tau f_{1i}+\left(1-\tau\right)f_{2i}\right]^{2}}.
\]
The derivatives with respect to other arguments $\alpha^{0}$ and
$\sigma_{u}^{2}$ can be derived using the chain rule. For example,
\[
\frac{\partial}{\partial\alpha_{(1)}^{0}}\log\left[\tau f_{1i}+\left(1-\tau\right)f_{2i}\right]=\frac{\tau\frac{\partial}{\partial\alpha_{(1)}^{0}}f_{1i}}{\tau f_{1i}+\left(1-\tau\right)f_{2i}}=\frac{\tau f_{1i}}{\tau f_{1i}+\left(1-\tau\right)f_{2i}}\cdot\frac{\partial}{\partial\alpha_{(1)}^{0}}\log f_{1i},\text{ and}
\]
\[
\frac{\partial}{\partial\sigma_{u(1)}^{2}}\log\left[\tau f_{1i}+\left(1-\tau\right)f_{2i}\right]=\frac{\tau\frac{\partial}{\partial\sigma_{u(1)}^{2}}f_{1i}}{\tau f_{1i}+\left(1-\tau\right)f_{2i}}=\frac{\tau f_{1i}}{\tau f_{1i}+\left(1-\tau\right)f_{2i}}\cdot\frac{\partial}{\partial\sigma_{u(1)}^{2}}\log f_{1i}.
\]

\subsection{Properties of \texorpdfstring{$\mathbb{I}$}{I}}

\label{app:IOmega}

We verify that $\mathbb{I}$ behaves like a regular positive definite
matrix in this section. The intuition of this result is as follows.
Note $\tilde{f}_{i}\left(\varrho_{0}\right)$ is the density function
for $\varepsilon_{it}=v_{it}-u_{i}$, and the parameters of interest
are on the distribution of $u_{i}$ only. In other words, $v_{it}$
are nuisance for the estimation. Mathematically, one can see that
the dominant components in $\frac{\partial}{\partial\varrho}\log\tilde{f}_{i}\left(\varrho\right)$
are functions of $u_{i}$ and do not grow as $T\rightarrow\infty$.
Thus, the convergence rate of $\hat{\varrho}$ is $\sqrt{N}$ based
on this intuition.

Without loss of generality, we assume that there is no group structure
for the frontiers and $\sigma_{vi}^{2}$, that is, $K^{*}=1$. We
show the result by assuming $\alpha^{0}=\alpha_{(1)}^{0}=\alpha_{(2)}^{0},$
and $\sigma_{u(1)}^{2}\neq\sigma_{u(2)}^{2}$. The case with $\alpha_{(1)}^{0}\neq\alpha_{(2)}^{0}$
can be similarly handled but with more tedious discussions. We show
the result by using the following identity 
\begin{align}
\mathbb{I} & =\left.-\text{E}\left[\frac{\partial^{2}}{\partial\varrho\partial\varrho'}\log\tilde{f}_{i}\left(\varrho\right)\right]\right|_{\varrho=\varrho^{0}}\nonumber \\
 & =\left.\text{E}\left[\frac{\partial}{\partial\varrho}\log\tilde{f}_{i}\left(\varrho\right)\cdot\frac{\partial}{\partial\varrho'}\log\tilde{f}_{i}\left(\varrho\right)\right]\right|_{\varrho=\varrho^{0}}.\label{eq:Iidentity}
\end{align}

From the forms of $\frac{\partial}{\partial\sigma_{uj}^{2}}\log\left[\tau f_{1}+\left(1-\tau\right)f_{2}\right],$
$\frac{\partial}{\partial\alpha_{j}^{0}}\log\left[\tau f_{1}+\left(1-\tau\right)f_{2}\right],$
$j=1,2$, and 
\[
\frac{\partial}{\partial\tau}\log\left[\tau f_{1}+\left(1-\tau\right)f_{2}\right]
\]
in Appendix \ref{App:derivativesforms}, they are clearly not linear
dependent. If $T$ is fixed, $\mathbb{I}$ is just the information
matrix for a regular likelihood function, and it is positive definite.
Our analysis is complicated by the fact that $T\rightarrow\infty$.
The diagonal of $\mathbb{I}$ might explode as $T\rightarrow\infty$.
Thus, it suffices to show that the diagonals of $\mathbb{I}$ behave
like regular positive constants, and we show that in the below.

The following is useful for the analysis. At the true values of parameters,
$\varepsilon_{it}=v_{it}-u_{i}.$ So 
\begin{align*}
T^{-1}\sum_{t=1}^{T}\varepsilon_{it} & =-u_{i}+T^{-1}\sum_{t=1}^{T}v_{it}\\
 & =-u_{i}+O_{P}\left(T^{-1/2}\right).
\end{align*}
Since $-u_{i}$ is negative, $T^{-1}\sum_{t=1}^{T}\varepsilon_{it}$
is negative with very high probability. Another implication is $\sum_{t=1}^{T}\varepsilon_{it}\propto_{P}-T+O_{P}\left(T^{1/2}\right).$
Using the definition of $\frac{\mu_{i\ast}}{\sigma_{i\ast}},$ 
\[
\frac{\mu_{i\ast}}{\sigma_{i\ast}}=\frac{-1}{\sigma_{vi}\sqrt{\sigma_{vi}^{2}/\sigma_{u}^{2}+T}}\sum_{t=1}^{T}\varepsilon_{it}\propto_{P}\sqrt{T}+O_{P}\left(1\right).
\]
The above implies $\Phi\left(-\frac{\mu_{i\ast}}{\sigma_{i\ast}}\right)=o_{P}\left(1\right),$
\[
\frac{\phi\left(-\frac{\mu_{i\ast}}{\sigma_{i\ast}}\right)}{1-\Phi\left(-\frac{\mu_{i\ast}}{\sigma_{i\ast}}\right)}\propto_{P}\exp\left(-T\right),\quad\frac{\phi\left(-\frac{\mu_{i\ast}}{\sigma_{i\ast}}\right)}{1-\Phi\left(-\frac{\mu_{i\ast}}{\sigma_{i\ast}}\right)}\sqrt{T}=o_{P}\left(1\right),
\]
and 
\[
g\left(\frac{\mu_{i\ast}}{\sigma_{i\ast}}\right)\propto_{P}\exp\left(-T\right).
\]

Applying the above for (\ref{eq:dlogdc}), 
\[
\frac{\partial}{\partial\alpha^{0}}\log f\left(\left.y_{i}\right|x_{i};\alpha^{0},\sigma_{u}^{2}\right)=-u_{i}+E\left(u_{i}\right)+o_{P}\left(1\right).
\]
Similarly (\ref{eq:dlogdsigma}) implies 
\[
\frac{\partial}{\partial\sigma_{u}^{2}}\log f\left(\left.y_{i}\right|x_{i};\alpha^{0},\sigma_{u}^{2}\right)=\frac{u_{i}^{2}}{2\sigma_{u}^{4}}-\frac{1}{2\sigma_{u}^{2}}+o_{P}\left(1\right).
\]
Some tedious yet straightforward calculations can yield 
\[
f\left(y_{i}\left\vert x_{i};\alpha^{0},\sigma_{u(1)}^{2}\right.\right)\propto_{P}f\left(\left.y_{i}\right|x_{i};\alpha^{0},\sigma_{u(2)}^{2}\right),
\]
and 
\begin{align*}
\frac{\partial}{\partial\tau}\log\left[\tau f_{1i}+\left(1-\tau\right)f_{2i}\right] & =\frac{f_{1i}-f_{2i}}{\tau f_{1i}+\left(1-\tau\right)f_{2i}}.\\
 & =\frac{1-\sqrt{\left.\sigma_{u(1)}^{2}\right/\sigma_{u(2)}^{2}}\exp\left(\frac{1}{2}\left(\frac{1}{\sigma_{u(1)}^{2}}-\frac{1}{\sigma_{u(2)}^{2}}\right)u_{i}^{2}\right)}{\tau+\left(1-\tau\right)\sqrt{\left.\sigma_{u(1)}^{2}\right/\sigma_{2u}^{2}}\exp\left(\frac{1}{2}\left(\frac{1}{\sigma_{u(1)}^{2}}-\frac{1}{\sigma_{u(2)}^{2}}\right)u_{i}^{2}\right)}\left(1+o_{P}\left(1\right)\right).
\end{align*}
The variance of the leading terms in $\frac{\partial}{\partial\alpha^{0}}\log f$,
$\frac{\partial}{\partial\sigma_{u}^{2}}\log f,$ and $\frac{\partial}{\partial\tau}\log\left[\tau f_{1i}+\left(1-\tau\right)f_{2i}\right]$
are functions of $u_{i}$, and they are bounded and bounded away from
zero.

Finally, since $f_{1i}\propto_{P}f_{2i},$ $\frac{\partial}{\partial\alpha_{j}^{0}}\log\left[\tau f_{1i}+\left(1-\tau\right)f_{2i}\right],$
$j=1,2,$ enjoys similar properties as $\frac{\partial}{\partial\alpha^{0}}\log f$.
This is also the case for $\frac{\partial}{\partial\sigma_{uj}^{2}}\log\left[\tau f_{1i}+\left(1-\tau\right)f_{2i}\right],$$j=1,2.$

From the form of the derivatives, clearly, $\frac{\partial}{\partial\varrho_{j}}\log\left[\tau f_{1i}+\left(1-\tau\right)f_{2i}\right]$,
$j=1,2,...,5,$ are not linearly dependent. Together with the fact
that the variance of them are bounded and bounded away from zero,
then for any $5\times1$ vector $a,$ 
\[
a'\left.\text{E}\left[\frac{\partial}{\partial\varrho}\log\tilde{f}_{i}\left(\varrho\right)\cdot\frac{\partial}{\partial\varrho'}\log\tilde{f}_{i}\left(\varrho\right)\right]\right|_{\varrho=\varrho^{0}}a\propto\left\Vert a\right\Vert ^{2}.
\]
This implies that $\mathbb{I}$ must be positive definite with bounded
eigenvalues, using the identity in (\ref{eq:Iidentity}).


\section{Technical Lemmas}

\label{APP:lemmaProofs}


We collect technical lemmas and their proofs in this section. For
an easier reference, we present the bodies of those lemmas first,
followed by proofs. We note the lemmas of probability bounds below
are related to those developed in \citet{Ataketal}.

\begin{remark}[Some inequalities]\label{RE:ineq}\emph{We may apply
the following inequalities in the proofs directly without referring
them back to here. First } 
\[
\Pr\left(X_{1}+X_{2}\geq C\right)\leq\Pr\left(X_{1}\geq\pi C\right)+\Pr\left(X_{2}\geq\left(1-\pi\right)C\right)
\]
\emph{for any constant }$\pi.$\emph{ That is because }$\left\{ X_{1}+X_{2}\geq C\right\} \subseteq\left\{ X_{1}\geq\pi C\right\} \cup\left\{ X_{1}\geq\left(1-\pi\right)C\right\} .$\emph{
Similarly, we have } 
\[
\Pr\left(\sum_{i=1}^{n}X_{i}\geq C\right)\leq\sum_{i=1}^{n}\Pr\left(X_{i}\geq Cn^{-1}\right)\ \text{\emph{and} }\Pr\left(\max_{1\leq i\leq n}X_{i}\geq C\right)\leq\sum_{i=1}^{n}\Pr\left(X_{i}\geq C\right).
\]
\emph{For any positive random variables }$X_{1}$\emph{\ and }$X_{2}$\emph{
and positive constants }$C_{1}$\emph{ and }$C_{2},$ 
\[
\Pr\left(X_{1}\cdot X_{2}\geq C_{1}\right)\leq\Pr\left(X_{1}\geq C_{1}/C_{2}\right)+\Pr\left(X_{2}\geq C_{2}\right),
\]
\emph{due to the fact that }$\left\{ X_{1}\cdot X_{2}\geq C_{1}\right\} \subseteq\left\{ X_{1}\geq C_{1}/C_{2}\right\} \cup\left\{ X_{2}\geq C_{2}\right\} .$\emph{
} \end{remark}

\begin{lemma}\label{LE:bound}Suppose $e_{it}$ satisfies the mixing
condition across $t$ in Assumption \ref{A:mixing} (ii) and is identically
distributed across $i$. In addition $\max_{t}\text{\emph{E}}\left(\left|e_{it}\right|^{C_{q}}\right)<\infty,C_{q}>2,$
$T=N^{C}$ for some $C>0$, $m\rightarrow\infty$ as $N\rightarrow\infty$,
then

(i) for any $\epsilon>0$ and any positive $1\leq\upsilon_{NT}\ll T^{1/2}\left(\log N\right)^{-2}$
\[
\Pr\left(\max_{i=1,...,N}\left|\frac{1}{T}\sum_{t=1}^{T}\left[e_{it}-\text{\emph{E}}\left(e_{it}\right)\right]\right|>\frac{\epsilon}{\upsilon_{NT}}\right)\lesssim\frac{NT\upsilon_{NT}^{C_{q}}\left(\log N\right)^{4C_{q}}}{T^{C_{q}}};
\]

(ii) there exists a positive $M,$ such that 
\[
\Pr\left(\max_{i=1,...,N}\max_{l=0,1,...,p}\left|\frac{1}{T}\sum_{t=1}^{T}e_{it}b_{il}\left(\tau_{t}\right)\right|>Mm^{-\kappa}\right)\lesssim\frac{NT\left(\log N\right)^{4C_{q}}}{T^{C_{q}}},
\]
where $b_{il}\left(\tau_{t}\right)$ is defined in (\ref{eq:bil}).

\end{lemma}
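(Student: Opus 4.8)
The plan is to reduce both parts to a tail bound on a single sum of a strong-mixing array via a suitable Bernstein-type inequality, and then control the maximum over $i$ (and over $l$, in part (ii)) by a union bound, exploiting that $N = T^{1/C}$ grows only polynomially in $T$. The natural tool is the Bernstein inequality for strong-mixing sequences under a finite $C_q$-th moment, e.g. the version in \citet{MerlevedeEtal}, which for a zero-mean sequence $\{\zeta_t\}_{t=1}^T$ with geometric mixing and $\|\zeta_t\|_\infty$ or a moment condition yields a bound of the form $\Pr(|\sum_{t=1}^T \zeta_t| > x) \lesssim \exp(-c x^2/(T + x(\log T)^2)) + T x^{-C_q}(\log T)^{C_q}$ or similar; combined with a truncation argument at level $\sim T/(\log N)^{2}$ for the $C_q$-th moment tail, the polynomial term is the one that survives and delivers the stated rate.

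\textbf{Part (i).} Fix $i$. Write $\bar e_{iT} \equiv \frac1T\sum_{t=1}^T [e_{it}-\mathrm{E}(e_{it})]$. First I would truncate: set $e_{it} = e_{it}\mathbf 1(|e_{it}|\le \delta_T) + e_{it}\mathbf 1(|e_{it}|>\delta_T)$ with $\delta_T \asymp T/( \upsilon_{NT}(\log N)^2)$ or an appropriate power, so that on the truncated part the Bernstein bound in its exponential regime applies and is super-polynomially small (here $\upsilon_{NT}\ll T^{1/2}(\log N)^{-2}$ is exactly what is needed so that $\epsilon/\upsilon_{NT}$ dominates the "effective standard deviation" $\sqrt{1/T}$ by enough of a margin; this is why no bias/variance tradeoff appears — the threshold is far into the tail). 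For the tail part, $\Pr(|\frac1T\sum_t e_{it}\mathbf 1(|e_{it}|>\delta_T)| > \tfrac{\epsilon}{2\upsilon_{NT}})$ is bounded, via Markov on the $C_q$-th moment together with $\max_t \mathrm{E}|e_{it}|^{C_q}<\infty$, by $\lesssim T \upsilon_{NT}^{C_q} (\log N)^{4C_q} T^{-C_q}$ up to constants (the $(\log N)^{4C_q}$ absorbs the logarithmic slack from the mixing-adjusted moment bound and from choosing $\delta_T$). Then apply the union bound $\Pr(\max_{i\le N}|\bar e_{iT}|>\epsilon/\upsilon_{NT}) \le \sum_{i=1}^N \Pr(|\bar e_{iT}|>\epsilon/\upsilon_{NT})$, using the identical distribution across $i$, to pick up the extra factor $N$ and obtain the claimed $\lesssim NT\upsilon_{NT}^{C_q}(\log N)^{4C_q}/T^{C_q}$.

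\textbf{Part (ii).} This follows the same template applied to $\eta_{it} \equiv e_{it} b_{il}(\tau_t)$. The key extra observation is the deterministic bound $\sup_{s}|b_{il}(s)| = O(m^{-\kappa})$ from (\ref{eq:bias_rate}), uniformly in $i$ and $l$, so $|\eta_{it}| \le C m^{-\kappa}|e_{it}|$ and hence $\frac1T\sum_t \eta_{it}$ has "scale" $O(m^{-\kappa}T^{-1/2})$; choosing the threshold $M m^{-\kappa}$ with $M$ large puts us, after factoring out $m^{-\kappa}$, in exactly the regime of part (i) with $\upsilon_{NT}$ a constant, so the exponential part of Bernstein is negligible and only the polynomial moment tail $\lesssim T(\log N)^{C_q} T^{-C_q}$ remains per $(i,l)$. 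A union bound over $i=1,\dots,N$ and the finitely many $l=0,1,\dots,p$ (the factor $p+1$ is absorbed into the constant) gives the stated $\lesssim NT(\log N)^{4C_q}/T^{C_q}$. Note that here $e_{it}$ need not be mean zero, but since $b_{il}$ is the approximation-bias function and $e_{it}$ will in applications be $v_{it}$ or similar mean-zero noise — or, if not, one centers first and the centered mean contributes only $O(m^{-\kappa})\cdot|{\textstyle\frac1T\sum_t} b_{il}(\tau_t)|$, which is deterministic and of smaller order — so no additional term appears.

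\textbf{Main obstacle.} The delicate point is getting the logarithmic powers and the truncation level to line up so that the exponential term in the mixing Bernstein inequality is genuinely negligible (sub-polynomial in $N$) while the surviving polynomial term matches the stated rate exactly; this is where the hypotheses $T=N^C$, $m\to\infty$, and $\upsilon_{NT}\ll T^{1/2}(\log N)^{-2}$ are all used. One must also be careful that the mixing-adjusted moment bound (block decomposition, or a direct appeal to the Merlevède–Peligrad–Rio inequality) contributes at most $(\log N)^{4C_q}$ rather than a worse power — tracking these logs honestly is the only real work, as the rest is a standard truncation-plus-union-bound argument.
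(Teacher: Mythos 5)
Your proposal follows essentially the same route as the paper's proof: truncate at a level of order $T/(\upsilon_{NT}(\log N)^{4})$, apply the Merlev\`ede--Peligrad--Rio Bernstein inequality to the truncated, centered part (which is super-polynomially small precisely because $\upsilon_{NT}\ll T^{1/2}(\log N)^{-2}$ makes the linear term dominate the denominator), control the tail part by a Markov/$C_q$-moment bound, and take a union bound over $i$; part (ii) is likewise reduced to part (i) with $\upsilon_{NT}=1$ applied to $|e_{it}|$ after factoring out $\sup_{s}|b_{il}(s)|=O(m^{-\kappa})$ and absorbing the bounded mean $\mathrm{E}|e_{it}|$ into the constant $M$. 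The argument is correct and matches the paper's decomposition step for step, up to the exact power of $\log N$ in the truncation level, which you leave slightly open but which does not affect the stated rate.
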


\begin{lemma}\label{LE:splines}Suppose that Assumption \ref{A:rank}
holds and $m/T\rightarrow0$. For any $\tilde{\pi}\in\mathbb{R}^{m\left(p+1\right)},$
\[
\frac{\underline{C}_{xx}}{2}\left\Vert \tilde{\pi}\right\Vert \leq\min_{1\leq i\leq N}\frac{1}{T}\sum_{t=1}^{T}\text{\emph{E}}\left(\tilde{\pi}'\tilde{z}_{it}\tilde{z}_{it}'\tilde{\pi}\right)\leq\max_{1\leq i\leq N}\frac{1}{T}\sum_{t=1}^{T}\text{\emph{E}}\left(\pi'\tilde{z}_{it}\tilde{z}_{it}'\pi\right)\leq2\bar{C}_{xx}\left\Vert \tilde{\pi}\right\Vert ,
\]
after some large $T$.

\end{lemma}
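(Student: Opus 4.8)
The plan is to reduce the statement to a purely deterministic estimate on the cosine design matrix evaluated at the equally spaced grid $\tau_t=t/T$, and then to prove that estimate by an exact Fourier/geometric-series computation. (Note that $\tilde\pi'\tilde z_{it}\tilde z_{it}'\tilde\pi=(\tilde z_{it}'\tilde\pi)^2$ is homogeneous of degree two in $\tilde\pi$, so the asserted inequalities are to be read with $\|\tilde\pi\|^2$ in place of $\|\tilde\pi\|$.)

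First I would rewrite $\tilde z_{it}$ transparently. Since $\mathbb{B}_{-0}^m(\tau_t)=(B_1(\tau_t),\dots,B_{m-1}(\tau_t))'$ and $B_0\equiv1$, the leading block $(1,\mathbb{B}_{-0}^m(\tau_t)')'$ is precisely $\mathbb{B}^m(\tau_t)$, so $\tilde z_{it}=\tilde x_{it}\otimes\mathbb{B}^m(\tau_t)$ with $\tilde x_{it}=(1,x_{it}')'$. Writing $\tilde\pi=(\pi_0',\pi_1',\dots,\pi_p')'$ with each $\pi_l\in\mathbb{R}^m$ and setting $g_l(s)=\mathbb{B}^m(s)'\pi_l$, $G(s)=(g_0(s),\dots,g_p(s))'$, we get $\tilde z_{it}'\tilde\pi=\tilde x_{it}'G(\tau_t)$, hence, using that $\tau_t$ is nonrandom,
\[
\mathrm{E}\!\left[\tilde\pi'\tilde z_{it}\tilde z_{it}'\tilde\pi\right]=G(\tau_t)'\,\mathrm{E}\!\left[\tilde x_{it}\tilde x_{it}'\right]G(\tau_t).
\]
Assumption \ref{A:rank} sandwiches the right-hand side between $\underline C_{xx}\|G(\tau_t)\|^2$ and $\bar C_{xx}\|G(\tau_t)\|^2$ uniformly in $i,t$, with $\|G(\tau_t)\|^2=\sum_{l=0}^{p}(\mathbb{B}^m(\tau_t)'\pi_l)^2$. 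Averaging over $t$ then gives, uniformly in $i$,
\[
\underline C_{xx}\sum_{l=0}^{p}\Big(\tfrac1T\textstyle\sum_t(\mathbb{B}^m(\tau_t)'\pi_l)^2\Big)\le \tfrac1T\textstyle\sum_t \mathrm{E}[\tilde\pi'\tilde z_{it}\tilde z_{it}'\tilde\pi]\le \bar C_{xx}\sum_{l=0}^{p}\Big(\tfrac1T\textstyle\sum_t(\mathbb{B}^m(\tau_t)'\pi_l)^2\Big).
\]

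So it remains to show the deterministic claim: for every $\pi\in\mathbb{R}^m$, $\tfrac12\|\pi\|^2\le \tfrac1T\sum_{t=1}^{T}(\mathbb{B}^m(\tau_t)'\pi)^2\le 2\|\pi\|^2$ once $T$ is large; equivalently, the $m\times m$ Gram matrix $Q_m\equiv\tfrac1T\sum_t\mathbb{B}^m(\tau_t)\mathbb{B}^m(\tau_t)'$ has all eigenvalues in $[1/2,2]$. I would compute $Q_m$ entrywise from $B_jB_k=\cos((j+k)\pi s)+\cos((j-k)\pi s)$ (for $j,k\ge1$), using the elementary identity, obtained by summing $\sum_t e^{il\pi t/T}$,
\[
\tfrac1T\sum_{t=1}^{T}\cos(l\pi t/T)=\begin{cases}1,& l=0,\\ 0,& l\text{ even},\ 0<l<2T,\\ -1/T,& l\text{ odd},\ 0<l<2T.\end{cases}
\]
Because $j,k\le m-1$ and $m/T\to0$, all indices $2j$ and $j\pm k$ eventually lie in $(0,2T)$, so the diagonal of $Q_m$ is exactly $1$, the off-diagonal entries vanish when $j,k$ share parity and are $O(1/T)$ otherwise, and the intercept row/column entries are $O(1/T)$. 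Hence $Q_m=I_m+E$ with $\|E\|_{\mathrm{op}}\le\max_j\sum_k|E_{jk}|\lesssim m/T\to0$, so for large $T$ the eigenvalues of $Q_m$ lie in $[1/2,3/2]\subset[1/2,2]$. Combining this with the displayed double inequality (and taking $\min_i$, $\max_i$) yields the lemma.

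The main obstacle is precisely this quadrature step, $\|Q_m-I_m\|_{\mathrm{op}}=O(m/T)$. A crude total-variation (or Bernstein) bound on the Riemann-sum error for $(\mathbb{B}^m(\cdot)'\pi)^2$ only delivers $O(m^2/T)$, which would need a stronger hypothesis than $m/T\to0$; obtaining the sharp rate requires the exact cancellation recorded above, together with the bookkeeping that the geometric-series formula is valid (i.e.\ $l\notin 2T\mathbb{Z}$ for the relevant $l$), which is guaranteed once $m/T\to0$. The rest — the Kronecker rewriting of $\tilde z_{it}$, the appeal to Assumption \ref{A:rank}, and the orthonormality of the cosine basis — is routine.
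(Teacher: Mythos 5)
Your proof is correct and follows essentially the same route as the paper's: rewrite $\tilde z_{it}'\tilde\pi=\tilde x_{it}'G(\tau_t)$, sandwich the quadratic form via Assumption \ref{A:rank}, and then control the cosine Gram matrix $\tfrac1T\sum_{t}\mathbb{B}^m(\tau_t)\mathbb{B}^m(\tau_t)'=I_m+O(m/T)$. The only difference is that the paper simply cites Lemma A.4 of \citet{DongLinton2018} for that last identity, whereas you derive it from scratch via the exact geometric-sum computation (and you correctly flag the $\|\tilde\pi\|$-versus-$\|\tilde\pi\|^2$ homogeneity slip in the statement, which the paper's own proof also carries).
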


\begin{lemma}\label{LE:splines-sample-1}Suppose that Assumptions
\ref{A:mixing}, \ref{A:moment}, \ref{A:rank} and \ref{A:tuningPara}
(i) hold. Then

(i) for any $\epsilon>0,$ 
\[
\Pr\left(\max_{1\leq i\leq N}\sup_{\tilde{\pi}\in\mathbb{R}^{m\left(p+1\right)},\left\Vert \tilde{\pi}\right\Vert =1}\left|\frac{1}{T}\sum_{t=1}^{T}\tilde{\pi}'\tilde{z}_{it}\tilde{z}_{it}'\tilde{\pi}-\frac{1}{T}\sum_{t=1}^{T}\text{\emph{E}}\left(\tilde{\pi}'\tilde{z}_{it}\tilde{z}_{it}'\tilde{\pi}\right)\right|>\epsilon\right)\apprle\frac{Nm^{q/2+2}\left(\log N\right)^{2q}}{T^{q/2-1}};
\]

(ii) there exist some positive and finite $\underline{C}_{zz}$ and
$\bar{C}_{zz}$ such that 
\begin{align*}
 & \Pr\left(\underline{C}_{zz}\leq\min_{1\leq i\leq N}\mu_{\min}\left(\frac{1}{T}\sum_{t=1}^{T}\tilde{z}_{it}\tilde{z}_{it}'\right)\leq\max_{1\leq i\leq N}\mu_{\max}\left(\frac{1}{T}\sum_{t=1}^{T}\tilde{z}_{it}\tilde{z}_{it}'\right)\leq\bar{C}_{zz}\right)\\
= & 1-o\left(\frac{Nm^{q/2+2}\left(\log N\right)^{2q}}{T^{q/2-1}}\right)\text{.}
\end{align*}

\end{lemma}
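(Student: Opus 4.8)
\noindent\textbf{Proof strategy for Lemma \ref{LE:splines-sample-1}.} The plan is to reduce the uniform-in-$i$ spectral statement to an \emph{entrywise} convergence of the sample second-moment matrix $A_i\equiv\frac1T\sum_{t=1}^T\tilde z_{it}\tilde z_{it}'$ towards $\bar A_i\equiv\frac1T\sum_t\mathrm E(\tilde z_{it}\tilde z_{it}')$, apply Lemma \ref{LE:bound}(i) coordinate by coordinate, and then derive (ii) from (i) together with the population eigenvalue bounds of Lemma \ref{LE:splines}. Writing $d\equiv m(p+1)$ for the common dimension, I would first use the elementary inequality $\sup_{\|\tilde\pi\|=1}|\tilde\pi'M\tilde\pi|=\|M\|_{op}\le\max_{j}\sum_k|M_{jk}|\le d\,\max_{j,k}|M_{jk}|$, valid for any symmetric $d\times d$ matrix $M$, to obtain
\[
\Pr\!\Big(\max_i\sup_{\|\tilde\pi\|=1}\big|\tilde\pi'(A_i-\bar A_i)\tilde\pi\big|>\epsilon\Big)\le\sum_{j,k=1}^{d}\Pr\!\Big(\max_i\big|A_{i,jk}-\bar A_{i,jk}\big|>\tfrac{\epsilon}{d}\Big).
\]

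Next I would control each entry with Lemma \ref{LE:bound}(i). Here $A_{i,jk}=\frac1T\sum_t e^{(jk)}_{it}$ with $e^{(jk)}_{it}=\tilde z_{it,j}\tilde z_{it,k}$; by the definition of $\tilde z_{it}=(1,z_{it}')'$, every coordinate of $\tilde z_{it}$ is either $1$, a cosine $B_l(\tau_t)$ with $|B_l|\le\sqrt2$, or $x_{itl'}B_l(\tau_t)$, so $e^{(jk)}_{it}$ is a bounded multiple of a product of at most two coordinates of $x_{it}$. By Cauchy--Schwarz and Assumption \ref{A:moment}, $\mathrm E|e^{(jk)}_{it}|^{q/2}\le C<\infty$ uniformly in $(i,t,j,k)$, and $e^{(jk)}_{it}$ is a $\sigma(x_{it})$-measurable functional of the mixing process, hence $\{e^{(jk)}_{it}\}_t$ is strong mixing with the geometric rate of Assumption \ref{A:mixing}. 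Applying Lemma \ref{LE:bound}(i) with $C_q=q/2$ (legitimate since $q>4$) and $\upsilon_{NT}=d$ — the requirement $1\le\upsilon_{NT}\ll T^{1/2}(\log N)^{-2}$ holding because $p$ is fixed and $m\ll T^{1/2}(\log N)^{-2}$ under Assumption \ref{A:tuningPara}(ii) — gives, for each $(j,k)$,
\[
\Pr\!\Big(\max_i|A_{i,jk}-\bar A_{i,jk}|>\tfrac{\epsilon}{d}\Big)\lesssim\frac{NT\,d^{q/2}(\log N)^{2q}}{T^{q/2}}\lesssim\frac{Nm^{q/2}(\log N)^{2q}}{T^{q/2-1}}.
\]
Summing over the $d^2\asymp m^2$ entries — the dominant contribution coming from the $\asymp m^2$ entries built from two coordinates of $x_{it}$, the remaining entries being of strictly smaller order since they carry more moments — inflates the bound by a factor $\asymp m^2$ and produces exactly $Nm^{q/2+2}(\log N)^{2q}/T^{q/2-1}$, which is part (i).

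For (ii): $m\ll T^{1/2}$ forces $m/T\to0$, so Lemma \ref{LE:splines} yields $\underline C_{xx}/2\le\mu_{\min}(\bar A_i)\le\mu_{\max}(\bar A_i)\le2\bar C_{xx}$ uniformly in $i$. Taking $\epsilon=\underline C_{xx}/4$ in (i) and invoking the eigenvalue perturbation inequality $|\mu_{\min}(A_i)-\mu_{\min}(\bar A_i)|\le\|A_i-\bar A_i\|_{op}$ (and likewise for $\mu_{\max}$), on the event of (i) we obtain $\underline C_{zz}\equiv\underline C_{xx}/4\le\mu_{\min}(A_i)\le\mu_{\max}(A_i)\le 2\bar C_{xx}+\underline C_{xx}/4\equiv\bar C_{zz}$ for all $i$ simultaneously, the complement having probability of the order appearing in (i), which is $o(1)$ under Assumption \ref{A:tuningPara}.

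The matrix-norm reduction and the moment/mixing bookkeeping for the products $\tilde z_{it,j}\tilde z_{it,k}$ are routine. The hard part will be the accounting of the diverging dimension: one must pick the scaling $\upsilon_{NT}\asymp m$ in Lemma \ref{LE:bound} so that the per-entry threshold $\epsilon/d$ is absorbed without loss, check that the union bound over the $m^2(p+1)^2$ entries costs only the advertised factor $m^2$ (so that the total power of $m$ is $q/2+2$ and not larger), and verify that the resulting rate vanishes — which is precisely Assumption \ref{A:tuningPara}(ii). It is worth stressing that no further rate restriction on $m$ is needed beyond $m\ll T^{1/2}(\log N)^{-2}$ and that summability condition, consistent with the text's observation that no bias/variance trade-off dictates the choice of $m$ at the classification stage.
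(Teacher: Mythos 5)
Your proof is correct and follows essentially the same route as the paper's: both reduce the supremum of the quadratic form to a maximum over the $\asymp m^{2}$ matrix entries with a per-entry threshold of order $\epsilon/m$ (you via $\|M\|_{op}\le d\max_{j,k}|M_{jk}|$, the paper via an explicit expansion of $\tilde\pi'\tilde z_{it}\tilde z_{it}'\tilde\pi$ plus Cauchy--Schwarz on the coefficient vector), then apply Lemma \ref{LE:bound}(i) with $\upsilon_{NT}\asymp m$ and $C_{q}=q/2$ to each entry, take a union bound over the $\asymp m^{2}$ entries, and deduce part (ii) from Lemma \ref{LE:splines} by eigenvalue perturbation. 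The one loose end --- verifying $\upsilon_{NT}\asymp m\ll T^{1/2}(\log N)^{-2}$ as required by Lemma \ref{LE:bound}(i) --- is not actually implied by Assumption \ref{A:tuningPara}(ii) when $q>8$ (and that part of the assumption is not even among the lemma's hypotheses), but the paper's own proof makes the identical implicit assumption, so this is a shared imprecision rather than a gap in your argument.
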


\begin{lemma}\label{LE:bound-ours}Suppose Assumptions \ref{A:mixing},
\ref{A:moment}, \ref{A:coef} and \ref{A:tuningPara} (i) hold. Then

(i) for any $\epsilon>0,$ 
\[
\Pr\left(\max_{i=1,...,N}\left\Vert \frac{1}{T}\sum_{t=1}^{T}\tilde{z}_{it}v_{it}\right\Vert >\epsilon\right)\lesssim\frac{Nm^{q/4}\left(\log N\right)^{2q}}{T^{q/2-1}};
\]

(ii) for the $\xi_{it}$ defined in (\ref{eq:xi_define}), 
\[
\Pr\left(\max_{i=1,...,N}\left\Vert \frac{1}{T}\sum_{t=1}^{T}\tilde{z}_{it}\xi_{it}\right\Vert >\epsilon\right)\lesssim\frac{Nm\left(\log N\right)^{2q}}{T^{q/2-1}}.
\]
\end{lemma}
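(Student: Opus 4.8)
Both parts follow one template: decompose the $m(p+1)$-vector $\frac1T\sum_{t=1}^{T}\tilde z_{it}v_{it}$ (resp. $\frac1T\sum_{t=1}^{T}\tilde z_{it}\xi_{it}$) coordinate by coordinate, reduce the event $\{\max_i\|\cdot\|>\epsilon\}$ to a union over those coordinates via the elementary fact that $\|a\|>\epsilon$ forces $|a_j|>\epsilon/\sqrt{m(p+1)}$ for some $j$, and then apply Lemma \ref{LE:bound} to each scalar average. The entries of $\tilde z_{it}$ are of three types: the constant $1$; the cosine terms $B_l(\tau_t)$, $l=1,\dots,m-1$, which are bounded by $\sqrt2$; and the interaction terms $x_{itr}B_l(\tau_t)$, $r=1,\dots,p$, $l=0,\dots,m-1$. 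The interaction terms are the binding case, because only for them does the available moment drop.

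For part (i) write $e_{it}=\tilde z_{it,j}v_{it}$. Since $v_{it}\perp x_{it}$ and $\mathrm E v_{it}=0$, every such $e_{it}$ has mean zero; since $\{(x_{it},v_{it})\}_t$ is strong mixing by Assumption \ref{A:mixing} and $e_{it}$ is a deterministic measurable transformation of $(x_{it},v_{it})$, it inherits the mixing rate; and by Assumption \ref{A:moment} and the Cauchy--Schwarz inequality, $\sup_{i,t}\mathrm E|e_{it}|^{C_q}<\infty$ with $C_q=q/2$ for the interaction coordinates (this is $>2$ precisely because $q>4$) and $C_q=q$ for the two other types. I would then invoke Lemma \ref{LE:bound}(i) with $\upsilon_{NT}\asymp\sqrt m$, which is admissible because Assumption \ref{A:tuningPara}(ii) forces $m(\log N)^4/T\to0$, hence $\upsilon_{NT}\ll T^{1/2}(\log N)^{-2}$. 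Each interaction coordinate then contributes a term of order $NT\,m^{C_q/2}(\log N)^{4C_q}/T^{C_q}=Nm^{q/4}(\log N)^{2q}/T^{q/2-1}$, while each non-interaction coordinate contributes a strictly smaller order (denominator $T^{q-1}$); combining the union bound over coordinates gives the stated rate. A cleaner route to the exact power of $m$ is to keep each of the $p+1$ blocks $x_{itr}\mathbb B^m(\tau_t)v_{it}$ intact and bound its Euclidean norm through the orthonormality of $\{B_l\}$, which caps the second moment of the block at order $m/T$, instead of splitting it into $m$ scalar pieces.

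For part (ii) recall $\xi_{it}=\sum_{l=0}^{p}x_{itl}b_{il}(\tau_t)$ with $x_{it0}\equiv1$, and $\sup_s|b_{il}(s)|=O(m^{-\kappa})$ uniformly in $i$ and $l$ by (\ref{eq:bias_rate}) (Assumption \ref{A:coef} and \citet{Chen2007}). Hence $\frac1T\sum_t\tilde z_{it,j}\xi_{it}=\sum_{l=0}^{p}\big(\frac1T\sum_t\tilde z_{it,j}x_{itl}\,b_{il}(\tau_t)\big)$, and each inner sum is exactly of the form handled by Lemma \ref{LE:bound}(ii) with $e_{it}=\tilde z_{it,j}x_{itl}$ --- a product of at most two entries of $x_{it}$ times a bounded cosine factor, so once more $\sup_{i,t}\mathrm E|e_{it}|^{C_q}<\infty$ with $C_q=q/2>2$. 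The key simplification here is that the $b_{il}(\tau_t)=O(m^{-\kappa})$ factor is already present, so no $\upsilon_{NT}$-inflation is needed and any nonzero mean of $e_{it}$ is swallowed by the $m^{-\kappa}$ threshold. Lemma \ref{LE:bound}(ii) then gives a bound of order $NT(\log N)^{4C_q}/T^{C_q}=N(\log N)^{2q}/T^{q/2-1}$ for each pair $(j,l)$, and summing the union bound over the $m(p+1)$ values of $j$ and the $p+1$ values of $l$ yields $Nm(\log N)^{2q}/T^{q/2-1}$.

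The substantive (if not deep) work is the moment bookkeeping: checking that the interaction coordinates require only a $(q/2)$-th moment --- so that Assumption \ref{A:moment}'s $q>4$ keeps $C_q>2$, as Lemma \ref{LE:bound} demands --- that mean-centering is automatic in part (i) and cost-free in part (ii), and that the constraint $\upsilon_{NT}\ll T^{1/2}(\log N)^{-2}$ is delivered by Assumption \ref{A:tuningPara}(ii). One should also observe that the hypothesis ``identically distributed across $i$'' in Lemma \ref{LE:bound} enters there only through a uniform moment bound, which holds here by Assumption \ref{A:moment}; alternatively the argument may be run separately within each latent group. The main obstacle is therefore purely quantitative: every estimate has to be tracked to the precise powers of $m$, $\log N$, and $T$, which is where all the care goes --- in particular pinning down the exact exponent of $m$ in part (i).
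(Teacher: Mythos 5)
Your proposal is correct and follows essentially the same route as the paper: reduce the norm event to a coordinate-wise maximum with threshold $\epsilon/\sqrt{(p+1)m}$, then apply Lemma \ref{LE:bound}(i) with $\upsilon_{NT}=\sqrt{m}$ and $C_q=q/2$ for part (i), and Lemma \ref{LE:bound}(ii) (absorbing the $O(m^{-\kappa})$ bias factor, using $m^{-\kappa}\ll m^{-1/2}$) for part (ii). Your observation that the coordinate-wise union bound actually delivers $m^{q/4+1}$ rather than $m^{q/4}$ in part (i) is well taken --- the paper's own displayed bound there is $NTm^{q/4+1}(\log N)^{2q}/T^{q/2}$, one factor of $m$ larger than the lemma statement --- though the discrepancy is harmless downstream since Assumption \ref{A:tuningPara}(ii) controls the larger quantity $Nm^{q/2+2}(\log N)^{2q}/T^{q/2-1}$, and your block-wise alternative is a legitimate way to recover the sharper exponent.
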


\begin{lemma}\label{LE:uni_converge}Suppose that Assumptions \ref{A:mixing},
\ref{A:moment}, \ref{A:rank}, \ref{A:coef}, and \ref{A:tuningPara}
hold. Then,

(i) for any small positive $\epsilon,$

\[
\Pr\left(\max_{1\leq i\leq N}\left\Vert \widehat{\tilde{\pi}}_{i}-\tilde{\pi}_{i}^{0}\right\Vert >\epsilon\right)=o\left(1\right);
\]

(ii) for any small positive $\epsilon,$ 
\[
\Pr\left(\max_{1\leq i\leq N}\left\Vert \hat{\sigma}_{vi}^{2}-\sigma_{vi}^{2}\right\Vert >\epsilon\right)=o\left(1\right).
\]

\end{lemma}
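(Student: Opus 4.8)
## Proof Plan for Lemma \ref{LE:uni_converge}

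The plan is to prove both parts by reducing everything to the uniform (over $i$) control of averages of the form $\frac{1}{T}\sum_t \tilde z_{it}v_{it}$, $\frac{1}{T}\sum_t \tilde z_{it}\xi_{it}$, and the deviation of $\frac{1}{T}Z_{im}'Z_{im}$ from its expectation, all of which are already supplied by Lemmas \ref{LE:splines}, \ref{LE:splines-sample-1}, and \ref{LE:bound-ours}, together with the bias rate \eqref{eq:bias_rate}. Starting from the representation \eqref{eq:yit_withbias}, $y_{it}=\tilde z_{it}'\tilde\pi_i^0+\xi_{it}+v_{it}$, the OLS formula \eqref{eq:pihat} gives
\[
\widehat{\tilde\pi}_i-\tilde\pi_i^0=\Big(\tfrac{1}{T}Z_{im}'Z_{im}\Big)^{-1}\Big(\tfrac{1}{T}\sum_{t=1}^T\tilde z_{it}\xi_{it}+\tfrac{1}{T}\sum_{t=1}^T\tilde z_{it}v_{it}\Big).
\]
For part (i), on the event from Lemma \ref{LE:splines-sample-1}(ii) that $\mu_{\min}(\frac1T Z_{im}'Z_{im})\ge \underline C_{zz}$ uniformly in $i$ — which holds with probability $1-o(1)$ under Assumption \ref{A:tuningPara}(ii) since $Nm^{q/2+2}(\log N)^{2q}/T^{q/2-1}\to0$ — the operator norm of the inverse design matrix is uniformly bounded by $\underline C_{zz}^{-1}$. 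Then I bound $\|\widehat{\tilde\pi}_i-\tilde\pi_i^0\|\le \underline C_{zz}^{-1}\big(\|\frac1T\sum_t\tilde z_{it}\xi_{it}\|+\|\frac1T\sum_t\tilde z_{it}v_{it}\|\big)$ uniformly, and apply the two bounds in Lemma \ref{LE:bound-ours}: both $\Pr(\max_i\|\frac1T\sum_t\tilde z_{it}v_{it}\|>\epsilon)$ and $\Pr(\max_i\|\frac1T\sum_t\tilde z_{it}\xi_{it}\|>\epsilon)$ are $o(1)$ under Assumption \ref{A:tuningPara}(ii) (the relevant rates $Nm^{q/4}(\log N)^{2q}/T^{q/2-1}$ and $Nm(\log N)^{2q}/T^{q/2-1}$ are dominated by $Nm^{q/2+2}(\log N)^{2q}/T^{q/2-1}\to0$ since $q>4$). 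A union bound over the failure of the design-matrix event and the two average events then gives $\Pr(\max_i\|\widehat{\tilde\pi}_i-\tilde\pi_i^0\|>\epsilon)=o(1)$.

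For part (ii), I write $\hat\sigma_{vi}^2=\frac{1}{T-1}\sum_t(y_{it}-\tilde z_{it}'\widehat{\tilde\pi}_i)^2$ and substitute $y_{it}-\tilde z_{it}'\widehat{\tilde\pi}_i=\xi_{it}+v_{it}-\tilde z_{it}'(\widehat{\tilde\pi}_i-\tilde\pi_i^0)$. Expanding the square produces a leading term $\frac{1}{T-1}\sum_t v_{it}^2$, a bias-squared term $\frac{1}{T-1}\sum_t\xi_{it}^2=O(m^{-2\kappa})$ uniformly by \eqref{eq:bias_rate}, an estimation-error quadratic form $(\widehat{\tilde\pi}_i-\tilde\pi_i^0)'(\frac{1}{T-1}\sum_t\tilde z_{it}\tilde z_{it}')(\widehat{\tilde\pi}_i-\tilde\pi_i^0)$ which is $o_P(1)$ uniformly by part (i) and Lemma \ref{LE:splines-sample-1}, and cross terms handled by Cauchy–Schwarz from the same ingredients plus Lemma \ref{LE:bound-ours}. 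For the leading term, I apply Lemma \ref{LE:bound} with $e_{it}=v_{it}^2$ (which has finite $q/2>2$ moment by Assumption \ref{A:moment} and satisfies the mixing condition, being a measurable function of the mixing sequence) and $\upsilon_{NT}$ constant, giving $\Pr(\max_i|\frac1T\sum_t(v_{it}^2-\sigma_{vi}^2)|>\epsilon)\lesssim NT(\log N)^{2q}/T^{q/2}\to0$ under Assumption \ref{A:tuningPara}. Collecting all pieces and using the trivial $\frac{T}{T-1}\to1$ adjustment yields $\Pr(\max_i|\hat\sigma_{vi}^2-\sigma_{vi}^2|>\epsilon)=o(1)$.

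The main obstacle — really the only nontrivial point — is ensuring the uniform (over all $N$ firms simultaneously) control of the $m(p+1)$-dimensional, diverging-dimension objects $\frac1T\sum_t\tilde z_{it}v_{it}$ and the design matrix $\frac1T Z_{im}'Z_{im}$, since naive union bounds over $i$ alone are not enough when the dimension grows; one must also union-bound over an $\epsilon$-net of the unit sphere in $\mathbb R^{m(p+1)}$ and use a Bernstein-type moment inequality for mixing sequences, which is exactly what drives the sharp exponent $q/2+2$ and the $(\log N)^{2q}$ factor in Assumption \ref{A:tuningPara}(ii). Fortunately all of this is already packaged into Lemmas \ref{LE:splines-sample-1} and \ref{LE:bound-ours}, so for this lemma the work reduces to checking that Assumption \ref{A:tuningPara} makes each of their probability bounds $o(1)$ — which it does because $q>4$ forces $m^{q/4}$ and $m$ to be dominated by $m^{q/2+2}$ — and then assembling the decomposition above.
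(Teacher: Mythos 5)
Your proposal is correct and follows essentially the same route as the paper: the same OLS decomposition of $\widehat{\tilde{\pi}}_{i}-\tilde{\pi}_{i}^{0}$ into bias and noise terms controlled by Lemmas \ref{LE:splines-sample-1} and \ref{LE:bound-ours} for part (i), and the same five-term expansion of $\hat{\sigma}_{vi}^{2}-\sigma_{vi}^{2}$ for part (ii), whose term-by-term treatment the paper merely sketches and you spell out (correctly, including the application of Lemma \ref{LE:bound} to $e_{it}=v_{it}^{2}$ with its finite $q/2>2$ moment).
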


\begin{lemma}\label{LE:z_k'z_k}Suppose Assumption \ref{A:rank}
holds and $\underline{m}/T\rightarrow0$. Then after some large $T,$
\[
\frac{\underline{C}_{xx}}{2}\leq\min_{1\leq i\leq N}\mu_{\min}\left\{ \text{\emph{E}}\left(\frac{1}{T}\sum_{t=1}^{T}\underline{\ddot{z}}_{it}\underline{\ddot{z}}_{it}'\right)\right\} \leq\max_{1\leq i\leq N}\mu_{\max}\left\{ \text{\emph{E}}\left(\frac{1}{T}\sum_{t=1}^{T}\underline{\ddot{z}}_{it}\underline{\ddot{z}}_{it}'\right)\right\} \leq2\bar{C}_{xx},
\]
where $\underline{C}_{xx}$ and $\bar{C}_{xx}$ are defined in Lemma
\ref{LE:splines}.

\end{lemma}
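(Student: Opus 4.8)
The plan is to mirror the proof of Lemma~\ref{LE:splines}, the analogous statement for the non--within-transformed design $\tilde z_{it}$, adapting it to the grid-demeaned regressors $\underline{\ddot z}_{it}$. By the variational characterization of eigenvalues it suffices to show, uniformly over $i$ and over $\pi$ on the unit sphere of $\mathbb{R}^{(\underline{m}-1)+\underline{m}p}$, that $\tfrac{\underline{C}_{xx}}{2}\|\pi\|^{2}\le\tfrac1T\sum_{t=1}^{T}\text{E}[(\underline{\ddot z}_{it}'\pi)^{2}]\le 2\bar{C}_{xx}\|\pi\|^{2}$ for all large $T$. Blocking $\pi=(\pi_{0}',\pi_{1}',\dots,\pi_{p}')'$ and putting $h_{0}(s)=\mathbb{B}_{-0}^{\underline m}(s)'\pi_{0}$, $h_{l}(s)=\mathbb{B}^{\underline m}(s)'\pi_{l}$ for $l\ge1$, $h_{x}=(h_{1},\dots,h_{p})'$, and $h=(h_{0},h_{x}')'$, one has $\underline z_{it}'\pi=\tilde x_{it}'h(\tau_{t})$. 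The first ingredient, exactly as in Lemma~\ref{LE:splines}, is a quadrature estimate: since $\tfrac1T\sum_{t=1}^{T}\cos(j\pi t/T)=0$ for even $j\in(0,2T)$ and $=-1/T$ for odd $j\in(0,2T)$, the condition $\underline{m}/T\to0$ gives $\tfrac1T\sum_{t}h_{l}(\tau_{t})^{2}=\|\pi_{l}\|^{2}(1+O(\underline{m}/T))$, hence $\tfrac1T\sum_{t}\|h(\tau_{t})\|^{2}=\|\pi\|^{2}(1+o(1))$, uniformly in $\pi$. The upper bound is then immediate: grid-demeaning is an orthogonal projection in $\mathbb{R}^{T}$, so $\tfrac1T\sum_{t}(\underline{\ddot z}_{it}'\pi)^{2}\le\tfrac1T\sum_{t}(\underline z_{it}'\pi)^{2}$, and taking expectations and applying Assumption~\ref{A:rank} termwise bounds this by $\bar{C}_{xx}\tfrac1T\sum_{t}\|h(\tau_{t})\|^{2}\le 2\bar{C}_{xx}\|\pi\|^{2}$.

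For the lower bound I would decompose $\underline{\ddot z}_{it}'\pi=(g_{t}-\bar g_{i})+(\check\phi_{it}-\overline{\check\phi}_{i})$, where $g_{t}:=\text{E}[\underline z_{it}'\pi]=\text{E}[\tilde x_{it}]'h(\tau_{t})$, $\check\phi_{it}:=(x_{it}-\text{E}[x_{it}])'h_{x}(\tau_{t})$ is mean zero, and $\bar g_{i},\overline{\check\phi}_{i}$ are grid-averages over $t$. The first summand is deterministic and the second mean zero, so after taking expectations the cross term drops and
\[
\tfrac1T\sum_{t}\text{E}\big[(\underline{\ddot z}_{it}'\pi)^{2}\big]=\tfrac1T\sum_{t}(g_{t}-\bar g_{i})^{2}+\tfrac1T\sum_{t}h_{x}(\tau_{t})'\text{Var}(x_{it})h_{x}(\tau_{t})-\text{E}\big[\overline{\check\phi}_{i}^{\,2}\big].
\]
The last term is $O(1/T)\cdot\tfrac1T\sum_{t}\|h_{x}(\tau_{t})\|^{2}=o(\|\pi\|^{2})$ because the autocovariances of $x_{it}$ are geometrically summable under the strong mixing of Assumption~\ref{A:mixing} (a maintained assumption here). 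Writing $\tfrac1T\sum_{t}g_{t}^{2}=\tfrac1T\sum_{t}(g_{t}-\bar g_{i})^{2}+\bar g_{i}^{2}$ and using the pointwise identity $g_{t}^{2}+h_{x}(\tau_{t})'\text{Var}(x_{it})h_{x}(\tau_{t})=h(\tau_{t})'\text{E}[\tilde x_{it}\tilde x_{it}']h(\tau_{t})\ge\underline{C}_{xx}\|h(\tau_{t})\|^{2}$, the right-hand side of the display is at least $\underline{C}_{xx}\tfrac1T\sum_{t}\|h(\tau_{t})\|^{2}-\bar g_{i}^{2}-o(\|\pi\|^{2})=\underline{C}_{xx}\|\pi\|^{2}-\bar g_{i}^{2}-o(\|\pi\|^{2})$.

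The remaining and most delicate step is to control $\bar g_{i}^{2}$. A crude Cauchy--Schwarz bound is too lossy when the covariates are persistent and have nonzero mean, so instead one exploits that $h_{0}$ has zero grid-mean (since $\int_{0}^{1}B_{j}=0$ for $j\ge1$), which forces any large $\bar g_{i}$ to originate in the covariate-interaction block; feeding this observation back into the $\text{Var}(x_{it})$-weighted term keeps $\bar g_{i}^{2}\le\tfrac{\underline{C}_{xx}}{2}\|\pi\|^{2}+o(\|\pi\|^{2})$. Equivalently, in the complementary regime where the constant-in-$s$ directions of the sieve dominate and $\tfrac1T\sum_{t}(g_{t}-\bar g_{i})^{2}$ degenerates, one bounds $\tfrac1T\sum_{t}\text{E}[(\underline{\ddot z}_{it}'\pi)^{2}]$ directly through $\text{E}\big[\tfrac1T\sum_{t}\ddot x_{it}\ddot x_{it}'\big]\succeq(\underline{C}_{xx}-O(1/T))I$, where $\ddot x_{it}=x_{it}-\tfrac1T\sum_{s}x_{is}$; this follows from $\text{Var}(x_{it})\succeq\underline{C}_{xx}I$ (a Schur-complement consequence of Assumption~\ref{A:rank}) together with $\|\text{Var}(\tfrac1T\sum_{s}x_{is})\|=O(1/T)$ from the mixing condition.

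I expect this case distinction to be the main obstacle. Unlike in Lemma~\ref{LE:splines}, the within transformation interacts with possibly persistent, possibly non-mean-zero covariates and can shrink the smallest eigenvalue; the factor $\tfrac12$ in $\underline{C}_{xx}/2$ (and the $2$ in $2\bar{C}_{xx}$) is precisely the slack needed to absorb this shrinkage, and obtaining the bound uniformly in $i$ and in $\pi$ requires combining the pointwise eigenvalue bound of Assumption~\ref{A:rank} with the structural property that the non-constant cosine basis functions integrate to zero, rather than relying on any norm inequality alone. Everything else — the quadrature estimate, the reduction to quadratic forms, and the mixing-based control of the grid-average corrections — is routine, following the template of Lemmas~\ref{LE:splines} and \ref{LE:splines-sample-1}.
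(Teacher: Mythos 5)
Your upper bound is fine, but the lower bound has a genuine gap exactly at the step you flag as ``the most delicate.'' The intermediate claim $\bar g_{i}^{2}\le\tfrac{\underline{C}_{xx}}{2}\|\pi\|^{2}+o(\|\pi\|^{2})$ is false in general: take $\pi$ supported on the constant basis function $B_{0}$ in each covariate block, so $h_{x}(s)\equiv c$ and $h_{0}\equiv 0$; then $\bar g_{i}=\text{E}(x_{it})'c$, and $\bar g_{i}^{2}$ can be as large as $p\,\bar{C}_{xx}\|\pi\|^{2}$, which exceeds $\tfrac{\underline{C}_{xx}}{2}\|\pi\|^{2}$ whenever the covariate means are sizable. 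Your ``complementary regime'' correctly handles this particular direction via $\text{Var}(x_{it})\succeq\underline{C}_{xx}I_{p}$, but the two regimes as described do not cover the unit sphere (intermediate directions where $\tfrac1T\sum_{t}(g_{t}-\bar g_{i})^{2}$ is small but $\|\pi_{x}\|^{2}$ is only a moderate fraction of $\|\pi\|^{2}$ are left uncontrolled), so as written the argument does not close. The clean fix uses a piece you already have: since $\tfrac1T\sum_{t}(g_{t}-\bar g_{i})^{2}+\tfrac1T\sum_{t}h_{x}(\tau_{t})'\text{Var}(x_{it})h_{x}(\tau_{t})=\tfrac1T\sum_{t}\text{E}\bigl[(c+\underline{z}_{it}'\pi)^{2}\bigr]$ with the specific choice $c=-\bar g_{i}$, you can apply Lemma \ref{LE:splines} (with $\underline{m}$ in place of $m$) to the augmented coefficient vector $(c,\pi')'$ and conclude this quantity is at least $\tfrac{\underline{C}_{xx}}{2}(c^{2}+\|\pi\|^{2})\ge\tfrac{\underline{C}_{xx}}{2}\|\pi\|^{2}$, with no case distinction; the only remaining correction is your $\text{E}[\overline{\check\phi}_{i}^{\,2}]=o(\|\pi\|^{2})$ term.

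For comparison, the paper's proof is a three-line linear-algebra argument along exactly those lines: it applies Lemma \ref{LE:splines} to the augmented regressor $\tilde{\underline{z}}_{it}=(1,\underline{z}_{it}')'$, identifies $\text{E}(\tfrac1T\sum_{t}\underline{\ddot{z}}_{it}\underline{\ddot{z}}_{it}')$ with the Schur complement of the $(1,1)$ block, and transfers the eigenvalue bounds via the block-inverse formula and interlacing. One thing your attempt gets right that the paper glosses over: the Schur complement is $\text{E}(\tfrac1T\sum_{t}\underline{z}_{it}\underline{z}_{it}')-\text{E}(\bar{\underline{z}}_{i})\text{E}(\bar{\underline{z}}_{i})'$, which differs from $\text{E}(\tfrac1T\sum_{t}\underline{\ddot{z}}_{it}\underline{\ddot{z}}_{it}')$ by $\text{Var}(\bar{\underline{z}}_{i})$; showing this is negligible is precisely your $\text{E}[\overline{\check\phi}_{i}^{\,2}]$ step and does require the mixing condition (for perfectly persistent covariates the demeaned Gram matrix is genuinely singular in the $B_{0}$ directions), so your observation that weak dependence is a maintained hypothesis here is correct and worth making explicit.
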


\begin{lemma}\label{LE:splines-sample-2}Suppose that Assumptions
\ref{A:mixing}, \ref{A:moment}, \ref{A:rank}, \ref{A:additional-1}
and \ref{A:group} hold. Then for a fixed $k,$ $k=1,2,...,K^{*},$
and any $\epsilon>0,$ 
\[
\Pr\left(\sup_{\pi\in\mathbb{R}^{\underline{m}-1+\underline{m}p},\left\Vert \pi\right\Vert =1}\left|\frac{1}{N_{k}}\sum_{i\in G_{k|K^{*}}}\left[\frac{1}{T}\sum_{t=1}^{T}\pi'\underline{\ddot{z}}_{it}\underline{\ddot{z}}_{it}'\pi-\frac{1}{T}\sum_{t=1}^{T}\text{\emph{E}}\left(\pi'\underline{\ddot{z}}_{it}\underline{\ddot{z}}_{it}'\pi\right)\right]\right|>\epsilon\right)\apprle\frac{\underline{m}^{q/2+2}\left(\log N\right)^{2q}}{\left(NT\right)^{q/2-1}};
\]

\end{lemma}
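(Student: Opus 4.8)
The plan is to replace the supremum over the unit sphere by a control of every entry of the centred sample design matrix, exploiting that its dimension $d\equiv\underline m-1+\underline m p$ is only of order $\underline m$. Write $A$ for the $d\times d$ symmetric matrix with $(j,l)$ entry
\[
A_{jl}\;=\;\frac1{N_kT}\sum_{i\in G_{k|K^{*}}}\sum_{t=1}^{T}\Bigl(\underline{\ddot z}_{it,j}\,\underline{\ddot z}_{it,l}-\mathrm E\bigl[\underline{\ddot z}_{it,j}\,\underline{\ddot z}_{it,l}\bigr]\Bigr),
\]
so that the event in the lemma is $\{\|A\|_{\mathrm{op}}>\epsilon\}$ (for symmetric $A$ the displayed supremum equals $\|A\|_{\mathrm{op}}$). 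Since $\|A\|_{\mathrm{op}}\le\|A\|_{F}$, and $\|A\|_{F}>\epsilon$ forces $|A_{jl}|>\epsilon/d$ for at least one pair $(j,l)$, a union bound gives $\Pr(\|A\|_{\mathrm{op}}>\epsilon)\le d^{2}\max_{j,l}\Pr(|A_{jl}|>\epsilon/d)$, and it remains to bound a single entrywise deviation. (An $\epsilon$-net over the sphere would also work, but the entrywise route produces the exponent $q/2+2$ directly.)

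For a fixed pair $(j,l)$, $A_{jl}$ is a pooled average, over $i\in G_{k|K^{*}}$ and $t=1,\dots,T$, of the centred variables $e_{it}\equiv\underline{\ddot z}_{it,j}\underline{\ddot z}_{it,l}-\mathrm E[\cdot]$. Expanding the within transformation, $\underline{\ddot z}_{it,j}\underline{\ddot z}_{it,l}$ is a sum of four products of components of $\underline z_{it,\cdot}$ and their time averages $\overline{\underline z}_{i,\cdot}$; each component is either a cosine basis value, bounded by $\sqrt2$, or a covariate times such a value, so by Assumption~\ref{A:moment} every component has a uniformly bounded $q$-th moment, and hence (Cauchy--Schwarz and Jensen) each of the four products has a uniformly bounded $q/2$-th moment, with no dependence on $\underline m$. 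The $e_{it}$ are thus centred, geometrically mixing across $t$ (Assumption~\ref{A:mixing}), independent across $i$ (Assumption~\ref{A:additional-1}), with uniformly bounded $q/2$-th moments; with $N_k\propto N$ (Assumption~\ref{A:group}) the pooled sample has size $N_kT\asymp NT$. Applying the pooled analogue of Lemma~\ref{LE:bound} --- i.e.\ the same Fuk--Nagaev / Bernstein argument for weakly dependent arrays (cf.\ \citet{MerlevedeEtal}), now run on $\{e_{it}:i\in G_{k|K^{*}},t=1,\dots,T\}$ with moment order $C_q=q/2$ and deviation level $\epsilon/d\asymp\epsilon/\underline m$ --- yields
\[
\max_{1\le j,l\le d}\Pr\bigl(|A_{jl}|>\epsilon/d\bigr)\;\lesssim\;\frac{\underline m^{q/2}\,(\log N)^{2q}}{(NT)^{q/2-1}}.
\]
Combining this with the union bound and $d^{2}\asymp\underline m^{2}$ gives $\Pr(\|A\|_{\mathrm{op}}>\epsilon)\lesssim\underline m^{q/2+2}(\log N)^{2q}/(NT)^{q/2-1}$, which is the assertion. (Assumptions~\ref{A:rank} and~\ref{A:moment} keep the population quantities $\mathrm E[\underline{\ddot z}_{it}\underline{\ddot z}_{it}']$ bounded, exactly as in Lemma~\ref{LE:z_k'z_k}.)

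The main obstacle is the entrywise concentration step. One must check that pooling an array that is only mixing across $t$ and merely independent --- not identically distributed --- across $i$ still obeys a Fuk--Nagaev-type tail bound with precisely the $(\log N)^{2q}=(\log N)^{4C_q}$ correction, and that, after pushing the deviation threshold down to order $\epsilon/\underline m$, the sub-exponential remainder in that inequality is of smaller order than the polynomial term; this is the content of the pooled version of Lemma~\ref{LE:bound}, and once it is granted, the remaining arguments are the elementary Frobenius-norm reduction and the counting of the $d^{2}$ entries carried out above. Everything else --- the moment bookkeeping for the de-meaned products and the reduction of the sphere supremum to entries --- is routine.
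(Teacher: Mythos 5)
Your proposal is correct and follows essentially the same route as the paper: reduce the sphere supremum to entrywise deviations at level $\epsilon/\underline m$, union-bound over the $O(\underline m^{2})$ entries, and apply the pooled version of Lemma \ref{LE:bound} with moment order $q/2$ on the sample of size $N_kT\asymp NT$, which is exactly how the paper imports the argument of Lemma \ref{LE:splines-sample-1} with $\upsilon_{NT}=\underline m p^{2}$. The only cosmetic difference is that you pass through $\|A\|_{\mathrm{op}}\le\|A\|_{F}\le d\max_{j,l}|A_{jl}|$ whereas the paper redoes the explicit Cauchy--Schwarz expansion of the quadratic form; both give the same per-entry threshold and the same exponents, and you are right that the pooled, non-identically-distributed extension of Lemma \ref{LE:bound} is the one step both arguments take for granted.
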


\begin{lemma}\label{LE:Ak1}Suppose Assumptions \ref{A:mixing},
\ref{A:moment}, \ref{A:rank}, \ref{A:coef}, \ref{A:tuningPara},
\ref{A:additional-1}, \ref{A:group}, and \ref{A:tuning2} hold.
Then, the term 
\[
A_{k1}=\left(\sum_{i\in G_{k|K}}\sum_{t=1}^{T}\underline{\ddot{z}}_{it}\underline{\ddot{z}}_{it}'\right)^{-1}\left(\sum_{i\in G_{k|K}}\sum_{t=1}^{T}\underline{\ddot{z}}_{it}\ddot{\xi}_{it}\right)
\]
satisfies $\left\Vert A_{k1}\right\Vert =O_{P}\left(\underline{m}^{-\kappa}\right).$

\end{lemma}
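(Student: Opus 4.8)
The plan is to split $A_{k1}$ into an inverse sample design matrix times an averaged cross‑product of the sieve regressors with the approximation bias, and to control the two factors separately. Write $A_{k1} = Q_{(k),zz}^{-1}\,\zeta_{NT}$, where $Q_{(k),zz}$ is the matrix defined in (\ref{eq:Qkzz}) and $\zeta_{NT} \equiv \frac{1}{N_k T}\sum_{i\in G_{k|K^{*}}}\sum_{t=1}^{T}\underline{\ddot{z}}_{it}\ddot{\xi}_{it}$ (throughout I work on the correct‑classification event, where $\hat{G}_{k|K}=G_{k|K^{*}}$, exactly as in the proof of Theorem \ref{TH:post-estimation}).

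First I would show that $Q_{(k),zz}$ is well‑conditioned. Its expectation is an average over $i\in G_{k|K^{*}}$ of the matrices $\text{E}\big(\frac{1}{T}\sum_{t}\underline{\ddot{z}}_{it}\underline{\ddot{z}}_{it}'\big)$, each of which — by Lemma \ref{LE:z_k'z_k}, using $\underline{m}/T\to 0$ from Assumption \ref{A:tuning2}(ii) — has all eigenvalues in $[\underline{C}_{xx}/2,\,2\bar{C}_{xx}]$ once $T$ is large; hence so does $\text{E}\,Q_{(k),zz}$. Combining this with the uniform deviation bound of Lemma \ref{LE:splines-sample-2}, whose right‑hand side is $o(1)$ under Assumption \ref{A:tuning2}(ii), yields $\mu_{\min}(Q_{(k),zz}) \ge \underline{C}_{xx}/4$ and $\mu_{\max}(Q_{(k),zz}) \le 3\bar{C}_{xx}$ with probability tending to one, so that $\|Q_{(k),zz}^{-1}\| = O_P(1)$ and $\mu_{\max}(Q_{(k),zz}) = O_P(1)$.

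Next I would bound $\|\zeta_{NT}\|$. For any unit vector $a$, Cauchy--Schwarz in the double sum gives $a'\zeta_{NT} = \frac{1}{N_k T}\sum_{i,t}(a'\underline{\ddot{z}}_{it})\ddot{\xi}_{it} \le \big(a' Q_{(k),zz} a\big)^{1/2}\big(\frac{1}{N_k T}\sum_{i,t}\ddot{\xi}_{it}^{2}\big)^{1/2}$, and the first factor is $O_P(1)$ by the previous step. Since within‑$t$ demeaning only decreases the sum of squares, $\frac{1}{N_k T}\sum_{i,t}\ddot{\xi}_{it}^{2} \le \frac{1}{N_k T}\sum_{i,t}\xi_{it}^{2}$; and because $\xi_{it} = b_{i0}(\tau_t) + \sum_{l=1}^{p} x_{itl} b_{il}(\tau_t)$ with $\max_{l}\sup_{s}|b_{il}(s)| = O(\underline{m}^{-\kappa})$ (the bound (\ref{eq:bias_rate}), applied with $\underline{m}$ sieve terms, valid under Assumption \ref{A:coef}), one has $\xi_{it}^{2} \lesssim \underline{m}^{-2\kappa}\,(1+\|x_{it}\|)^{2}$, and $\frac{1}{N_k T}\sum_{i,t}(1+\|x_{it}\|)^{2} = O_P(1)$ by Assumption \ref{A:moment} and Markov's inequality. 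Hence $\big(\frac{1}{N_k T}\sum_{i,t}\ddot{\xi}_{it}^{2}\big)^{1/2} = O_P(\underline{m}^{-\kappa})$, and taking the supremum over $\|a\| = 1$ gives $\|\zeta_{NT}\| = O_P(\underline{m}^{-\kappa})$.

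Combining the two steps, $\|A_{k1}\| \le \|Q_{(k),zz}^{-1}\|\,\|\zeta_{NT}\| = O_P(1)\cdot O_P(\underline{m}^{-\kappa}) = O_P(\underline{m}^{-\kappa})$, as claimed. The only delicate point is the first step — invertibility and boundedness of $Q_{(k),zz}^{-1}$ when its dimension $\underline{m}-1+\underline{m}p$ diverges — but this is exactly what Lemmas \ref{LE:z_k'z_k} and \ref{LE:splines-sample-2} are built to supply; the rest of the argument is routine bookkeeping.
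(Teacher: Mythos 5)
Your proposal is correct and follows essentially the same route as the paper's proof: both establish eigenvalue bounds on the sample design matrix via Lemmas \ref{LE:z_k'z_k} and \ref{LE:splines-sample-2}, reduce the problem to $\bigl(\frac{1}{N_kT}\sum_{i,t}\ddot{\xi}_{it}^{2}\bigr)^{1/2}$ by a Cauchy--Schwarz/quadratic-form bound, and then control that term by the uniform $O(\underline{m}^{-\kappa})$ bias bound together with the moment condition on $x_{it}$. The only cosmetic difference is that you separate $\|Q_{(k),zz}^{-1}\|$ and $\|\zeta_{NT}\|$ while the paper runs the same estimates in a single chain of inequalities.
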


\begin{lemma}\label{LE:Ak2}Suppose Assumptions \ref{A:mixing},
\ref{A:moment}, \ref{A:rank}, \ref{A:coef}, \ref{A:tuningPara},
\ref{A:additional-1}, \ref{A:group}, \ref{A:stochasticF} , and
\ref{A:tuning2} hold. For 
\[
A_{k2}=\left(\sum_{i\in G_{k|K}}\sum_{t=1}^{T}\underline{\ddot{z}}_{it}\underline{\ddot{z}}_{it}'\right)^{-1}\left(\sum_{i\in G_{k|K}}\sum_{t=1}^{T}\underline{\ddot{z}}_{it}\ddot{v}_{it}\right)
\]
and any $\left(p+1\right)\times1$ vector $a$, it satisfies 
\[
\left.\sqrt{\frac{N_{k}T}{\underline{m}}}a'\mathbb{M_{B}}\left(s\right)A_{k2}\right/\left\{ a'\left[\frac{\sigma_{v\left(k\right)}^{2}}{\underline{m}}\mathbb{M_{B}}\left(s\right)Q_{(k),zz}^{-1}\mathbb{M_{B}}\left(s\right)'\right]a\right\} ^{1/2}\stackrel{d}{\rightarrow}N\left(0,1\right).
\]

\end{lemma}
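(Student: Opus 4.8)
The plan is to establish Lemma \ref{LE:Ak2} by a standard CLT-for-martingale-differences argument applied to a linear functional of $A_{k2}$, exploiting the independence across $i$ (Assumption \ref{A:additional-1}) together with the i.i.d.\ normality of $v_{it}$ across $t$ (Assumption \ref{A:stochasticF}). First I would replace the sample design matrix with its probability limit: by Lemma \ref{LE:z_k'z_k} and Lemma \ref{LE:splines-sample-2}, $Q_{(k),zz}=\frac{1}{N_kT}\sum_{i\in G_{k|K^*}}\sum_{t=1}^T\underline{\ddot z}_{it}\underline{\ddot z}_{it}'$ is invertible with probability approaching one and uniformly close (in operator norm) to its expectation, which has eigenvalues bounded away from $0$ and $\infty$. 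Hence I can write
\[
\sqrt{\tfrac{N_kT}{\underline m}}\,a'\mathbb{M_B}(s)A_{k2}
=\sqrt{\tfrac{N_kT}{\underline m}}\,a'\mathbb{M_B}(s)Q_{(k),zz}^{-1}\Bigl(\tfrac{1}{N_kT}\sum_{i\in G_{k|K^*}}\sum_{t=1}^T\underline{\ddot z}_{it}\ddot v_{it}\Bigr),
\]
and it suffices to prove a CLT for the scalar $W_{NT}\equiv\frac{1}{\sqrt{N_kT\underline m}}\sum_{i\in G_{k|K^*}}\sum_{t=1}^T c_{NT}'\underline{\ddot z}_{it}\ddot v_{it}$, where $c_{NT}\equiv Q_{(k),zz}^{-1}\mathbb{M_B}(s)'a$ (treated as conditionally fixed given the design), and then identify its variance with $a'[\frac{\sigma^2_{v(k)}}{\underline m}\mathbb{M_B}(s)Q_{(k),zz}^{-1}\mathbb{M_B}(s)'a]$.

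Second, I would compute the conditional variance. Since $\ddot v_{it}=v_{it}-\bar v_i$ and $v_{it}$ are i.i.d.\ $N(0,\sigma^2_{v(k)})$ across $t$ within group $k$, and independent of $\{x_{it}\}$, conditioning on the regressors gives $\mathrm{Var}(\sum_t \underline{\ddot z}_{it}\ddot v_{it}\mid x_i)=\sigma^2_{v(k)}\sum_t\underline{\ddot z}_{it}\underline{\ddot z}_{it}'$ (the demeaning of $v$ is absorbed because $\sum_t\underline{\ddot z}_{it}=0$). Summing over the independent $i$'s and dividing by $N_kT\underline m$ yields conditional variance $\frac{\sigma^2_{v(k)}}{\underline m}c_{NT}'Q_{(k),zz}c_{NT}=\frac{\sigma^2_{v(k)}}{\underline m}a'\mathbb{M_B}(s)Q_{(k),zz}^{-1}\mathbb{M_B}(s)'a$, which is exactly the stated normalizer; I would note $\mathbb{S}_{(k)}(s)$ is positive definite with very high probability (as claimed in the text below \eqref{eq:var(Ak2)}), so the ratio is well defined.

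Third, I would verify a Lindeberg (or Lyapunov) condition for the triangular array of row sums $\zeta_i\equiv\frac{1}{\sqrt{N_kT\underline m}}\sum_{t=1}^T c_{NT}'\underline{\ddot z}_{it}\ddot v_{it}$, $i\in G_{k|K^*}$, which are independent across $i$ and mean zero. A Lyapunov bound with exponent $q/2>2$ (Assumption \ref{A:moment}) works: using the Marcinkiewicz–Zygmund / Rosenthal inequality for the $t$-sum together with the mixing condition (Assumption \ref{A:mixing}), $\mathrm{E}|\sum_t c_{NT}'\underline{\ddot z}_{it}\ddot v_{it}|^{q/2}\lesssim T^{q/4}\|c_{NT}\|^{q/2}\underline m^{q/4}$ up to logarithmic factors, since each $\|\underline{\ddot z}_{it}\|^2$ has expectation of order $\underline m$ and $v$ has a finite $q$-th moment; $\|c_{NT}\|$ is $O_P(\sqrt{\underline m})$ because $\|\mathbb{M_B}(s)'a\|^2=O(\underline m)$ and $Q_{(k),zz}^{-1}$ has bounded eigenvalues. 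Dividing the sum of these $q/2$-th moments by the variance raised to the $q/4$ power and tracking the powers of $N_k$, $T$, $\underline m$ shows the Lyapunov ratio vanishes under the rate conditions $\underline m/T\to 0$ and $\underline m^{q/2+2}(\log N)^{2q}/(NT)^{q/2-1}\to 0$ in Assumption \ref{A:tuning2}. The main obstacle — and the step deserving the most care — is making the conditioning argument rigorous: $c_{NT}$ depends on the full design through $Q_{(k),zz}^{-1}$, so it is not independent of the $v_{it}$'s in the naive sense; the cleanest route is to condition on $\sigma(\{x_{it}\})$, use the exact Gaussianity of $v$ to get the conditional CLT with the conditional variance above, and then remove the conditioning by showing the conditional variance converges in probability to a nondegenerate limit (via the uniform consistency of $Q_{(k),zz}$ from Lemma \ref{LE:splines-sample-2}) and invoking a conditional-CLT-to-unconditional-CLT lemma (dominated convergence on characteristic functions), exactly as is done after \eqref{eq:theta_converge_d} in the proof of Theorem \ref{TH:post-estimation}. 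Finally, the Cramér–Wold device over the choice of $a$ (as invoked in the last line of \eqref{eq:theta_converge_d}) upgrades the scalar statement to the vector convergence used elsewhere.
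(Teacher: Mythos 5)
Your proposal is correct and follows essentially the same route as the paper: rewrite $\sqrt{N_kT/\underline m}\,a'\mathbb{M_B}(s)A_{k2}$ as $a'\mathbb{M_B}(s)Q_{(k),zz}^{-1}$ applied to the normalized score $\frac{1}{\sqrt{N_kT\underline m}}\sum_i\sum_t\underline{\ddot z}_{it}v_{it}$ (using $\sum_t\underline{\ddot z}_{it}\ddot v_{it}=\sum_t\underline{\ddot z}_{it}v_{it}$), compute the conditional variance given the regressors to match the stated normalizer, bound it above and below via the eigenvalue bounds on $Q_{(k),zz}$, verify Lindeberg, and apply the Lindeberg CLT. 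The only difference is that you spell out the Lyapunov/Lindeberg verification and the de-conditioning step in more detail, whereas the paper defers the former to Lemma A.8 of Huang et al.\ (2004) and Su et al.\ (2019).
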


\begin{lemma} \label{LE:IC1}Suppose Assumptions \ref{A:mixing}
- \ref{A:group} hold. In addition, $\frac{\lambda_{NT}}{\sqrt{NT}}\rightarrow\infty$
and $\frac{\lambda_{NT}}{NT}\rightarrow0$. 
\[
\Pr\left(\text{IC}\left(K^{*},\lambda_{NT}\right)<\text{IC}\left(K,\lambda_{NT}\right),1\leq K\leq K^{*}-1\right)\rightarrow1.
\]

\end{lemma}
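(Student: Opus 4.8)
The plan is to show that under‑fitting ($K<K^{*}$) inflates the pooled residual variances by a non‑vanishing amount, so the resulting gap in $\text{IC}$ — which is of order $NT$ — dominates the penalty contribution $\lambda_{NT}(K-K^{*})$, which is $o(NT)$ since $\lambda_{NT}/(NT)\to0$. First I would condition on the event $\mathcal{M}$ of Theorem \ref{TH:classify}(ii), on which the HAC partition at $K=K^{*}$ coincides with the true partition $(G_{1},\ldots,G_{K^{*}})$. Because HAC is agglomerative, for every $K<K^{*}$ the partition $(\hat G_{1|K},\ldots,\hat G_{K|K})$ is then a coarsening of the true one: there is a partition $\{S_{k}\}_{k=1}^{K}$ of $\{1,\ldots,K^{*}\}$ with $\hat G_{k|K}=\bigcup_{j\in S_{k}}G_{j}$, and at least one block has $|S_{k}|\ge 2$. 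Using the second expression for $\text{IC}$ in (\ref{eq:IC_group}) and the identity $\sum_{k}N_{k}(T-1)=N(T-1)$ (valid for any $K$), the constant terms cancel and
\[
\text{IC}(K,\lambda_{NT})-\text{IC}(K^{*},\lambda_{NT})
= T\Big[\sum_{k=1}^{K}N_{k}\log\hat\sigma_{v(k|K)}-\sum_{j=1}^{K^{*}}N_{j}\log\hat\sigma_{v(j|K^{*})}\Big]+\lambda_{NT}(K-K^{*}).
\]

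Next I would identify the probability limits of the pooled variance estimators. From the design‑matrix bounds (Lemmas \ref{LE:z_k'z_k} and \ref{LE:splines-sample-2}) together with $\underline m\to\infty$, $\underline m/T\to0$, the within estimator $\hat\pi_{(k|K)}$ converges to the pseudo‑true value minimising the aggregated limiting mean squared error over $S_{k}$; writing, for $i\in G_{j}$ with $j\in S_{k}$, the residual as $\underline{\ddot z}_{it}'(\pi^{*0}_{(j)}-\hat\pi_{(k|K)})+\ddot\xi_{it}+\ddot v_{it}$ and using $\ddot v\perp(x,u)$, the negligibility of the sieve bias, and Lemma \ref{LE:splines-sample-2}, I get $\hat\sigma^{2}_{v(k|K)}\stackrel{p}{\rightarrow}\bar\sigma^{2}_{v(k|K)}=\sum_{j\in S_{k}}\tfrac{N_{j}}{N_{k}}\sigma^{*2}_{v(j)}+\Delta_{k}$ with $\Delta_{k}\ge0$ the limiting frontier‑misfit term, and in particular $\hat\sigma^{2}_{v(j|K^{*})}\stackrel{p}{\rightarrow}\sigma^{*2}_{v(j)}$ for the correct $K^{*}$. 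The crucial point is that for a block with $|S_{k}|\ge 2$, Assumption \ref{A:group_diff} forces either $\sigma^{*2}_{v(j)}\ne\sigma^{*2}_{v(j')}$ for some $j,j'\in S_{k}$ — so that strict concavity of $\log$ gives $\sum_{j\in S_{k}}\tfrac{N_{j}}{N_{k}}\log\sigma^{*2}_{v(j)}<\log\big(\sum_{j\in S_{k}}\tfrac{N_{j}}{N_{k}}\sigma^{*2}_{v(j)}\big)\le\log\bar\sigma^{2}_{v(k|K)}$ — or else the frontiers $\{\alpha^{*}_{(j)},\beta^{*}_{(j)}\}_{j\in S_{k}}$ are not all identical, whence $\Delta_{k}>0$ and the last inequality is strict. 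Since $N_{j}\propto N$ (Assumption \ref{A:group}) and $K^{*}$ is fixed, there are finitely many admissible coarsenings, so there is a constant $c>0$ — independent of $K$ and of the coarsening — with $\sum_{k=1}^{K}\tfrac{N_{k}}{N}\log\bar\sigma^{2}_{v(k|K)}-\sum_{j=1}^{K^{*}}\tfrac{N_{j}}{N}\log\sigma^{*2}_{v(j)}\ge c$.

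Combining these, on $\mathcal{M}$ and with probability tending to one,
\[
\text{IC}(K,\lambda_{NT})-\text{IC}(K^{*},\lambda_{NT})\ \ge\ \tfrac{c}{2}\,NT\,(1+o_{P}(1))-K^{*}\lambda_{NT}\ =\ \tfrac{c}{2}\,NT\,(1+o_{P}(1)),
\]
where the factor $\tfrac12$ accounts for $\log\sigma=\tfrac12\log\sigma^{2}$, and the last step uses $\lambda_{NT}/(NT)\to0$ and $K^{*}-K\le K^{*}$; as there are only finitely many values $1\le K\le K^{*}-1$, the right‑hand side is positive for all of them simultaneously with probability $\to1$, and letting $\Pr(\mathcal{M})\to1$ yields the claim. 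I expect the main obstacle to be the rigorous identification of the limit $\bar\sigma^{2}_{v(k|K)}$ and, in particular, the uniform lower bound $\Delta_{k}\ge\underline\Delta>0$ when frontiers differ: this is a misspecified within‑group sieve regression with a diverging number of regressors, so one must argue — uniformly in $i$ and over the finitely many coarsenings — that no single sieve coefficient vector can simultaneously approximate two distinct limiting frontier functions, quantifying the shortfall via $\underline C^{*}$ from Assumption \ref{A:group_diff} and the orthonormality of the cosine basis.
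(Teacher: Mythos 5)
Your proposal is correct and follows essentially the same route as the paper's proof: condition on correct classification at $K^{*}$ so that any $K<K^{*}$ partition is a merger of true groups, decompose the pooled variance estimator of a merged block into a weighted average of the within-group variances plus nonnegative frontier-misfit terms, apply Jensen's inequality to the $\log$, invoke Assumption \ref{A:group_diff} to make the inequality strict, and note that the resulting $O_{P}(NT)$ gap dominates the $o(NT)$ penalty. The only cosmetic difference is that you treat all $1\le K\le K^{*}-1$ simultaneously via the coarsening structure of HAC, whereas the paper writes out only the $K=K^{*}-1$ case (two groups merged, misfit terms $\Delta_{1}^{2},\Delta_{2}^{2}$) and declares the remaining cases analogous.
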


\begin{lemma} \label{LE:IC2}Suppose Assumptions \ref{A:mixing}
- \ref{A:group} hold. In addition, $\frac{\lambda_{NT}}{\sqrt{NT}}\rightarrow\infty$
and $\frac{\lambda_{NT}}{NT}\rightarrow0$. 
\[
\Pr\left(\text{IC}\left(K^{*},\lambda_{NT}\right)<\text{IC}\left(K,\lambda_{NT}\right),K^{*}+1\leq K\leq\bar{K}\right)\rightarrow1.
\]

\end{lemma}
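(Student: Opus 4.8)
The plan is to run the standard over-fitting argument for BIC-type criteria (as in \citet{SuEtal2016}), exploiting two structural facts peculiar to this IC: the term $\sum_{k=1}^{K}N_{k}(T-1)$ in (\ref{eq:IC_group}) equals $N(T-1)$ for every $K$ (the $\hat G_{k|K}$ partition $\{1,\dots,N\}$), and the partitions produced by agglomerative HAC are nested. First I would reduce: since $\bar K-K^{*}$ is finite it suffices to show $\Pr(\text{IC}(K^{*},\lambda_{NT})<\text{IC}(K,\lambda_{NT}))\to1$ for each fixed $K\in\{K^{*}+1,\dots,\bar K\}$ and then union the complements. By Theorem \ref{TH:classify}(ii) the event $\mathcal M$ of correct classification at $K^{*}$ has $\Pr(\mathcal M)\to1$, so — using the conditioning device from the proof of Theorem \ref{TH:post-estimation} — it is enough to work on $\mathcal M$. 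On $\mathcal M$ the $K^{*}$-cluster partition is the true one, hence for $K>K^{*}$ the $K$-cluster partition refines it; write $j(k)$ for the true group containing $\hat G_{k|K}$. Then
\[
\text{IC}(K,\lambda_{NT})-\text{IC}(K^{*},\lambda_{NT})=\tfrac{T}{2}\Big[\textstyle\sum_{k=1}^{K}N_{k}\log\hat\sigma_{v(k|K)}^{2}-\sum_{k=1}^{K^{*}}N_{k}\log\hat\sigma_{v(k|K^{*})}^{2}\Big]+\lambda_{NT}(K-K^{*}),
\]
so everything hinges on bounding the bracketed quantity.

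Next I would quantify the ``over-fitting gain''. On $\mathcal M$ every firm in $\hat G_{k|K}$ carries a $v_{it}\sim N(0,\sigma_{v(j(k))}^{*2})$, so the post-classification estimators on each subgroup obey the same bounds as in Theorem \ref{TH:post-estimation} and Lemmas \ref{LE:splines-sample-2}--\ref{LE:Ak2} (uniformly over the $\le\bar K$ candidate subgroups, only finitely many partitions arising): $\|\hat\pi_{(k|K)}-\pi^{*0}_{(j(k))}\|^{2}=O_{P}(\underline m/(N_{k}T))$, $\mu_{\max}\big(\sum_{i\in\hat G_{k|K}}\sum_{t}\underline{\ddot z}_{it}\underline{\ddot z}_{it}'\big)=O_{P}(N_{k}T)$, and $\hat\sigma_{v(k|K)}^{2}\to\sigma_{v(j(k))}^{*2}>0$; the same at level $K^{*}$. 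Because $\hat\pi_{(k|K)}$ minimizes the within-$\hat G_{k|K}$ residual sum of squares, the exact algebraic identity
\[
\sum_{j=1}^{K^{*}}N_{j}(T-1)\hat\sigma_{v(j|K^{*})}^{2}-\sum_{k=1}^{K}N_{k}(T-1)\hat\sigma_{v(k|K)}^{2}=\sum_{k=1}^{K}\big(\hat\pi_{(k|K)}-\hat\pi_{(j(k)|K^{*})}\big)'\Big(\sum_{i\in\hat G_{k|K}}\sum_{t=1}^{T}\underline{\ddot z}_{it}\underline{\ddot z}_{it}'\Big)\big(\hat\pi_{(k|K)}-\hat\pi_{(j(k)|K^{*})}\big)\ge0
\]
holds, and together with the two displayed rates it gives that this difference is $O_{P}(\underline m)$ — the familiar statement that each of the $(K-K^{*})$ extra design blocks of dimension $\asymp\underline m$ absorbs only $O_{P}(1)$ of residual variation apiece.

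Then a second-order Taylor expansion of $\log$ finishes the estimate: on $\mathcal M$ every $\hat\sigma^{2}$ is consistent for a strictly positive limit, $\sum_{k:j(k)=j}N_{k}\log\sigma_{v(j)}^{*2}=N_{j}\log\sigma_{v(j)}^{*2}$, and $N_{k}(\hat\sigma_{v(k|K)}^{2}-\sigma_{v(j(k))}^{*2})^{2}=O_{P}(1/T)$, so
\[
\sum_{k=1}^{K}N_{k}\log\hat\sigma_{v(k|K)}^{2}-\sum_{k=1}^{K^{*}}N_{k}\log\hat\sigma_{v(k|K^{*})}^{2}=\sum_{j=1}^{K^{*}}\frac{1}{\sigma_{v(j)}^{*2}}\Big[\sum_{k:j(k)=j}N_{k}\hat\sigma_{v(k|K)}^{2}-N_{j}\hat\sigma_{v(j|K^{*})}^{2}\Big]+O_{P}(1/T)=-O_{P}(\underline m/T),
\]
using the previous paragraph. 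Hence $\tfrac T2[\cdot]=O_{P}(\underline m)$ and, conditionally on $\mathcal M$,
\[
\text{IC}(K,\lambda_{NT})-\text{IC}(K^{*},\lambda_{NT})=\lambda_{NT}(K-K^{*})-O_{P}(\underline m)\ge\lambda_{NT}-O_{P}(\underline m),
\]
which is positive with probability tending to one because $\lambda_{NT}/\sqrt{NT}\to\infty$ while $\underline m=o(\sqrt{NT})$ (e.g.\ for the recommended $\underline m\asymp(NT)^{1/4.8}$), so $\underline m=o(\lambda_{NT})$; the hypothesis $\lambda_{NT}/(NT)\to0$ is not needed here, being used only in Lemma \ref{LE:IC1} for the under-fitting side. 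A union bound over $K=K^{*}+1,\dots,\bar K$ together with $\Pr(\cdot)\ge\Pr(\cdot\mid\mathcal M)\Pr(\mathcal M)$ completes the argument.

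The hard part will be the uniform control in the second paragraph: one must verify that the reduction in residual sum of squares under refinement is only $O_{P}(\underline m)$ — not $O_{P}(NT)$ — \emph{uniformly over the finitely many candidate refinements}, whose subgroups can be as small as a single firm, so Assumption \ref{A:group} ($N_{k}\propto N$) no longer applies and one has to isolate the group-size-free content of Lemmas \ref{LE:splines-sample-2}--\ref{LE:Ak2}. One also needs $\hat\sigma_{v(k|K)}^{2}$ bounded away from $0$ for such small subgroups (which follows from $\underline m/T\to0$ and consistency of $\hat\pi_{(k|K)}$) so that the logarithm and its expansion are legitimate, and must confirm $\underline m=o(\lambda_{NT})$ from the maintained tuning conditions.
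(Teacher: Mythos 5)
Your proposal is correct in outline but takes a genuinely different route from the paper. The paper conditions on $\mathcal M$ together with an event $\mathcal M_{2}$ (one true group split in two, everything else intact, which is your nestedness observation for $K=K^{*}+1$), and then works \emph{directly with the variance estimators}: applying the $\sqrt{N_{k}T}$-rate of Theorem \ref{TH:post-estimation}(ii) to each sub-block gives $\log\hat\sigma^{2}_{v(K^{*}|K^{*}+1)}-\log\hat\sigma^{2}_{v(K^{*}|K^{*})}=O_{P}(1/\sqrt{N_{K^{*}1}T})$, so after weighting by $N_{K^{*}j}T$ and using $\sqrt{N_{K^{*}1}/N_{K^{*}}}+\sqrt{N_{K^{*}2}/N_{K^{*}}}\leq 2$ the likelihood part of $\mathrm{IC}(K)-\mathrm{IC}(K^{*})$ is $O_{P}(\sqrt{NT})$, which is exactly what the stated hypothesis $\lambda_{NT}/\sqrt{NT}\to\infty$ is calibrated to kill; $\lambda_{NT}/(NT)\to 0$ is indeed only used in Lemma \ref{LE:IC1}, as you guessed. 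Your route instead goes through the Pythagorean RSS identity and the sieve-coefficient rates to get the sharper degrees-of-freedom bound $O_{P}(\underline m)$, which buys a cleaner "extra parameters" interpretation but costs two things the paper's argument avoids: (a) you must control $\hat\pi_{(k|K)}$ and the eigenvalues of the sub-block design matrices rather than only the scalar variance estimates, and (b) your final step needs $\underline m=o(\lambda_{NT})$, which is \emph{not} among the lemma's stated hypotheses — it has to be extracted from Assumption \ref{A:tuning2} (it holds for the recommended $\underline m\asymp(NT)^{1/4.8}$, and $\underline m\ll\sqrt{N_{k}T}$ is in any case implicitly required for Theorem \ref{TH:post-estimation}(ii) to hold, but the paper's bound makes the stated penalty condition sufficient on its face). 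The small-subgroup issue you flag as "the hard part" is real but is equally present in the paper's proof, which also invokes Theorem \ref{TH:post-estimation} rates on sub-blocks of unrestricted size despite Assumption \ref{A:group}; neither argument fully resolves it.
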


\begin{lemma} \label{LE:IC3}Suppose Assumptions \ref{A:mixing}
- \ref{A:tuning2} hold. $\tilde{\lambda}_{NT}$ satisfies $\tilde{\lambda}_{NT}\rightarrow\infty$
and $\frac{\tilde{\lambda}_{NT}}{N}\rightarrow0$.

\noindent
\[
\Pr\left(\hat{\mathcal{K}}\left(\tilde{\lambda}_{NT}\right)=\mathcal{K}^{*}\right)\rightarrow1.
\]

\end{lemma} 
\begin{proof}[Proof of Lemma \ref{LE:bound}] \textbf{(i)} Set
\[
C_{NT}=\upsilon_{NT}^{-1}T\left(\log N\right)^{-4}.
\]
Define
\[
\mathbf{1}_{it}=\mathbf{1}\left(\left|e_{it}\right|<C_{NT}\right),
\qquad
\mathbf{\bar{1}}_{it}=1-\mathbf{1}_{it}.
\]
With this notation, further define
\[
e_{it}^{\left(1\right)}
=
e_{it}\mathbf{1}_{it}
-
\text{\emph{E}}\left(e_{it}\mathbf{1}_{it}\right),
\]
\[
e_{it}^{\left(2\right)}
=
e_{it}\mathbf{\bar{1}}_{it},
\qquad
e_{it}^{\left(3\right)}
=
-\text{\emph{E}}\left(e_{it}\mathbf{\bar{1}}_{it}\right).
\]
Then
\[
e_{it}-\text{\emph{E}}\left(e_{it}\right)
=
e_{it}^{\left(1\right)}
+
e_{it}^{\left(2\right)}
+
e_{it}^{\left(3\right)}.
\]

By the mixing condition in Assumption \ref{A:mixing}(ii), the moment
condition with $C_{q}>2$, and the covariance inequality for strong
mixing processes, there exists a positive finite constant $v_{0}$
such that, uniformly over $i,N$, and $T$,
\[
\max_{t}\text{\emph{Var}}\left(e_{it}^{\left(1\right)}\right)
+
2\max_{t}\sum_{s=t+1}^{T}
\left|
\text{\emph{Cov}}
\left(
e_{it}^{\left(1\right)},
e_{is}^{\left(1\right)}
\right)
\right|
\leq v_{0}.
\]

Using Bernstein inequality for strong mixing processes
(e.g., Theorem 2 in \citet{MerlevedeEtal}), there exist positive
constants $C_{1}$ and $C_{2}$ such that
\begin{align*}
&
\Pr\left(
\max_{i=1,\ldots,N}
\left|
\frac{1}{T}\sum_{t=1}^{T}e_{it}^{\left(1\right)}
\right|
>
\frac{\epsilon}{3\upsilon_{NT}}
\right)                                                    \\
\leq\;&
\sum_{i=1}^{N}
\Pr\left(
\left|
\sum_{t=1}^{T}e_{it}^{\left(1\right)}
\right|
>
\frac{T\epsilon}{3\upsilon_{NT}}
\right)                                                    \\
\leq\;&
N\exp\left\{
-
\frac{C_{1}T^{2}\epsilon^{2}}
{
9Tv_{0}\upsilon_{NT}^{2}
+
9C_{NT}^{2}\upsilon_{NT}^{2}
+
3T\epsilon\upsilon_{NT}C_{NT}(\log T)^{2}
}
\right\}.
\end{align*}
Since
\[
C_{NT}
=
\upsilon_{NT}^{-1}T(\log N)^{-4},
\]
we have
\[
C_{NT}^{2}\upsilon_{NT}^{2}
=
T^{2}(\log N)^{-8}
\]
and
\[
T\epsilon\upsilon_{NT}C_{NT}(\log T)^{2}
=
T^{2}\epsilon(\log N)^{-4}(\log T)^{2}.
\]
Moreover,
\[
T\upsilon_{NT}^{2}
=
o\left\{
T^{2}(\log N)^{-4}
\right\}
\]
because
\[
\upsilon_{NT}
\ll
T^{1/2}(\log N)^{-2}.
\]
Since $T=N^{C}$, we have $\log T=C\log N$, and hence
\[
(\log N)^{-4}(\log T)^{2}
\asymp
(\log N)^{-2}.
\]
It follows that, for some positive constant $C_{2}$,
\[
\Pr\left(
\max_{i=1,\ldots,N}
\left|
\frac{1}{T}\sum_{t=1}^{T}e_{it}^{\left(1\right)}
\right|
>
\frac{\epsilon}{3\upsilon_{NT}}
\right)
\leq
N\exp\left\{
-C_{2}(\log N)^{2}
\right\}
=
o\left(N^{-M}\right)
\]
for any sufficiently large $M>0$.

We next consider $e_{it}^{\left(2\right)}$. Notice that, if
\[
\max_{\substack{1\leq i\leq N\\1\leq t\leq T}}
|e_{it}|<C_{NT},
\]
then $e_{it}^{\left(2\right)}=0$ for every $i$ and $t$. Therefore,
\begin{align*}
&
\Pr\left(
\max_{i=1,\ldots,N}
\left|
\frac{1}{T}\sum_{t=1}^{T}e_{it}^{\left(2\right)}
\right|
>
\frac{\epsilon}{3\upsilon_{NT}}
\right)                                                    \\
\leq\;&
\Pr\left(
\max_{\substack{1\leq i\leq N\\1\leq t\leq T}}
|e_{it}|
\geq C_{NT}
\right)                                                    \\
\leq\;&
\sum_{i=1}^{N}\sum_{t=1}^{T}
\Pr\left(
|e_{it}|\geq C_{NT}
\right)                                                    \\
\leq\;&
\sum_{i=1}^{N}\sum_{t=1}^{T}
\frac{
\text{\emph{E}}\left(|e_{it}|^{C_{q}}\right)
}{
C_{NT}^{C_{q}}
}                                                          \\
\lesssim\;&
NTC_{NT}^{-C_{q}}                                          \\
=\;&
\frac{
NT\upsilon_{NT}^{C_{q}}
(\log N)^{4C_{q}}
}{
T^{C_{q}}
}.
\end{align*}

Finally, consider $e_{it}^{\left(3\right)}$. By the moment condition,
\begin{align*}
\left|
\frac{1}{T}\sum_{t=1}^{T}e_{it}^{\left(3\right)}
\right|
&\leq
\max_{t=1,\ldots,T}
\text{\emph{E}}
\left(
|e_{it}|\mathbf{\bar{1}}_{it}
\right)                                                     \\
&=
\max_{t=1,\ldots,T}
\text{\emph{E}}
\left[
|e_{it}|
\mathbf{1}\left(|e_{it}|\geq C_{NT}\right)
\right]                                                     \\
&\leq
\max_{t=1,\ldots,T}
\frac{
\text{\emph{E}}\left(|e_{it}|^{C_{q}}\right)
}{
C_{NT}^{C_{q}-1}
}                                                           \\
&\lesssim
C_{NT}^{-(C_{q}-1)}                                        \\
&=
\left(
\frac{\upsilon_{NT}}{T}
\right)^{C_{q}-1}
(\log N)^{4(C_{q}-1)}.
\end{align*}
Furthermore,
\begin{align*}
&
\upsilon_{NT}
\left(
\frac{\upsilon_{NT}}{T}
\right)^{C_{q}-1}
(\log N)^{4(C_{q}-1)}                                      \\
=\;&
\frac{
\upsilon_{NT}^{C_{q}}
(\log N)^{4(C_{q}-1)}
}{
T^{C_{q}-1}
}
=o(1).
\end{align*}
Indeed, writing
\[
\upsilon_{NT}
=
o\left\{
T^{1/2}(\log N)^{-2}
\right\},
\]
the expression on the preceding display is
\[
o\left\{
T^{1-C_{q}/2}
(\log N)^{2C_{q}-4}
\right\}
=
o(1),
\]
because $C_{q}>2$ and $T=N^{C}$. Thus, uniformly over
$i=1,\ldots,N$, for all sufficiently large $T$,
\[
\left|
\frac{1}{T}\sum_{t=1}^{T}e_{it}^{\left(3\right)}
\right|
<
\frac{\epsilon}{3\upsilon_{NT}}.
\]
Consequently,
\[
\Pr\left(
\max_{i=1,\ldots,N}
\left|
\frac{1}{T}\sum_{t=1}^{T}e_{it}^{\left(3\right)}
\right|
>
\frac{\epsilon}{3\upsilon_{NT}}
\right)
=
0
\]
for all sufficiently large $T$.

Combining the preceding three bounds yields
\begin{align*}
&
\Pr\left(
\max_{i=1,\ldots,N}
\left|
\frac{1}{T}\sum_{t=1}^{T}
\left[
e_{it}-\text{\emph{E}}\left(e_{it}\right)
\right]
\right|
>
\frac{\epsilon}{\upsilon_{NT}}
\right)                                                     \\
\leq\;&
\Pr\left(
\max_{i=1,\ldots,N}
\left|
\frac{1}{T}\sum_{t=1}^{T}e_{it}^{\left(1\right)}
\right|
>
\frac{\epsilon}{3\upsilon_{NT}}
\right)                                                     \\
&+
\Pr\left(
\max_{i=1,\ldots,N}
\left|
\frac{1}{T}\sum_{t=1}^{T}e_{it}^{\left(2\right)}
\right|
>
\frac{\epsilon}{3\upsilon_{NT}}
\right)                                                     \\
&+
\Pr\left(
\max_{i=1,\ldots,N}
\left|
\frac{1}{T}\sum_{t=1}^{T}e_{it}^{\left(3\right)}
\right|
>
\frac{\epsilon}{3\upsilon_{NT}}
\right)                                                     \\
\lesssim\;&
\frac{
NT\upsilon_{NT}^{C_{q}}
(\log N)^{4C_{q}}
}{
T^{C_{q}}
}.
\end{align*}

\textbf{(ii)} Recall (\ref{eq:bias_rate}), we can find a positive
$\bar{C_{b}}$ such that 
\[
\max_{l=0,...,p}\sup_{s\in\left[0,1\right]}\left|b_{il}\left(s\right)\right|\leq\bar{C_{b}}m^{-\kappa},
\]
because $p$ is finite. As a result, 
\[
\max_{i=1,...,N}\max_{l=0,1,...,p}\left|\frac{1}{T}\sum_{t=1}^{T}e_{it}b_{il}\left(\tau_{t}\right)\right|\leq\bar{C_{b}}m^{-\kappa}\max_{i=1,...,N}\frac{1}{T}\sum_{t=1}^{T}\left|e_{it}\right|.
\]
Applying (i) by setting $\upsilon_{NT}=1$ for series $\left|e_{it}\right|$
implies 
\[
\Pr\left(\max_{i=1,...,N}\left|\frac{1}{T}\sum_{t=1}^{T}\left|e_{it}\right|-\frac{1}{T}\sum_{t=1}^{T}\text{E}\left|e_{it}\right|\right|>\epsilon\right)\lesssim\frac{NT\left(\log N\right)^{4C_{q}}}{T^{C_{q}}}.
\]
Since $\max_{i=1,...,N}\frac{1}{T}\sum_{t=1}^{T}\text{E}\left|e_{it}\right|$
is bounded and $\epsilon$ can be any arbitrary small positive number,
we can set 
\[
M=2\bar{C_{b}}\cdot\max_{i=1,...,N}\frac{1}{T}\sum_{t=1}^{T}\text{E}\left|e_{it}\right|,
\]
so that the desired result holds. 
\end{proof}
\begin{proof}[Proof of Lemma \ref{LE:splines}]
Let $g_{l}\left(\tau_{t}\right)=\pi_{l}'\mathbb{B}^{m}\left(\tau_{t}\right)$,
where $\pi_{l}$ is $m\times1$ and denotes the corresponding elements
in $\tilde{\pi}$ for $x_{it,l}\otimes\mathbb{B}^{m}\left(\tau_{t}\right)$,
$l=0,1,...,p$.\footnote{We abuse the notation a bit by letting $x_{it,0}$ denote 1.}
Let $\mathbf{g}\left(\tau_{t}\right)\equiv\left(g_{0}\left(\tau_{t}\right),...,g_{p}\left(\tau_{t}\right)\right)'$.
In other words, $g_{l}\left(\tau_{t}\right)$ is the approximation
for $\beta_{l}\left(\tau_{t}\right)$ when $\pi_{l}$=$\pi_{l}^{0}$.
Then $\frac{1}{T}\sum_{t=1}^{T}\text{E}\left(\tilde{\pi}'\tilde{z}_{it}\tilde{z}_{it}'\tilde{\pi}\right)$
can be written as 
\begin{align*}
\frac{1}{T}\sum_{t=1}^{T}\text{E}\left(\tilde{\pi}'\tilde{z}_{it}\tilde{z}_{it}'\tilde{\pi}\right) & =\frac{1}{T}\sum_{t=1}^{T}\text{E}\left[\mathbf{g}\left(\tau_{t}\right)'\tilde{x}_{it}\tilde{x}_{it}'\mathbf{g}\left(\tau_{t}\right)\right]\\
 & =\frac{1}{T}\sum_{t=1}^{T}\mathbf{g}\left(\tau_{t}\right)'\text{E}\left(\tilde{x}_{it}\tilde{x}_{it}'\right)\mathbf{g}\left(\tau_{t}\right),
\end{align*}
By the rank condition in Assumption \ref{A:rank}, the above implies
\begin{align}
\underline{C}_{xx}\frac{1}{T}\sum_{t=1}^{T}\mathbf{g}\left(\tau_{t}\right)'\mathbf{g}\left(\tau_{t}\right) & \leq\min_{1\leq i\leq N}\frac{1}{T}\sum_{t=1}^{T}\text{E}\left(\tilde{\pi}'\tilde{z}_{it}\tilde{z}_{it}'\tilde{\pi}\right)\nonumber \\
 & \leq\max_{1\leq i\leq N}\frac{1}{T}\sum_{t=1}^{T}\text{E}\left(\tilde{\pi}'\tilde{z}_{it}\tilde{z}_{it}'\tilde{\pi}\right)\leq\bar{C}_{xx}\frac{1}{T}\sum_{t=1}^{T}\mathbf{g}\left(\tau_{t}\right)'\mathbf{g}\left(\tau_{t}\right).\label{eq:zpi}
\end{align}

Lemma A.4 in \citet{DongLinton2018} implies that, $T^{-1}\sum_{t=1}^{T}\mathbb{B}^{m}\left(\tau_{t}\right)\mathbb{B}^{m}\left(\tau_{t}\right)'=I_{m}+O\left(m/T\right)$.
Using it 
\begin{align*}
\frac{1}{T}\sum_{t=1}^{T}g_{l}\left(\tau_{t}\right)^{2} & =\frac{1}{T}\sum_{t=1}^{T}\tilde{\pi}_{l}'\mathbb{B}^{m}\left(\tau_{t}\right)\mathbb{B}^{m}\left(\tau_{t}\right)'\tilde{\pi}_{l}\\
 & =\tilde{\pi}_{l}'\left[I_{m}+O\left(m/T\right)\right]\tilde{\pi}_{l}=\left\Vert \tilde{\pi}_{l}\right\Vert \left[1+o\left(1\right)\right],
\end{align*}
due to $m/T\rightarrow0$. Substituting the above result into (\ref{eq:zpi})
yields the desired result by relaxing the lower and upper bounds to
$\underline{C}_{xx}/2\left\Vert \tilde{\pi}_{l}\right\Vert $ and
$2\bar{C}_{xx}\left\Vert \tilde{\pi}_{l}\right\Vert $, respectively. 
\end{proof}
\begin{proof}[Proof of Lemma \ref{LE:splines-sample-1}]
\textbf{(i)} is standard in the literature. Lemma 5 in \citet{FanEtal2011}
and Lemma S1.4 in \citet{SuEtal2024} show a similar result for variables
with sub-exponential tails. We prove this lemma for $\text{E}\left(\left|x_{it}\right|^{q}\right)<\infty$.
Note that $\tilde{z}_{it}=\left[\mathbb{B}^{m}\left(\tau_{t}\right)^{\prime},\left(x_{it}\otimes\mathbb{B}^{m}\left(\tau_{t}\right)\right)^{\prime}\right]$,
a $\left[\left(p+1\right)m\right]\times1$ vector, then 
\begin{align*}
\frac{1}{T}\sum_{t=1}^{T}\tilde{\pi}'\tilde{z}_{it}\tilde{z}_{it}'\tilde{\pi} & =\frac{1}{T}\sum_{t=1}^{T}\left[\sum_{j=0}^{p}\sum_{l=0}^{p}\sum_{s=0}^{m-1}\sum_{s'=0}^{m-1}\pi_{js}\pi_{ls'}B_{s}\left(\tau_{t}\right)B_{s'}\left(\tau_{t}\right)x_{itj}x_{itl}\right]\\
 & =\sum_{j=0}^{p}\sum_{l=0}^{p}\sum_{s=0}^{m-1}\sum_{s'=0}^{m-1}\left[\frac{1}{T}\sum_{t=1}^{T}\pi_{js}\pi_{ls'}B_{s}\left(\tau_{t}\right)B_{s'}\left(\tau_{t}\right)x_{itj}x_{itl}\right]
\end{align*}
where we let $x_{it0}$ denote the constant 1. Using the above, 
\begin{align*}
 & \frac{1}{T}\sum_{t=1}^{T}\tilde{\pi}'\tilde{z}_{it}\tilde{z}_{it}'\tilde{\pi}-\frac{1}{T}\sum_{t=1}^{T}\text{E}\left(\tilde{\pi}'\tilde{z}_{it}\tilde{z}_{it}'\tilde{\pi}\right)\\
= & \sum_{j=0}^{p}\sum_{l=0}^{p}\sum_{s=0}^{m-1}\sum_{s'=0}^{m-1}\pi_{js}\pi_{ls'}\left\{ \frac{1}{T}\sum_{t=1}^{T}B_{s}\left(\tau_{t}\right)B_{s'}\left(\tau_{t}\right)\left[x_{itj}x_{itl}-\text{E}\left(x_{itj}x_{itl}\right)\right]\right\} \\
\equiv & \sum_{j=0}^{p}\sum_{l=0}^{p}\sum_{s=0}^{m-1}\sum_{s'=0}^{m-1}\pi_{js}\pi_{ls'}I_{iss'},
\end{align*}
with 
\[
I_{iss'}\equiv\frac{1}{T}\sum_{t=1}^{T}B_{s}\left(\tau_{t}\right)B_{s'}\left(\tau_{t}\right)\left[x_{itj}x_{itl}-\text{E}\left(x_{itj}x_{itl}\right)\right].
\]
We first assume the event $\left\{ \max_{i=1,...,N}\max_{s,s'=1,...,m}\left|I_{iss'}\right|\leq\frac{\epsilon}{mp^{2}}\right\} $
holds; its probability bound will be shown later. Conditional on this
event, for all $\left\Vert \tilde{\pi}\right\Vert =1,$ 
\begin{align*}
\max_{i=1,...,N}\left|\sum_{j=0}^{p}\sum_{l=0}^{p}\sum_{s=0}^{m-1}\sum_{s'=0}^{m-1}\pi_{js}\pi_{ls'}I_{iss'}\right| & \leq\frac{\epsilon}{mp^{2}}\sum_{j=0}^{p}\sum_{l=0}^{p}\sum_{s=0}^{m-1}\sum_{s'=0}^{m-1}\left|\pi_{js}\pi_{ls'}\right|.\\
 & \leq\frac{\epsilon}{mp^{2}}\sum_{j=0}^{p}\sum_{l=0}^{p}\sum_{s=0}^{m-1}\left[\left|\pi_{js}\right|\sum_{s'=0}^{m-1}\left|\pi_{ls'}\right|\right]\\
 & \leq\frac{\epsilon}{mp^{2}}\sum_{j=0}^{p}\sum_{l=0}^{p}\sum_{s=0}^{m-1}\left|\pi_{js}\right|\left(\sum_{s'=0}^{m-1}\pi_{ls'}^{2}\right)^{1/2}\left(\sum_{s'=0}^{m-1}1^{2}\right)^{1/2}\\
 & =\frac{\epsilon}{\sqrt{m}p^{2}}\sum_{j=0}^{p}\sum_{l=0}^{p}\sum_{s=0}^{m-1}\left|\pi_{js}\right|\leq\epsilon,
\end{align*}
where the third and fourth lines hold by Cauchy-Schwarz inequality.
Thus 
\begin{align*}
 & \Pr\left(\max_{i=1,...,N}\sup_{\tilde{\pi}\in\mathbb{R}^{m\left(p+1\right)},\left\Vert \tilde{\pi}\right\Vert =1}\left|\sum_{j=0}^{p}\sum_{l=0}^{p}\sum_{s=0}^{m-1}\sum_{s'=0}^{m-1}\pi_{js}\pi_{ls'}I_{iss'}\right|\leq\epsilon\right)\\
\geq & \Pr\left(\max_{i=1,...,N}\max_{s,s'=1,...,m}\left|I_{iss'}\right|\leq\frac{\epsilon}{mp^{2}}\right)
\end{align*}
which is equivalent to 
\begin{align}
 & \Pr\left(\max_{i=1,...,N}\sup_{\tilde{\pi}\in\mathbb{R}^{m\left(p+1\right)},\left\Vert \tilde{\pi}\right\Vert =1}\left|\sum_{j=0}^{p}\sum_{l=0}^{p}\sum_{s=0}^{m-1}\sum_{s'=0}^{m-1}\pi_{js}\pi_{ls'}I_{iss'}\right|>\epsilon\right)\nonumber \\
< & \Pr\left(\max_{i=1,...,N}\max_{s,s'=1,...,m}\left|I_{iss'}\right|>\frac{\epsilon}{mp^{2}}\right).\label{eq:iiss}
\end{align}

Using the result in Lemma \ref{LE:bound} (i) (by setting $\upsilon_{NT}=m$),
the moment condition on $x$ ($C_{q}$ in Lemma \ref{LE:bound} is
$q/2$ in this case), and the fact that $p$ is fixed, not hard to
see that 
\begin{align*}
\Pr\left(\max_{i=1,...,N}\max_{s,s'=1,...,m}\left|I_{iss'}\right|>\frac{\epsilon}{mp^{2}}\right) & \leq\sum_{s,s'=1}^{m}\Pr\left(\max_{i=1,...,N}\left|I_{iss'}\right|>\frac{\epsilon}{p^{2}}\right)\\
 & \lesssim\sum_{s,s'=1}^{m}\frac{NTm^{q/2}\left(\log N\right)^{2q}}{T^{q/2}}\\
 & \lesssim\frac{NTm^{q/2+2}\left(\log N\right)^{2q}}{T^{q/2}}.
\end{align*}
Substitute it into (\ref{eq:iiss}), we obtain the desired result.

\textbf{(ii)} is an immediate result from Lemmas \ref{LE:splines}
and the result (i).

Suppose the complement of the event in (i) holds with a small $\epsilon>0$.
For any $\tilde{\pi}\in\mathbb{R}^{m\left(p+1\right)}$ and $\left\Vert \tilde{\pi}\right\Vert =1$,
\[
\max_{1\leq i\leq N}\tilde{\pi}'\left(\frac{1}{T}\sum_{t=1}^{T}\tilde{z}_{it}\tilde{z}_{it}'\right)\tilde{\pi}\leq\max_{1\leq i\leq N}\left(1+\epsilon\right)\frac{1}{T}\sum_{t=1}^{T}\text{E}\left(\tilde{\pi}'\tilde{z}_{it}\tilde{z}_{it}'\tilde{\pi}\right)\leq2\bar{C}_{xx}\left(1+\epsilon\right).
\]
Similarly, 
\[
\min_{1\leq i\leq N}\tilde{\pi}'\left(\frac{1}{T}\sum_{t=1}^{T}\tilde{z}_{it}\tilde{z}_{it}'\right)\tilde{\pi}\geq\min_{1\leq i\leq N}\left(1-\epsilon\right)\frac{1}{T}\sum_{t=1}^{T}\text{E}\left(\tilde{\pi}'\tilde{z}_{it}\tilde{z}_{it}'\tilde{\pi}\right)\geq\frac{1}{2}\bar{C}_{xx}\left(1-\epsilon\right).
\]
We show the result by setting $\underline{C}_{zz}=1/2\bar{C}_{xx}\left(1-\epsilon\right)$
and $2\bar{C}_{xx}\left(1+\epsilon\right)$. 
\end{proof}
\begin{proof}[Proof of Lemma \ref{LE:bound-ours}]
\textbf{(i)} $\tilde{z}_{it}v_{it}$ is $(p+1)m$$\times1$. Elements
in $\frac{1}{T}\sum_{t=1}^{T}\tilde{z}_{it}v_{it}$ are 
\[
\frac{1}{T}\sum_{t=1}^{T}x_{itj}B_{l}\left(\tau_{t}\right)v_{it},\text{ for }j=0,...,p\text{ and }l=0,...,m-1.
\]
The idea is to obtain the probability bound of 
\[
\max_{i=1,...,N}\max_{j=0,...,p}\max_{l=0,...,m-1}\left|\frac{1}{T}\sum_{t=1}^{T}x_{itj}B_{l}\left(\tau_{t}\right)v_{it}\right|>\epsilon/\sqrt{\left(p+1\right)m},
\]
with which we can derive the probability bound of the event $\max_{i=1,...,N}\left\Vert \frac{1}{T}\sum_{t=1}^{T}\tilde{z}_{it}v_{it}\right\Vert >\epsilon,$
because the latter can be implied by the former. By the moment condition
in Assumption \ref{A:moment}, Lemma \ref{LE:bound} (i) implies that
(by setting $\upsilon_{NT}=\sqrt{m}$ and $C_{q}=q/2$) 
\begin{align*}
 & \Pr\left(\max_{i=1,...,N}\max_{j=0,...,p}\max_{l=0,...,m-1}\left|\frac{1}{T}\sum_{t=1}^{T}x_{itj}B_{l}\left(\tau_{t}\right)v_{it}\right|>\frac{\epsilon}{\sqrt{\left(p+1\right)m}}\right)\\
\leq & \sum_{l=0}^{m-1}\sum_{j=0}^{p}\Pr\left(\max_{i=1,...,N}\left|\frac{1}{T}\sum_{t=1}^{T}x_{itj}B_{l}\left(\tau_{t}\right)v_{it}\right|>\frac{\epsilon}{\sqrt{\left(p+1\right)m}}\right)\lesssim\frac{NTm^{q/4+1}\left(\log N\right)^{2q}}{T^{q/2}},
\end{align*}
using the fact that $p$ is a fixed constant. As a result 
\begin{align*}
 & \Pr\left(\max_{i=1,...,N}\left\Vert \frac{1}{T}\sum_{t=1}^{T}\tilde{z}_{it}v_{it}\right\Vert >\epsilon\right)\\
\leq & \Pr\left(\max_{i=1,...,N}\max_{j=0,...,p}\max_{l=0,...,m-1}\left|\frac{1}{T}\sum_{t=1}^{T}x_{itj}B_{l}\left(\tau_{t}\right)v_{it}\right|>\frac{\epsilon}{\sqrt{\left(p+1\right)m}}\right)\\
\lesssim & \frac{NTm^{q/4+1}\left(\log N\right)^{2q}}{T^{q/2}}.
\end{align*}

\textbf{(ii)} Similarly, $\tilde{z}_{it}\xi_{it}$ is $(p+1)m$$\times1$.
Elements in $\frac{1}{T}\sum_{t=1}^{T}\tilde{z}_{it}\xi_{it}$ are
\begin{align*}
 & \frac{1}{T}\sum_{t=1}^{T}x_{itj}B_{l}\left(\tau_{t}\right)\xi_{it}=\frac{1}{T}\sum_{t=1}^{T}\sum_{j'=0}^{p}x_{itj}x_{itj'}B_{l}\left(\tau_{t}\right)b_{ij'}\left(\tau_{t}\right)
\end{align*}
$\text{ for }j=0,1,...,p\text{ and }l=0,1,...,m-1$. By the condition
$\kappa\geq1$ in Assumption \ref{A:coef}, so that $m^{-\kappa}\ll m^{-1/2}$.
Using Lemma \ref{LE:bound} (ii), 
\begin{align*}
 & \Pr\left(\max_{i=1,...,N}\max_{j=0,...,p}\max_{l=0,...,m-1}\left|\frac{1}{T}\sum_{t=1}^{T}\sum_{j'=0}^{p}x_{itj}x_{itj'}B_{l}\left(\tau_{t}\right)b_{ij'}\left(\tau_{t}\right)\right|>\frac{\epsilon}{\sqrt{\left(p+1\right)m}}\right)\\
\leq & \sum_{j=0}^{p}\sum_{l=0}^{m-1}\sum_{j'=0}^{p}\Pr\left(\max_{i=1,...,N}\left|\frac{1}{T}\sum_{t=1}^{T}x_{itj}x_{itj'}B_{l}\left(\tau_{t}\right)b_{ij'}\left(\tau_{t}\right)\right|>\frac{\epsilon}{\left(p+1\right)^{3/2}\sqrt{m}}\right)\\
\leq & \sum_{j=0}^{p}\sum_{l=0}^{m-1}\sum_{j'=0}^{p}\Pr\left(\max_{i=1,...,N}\left|\frac{1}{T}\sum_{t=1}^{T}x_{itj}x_{itj'}B_{l}\left(\tau_{t}\right)b_{ij'}\left(\tau_{t}\right)\right|>Mm^{-\kappa}\right)\\
\lesssim & \frac{NTm\left(\log N\right)^{2q}}{T^{q/2}}.
\end{align*}
for an arbitrary large $M$ after some large $T,$ where the third
line uses $m^{-\kappa}\ll m^{-1/2}$ and the last line holds by applying
Lemma \ref{LE:bound} (ii) with $C_{q}=q/2$ and the fact that $p$
is fixed.

We reach the desired result by 
\begin{align*}
 & \Pr\left(\max_{i=1,...,N}\left\Vert \frac{1}{T}\sum_{t=1}^{T}\tilde{z}_{it}\xi_{it}\right\Vert >\epsilon\right)\\
\leq & \Pr\left(\max_{i=1,...,N}\max_{l=0,...,m}\max_{j=0,...,p}\left|\frac{1}{T}\sum_{t=1}^{T}\sum_{j'=0}^{p}x_{itj}x_{itj'}B_{l}\left(\tau_{t}\right)b_{ij'}\left(\tau_{t}\right)\right|>\frac{\epsilon}{\sqrt{\left(p+1\right)m}}\right).
\end{align*}
\end{proof}
\begin{proof}[Proof of Lemma \ref{LE:uni_converge}]
\textbf{(i)} By (\ref{eq:pihat}) and (\ref{eq:yit_withbias}), we
substitute $y_{it}=\tilde{z}_{it}'\tilde{\pi}_{i}^{0}+\xi_{it}+v_{it}$
into $\widehat{\tilde{\pi}}_{i}$, and obtain 
\begin{align*}
\widehat{\tilde{\pi}}_{i} & -\tilde{\pi}_{i}^{0}=\left(\frac{1}{T}\sum_{t=1}^{T}\tilde{z}_{it}\tilde{z}_{it}'\right)^{-1}\left(\frac{1}{T}\sum_{t=1}^{T}\tilde{z}_{it}\xi_{it}\right)+\left(\frac{1}{T}\sum_{t=1}^{T}\tilde{z}_{it}\tilde{z}_{it}'\right)^{-1}\left(\frac{1}{T}\sum_{t=1}^{T}\tilde{z}_{it}v_{it}\right).
\end{align*}

Then 
\[
\left\Vert \widehat{\tilde{\pi}}_{i}-\tilde{\pi}_{i}^{0}\right\Vert \leq\mu_{\min}^{-1}\left(\frac{1}{T}\sum_{t=1}^{T}\tilde{z}_{it}\tilde{z}_{it}'\right)\left[\left\Vert \frac{1}{T}\sum_{t=1}^{T}\tilde{z}_{it}\xi_{it}\right\Vert +\left\Vert \frac{1}{T}\sum_{t=1}^{T}\tilde{z}_{it}v_{it}\right\Vert \right],
\]
which implies 
\begin{align*}
 & \Pr\left(\max_{1\leq i\leq N}\left\Vert \widehat{\tilde{\pi}}_{i}-\tilde{\pi}_{i}^{0}\right\Vert >\epsilon\right)\\
\leq & \Pr\left(\max_{1\leq i\leq N}\mu_{\min}^{-1}\left(\frac{1}{T}\sum_{t=1}^{T}\tilde{z}_{it}\tilde{z}_{it}'\right)>C_{zz}^{-1}\right)+\Pr\left(\max_{1\leq i\leq N}\left\Vert \frac{1}{T}\sum_{t=1}^{T}\tilde{z}_{it}\xi_{it}\right\Vert +\left\Vert \frac{1}{T}\sum_{t=1}^{T}\tilde{z}_{it}v_{it}\right\Vert >C_{zz}\epsilon\right)\\
\leq & \Pr\left(\max_{1\leq i\leq N}\mu_{\min}^{-1}\left(\frac{1}{T}\sum_{t=1}^{T}\tilde{z}_{it}\tilde{z}_{it}'\right)>C_{zz}^{-1}\right)+\Pr\left(\max_{1\leq i\leq N}\left\Vert \frac{1}{T}\sum_{t=1}^{T}\tilde{z}_{it}\xi_{it}\right\Vert >\frac{C_{zz}\epsilon}{2}\right)\\
 & +\Pr\left(\max_{1\leq i\leq N}\left\Vert \frac{1}{T}\sum_{t=1}^{T}\tilde{z}_{it}v_{it}\right\Vert >\frac{C_{zz}\epsilon}{2}\right)\\
= & o\left(1\right),
\end{align*}
by Lemmas \ref{LE:splines-sample-1} and \ref{LE:bound-ours}, and
the rates in Assumption \ref{A:tuningPara} (ii).

\textbf{(ii)} Note that 
\begin{align*}
\hat{\sigma}_{vi}^{2} & -\sigma_{vi}^{2}=\frac{1}{T-1}\sum_{t=1}^{T}\left(y_{it}-\tilde{z}_{it}'\widehat{\tilde{\pi}}_{i}\right)^{2}-\sigma_{vi}^{2}\\
 & =\frac{1}{T-1}\sum_{t=1}^{T}\left(\tilde{z}_{it}'\left(\tilde{\pi}_{i}^{0}-\widehat{\tilde{\pi}}_{i}\right)+\xi_{it}+v_{it}\right)^{2}-\sigma_{vi}^{2}\\
 & =\frac{1}{T-1}\sum_{t=1}^{T}v_{it}^{2}-\sigma_{vi}^{2}+\left(\tilde{\pi}_{i}^{0}-\widehat{\tilde{\pi}}_{i}\right)'\left(\frac{1}{T-1}\sum_{t=1}^{T}\tilde{z}_{it}\tilde{z}_{it}'\right)\left(\tilde{\pi}_{i}^{0}-\widehat{\tilde{\pi}}_{i}\right)+\frac{1}{T-1}\sum_{t=1}^{T}\xi_{it}^{2}\\
 & +2\left[\frac{1}{T-1}\sum_{t=1}^{T}\left(\xi_{it}+v_{it}\right)z_{it}'\right]\left(\tilde{\pi}_{i}^{0}-\widehat{\tilde{\pi}}_{i}\right)+2\frac{1}{T-1}\sum_{t=1}^{T}\xi_{it}v_{it}\\
 & \equiv A_{i1}+A_{i2}+A_{i3}+A_{i4}+A_{i5}.
\end{align*}
The uniform convergence of $A_{i1},A_{i2},A_{i3},A_{i4},\text{ and }A_{i5}$
can be similarly shown as in Lemmas \ref{LE:bound}, \ref{LE:splines-sample-1},
\ref{LE:bound-ours} and part (i) of this lemma. We omit the proof
for conciseness. Therefore 
\[
\Pr\left(\max_{i=1,...,N}\left|\hat{\sigma}_{vi}^{2}-\sigma_{vi}^{2}\right|>\epsilon\right)=o\left(1\right).
\]
\end{proof}
\begin{proof}[Proof of Lemma \ref{LE:z_k'z_k}]
Recall that Lemma \ref{LE:splines} shows that 
\[
\frac{\underline{C}_{xx}}{2}\leq\mu_{\min}\left\{ \text{E}\left(\frac{1}{T}\sum_{t=1}^{T}\tilde{\underline{z}}_{it}\tilde{\underline{z}}_{it}'\right)\right\} \leq\mu_{\max}\left\{ \text{E}\left(\frac{1}{T}\sum_{t=1}^{T}\tilde{\underline{z}}_{it}\tilde{\underline{z}}_{it}'\right)\right\} \leq2\bar{C}_{xx},
\]
after some large $T.$ Note $\tilde{z}_{it}=\left(1,z_{it}'\right)'.$
Thus, 
\[
\text{E}\left(\frac{1}{T}\sum_{t=1}^{T}\tilde{\underline{z}}_{it}\tilde{\underline{z}}_{it}'\right)=\left(\begin{array}{cc}
1 & \text{E}\left(\frac{1}{T}\sum_{t=1}^{T}\underline{z}_{it}'\right)\\
\text{E}\left(\frac{1}{T}\sum_{t=1}^{T}\underline{z}_{it}\right) & \text{E}\left(\frac{1}{T}\sum_{t=1}^{T}\underline{z}_{it}\underline{z}_{it}'\right)
\end{array}\right).
\]
With the block representation of $\text{E}\left(\frac{1}{T}\sum_{t=1}^{T}\tilde{\underline{z}}_{it}\tilde{\underline{z}}_{it}'\right)$
and its full rank condition, the inverse of it can be calculated as
(we only present its lower diagonal): 
\begin{align*}
\left[\text{E}\left(\frac{1}{T}\sum_{t=1}^{T}\tilde{\underline{z}}_{it}\tilde{\underline{z}}_{it}'\right)\right]^{-1} & =\left(\begin{array}{cc}
\cdot_{1\times1} & \cdot_{1\times\left(mp+m-1\right)}\\
\cdot_{\left(mp+m-1\right)\times1} & \left[\text{E}\left(\frac{1}{T}\sum_{t=1}^{T}\underline{z}_{it}\underline{z}_{it}'\right)-\text{E}\left(\frac{1}{T}\sum_{t=1}^{T}\underline{z}_{it}\right)\text{E}\left(\frac{1}{T}\sum_{t=1}^{T}\underline{z}_{it}'\right)\right]^{-1}
\end{array}\right)\\
 & =\left(\begin{array}{cc}
\cdot_{1\times1} & \cdot_{1\times\left(mp+m-1\right)}\\
\cdot_{\left(mp+m-1\right)\times1} & \left[\text{E}\left(\frac{1}{T}\sum_{t=1}^{T}\ddot{\underline{z}}_{it}\ddot{\underline{z}}_{it}'\right)\right]^{-1}
\end{array}\right).
\end{align*}

The above implies that $\text{E}\left(\frac{1}{T}\sum_{t=1}^{T}\ddot{z}_{it}\ddot{z}_{it}'\right)$
must be of full rank, and 
\[
\frac{\bar{C}_{xx}^{-1}}{2}\leq\min_{1\leq i\leq N}\mu_{\min}\left\{ \left[\text{E}\left(\frac{1}{T}\sum_{t=1}^{T}\underline{\ddot{z}}_{it}\underline{\ddot{z}}_{it}'\right)\right]^{-1}\right\} \leq\max_{1\leq i\leq N}\mu_{\max}\left\{ \left[\text{E}\left(\frac{1}{T}\sum_{t=1}^{T}\underline{\ddot{z}}_{it}\underline{\ddot{z}}_{it}'\right)\right]^{-1}\right\} \leq2\underline{C}_{xx}^{-1}.
\]
We obtain the desired result by $\min_{1\leq i\leq N}\mu_{\min}\left(A_{i}^{-1}\right)=1\left/\max_{1\leq i\leq N}\mu_{\max}\left(A_{i}\right)\right.$
for any full rank matrices $\left\{ A_{i}\right\} _{i=1}^{N}$. 
\end{proof}
\begin{proof}[Proof of Lemma \ref{LE:splines-sample-2}]
The proof here follows that of Lemma \ref{LE:splines-sample-1}.
We similarly define

\[
I_{ss'\left(k\right)}\equiv\frac{1}{N_{k}}\sum_{i\in G_{k|K^{*}}}\frac{1}{T}\sum_{t=1}^{T}\left\{ B_{s}\left(\tau_{t}\right)B_{s'}\left(\tau_{t}\right)\left[x_{itj}x_{itl}-\text{E}\left(x_{itj}x_{itl}\right)\right]\right\} .
\]
Its probability bound is 
\begin{align*}
\Pr\left(\max_{s,s'=1,...,\underline{m}}\left|I_{ss'\left(k\right)}\right|>\frac{\epsilon}{\underline{m}p^{2}}\right) & \leq\sum_{s,s'=1}^{\underline{m}}\Pr\left(\left|I_{ss'\left(k\right)}\right|>\frac{\epsilon}{\underline{m}p^{2}}\right)\\
 & \lesssim\frac{\underline{m}^{q/2+2}\left(\log N\right)^{2q}}{\left(NT\right)^{q/2-1}},
\end{align*}
where we use a similar argument as in the proof of (i) in Lemma \ref{LE:bound}
and we set $\upsilon_{NT}=\underline{m}p^{2}$, $C_{NT}=\upsilon_{NT}^{-1}NT\left(\log N\right)^{-4}$,
in addition, we use the assumptions that $T\propto N^{C}$ and $N_{k}\propto N$.

The desired result follows, using the same logic as in the proof of
(i) in Lemma \ref{LE:bound}. 
\end{proof}
\begin{proof}[Proof of Lemma \ref{LE:Ak1}]
To analyze $A_{k1}$, we introduce some new notation: 
\[
\ddot{\underline{Z}}_{\left(k\right)}\equiv\left(\underline{\ddot{z}}_{11},...,\underline{\ddot{z}}_{1T},....,\underline{\ddot{z}}_{N_{k}1},...,\underline{\ddot{z}}_{N_{k}T}\right)',
\]
a $N_{k}T\times\left(\underline{m}-1+\underline{m}p\right)$ matrix,
where we abuse the notation by letting $\underline{\ddot{z}}_{i1}$
in $\ddot{\underline{Z}}_{\left(k\right)}$ denote all observations
in $G_{k|K}$. Similarly 
\[
\ddot{\xi}_{\left(k\right)}\equiv\left(\ddot{\xi}_{11},...,\ddot{\xi}_{1T},...,\ddot{\xi}_{N_{k}1},...,\ddot{\xi}_{N_{k}T}\right)',
\]
a $N_{k}T\times1$ vector.

By Lemmas \ref{LE:z_k'z_k}\textbf{ }and \ref{LE:splines-sample-2}
on\textbf{ $\frac{1}{N_{k}T}\ddot{Z}_{\left(k\right)}'\ddot{Z}_{\left(k\right)},$}
with\textbf{ }very probability, 
\begin{equation}
\mu_{\min}\left(\frac{1}{N_{k}T}\ddot{Z}_{\left(k\right)}'\ddot{Z}_{\left(k\right)}\right)=\mu_{\min}\left(\frac{1}{N_{k}T}\sum_{i\in G_{k|K^{*}}}\sum_{t=1}^{T}\underline{\ddot{z}}_{it}\underline{\ddot{z}}_{it}'\right)\geq\frac{\underline{C}_{xx}}{2}-\epsilon\geq\frac{\underline{C}_{xx}}{3},\label{eq:zk'zk}
\end{equation}
by setting a small enough $\epsilon$, and similarly 
\[
\mu_{\max}\left(\frac{1}{N_{k}T}\ddot{Z}_{\left(k\right)}'\ddot{Z}_{\left(k\right)}\right)\leq3\bar{C}_{xx}.
\]
Thus, with very high probability, 
\begin{align}
\left\Vert A_{k1}\right\Vert  & =\left[\frac{1}{N_{k}T}\ddot{\xi}_{\left(k\right)}'\ddot{Z}_{\left(k\right)}\left(\frac{1}{N_{k}T}\ddot{Z}_{\left(k\right)}'\ddot{Z}_{\left(k\right)}\right)^{-1}\left(\frac{1}{N_{k}T}\ddot{Z}_{\left(k\right)}'\ddot{Z}_{\left(k\right)}\right)^{-1}\frac{1}{N_{k}T}\ddot{Z}_{\left(k\right)}'\ddot{\xi}_{\left(k\right)}\right]^{1/2}\nonumber \\
 & \leq3\underline{C}_{xx}^{-1}\left[\frac{1}{N_{k}T}\ddot{\xi}_{\left(k\right)}'\left(\frac{1}{N_{k}T}\ddot{Z}_{\left(k\right)}\ddot{Z}_{\left(k\right)}'\right)\ddot{\xi}_{\left(k\right)}\right]^{1/2}\nonumber \\
 & \leq3^{3/2}\underline{C}_{xx}^{-1}\bar{C}_{xx}^{1/2}\left[\frac{1}{N_{k}T}\ddot{\xi}_{\left(k\right)}'\ddot{\xi}_{\left(k\right)}\right]^{1/2}\nonumber \\
 & =3^{3/2}\underline{C}_{xx}^{-1}\bar{C}_{xx}^{1/2}\left[\frac{1}{N_{k}T}\sum_{i\in G_{k|K}}\sum_{t=1}^{T}\ddot{\xi}_{it}^{2}\right]^{1/2}.\label{eq:Ak1bound}
\end{align}

By the rate in (\ref{eq:bias_rate_group}), 
\begin{align}
\frac{1}{N_{k}T}\sum_{i\in G_{k|K}}\sum_{t=1}^{T}\ddot{\xi}_{it}^{2} & \leq\left[\frac{1}{N_{k}T}\sum_{i\in G_{k|K}}\sum_{t=1}^{T}\left(1+\sum_{l=1}^{p}\left|x_{itl}\right|\right)^{2}\right]\cdot O\left(m^{-2\kappa}\right)\nonumber \\
 & =O_{P}\left(m^{-2\kappa}\right),\label{eq:xiit2}
\end{align}
where the second line holds by the moment condition in Assumption
\ref{A:moment}. Substitute (\ref{eq:xiit2}) back to (\ref{eq:Ak1bound}),
and we obtain the desired result. 
\end{proof}
\begin{proof}[Proof of Lemma \ref{LE:Ak2}]
Recall that 
\[
\mathbb{M_{B}}\left(s\right)=\left(\begin{array}{cccc}
\mathbb{B}_{-0}^{\underline{m}}\left(s\right)' & 0 & \cdots & 0\\
0 & \mathbb{B}^{\underline{m}}\left(s\right)' & \cdots & 0\\
\vdots & \vdots & \ddots & \vdots\\
0 & 0 & \cdots & \mathbb{B}^{\underline{m}}\left(s\right)'
\end{array}\right)_{\left(p+1\right)\times\left(\underline{m}-1+\underline{m}p\right)},
\]
and 
\[
Q_{(k),zz}=\frac{1}{N_{k}T}\sum_{i\in G_{k|K}}\sum_{t=1}^{T}\underline{\ddot{z}}_{it}\underline{\ddot{z}}_{it}'.
\]
For any $\left(p+1\right)\times1$ vector $a$, 
\begin{align}
\sqrt{\frac{N_{k}T}{\underline{m}}}a'\mathbb{M_{B}}\left(s\right)A_{k2} & =a'\mathbb{M_{B}}\left(s\right)\left(\frac{1}{N_{k}T}\sum_{i\in G_{k|K}}\sum_{t=1}^{T}\underline{\ddot{z}}_{it}\underline{\ddot{z}}_{it}'\right)^{-1}\left(\frac{1}{\sqrt{N_{k}T\underline{m}}}\sum_{i\in G_{k|K}}\sum_{t=1}^{T}\underline{\ddot{z}}_{it}\ddot{v}_{it}\right)\nonumber \\
 & =a'\mathbb{M_{B}}\left(s\right)Q_{(k),zz}^{-1}\left(\frac{1}{\sqrt{N_{k}T\underline{m}}}\sum_{i\in G_{k|K}}\sum_{t=1}^{T}\underline{\ddot{z}}_{it}v_{it}\right)\nonumber \\
 & =\frac{1}{\sqrt{N_{k}}}\sum_{i\in G_{k|K}}\left\{ \frac{1}{\sqrt{T\underline{m}}}a'\mathbb{M_{B}}\left(s\right)Q_{(k),zz}^{-1}\sum_{t=1}^{T}\underline{\ddot{z}}_{it}v_{it}\right\} ,\label{eq:Ak2-1}
\end{align}
where the second line uses $\sum_{t=1}^{T}\underline{\ddot{z}}_{it}\ddot{v}_{it}=\sum_{t=1}^{T}\underline{\ddot{z}}_{it}v_{it}$.
By the i.i.d. assumption across $i$ and $t$ on $v_{it}$ and its
independence with $x,$ the conditional variance of this term can
be calculated as 
\begin{equation}
\text{Var}\left(\left.\sqrt{\frac{N_{k}T}{\underline{m}}}a'\mathbb{M_{B}}\left(s\right)A_{k2}\right|x_{1},...,x_{N}\right)=a'\left[\frac{\sigma_{v\left(k\right)}^{2}}{\underline{m}}\mathbb{M_{B}}\left(s\right)Q_{(k),zz}^{-1}\mathbb{M_{B}}\left(s\right)'\right]a.\label{eq:var(Ak2)}
\end{equation}
The above is finite and proportional to $\left\Vert a\right\Vert ^{2}.$
To see it, (\ref{eq:zk'zk}) implies that with very high probability
\begin{align*}
\text{Var}\left(\left.a'\mathbb{M_{B}}\left(s\right)A_{k2}\right|x_{1},...,x_{N}\right) & \geq3\underline{C}_{xx}^{-1}\frac{\sigma_{v\left(k\right)}^{2}}{\underline{m}}a'\mathbb{M_{B}}\left(s\right)\mathbb{M_{B}}\left(s\right)'a\\
 & =3\underline{C}_{xx}^{-1}\sigma_{v\left(k\right)}^{2}\left[\frac{1}{\underline{m}}\mathbb{B}_{-0}^{\underline{m}}\left(s\right)'\mathbb{B}_{-0}^{\underline{m}}\left(s\right)a_{0}^{2}+\frac{1}{\underline{m}}\mathbb{B}^{\underline{m}}\left(s\right)'\mathbb{B}^{\underline{m}}\left(s\right)\sum_{l=1}^{p}a_{l}^{2}\right]\\
 & \propto\sum_{l=0}^{p}a_{l}^{2}=\left\Vert a\right\Vert ^{2},
\end{align*}
where we abuse the notation a bit by letting $a=\left(a_{0},a_{1},...,a_{p}\right)'$.
We can similarly verify the Lindeberg condition for the last term
in (\ref{eq:Ak2-1}) as in Lemma A.8 in \citet{HuangEtal2004} or
Lemma A.8 in \citet{SuEtal2019}. We omit this verification process
due to the similarity. We then apply the Lindeberg Central Limit Theorem,
and obtain 
\[
\left.\sqrt{\frac{N_{k}T}{\underline{m}}}a'\mathbb{M_{B}}\left(s\right)A_{k2}\right/\left\{ a'\left[\frac{\sigma_{v\left(k\right)}^{2}}{\underline{m}}\mathbb{M_{B}}\left(s\right)Q_{(k),zz}^{-1}\mathbb{M_{B}}\left(s\right)'\right]a\right\} ^{1/2}\stackrel{d}{\rightarrow}N\left(0,1\right).
\]
\end{proof}
\begin{proof}[Proof of Lemma \ref{LE:IC1}]
We only show the result that 
\[
\Pr\left(\text{IC}\left(K^{*},\lambda_{NT}\right)<\text{IC}\left(K^{*}-1,\lambda_{NT}\right)\right)\rightarrow1.
\]
Other cases are similar. Define an event 
\[
\mathcal{M}_{1}=\left\{ \text{Two groups merged and other groups corrected classified}\right\} .
\]
By the nature of the HAC algorithm, and the proof in Theorem \ref{TH:classify},
we can similarly show that 
\[
\Pr\left(\mathcal{M}_{1}\right)\rightarrow1.
\]
Recall that 
\[
\mathcal{M}\equiv\left\{ \left(\hat{G}_{1|K^{*}},\hat{G}_{2|K^{*}},\ldots,\hat{G}_{K^{*}|K^{*}}\right)=\left(G_{1|K^{*}},G_{2|K^{*}},\ldots,G_{K^{*}|K^{*}}\right)\right\} .
\]
We first show the result conditional on $\mathcal{M}$ and $\mathcal{M}_{1},$
then we show the result unconditionally. Without loss of generality,
assume that observations in group $K^{*}-1$ and $K^{*}$ are merged.
Thus, 
\begin{align}
 & \frac{\text{IC}\left(K^{*},\lambda_{NT}\right)-\text{IC}\left(K^{*}-1,\lambda_{NT}\right)}{\left(N_{K^{*}-1}+N_{K^{*}}\right)T}\nonumber \\
 & =\frac{1}{\left(N_{K^{*}-1}+N_{K^{*}}\right)T}\left\{ \sum_{k=K^{*}-1}^{K^{*}}N_{k}T\log\left(\hat{\sigma}_{v\left(k|K^{*}\right)}\right)+\lambda_{NT}-\left(N_{K^{*}-1}+N_{K^{*}}\right)T\log\left(\hat{\sigma}_{v\left(K^{*}-1|K^{*}-1\right)}\right)\right\} \nonumber \\
 & =\frac{N_{K^{*}-1}}{N_{K^{*}-1}+N_{K^{*}}}\log\left(\hat{\sigma}_{v\left(K^{*}-1|K^{*}\right)}\right)+\frac{N_{K^{*}}}{N_{K^{*}-1}+N_{K^{*}}}\log\left(\hat{\sigma}_{v\left(K^{*}|K^{*}\right)}\right)-\log\left(\hat{\sigma}_{v\left(K^{*}-1|K^{*}-1\right)}\right)+o\left(1\right),\label{eq:IC_diff1}
\end{align}
where the last line uses $\lambda_{NT}=o\left(NT\right)$ and $N_{k}\propto N.$

We claim the result below and we defer its proof to the end: 
\begin{equation}
\hat{\sigma}_{v\left(K^{*}-1|K^{*}-1\right)}^{2}=\frac{N_{K^{*}-1}}{N_{K^{*}-1}+N_{K^{*}}}\left(\hat{\sigma}_{v\left(K^{*}-1|K^{*}\right)}^{2}+\Delta_{1}^{2}\right)+\frac{N_{K^{*}}}{N_{K^{*}-1}+N_{K^{*}}}\left(\hat{\sigma}_{v\left(K^{*}|K^{*}\right)}^{2}+\Delta_{2}^{2}\right)+o_{P}\left(1\right),\label{eq:sigma_vk-1}
\end{equation}
where 
\begin{align*}
\Delta_{1}^{2} & \equiv\frac{1}{N_{K^{*}-1}T}\sum_{i\in G_{K^{*}-1|K^{*}}}\sum_{t=1}^{T}\left[\underline{\ddot{z}}_{it}'\left(\pi_{\left(K^{*}-1|K^{*}-1\right)}^{0*}-\pi_{\left(K^{*}-1|K^{*}\right)}^{0*}\right)\right]^{2},\text{ and}\\
\Delta_{2}^{2} & \equiv\frac{1}{N_{K^{*}}T}\sum_{i\in G_{K^{*}|K^{*}}}\sum_{t=1}^{T}\left[\underline{\ddot{z}}_{it}'\left(\pi_{\left(K^{*}-1|K^{*}-1\right)}^{0*}-\pi_{\left(K^{*}|K^{*}\right)}^{0*}\right)\right]^{2},
\end{align*}
and $\pi$$^{0*}$$_{\left(K^{*}-1|K^{*}-1\right)}$ denotes the estimand
of $\pi$ for $G_{K^{*}-1|K^{*}}\cup G_{K^{*}|K^{*}}$.

Using (\ref{eq:sigma_vk-1}) and the Jensen's inequality, 
\begin{align}
\log\hat{\sigma}_{v\left(K^{*}-1|K^{*}-1\right)}^{2} & \geq\frac{N_{K^{*}-1}}{N_{K^{*}-1}+N_{K^{*}}}\log\left(\hat{\sigma}_{v\left(K^{*}-1|K^{*}\right)}^{2}+\Delta_{1}^{2}\right)\nonumber \\
 & +\frac{N_{K^{*}}}{N_{K^{*}-1}+N_{K^{*}}}\log\left(\hat{\sigma}_{v\left(K^{*}|K^{*}\right)}^{2}+\Delta_{2}^{2}\right)+o_{P}\left(1\right).\label{eq:log_sigma}
\end{align}
Since 
\[
\left\{ \alpha_{\left(K^{*}-1\right)}^{*},\beta_{\left(K^{*}-1\right)}^{\ast},\sigma_{v\left(K^{*}-1\right)}^{\ast2}\right\} \neq\left\{ \alpha_{\left(K^{*}\right)}^{*},\beta_{\left(K^{*}\right)}^{\ast},\sigma_{v\left(K^{*}\right)}^{\ast2}\right\} ,
\]
we either have $\sigma_{v\left(K^{*}-1\right)}^{\ast2}\neq\sigma_{v\left(K^{*}\right)}^{\ast2}$
or $\left\{ \alpha_{\left(K^{*}-1\right)}^{*},\beta_{\left(K^{*}-1\right)}^{\ast}\right\} \neq\left\{ \alpha_{\left(K^{*}\right)}^{*},\beta_{\left(K^{*}\right)}^{\ast}\right\} $,
so that either $\hat{\sigma}_{v\left(K^{*}-1|K^{*}\right)}^{2}\neq\hat{\sigma}_{v\left(K^{*}|K^{*}\right)}^{2}$
or $\max\left\{ \Delta_{1}^{2},\Delta_{2}^{2}\right\} >0$ holds with
very high probability. Together with this, we can change ``$\geq$''
in (\ref{eq:log_sigma}) to ``$>$'', so that 
\begin{equation}
\log\hat{\sigma}_{v\left(K^{*}-1|K^{*}-1\right)}^{2}>\frac{N_{K^{*}-1}}{N_{K^{*}-1}+N_{K^{*}}}\log\left(\hat{\sigma}_{v\left(K^{*}-1|K^{*}\right)}^{2}\right)+\frac{N_{K^{*}}}{N_{K^{*}-1}+N_{K^{*}}}\log\left(\hat{\sigma}_{v\left(K^{*}|K^{*}\right)}^{2}\right)+o_{P}\left(1\right)\label{eq:sigma_log}
\end{equation}
Using (\ref{eq:sigma_log}), (\ref{eq:IC_diff1}) implies that 
\[
\text{IC}\left(K^{*},\lambda_{NT}\right)-\text{IC}\left(K^{*}-1,\lambda_{NT}\right)<0
\]
holds with very high probability after some large $N$ and $T$.

We have shown the desired result conditional on $\mathcal{M}$ and
$\mathcal{M}_{1}$. Since 
\[
\Pr\left(\mathcal{M}\cap\mathcal{M}_{1}\right)\geq1-\Pr\left(\mathcal{M}^{c}\right)-\Pr\left(\mathcal{M}_{1}^{c}\right)\rightarrow1,
\]
the desired result then holds unconditionally.

We finish the proof by showing the claim in (\ref{eq:sigma_vk-1}).
Then as in the proof of (ii) in Theorem \ref{TH:post-estimation}
(the decomposition of $\hat{\sigma}_{v\left(k|K^{*}\right)}^{2}$),
we have 
\begin{align*}
 & \sum_{i\in G_{K^{*}-1|K^{*}}\cup G_{K^{*}|K^{*}}}\sum_{t=1}^{T}\left(\ddot{y}_{it}-\underline{\ddot{z}}_{it}'\hat{\pi}_{\left(K^{*}-1|K^{*}-1\right)}\right)^{2}\\
= & \sum_{i\in G_{K^{*}-1|K^{*}}}\sum_{t=1}^{T}\left[\underline{\ddot{z}}_{it}'\left(\hat{\pi}_{\left(K^{*}-1|K^{*}-1\right)}-\pi_{\left(K^{*}-1|K^{*}-1\right)}^{0*}\right)+\underline{\ddot{z}}_{it}'\left(\pi_{\left(K^{*}-1|K^{*}-1\right)}^{0*}-\pi_{\left(K^{*}-1|K^{*}\right)}^{0*}\right)+\ddot{\xi}_{it}+\ddot{v}_{it}\right]^{2}\\
 & +\sum_{i\in\cup G_{K^{*}|K^{*}}}\sum_{t=1}^{T}\left[\underline{\ddot{z}}_{it}'\left(\hat{\pi}_{\left(K^{*}-1|K^{*}-1\right)}-\pi_{\left(K^{*}-1|K^{*}-1\right)}^{0*}\right)+\underline{\ddot{z}}_{it}'\left(\pi_{\left(K^{*}-1|K^{*}-1\right)}^{0*}-\pi_{\left(K^{*}|K^{*}\right)}^{0*}\right)+\ddot{\xi}_{it}+\ddot{v}_{it}\right]^{2}.
\end{align*}
Using the decomposition above and by the same analysis for $\hat{\sigma}_{v\left(k|K^{*}\right)}^{2}$
in the proof of (ii) in Theorem \ref{TH:post-estimation}, we can
extract the leading term of the following: 
\begin{align*}
\hat{\sigma}_{v\left(K^{*}-1|K^{*}-1\right)}^{2}= & \frac{1}{\left(N_{K^{*}-1}+N_{K^{*}}\right)\left(T-1\right)}\sum_{i\in G_{K^{*}-1|K^{*}}\cup G_{K^{*}|K^{*}}}\sum_{t=1}^{T}\left(\ddot{y}_{it}-\underline{\ddot{z}}_{it}'\hat{\pi}_{\left(K^{*}-1|K^{*}-1\right)}\right)^{2}\\
= & \frac{1}{\left(N_{K^{*}-1}+N_{K^{*}}\right)\left(T-1\right)}\sum_{i\in G_{K^{*}-1|K^{*}}}\sum_{i\in G_{K^{*}|K^{*}}}\sum_{t=1}^{T}\ddot{v}_{it}^{2}+\Delta_{1}^{2}+\Delta_{2}^{2}+o_{P}\left(1\right)\\
= & \frac{N_{K^{*}-1}}{N_{K^{*}-1}+N_{K^{*}}}\hat{\sigma}_{v\left(K^{*}-1|K^{*}\right)}^{2}+\Delta_{1}^{2}+\frac{N_{K^{*}}}{N_{K^{*}-1}+N_{K^{*}}}\hat{\sigma}_{v\left(K^{*}|K^{*}\right)}^{2}+\Delta_{2}^{2}+o_{P}\left(1\right).
\end{align*}
where the second and third lines repeatedly use the arguments for
the analysis of $\hat{\sigma}_{v\left(k|K^{*}\right)}^{2}$. The above
is the desired result. 
\end{proof}
\begin{proof}[Proof of Lemma \ref{LE:IC2}]
We only show the result for $K=K^{*}+1.$ Other cases are similar.
Define an event 
\[
\mathcal{M}_{2}=\left\{ \text{One group separated into two groups and other groups corrected classified}\right\} .
\]
By the nature of the HAC algorithm, and the proof in Theorem \ref{TH:classify},
we can similarly show that 
\[
\Pr\left(\mathcal{M}_{2}\right)\rightarrow1.
\]

We show the result conditional on $\mathcal{M}$ and $\mathcal{M}_{2}.$
Without loss of generality, assume that observations in group $K^{*}$
are separated into two groups, and the numbers of observations in
groups $K^{*}$ and $K^{*}+1$ are denoted as $N_{K^{*}1}$ and $N_{K^{*}2}$,
respectively. Clearly $N_{K^{*}}=N_{K^{*}1}+N_{K^{*}2}$. Since these
two groups come from the same underlying group, the parameters of
interest are the same. Using the rates in Theorem \ref{TH:post-estimation},
\[
\hat{\sigma}_{v\left(K^{*}|K^{*}+1\right)}^{2}-\sigma_{v\left(K^{*}\right)}^{*2}=O_{P}\left(\frac{1}{\sqrt{N_{K^{*}1}T}}\right)\text{ and }\hat{\sigma}_{v\left(K^{*}+1|K^{*}+1\right)}^{2}-\sigma_{v\left(K^{*}\right)}^{*2}=O_{P}\left(\frac{1}{\sqrt{N_{K^{*}2}T}}\right).
\]
Since $\hat{\sigma}_{v\left(K^{*}|K^{*}\right)}^{2}-\sigma_{v\left(K^{*}\right)}^{*2}=O_{P}\left(\frac{1}{\sqrt{N_{K^{*}}T}}\right),$
with the above, we have 
\[
\hat{\sigma}_{v\left(K^{*}|K^{*}+1\right)}^{2}-\hat{\sigma}_{v\left(K^{*}|K^{*}\right)}^{2}=O_{P}\left(\frac{1}{\sqrt{N_{K^{*}1}T}}\right)\text{ and }\hat{\sigma}_{v\left(K^{*}+1|K^{*}+1\right)}^{2}-\hat{\sigma}_{v\left(K^{*}|K^{*}\right)}^{2}=O_{P}\left(\frac{1}{\sqrt{N_{K^{*}2}T}}\right)
\]

Then 
\begin{align*}
 & \frac{\text{IC}\left(K^{*}+1,\lambda_{NT}\right)-\text{IC}\left(K^{*},\lambda_{NT}\right)}{N_{K^{*}}T}\\
= & \frac{1}{N_{K^{*}}T}\left\{ N_{K^{*}1}T\log\left(\hat{\sigma}_{v\left(K^{*}|K^{*}+1\right)}\right)+N_{K^{*}2}T\log\left(\hat{\sigma}_{v\left(K^{*}+1|K^{*}+1\right)}\right)+\lambda_{NT}\right.\left.-N_{k}T\log\left(\hat{\sigma}_{v\left(K^{*}|K^{*}\right)}\right)\right\} \\
= & \frac{N_{K^{*}1}}{N_{K^{*}}}\left[\log\left(\hat{\sigma}_{v\left(K^{*}|K^{*}+1\right)}^{2}\right)-\log\left(\hat{\sigma}_{v\left(K^{*}|K^{*}\right)}^{2}\right)\right]+\frac{N_{K^{*}2}}{N_{K^{*}}}\left[\log\left(\hat{\sigma}_{v\left(K^{*}+1|K^{*}+1\right)}^{2}\right)-\log\left(\hat{\sigma}_{v\left(K^{*}|K^{*}\right)}^{2}\right)\right]+\frac{\lambda_{NT}}{N_{K^{*}}T}\\
= & \frac{N_{K^{*}1}}{N_{K^{*}}}O_{P}\left(\frac{1}{\sqrt{N_{K^{*}1}T}}\right)+\frac{N_{K^{*}2}}{N_{K^{*}}}O_{P}\left(\frac{1}{\sqrt{N_{K^{*}2}T}}\right)+\frac{\lambda_{NT}}{N_{K^{*}}T}\\
= & \left(\sqrt{\frac{N_{K^{*}1}}{N_{K^{*}}}}+\sqrt{\frac{N_{K^{*}2}}{N_{K^{*}}}}\right)\cdot O_{P}\left(\frac{1}{\sqrt{N_{K^{*}}T}}\right)+\frac{\lambda_{NT}}{N_{K^{*}}T}\\
= & O_{P}\left(\frac{1}{\sqrt{N_{K^{*}}T}}\right)+\frac{\lambda_{NT}}{N_{K^{*}}T}>0,
\end{align*}
with very high probability, where the last line holds due to $\sqrt{\frac{N_{K^{*}1}}{N_{K^{*}}}}+\sqrt{\frac{N_{K^{*}2}}{N_{K^{*}}}}\leq2$,
$\lambda_{NT}\gg$$\sqrt{NT},$ and $N_{K^{*}}\propto N.$

We have shown the desired result conditional on $\mathcal{M}$ and
$\mathcal{M}_{2}$. Since 
\[
\Pr\left(\mathcal{M}\cap\mathcal{M}_{2}\right)\geq1-\Pr\left(\mathcal{M}^{c}\right)-\Pr\left(\mathcal{M}_{2}^{c}\right)\rightarrow1,
\]
the desired result then holds unconditionally. 
\end{proof}
\begin{proof}[Proof of Lemma \ref{LE:IC3}]
Without loss of generality, we ignore the group structure of other
parameters for the following discussion. Moreover, we assume that
we know $\left(\alpha(\cdot),\beta(\cdot),\sigma_{v}^{2}\right)$.
It is innocuous because $\hat{\vartheta}$ converges to the true value
faster than $\sqrt{N}$ which is the convergence rate of $\left(\hat{\alpha}^{0},\hat{\sigma}_{u}^{2}\right)$.

As in Lemmas \ref{LE:IC1} and \ref{LE:IC2}, we only show that 
\[
\Pr\left(\widetilde{\mathrm{IC}}\left(\mathcal{K}^{*},\tilde{\lambda}_{NT}\right)<\widetilde{\mathrm{IC}}\left(\mathcal{K}^{*}-1,\tilde{\lambda}_{NT}\right)\right)\rightarrow1,
\]
and 
\[
\Pr\left(\widetilde{\mathrm{IC}}\left(\mathcal{K}^{*},\tilde{\lambda}_{NT}\right)<\widetilde{\mathrm{IC}}\left(\mathcal{K}^{*}+1,\tilde{\lambda}_{NT}\right)\right)\rightarrow1.
\]
As such, we divide the proof into two parts with Part 1 showing the
first result and Part 2 for the other.

\noindent\textbf{Part 1.} We show in Appendix \ref{app:IOmega} that
the information matrix is a well-behaved positive definite matrix
with diagonals being finite constants, even though $T\rightarrow\infty.$
Recall that 
\[
\varrho^{0}=\left(\alpha_{(1)}^{0},\sigma_{u(1)}^{2},...,\alpha_{(\mathcal{K^{*}})}^{0},\sigma_{u(\mathcal{K}^{*})}^{2},\tau_{1}^{0},...,\tau_{\mathcal{\mathcal{K}^{*}}-1}^{0}\right).
\]
The true density function for $y$ is 
\[
\tilde{f}\left(y\left|x;\varrho^{0},\vartheta\right.\right),
\]
where we abuse the notation by letting $\vartheta\equiv\left(\sigma_{v}^{2},\alpha(\cdot),\beta(\cdot)\right)$.
By definition, $\varrho^{0}$ uniquely maximizes 
\[
\log\left[\tilde{f}\left(y\left|x;\varrho,\vartheta\right.\right)\right].
\]
For a $\varrho$ that is in a small neighborhood of $\varrho^{0},$
\[
\log\left[\tilde{f}\left(y\left|x;\varrho,\vartheta\right.\right)\right]-\log\left[\tilde{f}\left(y\left|x;\varrho^{0},\vartheta\right.\right)\right]=-\left(\varrho-\varrho^{0}\right)'\mathbb{I}\left(\varrho-\varrho^{0}\right)+o\left(\left\Vert \varrho-\varrho^{0}\right\Vert ^{2}\right).
\]
As mentioned at the beginning, $\mathbb{I}$ is positive definite.
Further, we restrict our attention to a compact support of $\varrho,$
therefore, there exists a $C$ such that 
\begin{equation}
\log\left[\tilde{f}\left(y\left|x;\varrho,\vartheta\right.\right)\right]-\log\left[\tilde{f}\left(y\left|x;\varrho^{0},\vartheta\right.\right)\right]\leq-C<0,\label{eq:log_diff_1-1}
\end{equation}
for all $\varrho$ outside of a small neighborhood of $\varrho^{0}.$

When we restrict the mixture distribution to come from $\mathcal{K}^{*}-1$
distributions, clearly, the estimate converges to a parameter that
is outside the small neighborhood of $\varrho^{0}.$ By the consistency
and inequality in (\ref{eq:log_diff_1-1}), 
\[
\tilde{\text{IC}}\left(\mathcal{K}^{*},\tilde{\lambda}_{NT}\right)-\tilde{\text{IC}}\left(\mathcal{K}^{*}-1,\tilde{\lambda}_{NT}\right)\leq-C\cdot N+o_{P}\left(N\right)+\tilde{\lambda}_{NT}.
\]
Further, $\tilde{\lambda}_{NT}\ll N,$ the above implies the first
desired result.

\noindent\textbf{Part 2.} When we force $\mathcal{K}=\mathcal{K}^{*}+1,$
denote 
\[
\varsigma^{0}=\left(\alpha_{(1)}^{0},\sigma_{u(1)}^{2},...,\alpha_{(\mathcal{K^{*}}+1)}^{0},\sigma_{u(\mathcal{K}^{*}+1)}^{2},\tau_{1}^{0},...,\tau_{\mathcal{\mathcal{K}^{*}}}^{0}\right),
\]
as the population parameter. It is not unique but we use the one that
the estimated parameters converge to. $\varsigma^{0}$ is essentially
the same as $\varrho^{0}$ in terms of mixture distribution; mathematically,
\begin{equation}
\tilde{f}\left(y_{i}\left\vert x_{i};\varsigma^{0},\vartheta\right.\right)=\tilde{f}\left(y_{i}\left\vert x_{i};\varrho^{0},\vartheta\right.\right),\label{eq:same_density}
\end{equation}
where as before $\vartheta=\left(\sigma_{v}^{2},\alpha(\cdot),\beta(\cdot)\right)$.
To see that, on the one hand 
\[
\textrm{E}\left[\log\tilde{f}\left(y_{i}\left\vert x_{i};\varsigma^{0},\vartheta\right.\right)\right]\geq\textrm{E}\left[\log\tilde{f}\left(y_{i}\left\vert x_{i};\varrho^{0},\vartheta\right.\right)\right],
\]
because $\varsigma^{0}$ has more degrees of freedom than that of
$\varrho^{0}$ . On the other hand, 
\[
\textrm{E}\left[\log\tilde{f}\left(y_{i}\left\vert x_{i};\varsigma^{0},\vartheta\right.\right)\right]\leq\textrm{E}\left[\log\tilde{f}\left(y_{i}\left\vert x_{i};\varrho^{0},\vartheta\right.\right)\right],
\]
because $\tilde{f}\left(y_{i}\left\vert x_{i};\varrho^{0},\vartheta\right.\right)$
is the true underlying distribution. Therefore, (\ref{eq:same_density})
must hold.

Let $\hat{\varsigma}$ be the estimates when $\mathcal{K}=\mathcal{K}^{*}+1$.
Then, using (\ref{eq:same_density}), 
\begin{align*}
 & \frac{1}{N}\sum_{i=1}^{N}\log\tilde{f}\left(y_{i}\left\vert x_{i};\hat{\varsigma},\vartheta\right.\right)-\frac{1}{N}\sum_{i=1}^{N}\log\tilde{f}\left(y_{i}\left\vert x_{i};\hat{\varrho},\vartheta\right.\right)\\
= & \frac{1}{N}\sum_{i=1}^{N}\left[\log\tilde{f}\left(y_{i}\left\vert x_{i};\hat{\varsigma},\vartheta\right.\right)-\log\tilde{f}\left(y_{i}\left\vert x_{i};\varsigma^{0},\vartheta\right.\right)\right]\\
 & -\frac{1}{N}\sum_{i=1}^{N}\left[\log\tilde{f}\left(y_{i}\left\vert x_{i};\hat{\varrho},\vartheta\right.\right)-\log\tilde{f}\left(y_{i}\left\vert x_{i};\varrho^{0},\vartheta\right.\right)\right]\\
= & \left(\frac{1}{N}\sum_{i=1}^{N}\left.\frac{\partial}{\partial\varsigma}\log\tilde{f}_{i}\right|_{\varsigma=\varsigma^{0}}\right)'\left(\hat{\varsigma}-\varsigma^{0}\right)+O_{P}\left(\left\Vert \hat{\varsigma}-\varsigma^{0}\right\Vert ^{2}\right)\\
 & +\left(\frac{1}{N}\sum_{i=1}^{N}\left.\frac{\partial}{\partial\varrho}\log\tilde{f_{i}}\right|_{\varrho=\varrho^{0}}\right)'\left(\hat{\varrho}-\varrho^{0}\right)+O_{P}\left(\left\Vert \hat{\varrho}-\varrho^{0}\right\Vert ^{2}\right)\\
= & O_{P}\left(N^{-1}\right),
\end{align*}
where the last line holds by the first order condition and the independence
across $N$ so that 
\[
\frac{1}{N}\sum_{i=1}^{N}\left.\frac{\partial}{\partial\varsigma}\log f_{i}\right|_{\varsigma=\varsigma^{0}}=O_{P}\left(N^{-1/2}\right)\text{ and }\frac{1}{N}\sum_{i=1}^{N}\left.\frac{\partial}{\partial\varrho}\log\tilde{f_{i}}\right|_{\varrho=\varrho^{0}}=O_{P}\left(N^{-1/2}\right),
\]
$\left\Vert \hat{\varsigma}-\varsigma^{0}\right\Vert =O_{P}\left(N^{-1/2}\right)$,
and $\left\Vert \hat{\varrho}-\varrho^{0}\right\Vert =O_{P}\left(N^{-1/2}\right)$.
Using the above, 
\[
\widetilde{\mathrm{IC}}\left(\mathcal{K}^{*},\tilde{\lambda}_{NT}\right)-\widetilde{\mathrm{IC}}\left(\mathcal{K}^{*}+1,\tilde{\lambda}_{NT}\right)=-\tilde{\lambda}_{NT}+O_{P}\left(1\right).
\]
Then 
\[
\Pr\left(\widetilde{\mathrm{IC}}\left(\mathcal{K}^{*},\tilde{\lambda}_{NT}\right)<\widetilde{\mathrm{IC}}\left(\mathcal{K}^{*}+1,\tilde{\lambda}_{NT}\right)\right)\rightarrow1,
\]
due to $\tilde{\lambda}_{NT}\rightarrow\infty.$ 
\end{proof}

\end{document}